\newtheorem{theorem}{Theorem}[section]
\newtheorem{definition}[theorem]{Definition}
\newtheorem{problem}[theorem]{Problem}
\newtheorem{lemma}[theorem]{Lemma}
\newtheorem{proposition}[theorem]{Proposition}
\newtheorem{corollary}[theorem]{Corollary}
\newtheorem{remark}[theorem]{Remark}
\newtheorem{example}[theorem]{Example}
\newcommand{\rr}{\mathbb{R}}
\newcommand{\cc}{\mathbb{C}}
\newcommand{\upn}{^{^{{[N]}}}}
\newcommand{\ket}[1]{| #1 \rangle }
\newcommand{\bra}[1]{\langle #1 | }
\newcommand{\beq}{\begin{equation}}
\newcommand{\eeq}{\end{equation}}
\newcommand{\ra}{\rangle }
\newcommand{\la}{\langle }
\newcommand{ \ave}[2]{ \langle #1 | #2 | #1 \rangle }
\newcommand{\amp }[2]{\langle #1|#2 \rangle }
\newcommand{\weakv}[3]{ \frac{\langle #1|#2| #3\rangle}{\langle #1 | #3 \rangle }}
\newcommand{\nn}{\nonumber}
\newcommand{\Z}{\mathbb{Z}}
\newcommand{\beqa}{\begin{eqnarray}}
\newcommand{\eeqa}{\end{eqnarray}}
\newcommand{\beqar}{\begin{eqnarray*}}
\newcommand{\eeqar}{\end{eqnarray*}}
\def \la {\langle}
\def \ra {\rangle}
\def \P \mathcal{P}
\numberwithin{section}{chapter}
\numberwithin{equation}{chapter}
\title{\bf The mathematics of superoscillations}
\author{Y. Aharonov\footnote{Schmid College of Science and Technology, Chapman University, Orange 92866, CA, US.}, F. Colombo\footnote{Dipartimento di Matematica, Politecnico di Milano, Via Bonardi 9, 20133 Milano, Italy.}, I. Sabadini$^2$, D.C. Struppa$^1$, J. Tollaksen$^1$}
\date{to appear in {\em Memoirs of the American Mathematical Society }}
\begin{document}
\pagenumbering{arabic}

\maketitle

\tableofcontents

\newpage

In the past 50 years, quantum physicists have discovered, and experimentally demonstrated, a phenomenon which they termed
{\it superoscillations}. Aharonov and his collaborators showed that  superoscillations naturally arise when dealing with weak values, a notion that provides a fundamentally different way to regard measurements in quantum physics. From a mathematical point of view, superoscillating functions are a superposition of small Fourier components with a bounded Fourier spectrum,  which result, when appropriately summed, in a shift that can be arbitrarily large, and well outside the spectrum. Purpose of this work is twofold: on one hand we provide a self-contained survey of the existing literature, in order to offer a systematic mathematical approach to superoscillations; on the other hand, we obtain some new and unexpected results, by showing that  superoscillating sequences can be seen of as solutions to a large class of convolution equations and can therefore be treated within the theory of Analytically Uniform spaces. In particular, we will also discuss the persistence of the superoscillatory behavior when superoscillating sequences are taken as initial values
of  the Schr\"odinger equation and other equations.

%
%
\mainmatter
%
%


\chapter{Introduction}
As it is well known from the rudiments of Fourier Analysis, signals (whether time or space dependent) cannot display details that are smaller than the shortest period of their Fourier components.
This is true for all kind of signals: images, sounds, electrical, etc.
In the last 50 years, quantum physicists have discovered (and experimentally demonstrated) a puzzling phenomenon, which they termed {\it superoscillations} and that seemed to violate this principle.

The original insights, which eventually led to  this discovery,
appeared in 1964 in a paper by Aharonov, Bergman, and Lebowitz
\cite{abl}, where the authors show that, as a result of the uncertainty principle, the initial conditions of a quantum mechanical system can be selected independently of the final conditions. Subsequently it was demonstrated by Aharonov, Albert and  Vaidman \cite{aav}
 that if non-disturbing measurements are performed on such pre- and post-selected systems, then strange outcomes will be obtained during the intermediate time. Traditionally, it was believed that if a measurement interaction is weakened so that there is no disturbance on the system, then no information will be obtained.
 However, it has been shown that information can be obtained even if not a single particle (in an ensemble) was disturbed ~\cite{spie-nswm}. The outcomes of these measurements, what in \cite{aav} are called ``weak values", depend on both the pre- and the post-selection and can have values outside the allowed eigenvalue spectrum and  even complex values.  Using these unique properties of the so-called weak values, weak measurements have been used to discover new physical effects which could not be otherwise detected.

  Aharonov and his collaborators also showed that the weak values lead to another new phenomenon called  superoscillations ~\cite{superosc0}. It has been shown that such behavior has important applications in a variety of areas, including metrology, antenna theory, and a new theory of superresolution in optics. For example, superoscillations do not require a media-substrate (in contrast to evanescent waves) and can therefore be focused much deeper into the media than evanescent waves do ~\cite{b4}, \cite{natmat-2012}.
 The areas in engineering and technology to which superoscillations are being applied seem to be growing on a daily basis, and rather than attempting to offer our own survey, we would like to refer the reader to the work of Lindberg \cite{lindberg}, as well as the work of Berry and his coauthors, \cite{berry2}, \cite{berry}, \cite{berry-noise-2013}, \cite{dennis2008}, \cite{b1}, \cite{b6}, and \cite{b5}. Paper \cite{lindberg}, in particular, contains a wealth of fairly recent references that highlight the state of the art of applications in this area. It is because of the importance of these applications that experimental groups around the world are working to apply these ideas to build new imaging and measuring devices.

From a mathematical point of view, superoscillatory functions demonstrated that a superposition of small Fourier components with a bounded Fourier spectrum,  in modulus less than $1$, can nevertheless result in a shift by an arbitrarily large $a$, well outside the spectrum. They can be thought of as an approximation of $e^{iax}$ in terms of a sequence of the form
\[
 \left\{\sum_{j=0}^n C_j(n,a)e^{ik_j(n)x}\right\}_{n=0}^{\infty}.
\]
The example, which is usually considered prototypical, derives from the sequence of functions:
\begin{equation}
F_n(x,a)=\Big(\cos \Big(\frac{x}{n}\Big)+ia\sin \Big(\frac{x}{n}\Big)\Big)^n
=\left( {{1+a}\over 2}e^{ix/n} +{{1-a}\over 2}e^{-ix/n}\right)^n
\label{eq1}
\end{equation}
where $a\in\mathbb R$, $a>1$. By performing a binomial expansion, this sequence can be written as
$$
\sum_{j=0}^n C_j(n,a)e^{i(1-2j/n)x}
$$
for suitable coefficients $C_j(n,a)$ and thus we see that the largest wavelength in the expansion is $1$.
However, around $|x|<\sqrt n$, $F_n(x,a)$ can be approximated as $F_n(x,a)\approx e^{ia x}$, that is, with a wavelength much larger than one.  This phenomenon is very general and holds for a wide range of functions and coefficients.

The  literature on superoscillations has been growing rapidly in physics journals and, in recent times, in mathematics as well. For example, it is known that regions of superoscillations are typical in random fields ~\cite{dennis2008}.
From a mathematical point of view, we have recently offered the foundations for a rigorous treatment of such a phenomenon, \cite{acsst}, and a good survey of the ideas up to \cite{acsst} is given in \cite{lindberg}. Naturally,
a key component to the superoscillatory phenomenon is the extremely rapid oscillation in the coefficients $C_j$ and
since the regions of superoscillations are created at the expense of having the function grow exponentially in other regions,
it would be natural to conclude that the superoscillations would be quickly ``over-taken" by tails coming from the exponential regions and would thus be short-lived.  However, it has been shown that superoscillations are remarkably robust~\cite{b4} and can last for an arbitrarily long time \cite{ACSST3}, at least if we take the limit for $n$ going to infinity. In the process of establishing such results, we discovered some unexpected relations between the theory of superoscillations and convolutions in Analytically Uniform spaces.
From the perspective of communication theory, it has been shown that this relationship is also related to a trade-off between signal-to-noise and bandwidth~\cite{kempf1}, making it easier to engineer superoscillatory signals (see \cite{kempf2} and also its precursor \cite{superosc-misc2}). In this regard, we should also point out some early work on entire functions and bandwidth limited signals \cite{bondcahn}, \cite{requicha}.

There are many fundamental mathematical questions, such as the optimization of superoscillations, their longevity in time, their ubiquitous occurrence in a broad spectrum of different settings such as throughout group theory, as well as questions of numerical nature: for example, for practical applications, superoscillations (constructed out of a precise interference of non-superoscillatory waves) can be particularly sensitive to noise. If, for example, in \eqref{eq1} we take $n=10$, $a=4$, and we add a random phase noise of $10^{-4}$, the superoscillations quickly disappears, see \cite{berry-noise-2013}. Issues of numerical stability for superoscillations, for example, are analyzed in \cite{leeferreira2}.

In this memoir, we offer a comprehensive introduction to the mathematical theory of superoscillations, and prove a large number of  results describing their properties which are most helpful in making progress on these open questions.  We also show that the phenomenon actually arises in a much larger context than has been previously foreseen.

\bigskip

\noindent {\bf Overview of the memoir.}

\bigskip

In
Chapter 2 we give an introduction to the theory of weak measurements which led to superoscillatory behavior.  This theory is gaining increasing importance among theoretical and applied physicists as demonstrated by the number of experiments devoted to its clarification. This chapter, which is written with the notations used in the quantum physics community, can be easily skipped by the uninterested reader, but we have included it because we believe that the physical motivations may give important support to otherwise surprising results.

In Chapter 3, we offer a rigorous treatment of the superoscillatory phenomenon in terms of the Taylor and the Fourier coefficients of a superoscillating sequence. We use this treatment to deduce important properties of these functions.

Chapter 4 gives an overview of the main mathematical tool that we will use throughout this memoir, namely Ehrenpreis' theory of Analytically Uniform spaces (AU-spaces), and its applications to convolution equations.
The fundamental idea, here, was introduced first in \cite{acsst6}; essentially we noticed that superoscillating sequences can be thought of as solutions to very special cases of convolution equations and we discovered that the theory of AU-spaces and its extension due to Berenstein and Taylor \cite{bt} can be applied with success.
We will also show how to use Dirichlet series to construct further classes of superoscillating sequences.

Chapter 5 deals with the question of the permanence of the superoscillatory behavior when superoscillating sequences are taken as initial values
of  the Schr\"odinger equation for the free particle. In particular we show that if we evolve a superoscillating sequence according to the Schr\"odinger equation the outcome remains superoscillating for all values of $t$, as long as we take $n\to + \infty$.

In Chapter 6 we extend the ideas described in Chapter 5, and we show how the use of the  theory of formal solutions to the Cauchy problem in
the complex domain can be combined with the theory of AU-spaces to generalize the study of superoscillations longevity to the case in which they are taken as initial values for a wide class of differential equations.

It is physically interesting to replace the spatial variable $x$ in a superoscillating sequence with an operator (for example the momentum operator), so to obtain superoscillating sequences of operators. Chapter 7 is dedicated to this topic. As we pointed out earlier, superoscillations appear in rather unexpected settings. For example, if one considers the angular momentum for a system of particles, and computes its weak values, one ends up with a superoscillatory behavior. Thus, we see, in Chapter 8, that superoscillations can be discovered in classical groups, and we explicitly discuss the case of $SO(3)$. Once again this chapter will reverse to a more physical notation.
\vskip 0.5truecm
\noindent {\bf Acknowledgments}. The authors are grateful to Chapman University for the support while writing this memoir. This work was supported in part by Israel Science Foundation grant No. 1125/10, and The Wolfson Family Charitable Trust.
The authors are very grateful to Professors M. Berry, R. Buniy, S. Popescu, and S. Nussinov for their interest in this work and for the many critical (yet helpful) commentaries. Finally, the authors wish to express their gratitude to the anonymous referee, who made many significant suggestions to improve the manuscript, and pointed out some important references that had been added to the present version.


\chapter{Physical motivations}

\section{Overview}

In this chapter we offer an overview of the background from quantum physics which generated the notion of superoscillations. This chapter does not contain any original material and in fact we refer to \cite{aav}, \cite{superosc0}, \cite{aharonov_book}, \cite{townes}, and \cite{av2v} for more details. Its purpose is to help the reader who is not familiar with the physics related to superoscillations to get a sense of the general framework of our investigation, and it can be skipped by the reader interested only in the mathematical aspects. For consistency with the literature, we have used in this chapter the terminology and the notations used in the quantum physics community, while the rest of this memoir employs more traditional mathematical notations.

In a broader context, superoscillations are examples of unusual weak values \cite{PT-nov-2010} which can be obtained for pre- and post-selected quantum systems. The story of the weak value begins in 1964 with the advent of the two-time reformulation of quantum mechanics, by Aharonov, Bergmann and Lebowitz in \cite{abl}.  The usual formulation of quantum mechanics is given in terms of an initial wavefunction or quantum state, which is then propagated forward in time according to the Schr\"odinger equation.  Outcomes of experiments then occur randomly upon measurement with the probabilities given in terms of this forward evolved wavefunction.

Thus, while the Schr\"odinger equation is time reversal symmetric, the introduction of measurements and actual recorded events seems to spoil this feature. This time asymmetric view of quantum mechanics can be made symmetric by realizing that the process of preparation is actually a kind of filtering of results:  only one state of many possible states is chosen to begin with.  By introducing the concept of post-selection, i.e. filtering the final results by a selection criterion (just as one does in a preparation), then the theory can be made once more time-symmetric.

Once two boundary conditions are supplied, one in the past and one in the future, one can think of the past state moving forward in time, or equivalently the future state moving backwards in time (or both). While this is a natural interpretation of this picture, it is by no means required.

Even though the two-time approach to quantum physics can be applied to any situation that conventional quantum physics can, perhaps its most useful new consequence is the notion of {\it weak value}.   This idea was first presented in a 1988 paper (see \cite{aav}) written by Aharonov, Albert, and Vaidman, with the provocative title ``How the result of a measurement of a component of the spin of a spin-1/2 particle can turn out to be 100".   The idea is to take a pre- and post-selected average of the weak measurement results of an operator.  Here, a weak measurement consists in weakly coupling a meter to the system, usually taken to be an impulsive interaction with the meter prepared in a Gaussian state.  Without post-selection, the meter would be shifted either up or down by a small amount, depending on which eigenstate the system is prepared in.

However, in \cite{aav} the authors showed that with system post-selection, the meter can be deflected by an amount much larger (in principle arbitrarily large) than the shift without post-selection.
If  a system is pre-selected in an initial quantum state $\ket{\Psi_{\mathrm{in}}}$, and post-selected on a final state $\ket{\Psi_{\mathrm{fin}}}$, then the result of weakly measuring the operator ${A}$ is not its expectation value, but rather what Aharonov, Albert, and Vaidman called its weak value:
\beq
\langle A\rangle_w = \weakv { \Psi_{\mathrm{fin}} } {{A}} {\Psi_{\mathrm{in}}},
\label{eq: AAV}
\eeq
an object which, for a linear operator, can exceed the eigenvalue range, and even assume complex values.

Having weak values outside the spectrum of the operators involved has been discussed at length in the past and has most comprehensively been investigated for spin-1/2 systems~\cite{botero}, \cite{berry2}, \cite{b1}, \cite{b6}, \cite{b4}, \cite{b5}, \cite{new}.  In addition, Berry et al.~\cite{b5} looked at superweak statistics for much more general situations, and they proved that if the Hilbert space is sufficiently high in dimensions, and if the pre- or post-selection are, in a sense, `generic', then the existence of superweak values becomes common or typical.

\section{Von Neumann measurements}

In classical mechanics, a single particle is described by its position  and momentum. In quantum mechanics any system is described by its quantum state, namely by a vector in a Hilbert space which is, in general, infinite-dimensional.
From the experimental point of view, once a quantum system has been prepared in a particular eigenstate, one could then ascertain with certainty some measurable quantities since the state of the system is already in an eigenstate of the operator corresponding to the prepared state. Repeating the same measurement without any significant evolution of the quantum state will lead to the same result.
The values  for a measurement of non-commuting observables performed after the preparation are described by a probability distribution (either continuous or discrete),
depending on the quantity being measured.
The process by which a quantum state becomes one of the eigenstates of the operator corresponding to the measured observable is called "collapse", or "wavefunction collapse".

 The measurement is usually assumed to be ideally accurate, so the dynamic state of a system after measurement is assumed to collapse into an eigenstate of the operator corresponding to the measurement.
 It is a postulate of quantum mechanics that all measurements \index{measurement} have an associated operator (called an observable) such that:
\begin{enumerate}
\item[i)]
    The observable \index{observable} is a Hermitian (self-adjoint) operator ${A}: D( A)\subset \mathcal H \to \mathcal H$, where $D(A)$ is the domain of $A$ and $\mathcal H$ is a Hilbert space.
    \item[ii)]
    The observable's eigenvectors form an orthonormal basis spanning the state space in which that observable exists.
    Any quantum state can be represented as a superposition of the eigenstates of an observable.
   \item[iii)] The eigenstates of Hermitian operators have real eigenvalues.
  \end{enumerate}
Some examples of observables are:
the Hamiltonian operator which represents the total energy of the system, the momentum operator, the position operator.

The von Neumann measurement scheme describes measurements by assuming that  the measuring device is treated as a quantum object, see Chapter 7 in the book \cite{aharonov_book} or the original book by von Neumann \cite{vNBOOK}. Consider for example the Stern-Gerlach experiment, designed to measure the spin of a particle and which consists in sending a particle into a magnetic field ${B}$ which is non-homogeneous and varies, say, in the $z$ direction.

The Hamiltonian of the interaction of the spin ${S}$ of the particle with magnetic field ${B}$ is given by
$$
H_{\rm int}=-\mu {S} \cdot {B},
$$
where the Hamiltonian $H_{\rm int}$
is the potential energy of a magnetic dipole of momentum $\mu S$ in the magnetic field ${B}$.
When the particle passes through the field its momentum ${p}$ changes accordingly to the Heisenberg  equation
$$
\frac{d{p}}{dt}=\frac{i}{\hbar}[H_{\rm int},{p}]=\mu \nabla ({S} \cdot {B}).
$$

Assume now that ${B}$ is parallel to the $z$-axis, and its $z$-component is  $B_z$.
Then if the particle crosses the magnetic field in a time $T$, it acquires transverse momentum $\mu \frac{\partial B_z}{\partial z} S_zT$
proportional to the spin component $S_z$.
 A beam of particles entering the Stern-Gerlach apparatus therefore splits into beams for each spin component $S_z$.
 In this experiment the measurement interaction lasts a limited time $T$.
It produces a change, namely the deflection of the particle, that corresponds to the value of the
observable $S_z$. At all other times the particle and
Stern-Gerlach apparatus are distinct, independent systems.
Note that the measurement does not change the measured observable $S_z$ and so,
 in principle, the interaction time $T$ can be very small if  $\frac{\partial B_z}{\partial z}$ is very large. Indeed, it is important for the measurement to last a very short time, since otherwise the observable might change during the process.
Note also that the measurement is a quantum process, because we wrote
the Hamiltonian $H_{\rm int}$ for the measurement interaction and represented the measuring device as itself a quantum system.

\bigskip\noindent This example is a paradigmatic situation for all quantum measurements which can be described by the following properties:
\begin{itemize}
\item[i)]
 The measurement interaction lasts a limited time $T$.
\item[ii)]
The measurement produces a change (in the previous example the deflection of the particle) that corresponds to the value of the
observable.
\item[iii)]
The measurement  does not change the measured observable.
\item[iv)]
 The interaction time $T$ can be very small.
\item[v)] The measurement is a quantum process.
\end{itemize}

The criteria for quantum measurement i) to v) are collectively called the  von
Neumann model  for the measurement of an observable ${A}$ which, in the above example, is the spin $S_z$.

To satisfy v), we treat the measurement via an interaction Hamiltonian $H_{\rm int}$.
 According to iii), $H_{\rm int}$  and ${A}$ must commute. According
to ii), $H_{\rm int}$ must couple ${A}$ to something that yields an observable change,
like the deflection
of the particle in the measurement of $S_z$. The simplest coupling we can write is ${A}{P}_{\mathrm{md}}$,
where ${P}_{\mathrm{md}}$ is an otherwise independent observable, and the subscript $"\mathrm{md}"$ identifies this observable as being a measuring device. Since i) requires $H_{\rm int}$ to be effective only
during the measurement, we multiply ${A}{P}_{\mathrm{md}}$ by a coupling $g(t)$ that is different from zero only
in an interval $0 \leq t \leq T$, and with
$$
\int_0^Tg(t)dt =g_0
$$
and, finally, iv) implies that we can consider the limit $T\to 0$. In this limit the measurement is termed
impulsive. Thus our interaction Hamiltonian is
\begin{equation}\label{HintT}
H_{\rm int}(t) = g(t){A}{P}_{\mathrm{md}} ,
\end{equation}
and the total Hamiltonian, which includes the separate free Hamiltonians of the measuring device $H_{\mathrm{md}}$ and of the measured system
$H_s$, is
$$
H = H_{\mathrm{md}} + H_{\rm s} + H_{\rm int}.
$$



\section{Weak values and weak measurements - the main idea}
\label{abl-main-idea}
In order to more deeply appreciate the new quantum mechanical features and effects arising out of the notion of weak value (and as a consequence, superoscillations), we need to first take a step back and recall the history of ``time" in quantum mechanics. This section and the next two are largely taken from \cite{townes}.\\
The ``time-asymmetry" attributed to the standard formulation of Quantum Mechanics was inherited from classical mechanics where one can predict the future based on initial conditions: once the equations of motion are fixed in classical mechanics, then the initial and
final conditions are not independent, only one can be fixed arbitrarily.
In contrast, as a result of the uncertainty principle, the relationship between
initial and final conditions within Quantum Mechanics can be one-to-many: two ``identical" particles, prepared in exactly the same way, with identical environments
can subsequently exhibit different properties even when they are both subjected to completely identical measurements.
These subsequent identical measurements provide fundamentally new information about the system which could not in principle be obtained from the initial conditions.

Quantum Mechanics's ``time-asymmetry" is the assumption that measurements only have consequences {after} they are performed, i.e. towards the future.  Nevertheless, Aharonov, Bergmann and Lebowitz showed in ~\cite{abl} that the new information obtained from measurements is also relevant for the {past} of every quantum-system and not just the future.  This inspired the authors to reformulate Quantum Mechanics in terms of {\em Pre-and-Post-Selected ensembles} (PPS).
The traditional paradigm for ensembles is to simply prepare systems in a particular state and thereafter subject them to a variety of experiments.  For pre- and-post-selected-ensembles, we add one more step, a subsequent measurement or post-selection.
By collecting only a subset of the outcomes for this later measurement, we see that the ``pre-selected-only-ensemble"   can be divided into sub-ensembles according to the results of this subsequent ``post-selection-measurement."
Because pre- and post-selected ensembles are the most refined quantum ensemble, they are of fundamental importance and subsequently led to the {\em two-vector} or {\em Time-Symmetric reformulation of Quantum Mechanics} (TSQM)~\cite{av}, \cite{jmav}.  TSQM  provides a complete description of a quantum-system at a given moment by using two-wavefunctions, one evolving from the past towards the future (the one utilized in the standard paradigm) and a second one, evolving from the future towards the past.

In TSQM measurements occur at the present time $t$
 while the state is known both  at $t_{\mathrm{in}}<t$ (past) and at $t_{\mathrm{fin}}>t$ (future).
 To be more precise, we start at $t=t_{\mathrm{in}}$ with a  measurement of  a nondegenerate operator ${O}_{\mathrm{in}}$.   This gives as one potential outcome the state $\ket{\Psi_{\mathrm{in}}}$. In other words, we prepared the ``pre-selected" state
$\ket{\Psi_{\mathrm{in}}}$.  Then we consider a later time $t_{\mathrm{fin}}$, and we perform another measurement of a nondegenerate
operator ${O}_{\mathrm{fin}}$ which yields one possible outcome: the post-selected state $\ket{\Psi_{\mathrm{fin}}}$.
 At an intermediate time $t\in [t_{\mathrm{in}},t_{\mathrm{fin}}]$, we measure a nondegenerate observable ${A}$ (for simplicity), with eigenvectors $\{ \ket{a_j} \}$. Our goal is to
determine the conditional probability of $a_j$, given that we know both boundary conditions,
$\ket{\Psi_{\mathrm{in}}}$ and $\ket{\Psi_{\mathrm{fin}}}$.

To this purpose, we use the time displacement operator:
   $$
   U_{t_{\mathrm{in}}\rightarrow t}=\exp\{-iH(t-t_{\mathrm{in}})\}
   $$
   where $H$ is the Hamiltonian for the free system.  For simplicity, we assume that $H$ is time
   independent and we set $\hbar=1$.
The standard
theory of Von Neumann measurements states that the system collapses into an eigenstate $\ket{a_j}$
{\it after} the  measurement at $t$ with an amplitude $\textcolor{BlueViolet}{\bra{a_j} U_{t_{\mathrm{in}}\rightarrow t}\ket{\Psi_{\mathrm{in}}}}$.
The amplitude for our series of events is defined as
$$
\alpha_j = \textcolor{BlueViolet}{\bra{\Psi_{\mathrm{fin}}}U_{t\rightarrow t_{\mathrm{fin}}}|a_j\rangle\langle a_j|U_{t_{\mathrm{in}}\rightarrow t}\ket{\Psi_{\mathrm{in}}}}
$$
and is illustrated in Figure \ref{timerev}.a.
This translates into the fact that the conditional probability to measure $a_j$ (given that $\ket{\Psi_{\mathrm{in}}}$ is pre-selected
   and $\ket{\Psi_{\mathrm{fin}}}$ will be post-selected) is given by the so-called ABL formula, see ~\cite{abl}:
\begin{equation}
{\rm Prob}(a_j,t|\Psi_{\mathrm{in}},t_{\mathrm{in}}; \Psi_{\mathrm{fin}},t_{\mathrm{fin}})  =\textcolor{BlueViolet}{\frac{ |\bra{\Psi_{\mathrm{fin}}} U_{t\rightarrow t_{\mathrm{fin}}}\ket{a_j}\bra{a_j} U_{t_{\mathrm{in}}\rightarrow t}\ket{\Psi_{\mathrm{in}}}|^2 }{\sum_{n} |\bra{\Psi_{\mathrm{fin}}} U_{t\rightarrow t_{\mathrm{fin}}}\ket{a_n}\bra{a_n} U_{t_{\mathrm{in}}\rightarrow t}\ket{\Psi_{\mathrm{in}}}|^2}}.
\label{ablnts0}
\end{equation}
\begin{remark}{\rm The ABL formula is intuitive:  $|\bra{a_j} U_{t_{\mathrm{in}}\rightarrow t}\ket{\Psi_{\mathrm{in}}}|^2$ is the probability to obtain $\ket{a_j}$ having started with $\ket{\Psi_{\mathrm{in}}}$.  If $\ket{a_j}$ was obtained, then the system collapsed to $\ket{a_j}$ and $|\bra{\Psi_{\mathrm{fin}}} U_{t\rightarrow t_{\mathrm{fin}}}\ket{a_j}|^2$ is then the probability to obtain $\ket{\Psi_{\mathrm{fin}}}$.  The probability to obtain $\ket{a_j}$ and $\ket{\Psi_{\mathrm{fin}}}$ then is $|\alpha_j|^2$.  This is not yet the conditional probability since the post-selection may yield outcomes other than $\bra{\Psi_{\mathrm{fin}}}$. The probability to obtain $\ket{\Psi_{\mathrm{fin}}}$ is
$$\sum_j|\alpha_j|^2=|\bra{\Psi_{\mathrm{fin}}}\Psi_{\mathrm{in}}\ra|^2<1.$$
The question being investigated concerning probabilities of $a_j$ at $t$ assumes we are successful in obtaining
 the post-selection and therefore requires the denominator in (\ref{ablnts0}), $\sum_j|\alpha_j|^2$, which is a
 re-normalization to obtain a proper probability.
}
\end{remark}

As a first step toward understanding the underlying time-symmetry in the ABL formula, we consider the time-reverse of the numerator of (\ref{ablnts0}) and, as a consequence, the time reverse of Figure \ref{timerev}.a.   Firstly,  we
apply $U_{t\rightarrow t_{\mathrm{fin}}}$ on $\bra{\Psi_{\mathrm{fin}}}$ instead of on $\bra{a_j}$.  We then note that $$\bra{\Psi_{\mathrm{fin}}}U_{t\rightarrow t_{\mathrm{fin}}}|=\langle U_{t\rightarrow t_{\mathrm{fin}}}^\dag\Psi_{\mathrm{fin}}|$$ by using the well known Quantum Mechanics symmetry
\[
\begin{split}
&U_{t\rightarrow
      t_{\mathrm{fin}}}^\dag={\left\{e^{-iH(t_{\mathrm{fin}}-t)}\right\}}^\dag\\
      &=e^{iH(t_{\mathrm{fin}}-t)}=e^{-iH(t-t_{\mathrm{fin}})}\\
      &=U_{t_{\mathrm{fin}}\rightarrow t}.
         \end{split}
      \]
      We also apply $U_{t_{\mathrm{in}}\rightarrow t}$ on $\bra{a_j}$ instead of on  $\ket{\Psi_{\mathrm{in}}}$ which yields the time-reverse reformulation of the numerator of (\ref{ablnts0}) $$\textcolor{RedViolet}{\langle U_{t_{\mathrm{fin}}\rightarrow t}\Psi_{\mathrm{fin}}|a_j\rangle\langle U_{t\rightarrow t_{\mathrm{in}}} a_j|\Psi_{\mathrm{in}}\rangle}$$ as illustrated in Figure \ref{timerev}.b.

\begin{figure}[h!]
\centering
  \includegraphics[width=5in,height=5.9in,keepaspectratio]{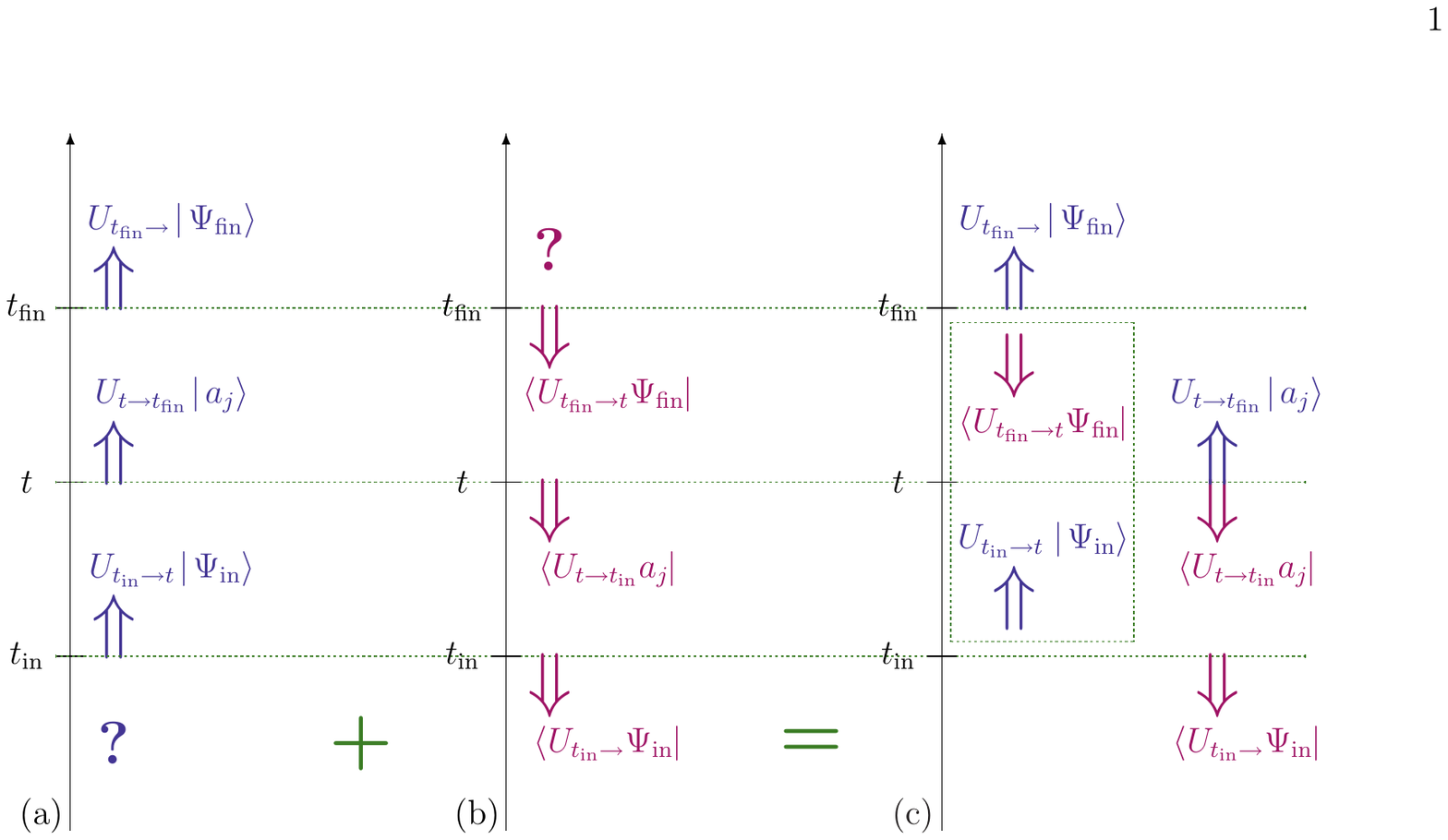}
\caption[Time-reversal symmetry in probability amplitudes. From \cite{townes}]
{\small Time-reversal symmetry in probability amplitudes. From \cite{townes}.}
\label{timerev}
\end{figure}

 To formulate what we mean by the two-vector in TSQM more work is needed.  For example, if
we want to compute  the probability for possible outcomes of $a_j$ at time $t$, we must consider
both $\textcolor{BlueViolet}{| U_{t_{\mathrm{in}}\rightarrow t}\mid\!\Psi_{\mathrm{in}}\rangle}$ and $\textcolor{RedViolet}{\la U_{t_{\mathrm{fin}}\rightarrow t}\Psi_{\mathrm{fin}}\!\!\mid}$. In fact, these expressions propagate the pre- and post-selection to the present time $t$ (see the conjunction of both figures \ref{timerev}.a and \ref{timerev}.b giving \ref{timerev}.c; these two-vector are not just the time-reverse of each other).  This represents the basic idea behind the  Time-Symmetric reformulation of Quantum Mechanics and gives
\begin{equation}
{\rm Prob}(a_j,t|\Psi_{\mathrm{in}},t_{\mathrm{in}}; \Psi_{\mathrm{fin}},t_{\mathrm{fin}})  =\frac{ |\textcolor{RedViolet}{\langle U_{t_{\mathrm{fin}}\rightarrow t}\Psi_{\mathrm{fin}}}\ket{a_j}\bra{a_j} \textcolor{BlueViolet}{U_{t_{\mathrm{in}}\rightarrow t}\ket{\Psi_{\mathrm{in}}}}|^2 }{\sum_{n} |\textcolor{RedViolet}{\langle U_{t_{\mathrm{fin}}\rightarrow t}\Psi_{\mathrm{fin}}}\ket{a_n}\bra{a_n} \textcolor{BlueViolet}{U_{t_{\mathrm{in}}\rightarrow t}\ket{\Psi_{\mathrm{in}}}}|^2}.
\label{ablnts}
\end{equation}
While this mathematical manipulation clearly proves that time symmetric reformulation of quantum mechanics is consistent with the standard approach to Quantum Mechanics, it leads to a very different interpretation.  To give an example, the action of $U_{t_{\mathrm{fin}}\rightarrow t}$ on $\bra{\Psi_{\mathrm{fin}}}$ (i.e.
$\textcolor{RedViolet}{\langle U_{t_{\mathrm{fin}}\rightarrow t}\Psi_{\mathrm{fin}}}|$)
 can be interpreted to mean that the time displacement operator $U_{t_{\mathrm{fin}}\rightarrow t}$ sends $\bra{\Psi_{\mathrm{fin}}}$ back in time from the time $t_{\mathrm{fin}}$ to the present, $t$.
A number of new categories of states
are suggested by the TSQM formalism and have proven useful in a large variety of situations.

One of the simplest, and yet interesting, examples of pre- and post-selection is to pre-select a spin-1/2 system with $\ket{\Psi_{\mathrm{in}}}=|{\sigma}_x=+1\ra=\vert\!\!\uparrow_x\!\rangle $ at time $t_{\mathrm{in}}$.
After the pre-selection, spin measurements in the direction perpendicular  to $x$ yields complete uncertainty in the result,
so if we post-select at time $t_{\mathrm{fin}}$ in the $y$-direction, we obtain $\ket{\Psi_{\mathrm{fin}}}=|{\sigma}_{\mathrm{y}}= +1\ra=\vert\uparrow_y\rangle$ one-half of the times.  Since the particle is free, the spin is conserved in time and thus for any $t\in [t_{\mathrm{in}},t_{\mathrm{fin}}]$, an ideal measurement of either ${\sigma}_x$ or ${\sigma}_y$, yields the value
   $+1$ for this pre- and post-selection.

The fact that two non-commuting observables are known with certainty is a most surprising property which no pre-selected only ensemble could possess.

We now ask another question, slightly  more complicated, about the spin in the
direction $\xi=45^{\circ}$ relative to the $x-y$ axis.  This yields:
\beq
{\sigma}_{\xi}={\sigma}_x \cos 45^{\circ} +{\sigma}_y\sin 45^{\circ}=\frac{{\sigma}_x +{\sigma}_y}{\sqrt{2}}.
\label{spin45}
\eeq
Since we know that ${\rm Prob}({\sigma}_{x}=+1)=1$ and ${\rm Prob}({\sigma}_{y}=+1)=1$, one might wonder why we could not  insert both values, ${\sigma}_{{x}}= +1$ and ${\sigma}_{{y}}= +1$ into (\ref{spin45}) and obtain ${\sigma}_{\xi}=\frac{1+1}{\sqrt{2}}=\frac{2}{\sqrt{2}}=\sqrt{2}$ (see figure \ref{2XorY}).

\ \  \vskip 1cm
\begin{figure}[h!]
\begin{picture}(400,90)(0,0)
\put(40,10){\line(1,0){40}}
\put(40,90){\line(1,0){40}}
\color{BlueViolet}
\put(60,10){\vector(0,1){20}}
\color{RedViolet}
\put(60,90){\vector(0,-1){20}}
\color{Red}
\put(40,15){\dashbox{1}(40,70)}
\put(100,15){\dashbox{1}(40,70)}
\put(240,15){\dashbox{1}(140,70)}
\color{OliveGreen}

\color{Black}
\put(100,10){\line(1,0){40}}
\put(100,90){\line(1,0){40}}
\color{BlueViolet}
\put(120,10){\vector(0,1){20}}
\color{RedViolet}
\put(120,90){\vector(0,-1){20}}
\color{Black}
\put(240,10){\line(1,0){40}}
\put(240,90){\line(1,0){40}}
\color{BlueViolet}

\put(260,10){\vector(0,1){20}}
\color{RedViolet}

\put(260,90){\vector(0,-1){20}}
\color{OliveGreen}

\color{BlueViolet}
\put(260,0){\makebox(0,0){$|{\sigma}_x=1\ra$}}
\put(120,0){\makebox(0,0){$|{\sigma}_x=1\ra$}}
\put(60,0){\makebox(0,0){$|{\sigma}_x=1\ra$}}

\color{RedViolet}
\put(260,100){\makebox(0,0){$\la{\sigma}_y=1|$}}
\put(120,100){\makebox(0,0){$\la{\sigma}_y=1|$}}
\put(60,100){\makebox(0,0){$\la{\sigma}_y=1|$}}
\color{Black}


\color{Black}
\put(30,10){\makebox(0,0){$t_{\mathrm{in}}$}}
\put(30,90){\makebox(0,0){$t_{\mathrm{fin}}$}}
\put(30,50){\makebox(0,0){$t$}}
\put(90,120){\makebox(0,0){(a)}}
\put(290,120){\makebox(0,0){(b)}}
\put(60,50){\makebox(0,0){${\sigma}_x\!=\!1$}}
\put(310,50){\makebox(0,0){${\sigma}_{45^{\circ}}=\frac{{\sigma}_x +{\sigma}_y}{\sqrt{2}}=\frac{1+1}{\sqrt{2}}=\sqrt{2}?$}}


\put(120,50){\makebox(0,0){${\sigma}_y\!=\!1$}}

\color{OliveGreen}

\end{picture}

\caption{\footnotesize A spin-1/2 particle is pre-selected at time $t_{\mathrm{in}}$ to be $|{\sigma}_x=1\ra$, and post-selected at $t_{\mathrm{fin}}$ to be $\la{\sigma}_y=1|$. (a) During the intermediate time $t\in [t_{\mathrm{in}},t_{\mathrm{fin}}]$, ABL formula gives that an ideal measurement of either ${\sigma}_x$ or ${\sigma}_y$ yields
   $+1$ with certainty, suggesting that such a particle has well defined values of the two noncommuting spin components.
 (b) It would seem to follow that
the spin component ${\sigma}_{45^{\circ}}$ would
have to be $\sqrt{2}$ which is not an allowed eigenvalue.}
\label{2XorY}
\end{figure}
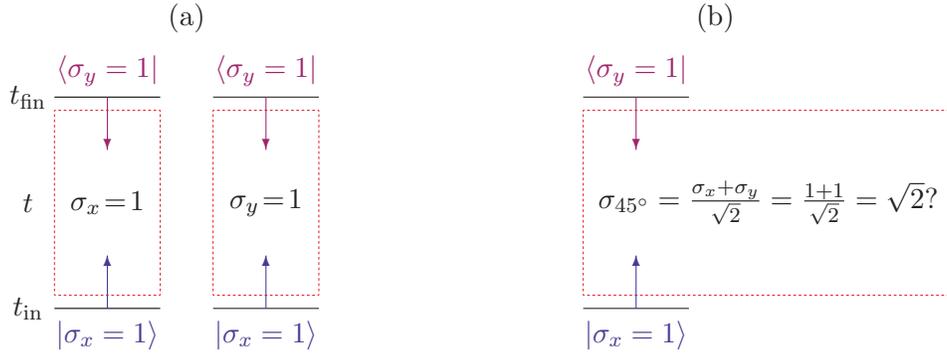

But such a result  cannot be correct for an ideal measurement, in fact the eigenvalues of any spin operator, including ${\sigma}_{\xi}$, must be $\pm 1$.
The inconsistency can also be seen by noting that $${\left(\frac{\sigma_x+\sigma_y}{\sqrt{2}}\right)}^2=\frac{\sigma_x^2+\sigma_y^2+\sigma_x\sigma_y+\sigma_y\sigma_x}{2}=\frac{1+1+0}{2}=1.$$
 By the previous argument, we would instead expect $${\left(\frac{\sigma_x+\sigma_y}{\sqrt{2}}\right)}^2={\left(\frac{1+1}{\sqrt{2}}\right)}^2=2\neq 1.$$
  The replacement of ${\sigma}_{{x}}= +1$ {and} ${\sigma}_{{y}}= +1$ in (\ref{spin45}) can only be done if ${\sigma}_{{x}}$ and ${\sigma}_{{y}}$ commute, which would allow both values  simultaneously to be definite.  When both statements are combined together simultaneously (as we attempted to do in suggesting that ${\sigma}_{45^{\circ}}$ might equal $\frac{1+1}{\sqrt{2}}=\frac{2}{\sqrt{2}}=\sqrt{2}$),  then the statements, namely $P({\sigma}_{x}=+1)=1$ and ${\rm Prob}({\sigma}_{y}=+1)=1$, are said to be ``counter-factuals", see \cite{at2}.

Although it appears we have reached the end-of-the-line with this argument,
nevertheless, it still seems that there should be some sense in which
both  $${\rm Prob}({\sigma}_{x}=+1)=1 \qquad{\rm and} \qquad{\rm Prob}({\sigma}_{y}=+1)=1$$ manifest themselves simultaneously to produce ${\sigma}_{\xi}=\sqrt{2}$.

\section{Weak values and weak measurements - mathematical aspects}
\label{weakennopost}
  Let $\la{A}\ra = \ave{\Psi}{{A}}$, and $|\Psi\rangle$ be any vector in a Hilbert space. Set $$\Delta A^2 = \ave{\Psi}{({A} - \la{A}\ra)^2},$$ and let $\ket{\Psi_\perp}$ be a state such that $\amp {\Psi}{\Psi_\perp} = 0$.
Then we have:
 \begin{theorem} For every observable $A$ and a normalized state $|\Psi\rangle$, the formula
 \begin{equation}\label{thm1}
 A|\Psi\rangle=\langle  A\rangle|\Psi\rangle+\Delta A|\Psi_\perp\rangle
 \end{equation}
 holds for some state $|\Psi_\perp\rangle$ which is orthogonal to $|\Psi\rangle$.
   \end{theorem}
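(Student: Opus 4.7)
The plan is to exhibit $|\Psi_\perp\rangle$ explicitly by defining the vector
\[
|\phi\rangle := A|\Psi\rangle - \langle A\rangle|\Psi\rangle = (A-\langle A\rangle)|\Psi\rangle,
\]
which is the rearrangement of (\ref{thm1}) that we would have after solving for $\Delta A|\Psi_\perp\rangle$. The entire proof is then the verification of two properties of $|\phi\rangle$: it is orthogonal to $|\Psi\rangle$, and its norm is exactly $\Delta A$.

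First, I would compute the inner product
\[
\langle\Psi|\phi\rangle = \langle\Psi|A|\Psi\rangle - \langle A\rangle\langle\Psi|\Psi\rangle = \langle A\rangle - \langle A\rangle = 0,
\]
using that $|\Psi\rangle$ is normalized and that $\langle A\rangle$ is by definition $\langle\Psi|A|\Psi\rangle$. This gives orthogonality.

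Next, I would compute the norm squared, using that $A$ is Hermitian (so $(A-\langle A\rangle)^\dagger = A-\langle A\rangle$ since $\langle A\rangle\in\mathbb R$):
\[
\|\phi\|^2 = \langle\Psi|(A-\langle A\rangle)^2|\Psi\rangle = \Delta A^2.
\]
Thus $\|\phi\| = \Delta A$.

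Finally, in the generic case $\Delta A\neq 0$ I set $|\Psi_\perp\rangle := |\phi\rangle/\Delta A$, a unit vector orthogonal to $|\Psi\rangle$, and the identity (\ref{thm1}) follows directly by multiplying both sides by $\Delta A$ and rearranging. In the degenerate case $\Delta A = 0$, the argument above forces $|\phi\rangle=0$, i.e.\ $A|\Psi\rangle=\langle A\rangle|\Psi\rangle$, and we may choose $|\Psi_\perp\rangle$ to be any unit vector orthogonal to $|\Psi\rangle$ (such a vector exists whenever $\dim\mathcal H\ge 2$; otherwise the statement is vacuous since the $\Delta A|\Psi_\perp\rangle$ term vanishes identically). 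No real obstacle arises; the only subtlety worth flagging is this $\Delta A=0$ case, which needs to be mentioned to make the choice of $|\Psi_\perp\rangle$ well-defined.
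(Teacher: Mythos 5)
Your proof is correct and follows essentially the same route as the paper: both define the residual vector $(A-\langle A\rangle)|\Psi\rangle$, check its orthogonality to $|\Psi\rangle$, and identify its norm with $\Delta A$ before normalizing (you compute the norm directly via Hermiticity, the paper gets $b^2=\langle A^2\rangle-\langle A\rangle^2$ slightly more indirectly). Your explicit treatment of the degenerate case $\Delta A=0$ is a small improvement over the paper, which tacitly assumes the residual vector can be normalized.
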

   \begin{proof}
To prove the statement, we write
$$A|\Psi\rangle=\langle A\rangle|\Psi\rangle+A|\Psi\rangle-\langle A\rangle|\Psi\rangle$$
   now, we set:
$|\widetilde{\Psi}_\perp\rangle =A|\Psi\rangle-\langle A\rangle|\Psi\rangle$,   so:
$$\langle \widetilde{\Psi}_\perp |\Psi\rangle =(\langle\Psi|A-\langle\Psi|\langle
      A\rangle)|\Psi\rangle=\langle\Psi| A|\Psi\rangle - \langle A\rangle\langle\Psi|\Psi\rangle = 0.$$
Now we set $|\Psi_\perp\rangle=b|\widetilde{\Psi}_\perp\rangle$, where $|\Psi_\perp\rangle$ is normalized and $b$
   real (note that $\langle\Psi|\Psi_\perp\rangle=0$).
   So we have
$A|\Psi\rangle=\langle A\rangle|\Psi\rangle+b|\Psi_\perp\rangle$.
   Now we multiply from the left by $\langle\Psi_\perp|$, and we get:
   $\langle\Psi_\perp|A|\Psi\rangle=b$.   Now we can see that:
\[
\begin{split}
\langle\Psi|A^2|\Psi\rangle&=\langle\Psi|A(\langle A\rangle|\Psi\rangle+b|\Psi_\perp\rangle)\\
&=  \langle\Psi|({\langle A\rangle}^2|\Psi\rangle+b\langle A\rangle|\Psi_\perp\rangle+bA|\Psi_\perp\rangle)\\
&= {\langle A\rangle}^2+b\langle\Psi|A|\Psi_\perp\rangle
   \end{split}     \]
        so $$\langle A^2\rangle-{\langle A\rangle}^2=b\langle\Psi|A|\Psi_\perp\rangle=b^2$$
   which means that $$b=\sqrt{\langle A^2\rangle-{\langle A\rangle}^2}=\Delta A$$
   and the result
   $$A|\Psi\rangle=\langle  A\rangle|\Psi\rangle+\Delta
   A|\Psi_\perp\rangle$$  is proved.
\end{proof}
We now use this result to show how to perform measurements which do not disturb either the pre- or post-selections. The interaction $H_{\mathrm{int}}\!=\!-g(t){Q}_{\mathrm{md}}{A}$ is weakened
 by minimizing $g_0 \Delta Q_{\mathrm{md}}$.  For simplicity, we consider $g_0\ll 1$ (assuming without lack of generality that the state of the measuring device is a Gaussian with spreads $\Delta P_{\mathrm{md}}\!=\!\Delta Q_{\mathrm{md}}\!=\!1$).   We may then set $e^{ -i g_0  {Q}_{\mathrm{md}} {A}
 }\!\approx\! 1-ig_0 {Q}_{\mathrm{md}} {A}$ and use \eqref{thm1}.
This shows that before the post-selection, the system state is:
   \begin{equation}
   \begin{split}
      \exp({ -i  g_0{Q}_{\mathrm{md}} {A}})|\Psi_{\mathrm{in}}\ra\!&=\!
(1\!-\!ig_0 {Q}_{\mathrm{md}}{A})|\Psi_{\mathrm{in}}\rangle\!\\
&=\!(1\!-i\!g_0 {Q}_{\mathrm{md}}\langle{A}\rangle)|\Psi_{\mathrm{in}}\rangle\!-i\!g_0 {Q}_{\mathrm{md}}\Delta{A}|\Psi_{\mathrm{in}\perp}\rangle .
\end{split}
   \end{equation}

\noindent Computing the norm of this state
$$
{\parallel (1-ig_0 {Q}_{\mathrm{md}}{A})|\Psi_{\mathrm{in}}\rangle
      \parallel}^2=1+{g_0 ^2{Q}_{\mathrm{md}}^2}\langle {A}^2\rangle,
      $$
 the probability to leave $|\Psi_{\mathrm{in}}\rangle$ unchanged after the measurement is:
   \begin{equation}
      \frac{1+{g_0 ^2{Q}_{\mathrm{md}}^2}{\langle{A}\rangle}^2}
   {1+{g_0 ^2{Q}_{\mathrm{md}}^2}\langle
   {A}^2\rangle}\longrightarrow 1\,\,\,\,\,\,{\rm when\ }g_0 \rightarrow 0,
   \end{equation}
while the probability to disturb the state (i.e. to obtain $|\Psi_{\mathrm{in}\perp}\rangle$) is:
   \begin{equation}
      \frac{{g_0 ^2{Q}_{\mathrm{md}}^2}{\Delta{A}}^2}
   {1+{g_0 ^2{Q}_{\mathrm{md}}^2}\langle
   {A}^2\rangle}\longrightarrow 0\,\,\,\,\,\,{\rm when\ }g_0 \rightarrow 0.
\label{collprob}
   \end{equation}
The final state of the measuring device is now a superposition of many substantially overlapping Gaussians with probability distribution given by ${\rm Prob}(P_{\mathrm{md}})=\sum_i |\la a_i|\Psi_{\mathrm{in}}\ra|^2 \exp\left\{{-\frac{(P_{\mathrm{md}}-g_0 a_i)^{2}} {2\Delta P_{\mathrm{md}}^{2}}}\right\} $.
This sum  can be approximated by a single Gaussian
 $$\tilde{\Phi}^{\mathrm{fin}}_{\mathrm{md}}(P_{\mathrm{md}})\approx\langle P_{\mathrm{md}}|e^{-ig_0 {Q}_{\mathrm{md}}\la{A}\ra}|\Phi^{\mathrm{in}}_{\mathrm{md}}\rangle\approx\exp\left\{-{{(P_{\mathrm{md}}-g_0\la
 A\ra)^2}\over{\Delta P_{\mathrm{md}}^2}}\right\}$$ centered at $g_0\la{A}\ra$.

Formula (\ref{collprob}) shows that the probability for a collapse  decreases as $O(g_0 ^2)$, but the measuring device's shift grows as $O(g_0)$, so $\delta P_{\mathrm{md}}=g_0 a_i$~\cite{spie-nswm}.
For a sufficiently weak interaction (e.g. $g_0\ll 1$), the probability for a collapse can be made arbitrarily small, while the measurement still yields information but becomes less precise because
the shift in the measuring device is much smaller than its uncertainty $\delta P_{\mathrm{md}}\ll\Delta P_{\mathrm{md}}$.
If we perform this measurement on a single particle, then
two non-orthogonal states will be indistinguishable.
If this were possible, it would violate unitarity because these states could time evolve into orthogonal states $|\Psi_1\ra|\Phi^{\mathrm{in}}_{\mathrm{md}}\ra \longrightarrow |\Psi_1\ra|\Phi^{\mathrm{in}}_{\mathrm{md}}(1)\ra$ and
$|\Psi_2\ra|\Phi^{\mathrm{in}}_{\mathrm{md}}\ra \longrightarrow |\Psi_2\ra|\Phi^{\mathrm{in}}_{\mathrm{md}}(2)\ra$, with $|\Psi_1\ra|\Phi^{\mathrm{in}}_{\mathrm{md}}(1)\ra$ orthogonal to $|\Psi_2\ra|\Phi^{\mathrm{in}}_{\mathrm{md}}(2)\ra$.  With weakened measurement interactions, this does not happen because the measurement of these two non-orthogonal states causes a  shift in the measuring device smaller than its uncertainty.  We conclude that the shift $\delta P_{\mathrm{md}}$ of the measuring device is  a measurement error because $ \tilde{\Phi}^{\mathrm{fin}}_{\mathrm{md}}(P_{\mathrm{md}})=\langle
      P_{\mathrm{md}}-g_0 \la{A}\ra|\Phi^{\mathrm{in}}_{\mathrm{md}}\rangle\approx\langle P_{\mathrm{md}}|\Phi^{\mathrm{in}}_{\mathrm{md}}\rangle$ for $g_0 \ll 1$.
Nevertheless, if a large ($N\geq\frac{N'}{g_0 }$) ensemble of  particles is used, then the shift of all the measuring devices ($\delta P^{tot}_{\mathrm{md}}\approx g_0 \la{A}\ra\frac{N'}{g_0}=N'\la{A}\ra$) becomes distinguishable (because of repeated integrations), while  the collapse probability   still goes to zero.
That is, for a large ensemble of particles which are all either $|\Psi_2\ra$ or
   $|\Psi_1\ra$, this measurement can distinguish between them even if $|\Psi_2\ra$ and
   $|\Psi_1\ra$ are not orthogonal, because the scalar product $\langle\Psi_1\upn|\Psi_2\upn\rangle=\cos^N\theta\longrightarrow 0$.

The fact of having a new measurement paradigm, namely information gain without disturbance, is fruitful to inquire whether this type of measurement reveals new values or new properties.
With weak measurements (which involve adding a post-selection to this ordinary, but weakened, von Neumann measurement),  the measuring device registers a new value, the so-called weak value.
As an indication of this, we insert a complete set of states $\{ \ket{\Psi_{\mathrm{fin}}}_j \}$ into the outcome of the weak interaction and we calculate the expectation value as follows
\beq
 \la{A}\ra =  \bra{\Psi_{\mathrm{in}}} { \left[\sum_j  \ket{\Psi_{\mathrm{fin}}}_j\bra{\Psi_{\mathrm{fin}}}_j\right]{A}} \ket{\Psi_{\mathrm{in}}}
= \sum_j |\langle \Psi _{\mathrm{fin}} \!\mid_j \!\Psi _{\mathrm{in}}
\rangle|^2\
{ {\langle \Psi _{\mathrm{fin}}\! \mid_j {A} \mid \!\Psi _{\mathrm{in}}
\rangle} \over {\langle \Psi _{\mathrm{fin}} \!\mid_j \!\Psi _{\mathrm{in}}
\rangle}}.
\label{expweak}
\eeq
If we interpret the states $ \ket{\Psi_{\mathrm{fin}}}_j $ as the outcomes of a final ideal measurement on the system (i.e. a post-selection)
then performing  a weak measurement (e.g. with $ g_0\Delta Q_{\mathrm{md}}\rightarrow 0$) during the intermediate time $t\in [t_{\mathrm{in}},t_{\mathrm{fin}}]$, provides the coefficients for $|\langle \Psi_{\mathrm{fin}}\! |_j \Psi_{\mathrm{in}}
\rangle |^2$ which gives  the probabilities ${\rm Prob}(j)$
for obtaining a pre-selection of $\bra{\Psi_{\mathrm{in}}}$ and a post-selection of  $ \ket{\Psi_{\mathrm{fin}}}_j $.  The intermediate weak measurement does not disturb these states
and the  quantity
$$A_{{w}}(j) \equiv { {\langle \Psi _{\mathrm{fin}} \!\mid_j {A} \mid \!\Psi _{\mathrm{in}}
\rangle} \over {\langle \Psi _{\mathrm{fin}} \!\mid_j \!\Psi _{\mathrm{in}}
\rangle}}$$
  will be defined as the weak value of ${A}$ given a particular final post-selection $\langle \Psi _{\mathrm{fin}} \!\mid_j$, see Definition \ref{weakval}.
Thus, from the formula
$\la{A}\ra = \sum_j {\rm Prob}(j)\,  A_{{w}}(j)$,
 one can think of  $\la{A}\ra$ for the whole ensemble as being constructed out of sub-ensembles of pre- and post-selected-states in which the weak value is multiplied by a probability for a post-selected-state.

The weak value arises naturally from a weakened measurement with post-selection. In fact, let us  take
 $ g_0 \ll 1$; then the final state of measuring device in the momentum representation becomes
\begin{equation}\label{post_selected}
\begin{split}
\bra{P_{\mathrm{md}}} \bra{\Psi_{\mathrm{fin}}}e^{ -i  g_0  {Q}_{\mathrm{md}} {A}
 }&\ket{\Psi_{\mathrm{in}}}\ket{\Phi_{\mathrm{in}}^{\mathrm{md}}} \approx
\bra{P_{\mathrm{md}}} \bra{\Psi_{\mathrm{fin}}}1+i g_0 {Q}_{\mathrm{md}} {A}\ket{\Psi_{\mathrm{in}}}\ket{\Phi_{\mathrm{in}}^{\mathrm{md}}}\\
&\approx \bra{P_{\mathrm{md}}}\langle\Psi_{\mathrm{fin}}\!\mid
\Psi_{\mathrm{in}} \rangle \lbrace 1+i g_0 {Q} \weakv {\Psi_\mathrm{fin}}{{A}}{\Psi_\mathrm{in} }\rbrace\ket{\Phi_{\mathrm{in}}^{\mathrm{md}}}\\
&\approx  \langle\Psi_{\mathrm{fin}}\ket{\Psi_{\mathrm{in}}}
\bra{P_{\mathrm{md}}}
e^{ -i  g_0  {Q}A_{{w}}
 }\ket{\Phi_{\mathrm{in}}^{\mathrm{md}}}\\
&\rightarrow  \langle\Psi_{\mathrm{fin}}\ket{\Psi_{\mathrm{in}}}\exp\left\{{-{{(P_{\mathrm{md}}- g_0 \,
 A_{{w}})^2}
}}\right\},\\
\end{split}
 \end{equation}
where $A_w$ is as in the following definition.

\begin{definition}\label{weakval} Let $A$ be a Hermitian operator and let $ |\Psi_{\mathrm{in}}\rangle$,  $|\Psi_{\mathrm{fin}}\rangle$ denote a PPS.
We call
$$
A_{{w}}=\weakv {\Psi_\mathrm{fin}}{{A}}{\Psi_\mathrm{in} }.
$$
the {\em weak value} of $A$ on the PPS.
The weak value will also be denoted by $\langle A\rangle_w$.
\end{definition}
The final state
 of the measuring device is almost unentangled with the
 system; it is shifted by a very unusual quantity, the weak value, $A_{{w}}$,
which is not, in general, an eigenvalue of ${A}$.
\bigskip
From the definition of weak value it immediately follows that
\begin{theorem} For any pairs of Hermitian  operators
$$
\langle A + B \rangle_w=\langle A\rangle_w+\langle B\rangle_w.
$$
\end{theorem}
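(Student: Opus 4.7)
The plan is to unfold the definition of the weak value and exploit linearity of operators on Hilbert space plus linearity of the inner product in its second argument. Nothing deeper is needed: the weak value is literally a ratio whose denominator $\langle \Psi_{\mathrm{fin}} \mid \Psi_{\mathrm{in}}\rangle$ does not depend on the operator, so additivity in the operator slot of the numerator transfers directly to additivity of the ratio.

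Concretely, I would start from
\[
\langle A+B\rangle_w \;=\; \frac{\langle \Psi_{\mathrm{fin}} \mid (A+B) \mid \Psi_{\mathrm{in}}\rangle}{\langle \Psi_{\mathrm{fin}} \mid \Psi_{\mathrm{in}}\rangle},
\]
which is legitimate because $A+B$ is again a Hermitian operator (at least on the natural intersection of domains, which is implicit throughout the chapter), so Definition \ref{weakval} applies. Then I would use $(A+B)\ket{\Psi_{\mathrm{in}}} = A\ket{\Psi_{\mathrm{in}}} + B\ket{\Psi_{\mathrm{in}}}$ and linearity of the pairing with $\bra{\Psi_{\mathrm{fin}}}$ to split the numerator as $\langle \Psi_{\mathrm{fin}} \mid A \mid \Psi_{\mathrm{in}}\rangle + \langle \Psi_{\mathrm{fin}} \mid B \mid \Psi_{\mathrm{in}}\rangle$, and finally distribute the common denominator to recover $\langle A\rangle_w + \langle B\rangle_w$.

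There is essentially no obstacle here; the only subtlety worth flagging explicitly is that the identity requires $\langle \Psi_{\mathrm{fin}} \mid \Psi_{\mathrm{in}}\rangle \neq 0$, which is the same non-degeneracy assumption tacitly made when the weak value was introduced (it is precisely the condition that makes post-selection on $\ket{\Psi_{\mathrm{fin}}}$ occur with nonzero amplitude, as discussed after the ABL formula). Under that standing assumption, the argument is a one-line computation, and it shows that weak values depend linearly on the observable even though, unlike ordinary expectation values, they need not lie inside the spectrum of the operators involved.
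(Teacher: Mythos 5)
Your proposal is correct and follows essentially the same route as the paper: both simply expand the definition of the weak value and use linearity of the numerator $\langle \Psi_{\mathrm{fin}} \mid A+B \mid \Psi_{\mathrm{in}}\rangle$ over the common, operator-independent denominator $\langle \Psi_{\mathrm{fin}} \mid \Psi_{\mathrm{in}}\rangle$. Your explicit remark that the non-vanishing of $\langle \Psi_{\mathrm{fin}} \mid \Psi_{\mathrm{in}}\rangle$ is the standing assumption behind the definition is a reasonable addition, though the paper leaves it tacit.
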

\begin{proof}
From the linearity of $A$ and $B$ it immediately follows that
$$
{ {\langle \Psi _{\mathrm{fin}} \mid A+B \mid \!\Psi _{\mathrm{in}}
\rangle} \over {\langle \Psi _{\mathrm{fin}} \mid \!\Psi _{\mathrm{in}}
\rangle}}= { {\langle \Psi _{\mathrm{fin}} \mid A \mid \!\Psi _{\mathrm{in}}
\rangle} \over {\langle \Psi _{\mathrm{fin}} \mid \!\Psi _{\mathrm{in}}
\rangle}}+ {{\langle \Psi _{\mathrm{fin}} \mid B \mid \!\Psi _{\mathrm{in}}
\rangle} \over {\langle \Psi _{\mathrm{fin}} \mid \!\Psi _{\mathrm{in}}
\rangle}}.
$$
\end{proof}
\begin{remark}{\rm
We are now interested in considering what happens when we consider the product of two observables, for example, $A^1 A^2$ and we apply them to product states of the form $|\Phi_1\rangle \, |\Phi_2\rangle$, (sometimes indicated as $|\Phi_1\rangle \otimes | \Phi_2\rangle$). We note that
$$
|A^1 A^2 |\Phi_1\rangle \, |\Phi_2\rangle =| A^1 \Phi_1\rangle \, |  A^2  \Phi_2\rangle.
$$
As it is well known, the weak value of a product of observable is not, in general, the product of the weak values (see e.g. \cite{at2}), but the situation is different when we consider initial and final states which are product states. In this case we obviously have
\[
\begin{split}
\langle A^1A^2\rangle_w&=\frac{\langle \Psi_2|\langle\Psi_1| A^1A^2|\Phi_1\rangle \, |\Phi_2\rangle}{\langle \Psi_2|\langle\Psi_1|\Phi_1\rangle \, |\Phi_2\rangle}\\
 &= \frac{\langle \Psi_2|A^2|\Phi_2\rangle \langle \Psi_1| A^1|\Phi_1\rangle}{\langle \Psi_2|\Phi_2\rangle \langle \Psi_1|\Phi_1\rangle }\\
 &=\langle A^1\rangle_w\langle A^2\rangle_w.
\end{split}
\]
where the weak values for $A^1, A^2$ and $A^1 A^2$ are calculated with respect to different pairs of initial and final states.
}
\end{remark}

If we now consider two orthonormal bases of eigenstates ${\Phi}_i$, $\Psi_i$, the weak value can be rewritten as
\begin{equation}\label{for40}
\langle A\rangle_w = \frac{\sum_{i=1}^N \alpha_i \langle \Psi_i | {A} |\Phi_i\rangle}{\sum_{i=1}^N \alpha_i \langle \Psi_i |\Phi_i\rangle}.
\end{equation}
We now decompose the operator $ A$ in terms of the projector operators
$$
P_{A=a_i}= \sum_{j} |\Phi_{i,j}\rangle\, \langle \Phi_{i,j} | ,
$$
where $|\Phi_{i,j}\rangle$ is a complete set of eigenstates with eigenvalue $a_i$ and therefore the spectral decompositions of $A$ and $I$ are
\begin{equation}\label{specdec}
A=\sum_i a_iP_{A=a_i}, \qquad I=\sum_i P_{A=a_i}.
\end{equation}
Finally, with this notation, the probability of finding $a=a_n$ is given by
\begin{equation}\label{for23}
{\rm Prob}({A=a_n})=\frac{|\sum_{i=1}^N \alpha_i \langle \Psi_i| P_{A=a_n}| \Phi_i\rangle|^2 }{\sum_k|\sum_{i=1}^N \alpha_i \langle \Psi_i| P_{A=a_k}| \Phi_i\rangle|^2}.
\end{equation}

\begin{theorem} If a strong measurement of an observable $A$
  yields an outcome $a$ with probability one, then  the weak value $\langle A\rangle_w$ gives the same outcome $a$.
\end{theorem}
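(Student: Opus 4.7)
The plan is to reduce $\langle A\rangle_w$ to a single eigenvalue term using the spectral decomposition (\ref{specdec}) together with the ABL probability formula (\ref{for23}). First, I would expand
\[
\langle A\rangle_w=\frac{\langle\Psi_{\mathrm{fin}}|A|\Psi_{\mathrm{in}}\rangle}{\langle\Psi_{\mathrm{fin}}|\Psi_{\mathrm{in}}\rangle}=\sum_i a_i\,\frac{\langle\Psi_{\mathrm{fin}}|P_{A=a_i}|\Psi_{\mathrm{in}}\rangle}{\langle\Psi_{\mathrm{fin}}|\Psi_{\mathrm{in}}\rangle},
\]
using both $A=\sum_i a_i P_{A=a_i}$ and the resolution of the identity $I=\sum_i P_{A=a_i}$; the latter also yields the useful identity $\langle\Psi_{\mathrm{fin}}|\Psi_{\mathrm{in}}\rangle=\sum_i\langle\Psi_{\mathrm{fin}}|P_{A=a_i}|\Psi_{\mathrm{in}}\rangle$.

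Next I would translate the hypothesis into a vanishing condition on the matrix elements of the spectral projectors. In the pre- and post-selected framework of this chapter, saying that a strong intermediate measurement of $A$ returns the value $a$ with probability one amounts, via the ABL formula (\ref{for23}) specialised to the single pair $(|\Psi_{\mathrm{in}}\rangle,|\Psi_{\mathrm{fin}}\rangle)$, to $\mathrm{Prob}(A=a_k)=0$ for every eigenvalue $a_k\neq a$. Since the numerator of (\ref{for23}) is a squared modulus, this forces $\langle\Psi_{\mathrm{fin}}|P_{A=a_k}|\Psi_{\mathrm{in}}\rangle=0$ for all $a_k\neq a$.

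Substituting this back into the displayed sum collapses it to a single surviving term, the one corresponding to $a_i=a$. Its numerator reduces to $a\,\langle\Psi_{\mathrm{fin}}|P_{A=a}|\Psi_{\mathrm{in}}\rangle$ and its denominator to $\langle\Psi_{\mathrm{fin}}|P_{A=a}|\Psi_{\mathrm{in}}\rangle$, so the common factor cancels and one obtains $\langle A\rangle_w=a$. Well-definedness is automatic, since the weak value presupposes $\langle\Psi_{\mathrm{fin}}|\Psi_{\mathrm{in}}\rangle\neq 0$, and by the preceding step this overlap coincides with $\langle\Psi_{\mathrm{fin}}|P_{A=a}|\Psi_{\mathrm{in}}\rangle$, which is therefore also nonzero.

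The only conceptual step, and in my view the main place where the argument could stumble, is the translation of the informal phrase \emph{``a strong measurement yields the outcome $a$ with probability one''} into the vanishing of $\langle\Psi_{\mathrm{fin}}|P_{A=a_k}|\Psi_{\mathrm{in}}\rangle$ for $a_k\neq a$; once this identification is accepted, the rest is a purely linear-algebraic collapse of the spectral expansion and the claim follows immediately.
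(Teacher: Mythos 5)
Your proof is correct and follows essentially the same route as the paper: translate ``probability one'' via the ABL-type formula (\ref{for23}) into the vanishing of the projector matrix elements for all eigenvalues $a_k\neq a$, then insert the spectral decompositions of $A$ and $I$ from (\ref{specdec}) into the weak value and let the sum collapse to the single term $a$. The only difference is cosmetic: the paper carries out the computation with the generalized expression (\ref{for40}) involving the sums $\sum_i\alpha_i\langle\Psi_i|\cdot|\Phi_i\rangle$, whereas you specialize to a single pre- and post-selected pair, which matches Definition \ref{weakval} and loses nothing in the argument.
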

\begin{proof}
Consider the state $\sum_{i=1}^N \alpha_i \langle \Psi_i| P_{A=a}| \Phi_i\rangle $ be such that the probability of finding the result $A=a$ in a strong measurement of $A$ is one. Then by (\ref{for23}) we have
$$
{\rm Prob}({A=a})=\frac{|\sum_{i=1}^N \alpha_i \langle \Psi_i| P_{A=a}| \Phi_i\rangle|^2 }{\sum_k|\sum_{i=1}^N \alpha_i \langle \Psi_i| P_{A=a_k}| \Phi_i\rangle|^2}=1.
$$
This implies that
\begin{equation}\label{for45}
\sum_{i=1}^N \alpha_i \langle \Psi_i|P=a_k| \Phi_i\rangle =0, \qquad \forall k \ {\rm such\ that\ } a_k\not= a.
\end{equation}
Consider now the weak value of $A$ which can be calculated by using the spectral decompositions of $A$
and $I$, see \eqref{specdec}, and replacing them into (\ref{for40}). By using (\ref{for45}) this gives us
$$
\langle A\rangle_w = \frac{\sum_{i=1}^N \alpha_i \langle \Psi_i | {A} |\Phi_i\rangle}{\sum_{i=1}^N \alpha_i \langle \Psi_i |\Phi_i\rangle}=\frac{\sum_{i=1}^N \alpha_i \langle \Psi_i| \sum_{k}a_k P_{A=a_k}| \Phi_i\rangle }{\sum_{i=1}^N \alpha_i \langle \Psi_i|\sum_k P_{A=a_k}| \Phi_i\rangle}=a.
$$
This concludes the proof.
\end{proof}

\begin{theorem} If the weak value of a dichotomic operator equals one of its eigenvalues, then the outcome of a strong measurement of the operator is that same eigenvalue with probability one.
\end{theorem}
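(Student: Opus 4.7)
The plan is to exploit the additivity of the weak value together with the fact that a dichotomic operator $A$ has only two eigenvalues, so that its spectral decomposition has only two projectors. Write $A = a_1 P_1 + a_2 P_2$ with $a_1 \neq a_2$ and $P_1 + P_2 = I$, and suppose without loss of generality that the weak value equals the eigenvalue $a_1$, i.e.\ $\langle A\rangle_w = a_1$.

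First I would apply the linearity theorem (the preceding Theorem, on sums of observables) to decompose
\[
\langle A\rangle_w = a_1 \langle P_1\rangle_w + a_2 \langle P_2\rangle_w,
\]
and observe that since $P_1+P_2=I$ one has $\langle P_1\rangle_w + \langle P_2\rangle_w = \langle I\rangle_w = 1$ (this follows directly from the definition of the weak value). Substituting $\langle P_1\rangle_w = 1 - \langle P_2\rangle_w$ into the above identity and using the assumption $\langle A\rangle_w = a_1$ gives
\[
a_1 = a_1(1-\langle P_2\rangle_w) + a_2 \langle P_2\rangle_w = a_1 + (a_2-a_1)\langle P_2\rangle_w,
\]
so that $(a_2-a_1)\langle P_2\rangle_w = 0$. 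Since $a_1 \neq a_2$, this forces $\langle P_2\rangle_w = 0$, which by the defining formula for the weak value translates into the vanishing amplitude
\[
\langle \Psi_{\mathrm{fin}}| P_{A=a_2} |\Psi_{\mathrm{in}}\rangle = 0.
\]

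The final step is to feed this vanishing amplitude into the ABL probability formula \eqref{for23} (or the simpler pre- and post-selected form of the ABL formula discussed in Section \ref{abl-main-idea}). Because the numerator corresponding to $a_2$ vanishes, $\mathrm{Prob}(A=a_2) = 0$, and since in a dichotomic operator the only alternative outcome is $a_1$, the normalization condition $\mathrm{Prob}(A=a_1) + \mathrm{Prob}(A=a_2) = 1$ yields $\mathrm{Prob}(A=a_1) = 1$, as required.

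There is no real analytic obstacle here; the argument is essentially algebraic, and the dichotomic hypothesis is exactly what is needed so that the single equation $\langle P_2\rangle_w = 0$ completely determines the outcome distribution. The only point requiring mild care is treating the case where the weak value is written with the more general expansion \eqref{for40} involving coefficients $\alpha_i$, but the same manipulation goes through since \eqref{for23} expresses the strong-measurement probability in terms of the very amplitude $\sum_i \alpha_i \langle \Psi_i | P_{A=a_2}|\Phi_i\rangle$ that our argument forces to vanish.
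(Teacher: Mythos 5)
Your proof is correct and follows essentially the same route as the paper: both use the spectral decomposition of the dichotomic operator to show that $\langle A\rangle_w=a_1$ forces the amplitude $\sum_i \alpha_i \langle \Psi_i| P_{A=a_2}|\Phi_i\rangle$ to vanish, and then invoke the ABL formula \eqref{for23} to conclude ${\rm Prob}(A=a_1)=1$. Your repackaging of the algebra through the projector weak values $\langle P_1\rangle_w+\langle P_2\rangle_w=1$ is just a slightly more structured way of doing the paper's ``immediate computation,'' not a different argument.
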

\begin{proof}

Assume that the operator $A$ has two distinct eigenvalues $a_1$, $a_2$ and assume that its weak value is $\langle A\rangle_w=a_1$. Then, by (\ref{for40}) we obtain that
$$
\langle A\rangle_w=\frac{\sum_{i=1}^N \alpha_i \langle \Psi_i | a_1 P_{A=a_1}| \Phi_i\rangle + \sum_{i=1}^N \alpha_i \langle \Psi_i | a_2 P_{A=a_2}| \Phi_i\rangle }{\sum_{i=1}^N \alpha_i \langle \Psi_i | P_{A=a_1}| \Phi_i\rangle + \sum_{i=1}^N \alpha_i \langle \Psi_i | P_{A=a_2}| \Phi_i\rangle }=a_1.
$$
An immediate computation shows that $\sum_{i=1}^N \alpha_i \langle \Psi_i | P_{A=a_2}| \Phi_i\rangle =0$.
Therefore the probability that when we strong measure $A$, we obtain $a_2$ is zero and so, by (\ref{for23}), the probability of the strong measurement of $A$ be $a_1$ is one.
\end{proof}

\bigskip

\begin{remark}{\rm  While the definition of weak value is inspired by physical considerations, it has been recently shown in \cite{cheon} that weak values can be defined in a very natural way without any recourse to a physical setting. Because of the importance of this notion in our book we think it is useful to recall the arguments given in \cite{cheon}. Let $A$ be an hermitian operator acting on $\mathbb C^n$. The idea of weak value consists in representing $A$ through two different orthonormal bases $|\Psi_j\rangle$ and $|\Phi_j\rangle$, $j=1,\ldots ,n$, such that $\langle \Phi_i|\Psi_j\rangle\not=0$ for all $i,j=1,\ldots, n$. Then the weak value of $A$ with respect to these bases is given by
$$
(A_{ij})_w=\frac{\langle \Phi_i|A|\Psi_j\rangle}{\langle \Phi_i|\Psi_j\rangle}.
$$
It is clear that once we have fixed an initial and final state of the system $\Psi_{\rm in}$, $\Psi_{\rm fin}$ we can express  $\Psi_{\rm in}$ in terms of the basis $\Psi_j$ and $\Psi_{\rm fin}$ in terms of the basis $\Phi_j$ and so the weak value $A_w$ can be written in terms of the components $(A_{ij})_w$.
}
\end{remark}

\bigskip

\section{Large weak values and superoscillations}
\label{spin100}
The weak value for the spin-1/2 system that we considered previously (and which was confirmed experimentally for an analogous observable, the polarization \cite{RSH}) is
$
({\sigma}_{\xi=45^\circ})_{{w}} = \sqrt{2},
$
in contrast with the expected eigenvalues $\pm 1$. Incidentally, we note that the weak values, even well outside the eigenvalue spectrum, can be obtained by post-selecting states which are more anti-parallel to the pre-selection: for example, if we post-select the $+1$ eigenstate of $(\cos\alpha)\sigma_x + (\sin\alpha)\sigma_z$, then $({\sigma}_z)_{{w}}= g_0\tan \frac{\alpha}{2}$, which gives arbitrarily large values such as spin-100.
 To obtain this result,  we post-select ${\sigma}_y=1$   instead of post-selecting ${\sigma}_x=1$; ${\sigma}_y=1$  will be satisfied in one-half the trials (see Figure \ref{multparticlesprepost}).\footnote{If a post-selection does not satisfy ${\sigma}_y=+1$, then that member of the sub-ensemble must be discarded.  This highlights a fundamental difference between pre- and post-selection due to the macrosopic arrow-of-time: in contrast to post-selection, if the pre-selection does not satisfy the criterion, then a subsequent unitary transformation can transform to the proper criterion.}

\begin{figure}[h]
\vskip 1cm
\begin{picture}(400,90)(0,0)
\put(90,10){\line(1,0){40}}
\put(90,90){\line(1,0){40}}
\color{BlueViolet}
\put(35,0){\makebox(0,0){all $\mid\!{\sigma}_x =+1\!\rangle$}}
\put(110,10){\vector(0,1){20}}
\color{RedViolet}
\put(110,90){\vector(0,-1){20}}
\put(35,113){\makebox(0,0){either}}
\put(35,100){\makebox(0,0){$\mid\!{\sigma}_y =+1\!\rangle$}}
\put(35,87){\makebox(0,0){or $\mid\!{\sigma}_y =-1\!\rangle$}}

\put(110,100){\makebox(0,0){$\mid\!{\sigma}_y =-1\!\rangle$}}

\put(170,100){\makebox(0,0){$\mid\!{\sigma}_y =+1\!\rangle$}}

\put(230,100){\makebox(0,0){$\mid\!{\sigma}_y =-1\!\rangle$}}

\put(370,100){\makebox(0,0){$\mid\!{\sigma}_y =+1\!\rangle$}}

\color{OliveGreen}

\put(170,50){\oval(70,120)}
\put(370,50){\oval(70,120)}

\color{Black}

\put(110,0){\makebox(0,0){particle 1}}
\put(150,10){\line(1,0){40}}
\put(150,90){\line(1,0){40}}
\color{BlueViolet}

\put(170,10){\vector(0,1){20}}
\color{RedViolet}

\put(170,90){\vector(0,-1){20}}

\color{Black}
\put(170,0){\makebox(0,0){particle 2}}
\put(210,10){\line(1,0){40}}
\put(210,90){\line(1,0){40}}
\color{BlueViolet}

\put(230,10){\vector(0,1){20}}
\color{RedViolet}

\put(230,90){\vector(0,-1){20}}

\color{Black}
\put(230,0){\makebox(0,0){particle 3}}

\put(260,50){\circle*{3}}
\put(280,50){\circle*{3}}
\put(300,50){\circle*{3}}
\put(320,50){\circle*{3}}
\put(350,10){\line(1,0){40}}
\put(350,90){\line(1,0){40}}
\color{BlueViolet}

\put(370,10){\vector(0,1){20}}
\color{RedViolet}

\put(370,90){\vector(0,-1){20}}
\color{Black}
\put(370,0){\makebox(0,0){particle N}}
\put(80,10){\makebox(0,0){$t_{in}$}}
\put(80,90){\makebox(0,0){$t_{fin}$}}
\put(35,55){\makebox(0,0){weak measurement}}
\put(35,45){\makebox(0,0){of ${\sigma}_{\xi=45^\circ}$ at time $t$}}

\color{Orange}
\bezier{500}(90,40)(105,60)(120,40)

\color{WildStrawberry}
\bezier{35}(100,40)(115,60)(130,40)

\color{Orange}
\bezier{500}(150,40)(165,60)(180,40)
\color{WildStrawberry}
\bezier{35}(160,40)(175,60)(190,40)

\color{Orange}
\bezier{500}(210,40)(225,60)(240,40)
\color{WildStrawberry}
\bezier{35}(220,40)(235,60)(250,40)

\color{Orange}
\bezier{500}(350,40)(365,60)(380,40)
\color{WildStrawberry}
\bezier{35}(360,40)(375,60)(390,40)

\end{picture}

\caption[Complete correlation between N particles]{Statistical weak measurement ensemble. From \cite{townes}.}
{\small }
\label{multparticlesprepost}
\end{figure}
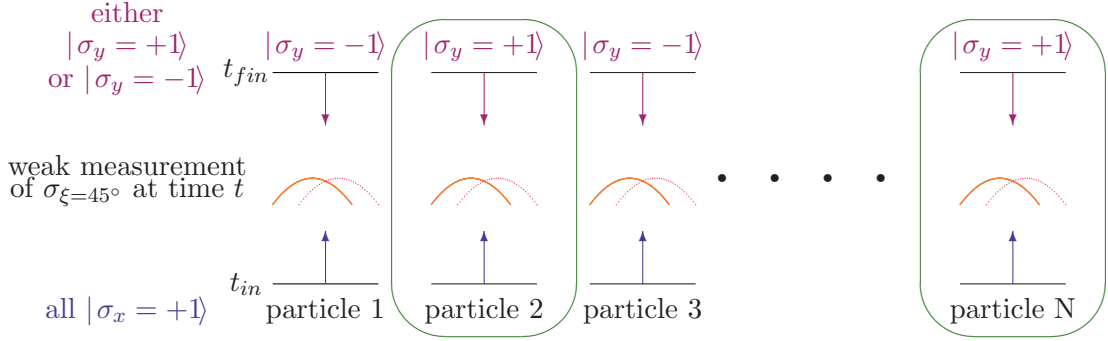

To show this in an actual calculation, we
use (\ref{post_selected}) and the post-selected state of the quantum system in the  $\sigma_{\xi}$ basis ($ \vert\uparrow_y\rangle\equiv \cos (\pi/8) \vert\uparrow_{\xi}\rangle - \sin (\pi/8)
\vert\downarrow_{\xi}\rangle$), the measuring device probability
distribution is:
\[
{\rm Prob}(P_{\mathrm{md}})=N^2 \left[\cos ^2(\pi/8) e^{ -(P_{\mathrm{md}}-1)^2 /{\Delta} ^2}
- \sin ^2(\pi/8) e^{ -(P_{\mathrm{md}}+1)^2 /{\Delta} ^2}\right]^2.
\]
With a strong or an ideal measurement, $\Delta\ll1$, the distribution is localized again around the eigenvalues $\pm1$, as illustrated in figures \ref{wmspin2}.a and
\ref{wmspin2}.b.
What is different, however, is that when the measurement is weakened, i.e. $\Delta$
is made
larger, then the distribution changes to one single distribution centered around
$\sqrt{2}$, the weak value, as illustrated in Figure \ref{wmspin2}.c-f, (the width again is reduced with an ensemble \ref{wmspin2}.f).
Using  ~(\ref{expweak}), we can see that the weak value is just the pre- and post-selected sub-ensemble arising from within the pre-selected-only ensembles.
\begin{figure}
  \centering
\includegraphics[width=12cm,height=8cm,keepaspectratio=true]{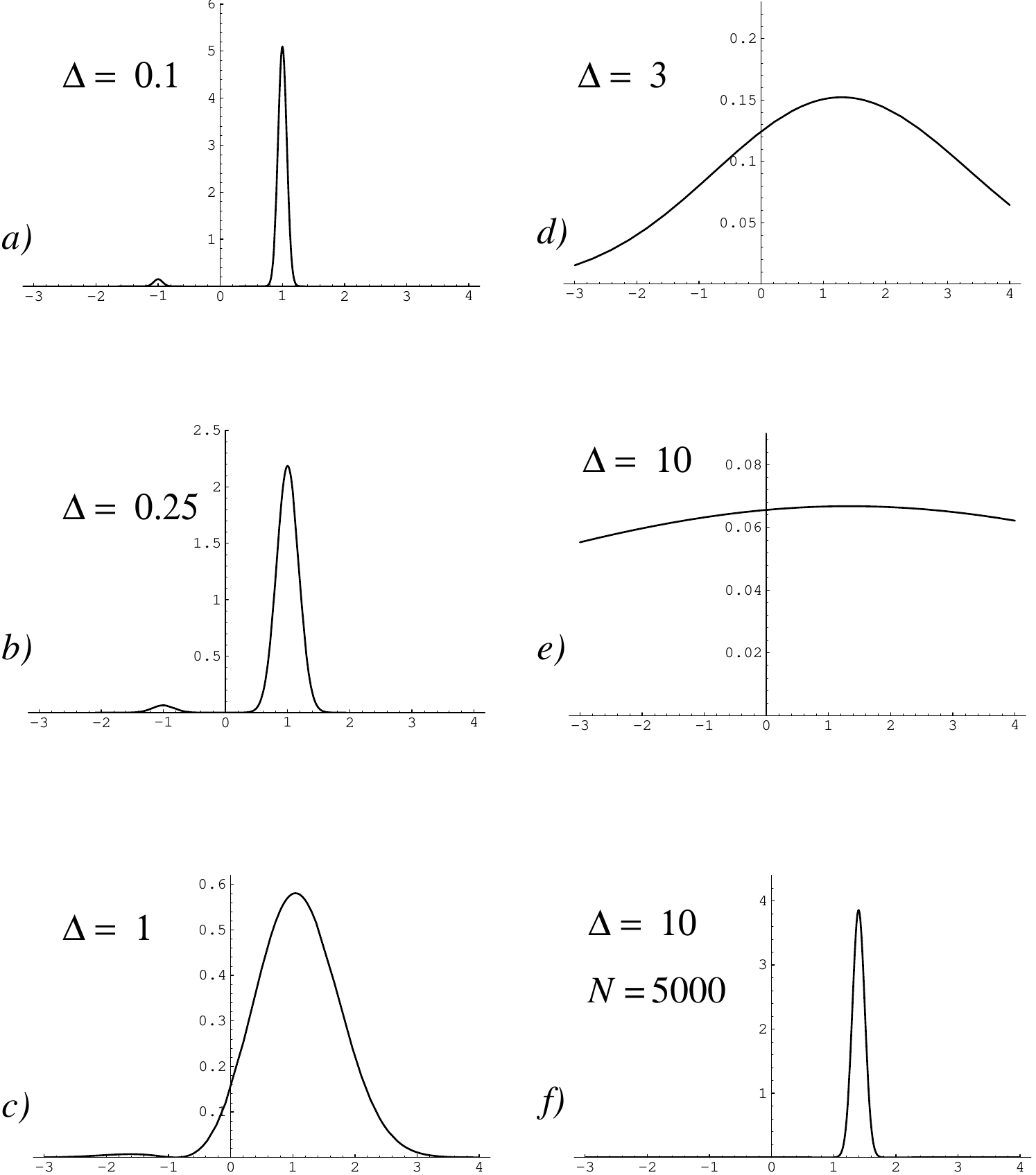}
\caption[Measurement on pre- and post-selected ensemble of single spins]{\small {\bf
~ Measurement on pre- and post-selected ensemble.}~
 ``Probability distribution of the pointer variable for measurement of
$\sigma_\xi$ when the particle is pre-selected in the state $\vert
{\uparrow_x} \rangle$ and post-selected in  the state $\vert
{\uparrow_y} \rangle$. The  strength of the
measurement  is
parameterized by the width of the distribution $\Delta$.
 ($a$) $\Delta = 0.1$; ($b$) $\Delta = 0.25$; ($c$) $\Delta =
1$; ($d$) $\Delta = 3$; ($e$) $\Delta = 10$.
  ($f$) Weak measurement on the ensemble
of 5000 particles; the original width of the peak, $\Delta = 10$, is reduced to
$10/\sqrt 5000 \simeq 0.14$. In the strong measurements ($a$)-($b$)
the pointer is localized
around the eigenvalues $\pm1$, while in the weak measurements ($d$)-($f$)
the peak of the distribution is located in the weak value
$(\sigma_\xi)_w = \langle {\uparrow_y}|
\sigma_\xi |{\uparrow_x} \rangle/\langle {\uparrow_y}|{\uparrow_x} \rangle
= \sqrt2$. The outcomes of the weak measurement on the ensemble
of 5000 pre- and post-selected particles, ($f$), are clearly outside the
range of the eigenvalues, (-1,1)." From \cite{av2v}.}
\label{wmspin2}
\end{figure}
\noindent

 Instead of considering an ensemble of spin-1/2 particles, we now consider ``particles" which are composed of many $N$ spin-1/2 particles,
and perform  a weak measurement of the collective observable ${\sigma}_\xi\upn \equiv \frac{1}{N}
\sum_{\mathrm{i=1}}^{N} {\sigma}_\xi^i$ in the $45^{\circ}$-angle to the $x-y$ plane. Using
$H_{\mathrm{int}} = -{{ g_0  \delta(t)}\over N}  {Q}_{\mathrm{md}} \sum_{\mathrm{i=1}}^N  {\sigma}^i_\xi$, a particular pre-selection of $|{\uparrow_x} \rangle$ (i.e. $|\Psi_{\mathrm{in}}\upn\rangle = \prod_{\mathrm{j=1}}^N |{\uparrow_x} \rangle_j$) and post-selection
 $|{\uparrow_y}\rangle$
(i.e.
$\langle\Psi_{\mathrm{fin}}\upn|  = \prod_{\mathrm{k=1}}^N \langle{\uparrow_y}|_k=\prod_{\mathrm{n=1}}^N \left\{\langle{\uparrow_z} |_n+i\langle{\downarrow_z}
|_n\right\}$).
The final state of the measuring device is then:
\beq
|\Phi_{\mathrm{fin}}^{\mathrm{md}}\ra=\prod_{j=1}^N \langle{\uparrow_y}|_j \exp\left\{{{ g_0}\over N}  {Q}_{\mathrm{md}} \sum_{\mathrm{k=1}}^N  {\sigma}^k_\xi\right\}  \prod_{i=1}^N|{\uparrow_x} \rangle_i |\Phi_{\mathrm{in}}^{\mathrm{md}}\ra .
\label{bigspinb}
\eeq
Since the spins do not interact with each other, we can calculate one of the products and take the result to the $N$-th power:
\[
\begin{split}
|\Phi_{\mathrm{fin}}^{\mathrm{md}}\ra&=\prod_{j=1}^N \langle{\uparrow_y}|_j \exp\left\{{{ g_0}\over N}  {Q}_{\mathrm{md}}  {\sigma}^j_\xi\right\} |{\uparrow_x} \rangle_j |\Phi_{\mathrm{in}}^{\mathrm{md}}\ra\\
&=\left\{\langle{\uparrow_y}| \exp\left\{{{ g_0}\over N}  {Q}_{\mathrm{md}}  {\sigma}_\xi\right\} |{\uparrow_x} \rangle\right\}^N\!\!\!\! |\Phi_{\mathrm{in}}^{\mathrm{md}}\ra .
\end{split}
\]
We now use the identity:
   $$\exp\left\{{i\alpha{\sigma}_{\vec{n}}}\right\}=\cos\alpha+i{\sigma}_{\vec{n}}\sin\alpha.$$
   This identity is easily proven using the fact that for any integer $k$ one has
$\sigma_{{n}}^{2k}=I$ and $\sigma_{{n}}^{2k+1}=\sigma_{{n}}$.
Thus it follows that:
\[
\begin{split}
e^{i\alpha\sigma_{{n}}}&=\sum_{k=0}^\infty\frac{(i\alpha)^k\sigma_{{n}}^k}{k!}\\
&=\sum_{k=0}^\infty\frac{(i\alpha)^{2k}}{(2k)!}+\sigma_{{n}}\sum_{k=0}^\infty\frac{(i\alpha)^{2k+1}}{(2k+1)!}\\
      &=e^{i\alpha\sigma_{{n}}}=\cos\alpha+i\sigma_{{n}}\sin\alpha.
\end{split}
\]
As a consequence we obtain:
\begin{eqnarray}
\ket{\Phi_{\mathrm{fin}}^{\mathrm{md}}}&=&\left\{\langle{\uparrow_y}| \left[\cos \frac{{ g_0} {Q}_{\mathrm{md}}}{N}-i{\sigma}_\xi\sin \frac{{ g_0} {Q}_{\mathrm{md}}}{N}\right]  |{\uparrow_x} \rangle\right\}^N |\Phi_{\mathrm{in}}^{\mathrm{md}}\ra\nonumber\\
&=&
{\left[\langle{\uparrow_y}|{\uparrow_x} \rangle\right]^N }
\left\{\cos \frac{{ g_0} {Q}_{\mathrm{md}}}{N}-i\alpha_w\sin \frac{{ g_0} {Q}_{\mathrm{md}}}{N}\right\}^N  |\Phi_{\mathrm{in}}^{\mathrm{md}}\ra
\label{bigspina}
\end{eqnarray}
where we have substituted $\alpha_w\equiv({\sigma}_\xi)_w=\weakv {\uparrow_y}{{\sigma}_\xi}{\uparrow_x }$.    We consider only the second part. In fact the first bracket, a number, can be neglected since it does not depend on ${Q}$ and thus can only affect the normalization:
\beq
\ket{\Phi_{\mathrm{fin}}^{\mathrm{md}}}=
\left\{1-\frac{{ g_0}^2 ({Q}_{\mathrm{md}})^2}{N^2} - \frac{i{ g_0} \alpha_w{Q}_{\mathrm{md}}}{N} \right\}^N|\Phi_{\mathrm{in}}^{\mathrm{md}}\ra\approx e^{i g_0\alpha_w  {Q}_{\rm md} } |\Phi_{\mathrm{in}}^{\mathrm{md}}\ra .\nonumber\\
\label{bigspinc}
\eeq
The last approximation for $N\rightarrow\infty$, is obtained by using
$(1+\frac{a}{N})^N=(1+\frac{a}{N})^{\frac{N}{a}a} \approx e^a$.\\
When we project onto $P_{\mathrm{md}}$, i.e. the pointer, we see that the pointer is robustly shifted by the
the same weak value, i.e. $\sqrt{2}$:
\beq
  ({\sigma}_\xi)_{{w}} = {{\prod_{k=1}^N \langle{\uparrow_y}|_k ~ \sum_{\mathrm{i=1}}^N
\left\{{\sigma}^i_x + {\sigma}^i_y\right\} ~ \prod_{\mathrm{j=1}}^N |{\uparrow_x} \rangle_j}
\over { \sqrt 2 ~ N(\langle{\uparrow_y} |{\uparrow_x} \rangle)^N}}=
\sqrt2 \pm O(\frac{1}{\sqrt N}).
\label{wvlargespin}
\eeq
A single experiment is now sufficient to
determine the weak value with great precision and there is no longer
any need to average over results obtained in multiple experiments as we did in the previous section.
Moreover, by repeating the experiment with different measuring devices, we see that  each measuring device
shows the very same weak values, up to an insignificant spread of $\frac{1}{\sqrt
{ N}}$
and the information from {\it both} boundary conditions, i.e.
$$|\Psi_{\mathrm{in}}\rangle = \prod_{\mathrm{i=1}}^N |{\uparrow_x} \rangle_i\qquad
 {\rm and}\qquad
\langle\Psi_{\mathrm{fin}}|  = \prod_{\mathrm{i=1}}^N \langle{\uparrow_y}|_i,
$$
describes the entire interval of time between pre- and post-selection.
For example, following \cite{av2v}, we consider $N=20$.  The probability distribution of the measuring device after the post-selection is:
\begin{equation}
  \label{probfin}
 {\rm Prob} (Q_{\rm md}^{(N)}) ={N}^2 \Bigl (\sum_{i=1}^N (-1)^i \bigl(\cos  ^2
(\pi/8)\bigr)^{N-i} \bigl(\sin  ^2 (\pi/8)\bigr)^i
e^{-(Q_{\rm md}^{(N)}-{{(2N-i)}\over N})^2/{2\Delta^2}}\Bigr)^2 .
\end{equation}
and is drawn for different values of $\Delta$ in Figure \ref{wmspin3}. While this result is rare, we have recently shown~\cite{at3} how any ensemble can yield robust weak values like this in a way that is not rare and for a much stronger regime of interaction. Thus, our discussion shows that weak values are a general property of every pre- and post-selected ensemble.

\begin{figure}
  \centering
  \includegraphics[width=15cm,height=11cm,keepaspectratio=true]{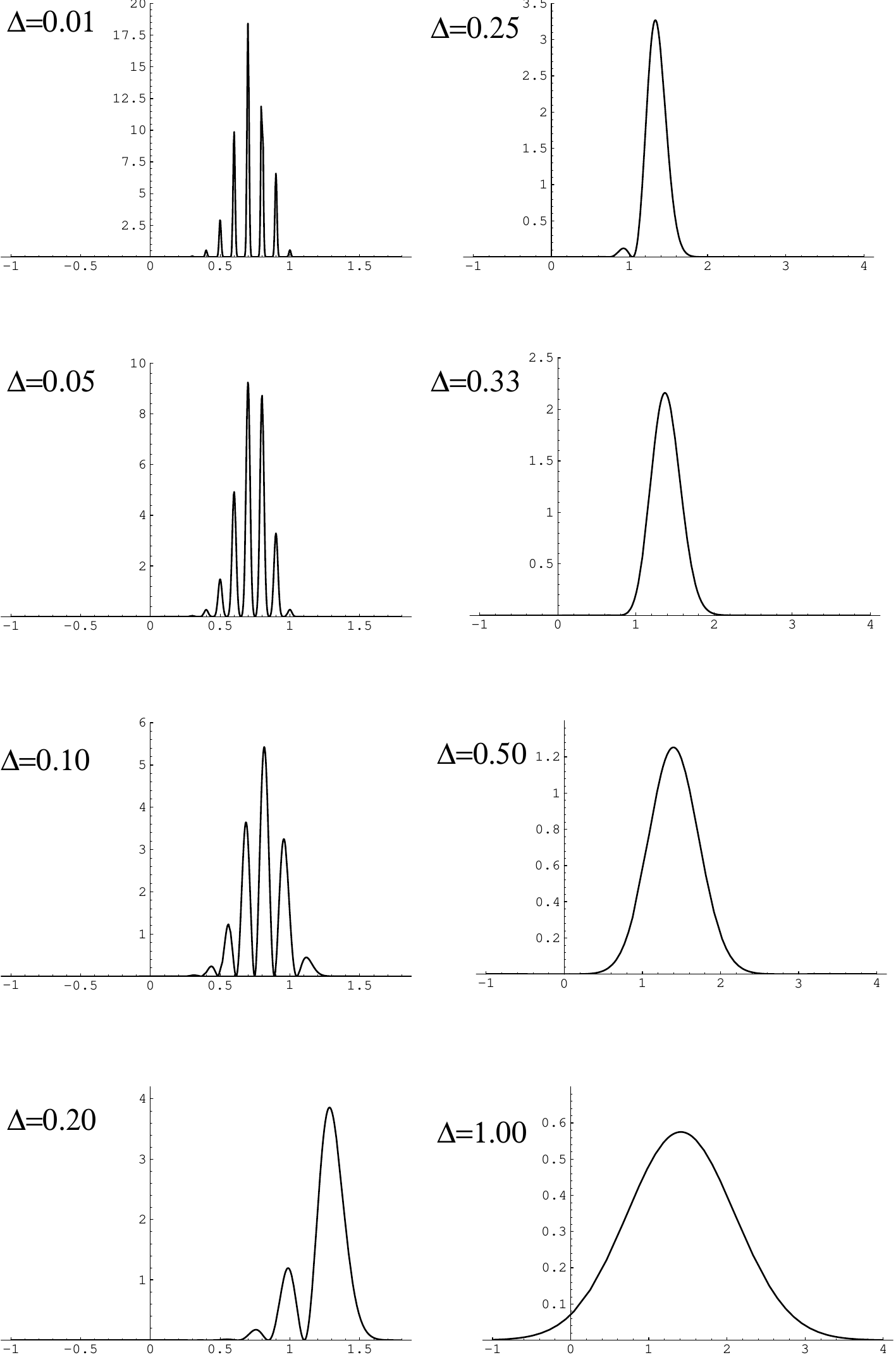}
\vskip 1cm
\caption[Measurement on pre- and post-selected ensemble of a single large spin]{\small{\bf ~ Measurement on a single system.}~ ``Probability
  distribution of the pointer variable for the measurement of $A
  =(\sum_{i=1}^{20} (\sigma_i)_\xi)/20$ when the system of 20 spin-$1\over 2$
  particles is pre-selected in the state $|\Psi_1\rangle =
  \prod_{i=1}^{20} |{\uparrow_x} \rangle _i$ and post-selected in the
  state ${|\Psi_2\rangle = \prod_{i=1}^{20} |{\uparrow_y} \rangle _i}$.
  While in the very strong measurements, $\Delta = 0.01-0.05$, the
  peaks of the distribution located at the eigenvalues, starting from
  $\Delta = 0.25$ there is essentially a single peak at the location
  of the weak value, $A_w = \sqrt 2$." From \cite{av2v}.}
\label{wmspin3}
\end{figure}
\noindent

As an example, consider again ~(\ref{bigspina}):
\begin{eqnarray}
|\Phi_{\mathrm{fin}}^{\mathrm{md}}\ra&=&\left\{\cos \frac{{\lambda} {Q}_{\rm md}}{N}-i\alpha_w\sin \frac{{\lambda} {Q}_{\rm md}}{N}\right\}^N|\Phi_{\mathrm{in}}^{\mathrm{md}}\ra\nonumber\\
&=&\left\{\frac{\exp(\frac{i\lambda Q_{\rm md}}{N})+\exp(-\frac{i{\lambda} {Q}_{\rm md}}{N})}{2}+\alpha_w
\frac{\exp(\frac{i\lambda Q_{\rm md}}{N})-\exp(-\frac{i{\lambda} {Q}_{\rm md}}{N})}{2}
\right\}^N|\Phi_{\mathrm{in}}^{\mathrm{md}}\ra\nonumber\\
&=&\underbrace{\left\{\exp\left({\frac{i{\lambda} {Q}_{\rm md}}{N}}\right)\frac{(1+\alpha_w )}{2}+\exp\left({-\frac{i{\lambda} {Q}_{\rm md}}{N}}\right)\frac{(1-\alpha_w )}{2}\right\}^N}_{\equiv \Psi(x)}|\Phi_{\mathrm{in}}^{\mathrm{md}}\ra \nonumber
\end{eqnarray}
We already saw how this could be approximated as $\exp({i\lambda\alpha_w  {Q}_{\rm md} } ) |\Phi_{\mathrm{in}}^{\mathrm{md}}\ra$ which produced a robust-shift in the measuring device by the weak value $\sqrt{2}$.
However, we can also view
$$
\Psi (x)=\left\{\exp\left({\frac{i{\lambda} {Q}_{\mathrm{md}}}{N}}\right)\frac{(1+\alpha_w )}{2}+\exp\left({-\frac{i{\lambda} {Q}_{\mathrm{md}}}{N}}\right)\frac{(1-\alpha_w )}{2}\right\}^N
$$
in a different way, by performing a binomial expansion:
\begin{eqnarray}
&\Psi (x)=\sum_{n=0}^N \frac{(1+\alpha_w)^n(1-\alpha_w)^{N-n}}{2^N} \frac{N!}{n!(N-
n)!}\exp\left(\frac{in\lambda{Q}_{\mathrm{md}}}{N}\right)\exp\left(\frac{-i\lambda{Q}_{\mathrm{md}} (N-n)}{ N}\right) &
\nonumber\\
&=\sum_{n=0}^N c_n \exp\left(\frac{i\lambda{Q}_{\mathrm{md}} (2n-N)}{ N}\right)=\sum_{n=0}^N c_n \exp\left(\frac{i\lambda{Q}_{\mathrm{md}}\lambda _n}{ N}\right).
\label{binom}
\end{eqnarray}
From this computation, we see that this wavefunction is a
superposition of waves with small wavenumbers $\vert k\vert\leq 1$ (because
$-1<\frac{2n-N}{N}<1$).
For a small region (which can include several wavelengths $2\pi /\alpha_w$, depending on
how large one chooses $N$), $\Psi (x)$
appears to have a very large momentum, since $\alpha_w$ can be arbitrarily large, i.e. a superoscillation.

\begin{figure}[tbph!]
\vskip -.4cm \centering
\includegraphics[width=4.54in,height=2.0in,keepaspectratio]{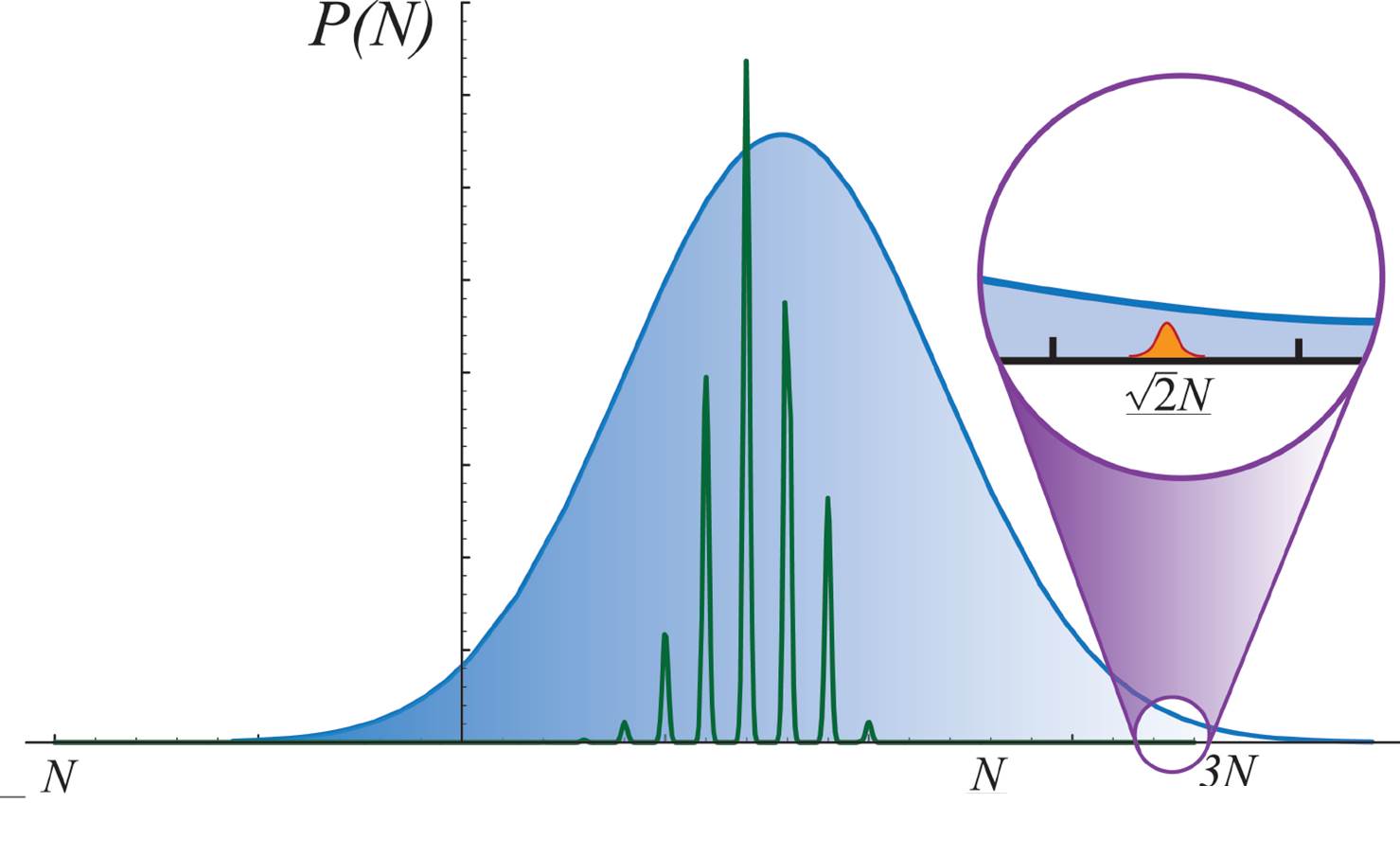}
\caption{{\footnotesize Combining the results from Figure \ref{wmspin3}, we draw here the probability distributions for strong and weak measurements of the observable $\sum_{i=1}^{20} (\sigma_i)_{45^{\circ}})/20$ for a system of $20$ spin-1/2 particles
pre-selected in the state $|\Psi_{\rm in}\rangle =
  \prod_{i=1}^{20} |{\uparrow_x} \rangle _i$. Before the post-selection is performed, the spikes in the distribution  (colored in green) represent the possile measurement outcomes (which are eigenvalues) for an ideal measurement.
The wider curve (colored in blue) represents the probabilities for a weak measurement. After the post-selection is performed for the very unlikely state ${|\Psi_{\rm fin}\rangle = \prod_{i=1}^{20} |{\uparrow_y} \rangle _i}$,
a single peak is left (colored in red) way out in the tail at the ``impossible" location
  of the weak value, $A_w = N\sqrt 2$. Adaptation based on \cite{townes}.}}
\label{bigspin-fig3}
\end{figure}

To summarize,
TSQM is a {\it reformulation} of the standard approach to Quantum Mechanics, and therefore it must be possible to view the novel effects from the traditional single-vector perspective.  This is precisely what super-oscillations teach us.  In summary,
there are  two ways to understand weak values:
\begin{itemize}
\item the measuring device is registering the weak value as a property of the system as characterized by TSQM.
\item the weak value is a result of a complex interference effect in the measuring device, i.e. a superoscillation; the system continues to be described with a single-vector pursuant to the standard approach to Quantum Mechanics.
\end{itemize}

As can be seen in Figure \ref{bigspin-fig3}, the probability to obtain the weak value as an error of the measuring device is greater than the probability to actually obtain the weak value.  This is essential to preserve causality. More importantly however, the weak value is not a random error. The weak value is precisely what we expect to happen. Furthermore, it is a highly predictable property because it always occurs whenever we obtain a given post-selection.
 Finally, it is an appropriate description for a broad range of physical situations, since  any weakened interaction, and not just measurements, experience the weak value.

\begin{remark}
The notions of weak value and weak measurement have turned out to be very applicable and we conclude this Chapter by pointing out a number of facts which lead to experiments.
We begin by observing that
limited
 disturbance measurements have been used to explore many
 paradoxes such as Hardy's paradox ~\cite{at2}, \cite{har92}, \cite{lun09}, \cite{yok09}, the three-box problem \cite{jmav}, \cite{Kolenderski}, \cite{Goerge}, \cite{steinber2004} and other paradoxes \cite{at}, \cite{ccat}, \cite{nt}, \cite{jt}, \cite{at3}.  A number of experiments have been performed to test the predictions made by weak measurements and results have proven to be in very good agreement with
 theoretical predictions \cite{Ahnert}, \cite{Pryde}, \cite{RSH}, \cite{Wiseman}.  Since eigenvalues or expectation values can be derived from weak values~\cite{ab}, it is clear the reason why the weak value is of fundamental importance in Quantum Mechanics.
Paper \cite{aav}, and the discovery of superoscillations have proven to be extremely useful tools in quantum information science and technology.
For example, \cite{aav} and superoscillations directly led to the notion of the quantum random walk~\cite{adz}.
It was shown that implementation of the quantum random walk would lead to a universal quantum computer as well as a quantum
simulator (to study, e.g., phase transitions). Recent experimental realizations of the quantum random walk have been successful (for example: trapped atom with optical
lattice and ion trap; photons in linear optics). Moreover, the most recent proposal for a quantum algorithm which yields a quantum speed-up for a quantum computer was based on the quantum random walk \cite{qrw-alg}.
In addition, the quantum random walk offers one of the most successful connections with topological phases \cite{qrw-phase}.
This has also already been implemented experimentally. The quantum random walk has thus resulted in the first demonstration of topological phases in one dimension by using linear optics \cite{qrw-phase2}.

Another example of the impact of \cite{aav} on quantum information science and technology is a novel metrological technique.
 Aharonov, Albert and Vaidman first pointed out that the anomalously large deflection of the beam in a Stern-Gerlach apparatus was controlled by the overlap between pre- and post-selected states, as well as the size of the magnetic field.  Consequently, a small change in the magnetic field would result in a large change in the beam displacement.  Therefore, this effect could also be used to measure small changes in a magnetic field.
After the original suggestion in 1988 in \cite{aav}, nothing happened until paper \cite{at3} appeared in 2007, which resulted in the first experimental use of weak values for metrology in an experiment by Hosten and Kwiat \cite{kwiat}. They measured
the optical spin Hall effect, an effect where different polarizations of light are shifted spatially in different, polarization dependent directions when the beam is incident on a glass interface. The effect theoretically corresponded to a spatial shift by 1 Angstrom, which is much smaller than the width of the optical beam.  In order to measure this shift, they utilized \cite{aav}, pre- and post-selected the polarization,
and measured the  deflection (amplified by $10^4$) on a position sensitive detector (at the cost of a reduced intensity).

Furthermore, Brunner and Simon \cite{brunner} introduced an
``interferometric scheme based on a purely imaginary weak value, combined
with a frequency-domain analysis, which may have potential to outperform
standard interferometry by several orders of magnitude", see \cite{lix11}.

One difficulty in the polarization based experiments is the fact that the source of deflection must be polarization dependent.   The Rochester group was able to generalize this by switching over to an interferometer based system \cite{townes}, where the different paths corresponded to different directions of deflection.  The pre-selection and post-selection corresponded to the optical beam entering and leaving different ports of the interferometer, while the weak measurement corresponded to a moving mirror slightly  misaligning the interferometer.  A piezo activated mirror moved the mirror slightly back and forth by a known amount, and the test was to see how small the interferometric weak value technique could measure it.  With an hour of integration time, the group reported 500 frad resolution, and later found a signal-to-noise ratio at the standard quantum limit.  This was done with milliwatts of power in an open air experiment \cite{howell}.
Turner et al., see \cite{tur11}, adjusted the scheme of  \cite{howell} for the use in torsion balance experiments in gravity research; they demonstrated picoradian accuracy of deflection measurements.
Hogan et al. \cite{hog11} included a folded optical lever into the scheme of \cite{howell} and achieved a record angle sensitivity of 1.3 prad/$\sqrt{\rm Hz}$; their scheme is potentially applicable for gravitational wave detection.

Many more experiments have been performed utilizing weak PPS measurements. We mention
\cite{bru04}, \cite{cho10}, \cite{howell}, \cite{gog11}, \cite{kwiat}, \cite{how10}, \cite{lun09}, \cite{mir07}, \cite{par98}, \cite{Pryde}, \cite{steinber2004}, \cite{RSH}, \cite{sol04}, \cite{sta09}, \cite{sta10}, \cite{sut95}, \cite{sut93}, \cite{wan06}, \cite{yok09}.
Experimentally, it was implemented on a large variety of systems, types of couplings, and experimental configurations. Most of the experiments were optical, one \cite{sut93} utilized nuclear magnetic resonance, and others utilized a solid state setup. Foe example, \cite{gefen} showed ``a significant amplification even in the
presence of finite temperatures, voltage, and external noise."
 Some of the optics experiments were performed on single photons \cite{bru04}, \cite{cho10}, \cite{howell}, \cite{kwiat}, \cite{how10}, \cite{mir07}, \cite{steinber2004}, \cite{RSH}, \cite{sol04}, \cite{par98}, \cite{sta09}, \cite{sta10}, \cite{sut95},
could be interpreted both classically and from a quantum perspective.
Even from a purely classical perspective, the study of weak values has led to a variety of ``new" phenomena. For example, the experiment proposed by Knight et al.  \cite{kni90} and performed by Parks et al.  \cite{par98} demonstrated an enhanced shift of the beam in either coordinate or momentum space.  Morover, the weak PPS methods have been implemented to produce  beam-deflection measurements \cite{hog11}, \cite{sta09}, \cite{tur11}, as well as phase and frequency \cite{sta10} measurements.
\end{remark}

\chapter{Basic mathematical properties of superoscillating sequences}
\label{sec3}

\section{Superoscillating sequences}

In this chapter we study the mathematical properties of superoscillating sequences.

\begin{definition}\label{ipsilon enne}
We call {\em generalized Fourier sequence}\index{generalized Fourier sequence}
a sequence of the form
\begin{equation}\label{basic_sequence}
Y_n(x,a):= \sum_{j=0}^n C_j(n,a)e^{ik_j(n)x}
\end{equation}
where $a\in\mathbb R^+$,  $n\in\mathbb N$, $C_j(n,a)$ and $k_j(n)$ are real valued functions.
\end{definition}
\begin{remark}{\rm
The sequence of partial sums of a Fourier expansion is a particular case of
this notion with $C_j(n,a)=C_j\in\mathbb R$ and $k_j(n)=k_j\in\mathbb R$ are multiples of a real number.}
\end{remark}
\begin{definition}\label{superoscill}
Let $a,\alpha\in\mathbb{R}^+$.
A generalized Fourier sequence
$$Y_n(x,a)=\sum_{j=0}^n C_j(n,a)e^{ik_j(n)x}$$ is said to be a {\em superoscillating sequence} if: \index{superoscillating sequence}
\begin{itemize}
 \item[i)]
  $|k_j(n)|< \alpha$\ \ {\rm for all} $n$ {\rm and} $j\in \mathbb{N}\cup\{0\}$;
\item[ii)] there exists a compact subset of $\mathbb R$, which will be called a {\em superoscillation set}, on which $Y_n$ converges uniformly to $e^{ig(a)x}$ where $g$ is a continuous real-valued function such that $|g(a)| > \alpha$.
\end{itemize}
\end{definition}
The usual Fourier sequence of a function is obviously not superoscillating because it violates i).\\

Indeed, one can consider a somewhat more general situation described by the following definition.
\begin{definition}\label{generalized_super}
 Given $f(x)=\sum_{j=0}^{\infty} d_j e^{i a_j x}$, for $a_j$, $d_j\in \mathbb{R}$, we say that the sequence
$$S_n(x)=\sum_{j=0}^n C_j(n)e^{ik_j(n)x},$$ where $C_j(n)$ are real valued functions, is $f$-superoscillating  if there exists an index $J$ such that $\sup_j |k_j(n)|<a_J$ and the sequence $S_n$ converges uniformly to $f$ on some compact subset of $\mathbb R$.
\end{definition}

\begin{remark}{\rm
 Definition \ref{generalized_super} can be considered in a different framework. Assume $f$ to be a $\mathcal C^\infty$ function. Then the representation $f(x)=\sum_{j=0}^{\infty} d_j e^{i a_j x}$ is equivalent (under suitable conditions) to requiring that $f$ is a solution of the convolution equation
$$
\mu * f=0
$$
where $\mu$ is a compactly supported distribution whose Fourier transform $\hat\mu$ vanishes exactly at the points $ia_j$ with multiplicity $1$. On the other hand, every partial sum
$$
S_n(x)=\sum_{j=0}^n C_j(n)e^{ik_j(n)x}
$$
is clearly the solution of an ordinary differential equation
$$
\wp_n\left(\frac{d}{dx}\right)S_n= 0
$$
where the polynomials $\wp_n$ vanish at least at the points $i k_j(n)$.
Therefore, just like entire functions can be seen as "limits" of polynomials of increasing degrees, so convolution operators are limits of ordinary differential operators of increasing order. Of course not every holomorphic function is the symbol of a convolution operator since the Paley-Wiener-Schwartz theorem \cite{ehrenpreis} imposes very specific requests on the growth of an entire function $F$ in order for $F$ to be the Fourier transform of a compactly supported distribution $\mu$, and therefore the symbol of the convolutor $\mu * \cdot$.}
\end{remark}

\begin{remark}{\rm
It is worth making a different kind of argument as well. Instead of considering the class of $\mathcal{C}^\infty$ functions, one might be tempted to consider the space of analytic functions. This would be reasonable since the functions we are considering are convergent series of exponentials, and it would be possible to impose suitable conditions on the coefficients $d_j$ to ensure the analyticity of the sum of the series. However, the space $\mathcal A$ of real analytic functions is not an Analytically Uniform space (see Chapter 4 and \cite{bd}) and the theory of convolution equations in it is not yet well understood. We will instead consider the space of those real analytic functions which can be extended  as entire functions of a complex variable, when the real variable $x$ is replaced by the complex variable $z$.  In this case it is possible to give explicit conditions for $\sum_{j=0}^\infty d_j \exp(ia_j z)$ to be convergent to an entire function in $\mathbb C$, see \cite{bt1}, and the main advantage which we will fully exploit in Chapters 5 and 6 is the fact that such series can be considered as solutions to suitable convolution equations $\mu * f=0$, where now the convolutor $\mu$ is an analytic functional whose Fourier-Borel transform is also an entire function and has exponential type bounds, see Chapter 3. An important special case of this situation occurs when the analytic functional $\mu$ is supported at the origin. In this case
the expansion $\sum_{j=0}^\infty d_j \exp(ia_j z)$ may be interpreted as being the solution of an infinite order differential equation, and
$a_j$ satisfy the condition $|a_j-a_\ell|\geq c |j-\ell|$ for some constant $c$.}
\end{remark}

The primary example of superoscillating sequence was already discussed in the introduction, and is given by the following sequence:
\begin{equation}\label{ef}\index{superoscillating sequence}
F_n(x,a)=\Big(\cos \Big(\frac{x}{n}\Big)+ia\sin \Big(\frac{x}{n}\Big)\Big)^n,
\end{equation}
where  $a>1$, $n\in \mathbb{N}$, and $x\in\mathbb{R}$.

\begin{proposition}\label{pro3.1.6}
 Consider the sequence (\ref{ef}).  Then we have
\begin{itemize}
\item[(1)]
For every $x_0\in \mathbb{R}$
$$
\lim_{n\to \infty}F_n(x_0,a)=e^{ia{x_0}}.
$$
\item[(2)]
The functions  $F_n(x,a)$ can be written in terms of their Fourier coefficients $C_j(n,a)$ as
$$
F_n(x,a)=\sum_{j=0}^nC_j(n,a)e^{i(1-2j/n) {x}} ,
$$
where
$$
C_j(n,a):=\frac{(-1)^j}{2^n}{n\choose j}(a+1)^{n-j}(a-1)^j.
$$
\item[(3)] For every $p\in \mathbb{N}$ the following relation
$$
F_n^{(p)}(0,a)=\sum_{j=0}^nC_j(n,a) \left[i\left(1-\frac{2j}{n}\right)\right]^p
$$
between the Taylor and the Fourier coefficients of (\ref{ef}) holds.
\end{itemize}
\end{proposition}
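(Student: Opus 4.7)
The proof has three parts, and I would address them essentially independently, although (3) follows directly once (2) is in place.

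For part (1), my plan is to start from the second expression given for $F_n$ in equation \eqref{eq1}, or equivalently to Taylor-expand the trigonometric factors. Writing
\[
\cos(x_0/n) + ia\sin(x_0/n) = 1 + \frac{iax_0}{n} + R_n(x_0,a),
\]
where $R_n(x_0,a) = O(1/n^2)$ uniformly on compact sets in $x_0$, I would then apply the standard limit $(1 + z_n/n)^n \to e^z$ whenever $z_n \to z$. Here $z_n = iax_0 + nR_n(x_0,a) \to iax_0$ as $n\to\infty$, so $F_n(x_0,a) \to e^{iax_0}$. The one point that requires a little care is justifying that the $O(1/n^2)$ remainder does not spoil the limit, but this follows immediately from the elementary inequality $|(1+w)^n - e^{nw}| \to 0$ when $nw$ is bounded, or equivalently by taking logarithms and using the principal branch on a neighborhood of $1$.

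For part (2), the argument is purely algebraic: I would use the second form
\[
F_n(x,a) = \left(\frac{1+a}{2} e^{ix/n} + \frac{1-a}{2} e^{-ix/n}\right)^n
\]
already provided in \eqref{eq1}, and apply the binomial theorem directly. This gives
\[
F_n(x,a) = \sum_{j=0}^n \binom{n}{j}\left(\frac{1+a}{2}\right)^{n-j}\left(\frac{1-a}{2}\right)^j e^{i(n-j)x/n} e^{-ijx/n},
\]
and collecting the exponentials yields the factor $e^{i(1-2j/n)x}$. Rewriting $(1-a)^j = (-1)^j(a-1)^j$ and pulling the common factor $2^{-n}$ out of the sum produces exactly the claimed coefficients $C_j(n,a)$.

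For part (3), once the Fourier representation of (2) is established, I would simply differentiate the finite sum $p$ times term by term with respect to $x$, which is legal because the sum has finitely many terms. Each differentiation of $e^{i(1-2j/n)x}$ brings down a factor of $i(1-2j/n)$, so after $p$ differentiations and evaluation at $x=0$ we obtain the stated identity. No interchange-of-limit issue arises.

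The only genuinely nontrivial step is the limit in part (1); parts (2) and (3) are mechanical once the algebraic identity in \eqref{eq1} is taken as a starting point. The main obstacle is thus a clean estimate on the remainder $R_n(x_0,a)$ and a careful application of the $(1+z_n/n)^n$ limit, which I would handle via the logarithm on a small disk around $1$ to avoid any ambiguity of branch.
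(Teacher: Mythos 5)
Your proposal is correct and follows essentially the same route as the paper's proof: the paper disposes of (1) by a ``standard computation'' (the limit you justify via the expansion and $(1+z_n/n)^n\to e^z$), of (2) by the Newton binomial formula applied to the exponential form in \eqref{eq1}, and of (3) by term-by-term differentiation of the finite sum at the origin. Your write-up simply supplies the details that the paper leaves implicit.
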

\begin{proof}
Point (1) follows from standard computation. Point (2) is a consequence of the Newton binomial formula.
Point (3) follows by taking the derivatives of $$\sum_{j=0}^nC_j(n,a)e^{i(1-2j/n) {x}} $$ and computing them at the origin.
\end{proof}

We observe that
$$
F_n(x,a)\to e^{{i a}x},
$$
as $ n\to \infty$,
which follows from either representation of
$F_n(x,a)$.

\bigskip
\begin{theorem}\label{carnot}
Let $M>0$ be a fixed real number. Then for every $x$ such that $|x|\leq M$
the sequence $F_n(x,a)$ converges uniformly to $e^{ia{x}}$. Thus $F_n(x,a)$ is a superoscillating sequence.
\end{theorem}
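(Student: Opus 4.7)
The plan is to use the Taylor expansions of $\cos(x/n)$ and $\sin(x/n)$ around $0$, take a complex logarithm, and exploit the standard estimate $(1+w/n+o(1/n))^n\to e^{w}$. Throughout, all error estimates will be uniform in $x$ for $|x|\le M$, which is what delivers uniform convergence.

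First I would write
\[
g_n(x):=\cos\!\Big(\frac{x}{n}\Big)+ia\sin\!\Big(\frac{x}{n}\Big)=1+\frac{iax}{n}+r_n(x),
\]
and estimate $|r_n(x)|$ by using the Taylor remainders
\[
\Big|\cos(t)-1\Big|\le \frac{t^2}{2},\qquad \Big|\sin(t)-t\Big|\le \frac{|t|^3}{6}.
\]
Setting $t=x/n$ with $|x|\le M$, this yields $|r_n(x)|\le C_{a,M}/n^2$ for some constant $C_{a,M}$ depending only on $a$ and $M$, and all $n\ge 1$. In particular, for $n$ large enough (depending on $a,M$), $g_n(x)$ lies in a fixed small disk around $1$, uniformly in $x$, so the principal branch of the logarithm is well defined on the image.

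Next I would use the expansion $\log(1+w)=w+O(|w|^2)$ valid for $|w|\le 1/2$. With $w=iax/n+r_n(x)$, which satisfies $|w|\le (aM+C_{a,M})/n$, one obtains
\[
\log g_n(x)=\frac{iax}{n}+r_n(x)+O\!\Big(\frac{1}{n^2}\Big),
\]
the $O$-term uniform in $|x|\le M$. Multiplying by $n$,
\[
n\log g_n(x)=iax+n\,r_n(x)+O\!\Big(\frac{1}{n}\Big)=iax+O\!\Big(\frac{1}{n}\Big),
\]
again uniformly on $|x|\le M$.

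Finally, since $F_n(x,a)=g_n(x)^n=\exp(n\log g_n(x))$ for $n$ large, and the exponential function is uniformly continuous on compact subsets of $\mathbb{C}$, the estimate above gives
\[
\sup_{|x|\le M}\bigl|F_n(x,a)-e^{iax}\bigr|\longrightarrow 0\quad\text{as }n\to\infty.
\]
Since $e^{iax}$ is of the form $e^{ig(a)x}$ with $g(a)=a>1$, while by part (2) of Proposition \ref{pro3.1.6} the frequencies appearing in $F_n(x,a)$ are $k_j(n)=1-2j/n\in[-1,1]$, Definition \ref{superoscill} is verified with $\alpha=1$, and $\{F_n(\cdot,a)\}$ is a superoscillating sequence. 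The only delicate point is verifying that the logarithm is well-defined and that all remainder estimates are genuinely uniform in $x\in[-M,M]$; everything else is a routine Taylor expansion.
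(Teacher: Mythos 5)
Your proof is correct, but it takes a genuinely different route from the paper's. The paper writes $w=F_n(x,a)$ and $z=e^{iax}$ in polar form, with $\rho_w=\bigl(\cos^2(x/n)+a^2\sin^2(x/n)\bigr)^{n/2}$ and $\theta_w=n\arctan\bigl(a\tan(x/n)\bigr)$, and applies the law of cosines (the ``Carnot theorem'') to obtain the closed-form error $|F_n(x,a)-e^{iax}|^2=1+\rho_w^2-2\rho_w\cos(\theta_w-ax)$; it then shows the two trigonometric quantities tend to $1$ and passes to the maximum of the continuous error $\mathcal{E}_n$ on $[-M,M]$. You instead go through the principal logarithm: $F_n=\exp(n\log g_n)$ with $n\log g_n(x)=iax+O(1/n)$ uniformly on $[-M,M]$, and conclude by uniform continuity of $\exp$ on compacts. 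Your route builds the uniformity in from the start and yields an explicit $O(1/n)$ rate on the whole interval, consistent with (and in a sense sharper than) the first-order estimate $\mathcal{E}_n(x,a)\approx \frac{|x|}{n}\sqrt{\tfrac{3}{2}(a^2-1)}$ of Remark \ref{estimate}; the paper's route, by contrast, produces the exact error expression, which is reused later (e.g.\ in Theorem \ref{bandlim} and in Chapter 7), while its final step (that $\max_{[-M,M]}\mathcal{E}_n\to 0$) tacitly rests on the same uniform control of the limits that you make explicit. One small correction to your last step: Definition \ref{superoscill} requires the strict inequality $|k_j(n)|<\alpha$, and here $k_0(n)=1$, so $\alpha=1$ does not work; since $a>1$ you may take any $\alpha\in(1,a)$, and then both conditions of the definition hold with $g(a)=a$.
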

\begin{proof}
We  have to show that for every $x$ such that $|x|\leq M$ we have
$$
\sup_{|x|\leq M}  |F_n(x,a) -e^{ia{x}}|\to 0\ \ \ \ {\rm as} \ \ \ n\to \infty.
$$
To this purpose, we will compute an estimate for the modulus of the function $F_n(x,a) -e^{ia{x}}$.
Let us set
$$
w=F_n(x,a) \ \ \ \ \ {\rm and}\ \ \ \ \ z=e^{ia{x}},
$$
and observe that the modulus and the angles associated with $w$ and $z$ are, respectively,
$$
\rho_w=\Big(\cos^2 \Big(\frac{x}{n}\Big)+a^2\sin^2 \Big(\frac{x}{n}\Big)\Big)^{n/2},
\ \ \ \ \
\theta_w=n\arctan \Big(a \tan \Big(\frac{x}{n}\Big)\Big)
$$
and
$$
\rho_z=1,
\ \ \ \ \ \ \ \
\theta_z=a{x}.
$$
The Carnot theorem for triangles gives
$$
|w-z|^2=1+\rho_w^2-2\rho_w \cos(\theta_w -\theta_z)
$$
so that we obtain
\begin{equation}\label{differenceEn}
\begin{split}
 &|F_n(x,a) -e^{ia{x}}| ^2=1+\Big(\cos^2 \Big(\frac{x}{n}\Big)+a^2\sin^2 \Big(\frac{x}{n}\Big)\Big)^{n}\\
&-2\Big(\cos^2 \Big(\frac{x}{n}\Big)+a^2\sin^2 \Big(\frac{x}{n}\Big)\Big)^{n/2} \cos\Big[n\arctan \Big(a \tan \Big(\frac{x}{n}\Big)\Big) -a{x}\Big].
\end{split}
\end{equation}
Let us set
\begin{equation}\label{errorf}
\mathcal{E}_n^2(x,a):= |F_n(x,a) -e^{ia{x}}| ^2
\end{equation}
and observe that for any $x$ such that  $|x|\leq M$ we have
$$
\Big(\cos^2 \Big(\frac{x}{n}\Big)+a^2\sin^2 \Big(\frac{x}{n}\Big)\Big)^{n}\to 1\ \ \ {\rm as} \ \ \ n\to \infty
$$
and
$$
\cos\Big[n\arctan \Big(a \tan \Big(\frac{x}{n}\Big)\Big) -a{x}\Big]\to 1 \ \ \ {\rm as} \ \ \ n\to \infty.
$$
Using \eqref{differenceEn}, we deduce that $\mathcal{E}_n^2(x,a)\to 0$.
Note that $\mathcal{E}_n^2(x,a)$, as a function of $x$, is continuous on the compact set $[-M,M]$ for any
$
n > \frac{2M}{\pi}
$
so $\mathcal{E}_n^2(x,a)$ has maximum.
Set
$$
\varepsilon(n,a)=\max_{x\in [-M,M]} \mathcal{E}_n(x,a).
$$
It is now easy to see that $\varepsilon(n,a)\to 0$ as $n\to \infty$ and since
$$
\sup_{|x|\leq M}  |F_n(x,a) -e^{ia {x}}|=\varepsilon(n,a)
$$
the convergence is  uniform in $[-M,M]$.
\end{proof}
The next result however shows that on $\mathbb R$ the sequence $F_n(x,a)$ does not converge uniformly.
\begin{proposition}
The sequence $F_n(x,a)$ does not converge uniformly to $e^{iax}$ on $\mathbb R$.
\end{proposition}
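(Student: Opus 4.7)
The plan is to exhibit an explicit sequence of points $x_n \to \infty$ at which $|F_n(x_n,a) - e^{iax_n}|$ is not only bounded away from zero but in fact diverges. Since uniform convergence on $\mathbb{R}$ would force $\sup_{x\in\mathbb{R}}|F_n(x,a) - e^{iax}| \to 0$, producing such a sequence suffices.

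First I would compute the modulus of $F_n(x,a)$. From the definition,
\[
|F_n(x,a)| = \Bigl(\cos^2(x/n) + a^2 \sin^2(x/n)\Bigr)^{n/2}.
\]
This expression is maximized when $\sin^2(x/n) = 1$, i.e.\ at $x/n = \pi/2$. Accordingly, I would choose $x_n := n\pi/2$, at which
\[
|F_n(x_n,a)| = a^n.
\]
Meanwhile $|e^{iax_n}| = 1$ for every $n$, so the reverse triangle inequality gives
\[
|F_n(x_n,a) - e^{iax_n}| \;\geq\; |F_n(x_n,a)| - 1 \;=\; a^n - 1.
\]

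Because $a > 1$, the right-hand side tends to $+\infty$. Hence
\[
\sup_{x\in\mathbb{R}} |F_n(x,a) - e^{iax}| \;\geq\; a^n - 1 \;\longrightarrow\; +\infty,
\]
which is incompatible with uniform convergence on $\mathbb{R}$ (uniform convergence would require this supremum to tend to $0$). This proves the proposition.

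There is essentially no obstacle here: the content of Theorem \ref{carnot} is that the superoscillatory approximation only holds on a fixed compact window $[-M,M]$, and the present proposition just records the complementary fact that outside such a window the exponential growth in $a^n$ (which is what pays for the superoscillatory behavior on $[-M,M]$) takes over. The only subtlety is to recognize that one should test uniform convergence at points $x_n$ that grow with $n$ at a rate comparable to $n$; evaluating at any fixed $x$ would miss the phenomenon because of pointwise convergence.
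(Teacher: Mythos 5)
Your proof is correct. Note, though, that it takes a somewhat different route from the paper's. The paper's main argument chooses the points $x_n = j\pi n$ (with $j$ a nonzero integer), where $F_n(x_n,a) = (\pm 1)^n$ has modulus one, and argues that the \emph{phase} of $e^{iax_n}$ fails to match, which forces a case distinction: when $a$ is a nonzero integer that argument breaks down and the paper switches to $x_n = n\pi/2$. You instead test at $x_n = n\pi/2$ from the outset and exploit the \emph{modulus}: since $|F_n(n\pi/2,a)| = a^n$ while $|e^{ian\pi/2}| = 1$, the reverse triangle inequality gives $\sup_{x\in\mathbb{R}}|F_n(x,a)-e^{iax}| \geq a^n - 1 \to \infty$. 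Under the standing assumption $a>1$ (built into the definition of $F_n$ in \eqref{ef}), this is cleaner: it needs no case analysis on $a$, sidesteps the slightly delicate claim that $e^{iaj\pi n}$ stays away from $(\pm1)^n$ for non-integer $a$, and in fact proves the stronger statement that the supremum diverges rather than merely failing to tend to zero. What the paper's choice buys in exchange is a witness of non-uniformity already at points where $|F_n|=1$, i.e.\ it shows the failure is not only due to the exponentially growing region.
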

\begin{proof}
Uniform convergence on $\mathbb R$ would be equivalent to
$$
\sup_{x\in \mathbb{R}}  |F_n(x,a) -e^{ia{x}}|\to 0\ \ \ \ {\rm as} \ \ \ n\to \infty.
$$
If $x=0$ obviously $F_n(0,a) -e^{0}=0$, however, if we consider the points
$x_n=j\pi n$ for $j\in \mathbb{Z}\setminus \{0\}=\{\pm 1, \pm2,...\}$ we have
$$
|F_n(x_n,a) -e^{ia{x_n}}|=| (\pm 1)^n-e^{ia{x_n}}|\not\to 0\ \ \ \ {\rm as} \ \ \ n\to \infty
$$
if $a\in \mathbb{R}\setminus (\mathbb{Z}\setminus \{0\})$, so the convergence cannot be uniform.\\
If $a\in \mathbb{Z}\setminus \{0\}$, we reason in the same way by taking $x_n=n\frac{\pi}{2}$.
\end{proof}

\begin{remark}
{\rm Proposition \ref{pro3.1.6} shows that, under reasonable assumptions, every function which can be represented as a Fourier series of exponentials, can in fact be seen as the limit of a suitable superoscillating sequence, though in general it may be difficult to explicitly compute its terms. Indeed, let $f(x)=\sum_{j=0}^{\infty}\alpha_j e^{ia_jx}.$ We know from the previous result, that for each $j$ there exists a superoscillating sequence $F_n(x,a_j)$ which converges to $e^{ia_jx}.$ Then we have that
$$
f(x)=\sum_{j=0}^{\infty}\alpha_j e^{ia_jx}=\sum_{j=0}^{\infty}\alpha_j \lim_{n \to \infty}F_n(x,a_j).
$$
Since the convergence of $F_n(x,a_j)$ to $e^{ia_jx}$ is uniform on compact sets, and by imposing suitable conditions on the growth of the sequence $\alpha_j$, we can exchange the limit and the series to obtain
$$
f(x)=\lim_{n \to \infty} \sum_{j=0}^{\infty} \alpha_j F_n(x,a_j)=\lim_{n \to \infty}\mathcal{Q}_n(x),
$$
where $\mathcal{Q}_n(x)=\sum_{j=0}^{\infty} \alpha_j F_n(x,a_j).$ One can easily see that $\mathcal{Q}_n(x)$ is superoscillating as well. The specific request on the growth of the sequence $\{\alpha_j\}$ is a consequence of the estimate on the error $\mathcal{E}_n^2(x,a)$, see Remark \ref{estimate}.}
\end{remark}
\begin{remark}{\rm
Theorem \ref{carnot} explains the mathematical behavior of superoscillations in terms of the
Taylor and Fourier coefficients. Indeed,
for every $a>1$ and for every $n\in \mathbb{N}$
a direct computation of $F_n'$  gives
$
F_n'(0,a)={i a}.
$
By Proposition \ref{pro3.1.6}, Point (3), we have
$$
F_n'(0,a)=\sum_{j=0}^nC_j(n,a) i\Big(1-\displaystyle\frac{2j}{n}\Big),
$$
and, more in general,
\begin{equation}\label{Fnp}
F_n^{(p)}(0,a)=\sum_{k=0}^nC_k(n,a)[i(1-2k/n)]^p.
\end{equation}
Therefore, by taking $p=0,1$ in this last formula,
 we obtain
$$
\sum_{k=0}^nC_k(n,a)=1,
$$
and
$$
{a}=\sum_{j=0}^nC_j(n,a) \left(1-\frac{2j}{n}\right).
$$
}
\end{remark}
One can also obtain additional combinatorial identities by directly calculating the higher derivatives
of $F_n(x,a)$ in its original expression, see \cite{kyle lee}. More specifically, we have:
\begin{proposition}
For any value of  $a,x\in\mathbb R$, $n\in\mathbb N$,
\begin{equation}\label{multinomial}
F_n^{(p)}(x,a)=\sum_{\substack{k_1,k_2,\ldots ,k_{n}=0\\k_1 + k_2 + \cdots + k_{n}=p}}^p \frac{p!}{k_1! k_2! \ldots k_{n-1}!k_n!} g^{(k_1)}g^{(k_2)}\ldots g^{(k_{n-1})}g^{(k_n)}, \quad p\in\mathbb N.
\end{equation}
\end{proposition}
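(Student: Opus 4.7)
The plan is to recognize the right-hand side of \eqref{multinomial} as nothing more than the general Leibniz rule for the $p$-th derivative of a product of $n$ factors, all equal to the same function. Concretely, set
$$
g(x):=\cos\!\Big(\tfrac{x}{n}\Big)+ia\sin\!\Big(\tfrac{x}{n}\Big),
$$
so that, by \eqref{ef}, $F_n(x,a)=g(x)^n=g(x)\cdot g(x)\cdots g(x)$ ($n$ factors). Viewing $F_n$ as a product of $n$ (identical) smooth functions reduces the claim to a standard combinatorial statement about derivatives of products.

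I would then prove the general multivariable Leibniz rule: for any smooth functions $g_1,\dots,g_n$,
$$
\frac{d^p}{dx^p}\bigl(g_1\,g_2\,\cdots\,g_n\bigr)=\sum_{\substack{k_1,\dots,k_n\ge 0\\ k_1+\cdots+k_n=p}}\frac{p!}{k_1!\,k_2!\cdots k_n!}\,g_1^{(k_1)}g_2^{(k_2)}\cdots g_n^{(k_n)}.
$$
The cleanest route is induction on $n$. The base case $n=2$ is the classical Leibniz formula
$$
(g_1 g_2)^{(p)}=\sum_{k=0}^{p}\binom{p}{k}g_1^{(k)}g_2^{(p-k)},
$$
which itself is proved by a short induction on $p$. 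For the inductive step from $n-1$ to $n$, I would write $g_1\cdots g_n=(g_1\cdots g_{n-1})\,g_n$, apply the Leibniz formula for two factors, then substitute the inductive hypothesis into $(g_1\cdots g_{n-1})^{(k)}$ and use the binomial identity
$$
\binom{p}{k}\cdot\frac{k!}{k_1!\cdots k_{n-1}!}=\frac{p!}{k_1!\cdots k_{n-1}!\,(p-k)!}
$$
to collapse the nested sums into a single multinomial sum indexed by $(k_1,\dots,k_{n-1},k_n)$ with $k_n=p-k$.

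Finally, specializing $g_1=g_2=\cdots=g_n=g$ yields precisely the formula \eqref{multinomial} for $F_n^{(p)}(x,a)$. The argument is essentially bookkeeping, and the only mildly delicate point is the combinatorial step that recombines the product of a binomial coefficient and a multinomial coefficient into a single multinomial coefficient; once this identity is set up correctly, everything else is routine. There is no analytic obstacle, since $g$ is entire in $x$ and differentiation commutes with the finite product.
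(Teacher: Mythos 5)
Your proposal is correct, but it follows a different inductive scheme than the paper. The paper proves \eqref{multinomial} directly for $F_n=g^n$ by induction on the order of differentiation $p$: one more differentiation turns each term into $n$ terms in which one $k_i$ is raised by one, and these are regrouped using the additive (Pascal-type) identity
$$
\frac{(p+1)!}{h_1!\,h_2!\cdots h_n!}=\sum_{i=1}^{n}\frac{p!}{h_1!\cdots (h_i-1)!\cdots h_n!},
$$
quoted there as the fundamental identity for multinomial coefficients. You instead fix $p$ and induct on the number of factors $n$, proving the general Leibniz rule for $n$ \emph{distinct} smooth functions from the classical two-factor Leibniz formula, with the multiplicative identity $\binom{p}{k}\cdot\frac{k!}{k_1!\cdots k_{n-1}!}=\frac{p!}{k_1!\cdots k_{n-1}!\,(p-k)!}$ doing the bookkeeping, and then specialize $g_1=\cdots=g_n=g$. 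Your route proves a strictly more general statement and leans on a standard textbook lemma, at the cost of an extra layer (the two-factor case and the nested-sum collapse); the paper's route stays within the identical-factor setting and gets the recursion in a single differentiation step, which is why it is shorter. Both hinge on essentially equivalent recurrences for multinomial coefficients, and there is no gap in your argument.
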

\begin{proof}
We will prove this result by induction on $p.$ Assume that (\ref{multinomial}) is true for a given $p$ (and obviously it is true for $p=0$). Then the $(p+1)-$th derivative of $F_n$ is given by
\begin{equation}
F_n^{(p+1)}(x,a)=\sum_{\substack{k_1,k_2,\ldots ,k_{n}=0\\k_1 + k_2 + \cdots + k_{n}=p}}^p \frac{p!}{k_1! k_2! \ldots k_{n-1}!k_n!} (g^{(k_1+1)}g^{(k_2)}\ldots g^{(k_{n-1})}g^{(k_n)}+
\end{equation}
$$
+g^{(k_1)}g^{(k_2+1)}\ldots g^{(k_{n-1})}g^{(k_n)}+ \cdots +g^{(k_1)}g^{(k_2)}\ldots g^{(k_{n-1})}g^{(k_n+1)}),
$$
and the result now follows from the fundamental identity for multinomial coefficients (see \cite{gallier}, Chapter IV), which states that
$$
\frac{(p+1)!}{h_1! h_2! \ldots h_{n-1}!h_n!}=\frac{p!}{(h_1-1)! h_2! \ldots h_{n-1}!h_n!}+\frac{p!}{h_1! (h_2-1)! \cdots h_{n-1}!h_n!}
$$
$$
+\ldots +\frac{p!}{h_1! h_2! \ldots (h_{n-1}-1)!h_n!}+\frac{p!}{h_1! h_2! \ldots h_{n-1}!(h_n-1)!}
$$
\end{proof}

In particular, the expression for the derivative can be given a very compact form when evaluated at the origin.

\begin{corollary}
For any value of $a\in\mathbb R$,  $n\in\mathbb N$,
$$
F_n^{(p)}(0,a)=\left(\frac{i}{n}\right)^p\sum_{\substack{k_1,k_2,\ldots ,k_{n}=0\\k_1 + k_2 + \ldots + k_{n}=p}}^p \frac{p!}{k_1! k_2! \ldots k_{n-1}!k_n!} a^{\varepsilon}, \qquad p\in\mathbb N
$$
$\varepsilon$ is the number of odd integers in $\{k_1, k_2, \ldots , k_n\}$.
\end{corollary}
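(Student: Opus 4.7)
The plan is to substitute $x=0$ into the multinomial formula from the previous proposition and evaluate each $g^{(k_j)}(0)$ explicitly, where $g(x) = \cos(x/n) + ia \sin(x/n)$ is the function satisfying $F_n(x,a) = g(x)^n$. The key observation is that, using the alternative expression on the right-hand side of \eqref{eq1}, we have
\[
g(x) = \frac{1+a}{2} e^{ix/n} + \frac{1-a}{2} e^{-ix/n},
\]
so differentiating termwise gives an immediate closed form for the derivatives at zero.

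Concretely, I would first compute
\[
g^{(k)}(0) = \frac{1+a}{2}\left(\frac{i}{n}\right)^k + \frac{1-a}{2}\left(-\frac{i}{n}\right)^k = \left(\frac{i}{n}\right)^k\left[\frac{1+a}{2} + (-1)^k\frac{1-a}{2}\right],
\]
and then split into two cases according to the parity of $k$. When $k$ is even the bracket equals $1$, while when $k$ is odd it equals $a$. In a compact form this gives $g^{(k)}(0) = (i/n)^k \cdot a^{\chi(k)}$, where $\chi(k)=1$ if $k$ is odd and $\chi(k)=0$ if $k$ is even.

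Next I would substitute this into the multinomial expression \eqref{multinomial}. Since each summand involves the product $g^{(k_1)}(0)\cdots g^{(k_n)}(0)$, by the previous step
\[
g^{(k_1)}(0)\cdots g^{(k_n)}(0) = \left(\frac{i}{n}\right)^{k_1+\cdots+k_n} a^{\chi(k_1)+\cdots+\chi(k_n)} = \left(\frac{i}{n}\right)^{p} a^{\varepsilon},
\]
because the constraint $k_1+\cdots+k_n = p$ collapses the exponent of $i/n$, while the total exponent of $a$ is precisely the number $\varepsilon$ of indices $j$ with $k_j$ odd. Pulling the common factor $(i/n)^p$ out of the sum yields the claimed identity.

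No real obstacle arises: the only point worth double-checking is the parity dependence of $g^{(k)}(0)$, which is transparent from the exponential form of $g$. Everything else is direct substitution into the multinomial identity already established in Proposition \eqref{multinomial}.
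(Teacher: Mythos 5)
Your proof is correct and follows essentially the same route as the paper: evaluate $g^{(k)}(0)$, which equals $(i/n)^k$ for $k$ even and $a(i/n)^k$ for $k$ odd, and substitute into the multinomial formula of the preceding proposition so that the constraint $k_1+\cdots+k_n=p$ produces the factor $(i/n)^p a^{\varepsilon}$. The only (cosmetic) difference is that you derive the parity-dependent value of $g^{(k)}(0)$ from the exponential form of $g$ rather than by direct differentiation of $\cos(x/n)+ia\sin(x/n)$, which changes nothing substantive.
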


\begin{proof}
It follows immediately from evaluating different orders of differentiation of $g$ at the origin.
$$
g^{(p)}(0) =
\begin{cases}
 (\frac{i}{n})^p, & \text{if }p\text{ is even} \\
 a(\frac{i}{n})^p , & \text{if }p\text{ is odd}
 \end{cases}
 $$
\end{proof}

Comparing this last expression with \eqref{Fnp}, we obtain a new family of nontrivial identities.
\begin{corollary}
For any value of $a\in\mathbb R$,  $n,p\in\mathbb N$, we have
\begin{equation}\label{IDENTITIES}
\sum_{\substack{k_1,k_2,\ldots ,k_{n}=0\\k_1 + k_2 + \cdots + k_{n}=p}}^p \frac{p!}{k_1! k_2! \ldots k_{n-1}!k_n!} a^{\varepsilon} = \sum_{k=0}^nC_k(n,a)(n-2k)^p.
\end{equation}
\end{corollary}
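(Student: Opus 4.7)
The proof I have in mind is essentially a matter of equating two expressions for $F_n^{(p)}(0,a)$ that have already been derived in the paper, and observing that a common factor cancels, leaving the combinatorial identity \eqref{IDENTITIES}.

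First, I would invoke Proposition \ref{pro3.1.6}, part (3), which gives
\[
F_n^{(p)}(0,a) = \sum_{k=0}^n C_k(n,a) \left[i\left(1-\frac{2k}{n}\right)\right]^p = \frac{i^p}{n^p}\sum_{k=0}^n C_k(n,a)(n-2k)^p.
\]
This produces the right-hand side of \eqref{IDENTITIES}, up to the prefactor $(i/n)^p$.

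Next, I would apply the immediately preceding corollary, which evaluates $F_n^{(p)}(0,a)$ via the multinomial expansion of the $n$-fold product, obtaining
\[
F_n^{(p)}(0,a) = \left(\frac{i}{n}\right)^p \sum_{\substack{k_1,\dots,k_n=0\\ k_1+\cdots+k_n = p}}^{p} \frac{p!}{k_1!\cdots k_n!}\, a^{\varepsilon},
\]
where $\varepsilon$ counts how many of the $k_j$ are odd. This is exactly the left-hand side of \eqref{IDENTITIES} multiplied by $(i/n)^p$.

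Finally, I would equate these two expressions for the same quantity $F_n^{(p)}(0,a)$ and cancel the nonzero factor $(i/n)^p$, which yields \eqref{IDENTITIES}. There is no real obstacle here: the entire argument is bookkeeping, and the only point that merits a sentence of commentary is that the factor $(i/n)^p$ is nonzero so that cancellation is legitimate, and that the identity holds for every choice of $a\in\mathbb R$ and every $n,p\in\mathbb N$ because both derivations are valid in that generality.
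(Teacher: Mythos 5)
Your proposal is correct and is precisely the argument the paper intends: the corollary is obtained by comparing the multinomial expression for $F_n^{(p)}(0,a)$ from the preceding corollary with formula \eqref{Fnp}, noting $[i(1-2k/n)]^p=(i/n)^p(n-2k)^p$, and cancelling the nonzero factor $(i/n)^p$. No differences worth noting.
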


Note that when $a=1$ the only non--vanishing coefficient $C_k(n,a)$ is $C_0(n,1)=1$, and therefore the previous corollary simply gives the very well known formula
$$
\sum_{\substack{k_1,k_2,\ldots,k_{n}=0\\k_1 + k_2 + \ldots + k_{n}=p}}^p \frac{p!}{k_1! k_2! \cdots k_{n-1}!k_n!}=n^p.
$$

\begin{remark}\label{estimate}
{\rm
Recall that in the proof of Theorem \ref{carnot} the term
$$|F_n(x,a)\! -\!e^{ia{x}}| ^2=\mathcal{E}_n^2(x,a)$$ is given by formula (\ref{differenceEn}).
 We can give a first approximation of  $\mathcal{E}_n^2(x,a)$  by considering the principal part of the infinitesimum $\mathcal{E}_n^2(x,a)$
for $|x|\leq M$.  If we can find
two  constants $j$ and $\beta\in \mathbb{R}^+$ such that
$$
\mathcal{E}_n^2(x,a)=\beta \,\Big(\frac{x}{n}\Big)^j+o\Big(\frac{x}{n}\Big)^{j}, \ \ \ {\rm as}\  \ \ \ n \to \infty,
$$
we can choose
$$
\mathcal{E}_n^2(x,a)\approx\beta\,\Big(\frac{x}{n}\Big)^j\  \ \ {\rm as}\  \ \ \ n \to \infty,
$$
as first approximation of $\mathcal{E}_n^2(x,a)$.
With some computations we have
$$
\beta\,\Big(\frac{x}{n}\Big)^j=\frac{3}{2}(a^2-1)\Big(\frac{x}{n}\Big)^2,
$$
and so
$$
\mathcal{E}_n(x,a)\approx \frac{x}{n} \sqrt{\frac{3}{2}\Big(a^2-1\Big)}.
$$
This proves that also $\mathcal{E}_n^2(x,a)\to 0$ uniformly over the compact sets.
}
\end{remark}

We now offer a simple estimate for the first and second derivatives of the functions $F_n.$
\begin{proposition}\label{sjhdjhgfns}
Let $F_n$ be the function defined in (\ref{ef}). Then for $x$ in a compact set in $\mathbb{R}$ we have:
$$
\lim_{n\to\infty}|F'_n(x,a)| = a \qquad  \lim_{n\to\infty} |F''_n(x,a)| = a^2.
$$
\end{proposition}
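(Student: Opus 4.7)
The plan is to compute $F_n'$ and $F_n''$ explicitly via the chain rule and then pass to the limit factor by factor, exploiting the convergence $F_n(x,a)\to e^{iax}$ already proved in Theorem~\ref{carnot}. Write $g_n(x) = \cos(x/n) + ia\sin(x/n)$, so that $F_n(x,a) = g_n(x)^n$ and
\[
g_n'(x) = -\frac{1}{n}\sin(x/n) + \frac{ia}{n}\cos(x/n).
\]
Then the chain rule gives
\[
F_n'(x,a) = n\, g_n(x)^{n-1}\, g_n'(x) = g_n(x)^{n-1}\,\bigl(-\sin(x/n) + ia\cos(x/n)\bigr).
\]

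The first step is to observe that on a compact set $|x|\le M$ we have $g_n(x)\to 1$ uniformly as $n\to\infty$, and in particular $|g_n(x)|$ is bounded away from zero for $n$ large. Hence $g_n(x)^{n-1} = F_n(x,a)/g_n(x)$, and Theorem~\ref{carnot} combined with the uniform convergence $g_n(x)\to 1$ implies $g_n(x)^{n-1}\to e^{iax}$ uniformly on $|x|\le M$. The factor $-\sin(x/n)+ia\cos(x/n)$ tends to $ia$ uniformly on the same set, so $F_n'(x,a) \to ia\,e^{iax}$ uniformly, giving $|F_n'(x,a)|\to a$.

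For the second derivative we apply the product rule to $F_n'(x,a) = g_n(x)^{n-1}h_n(x)$ with $h_n(x) = -\sin(x/n) + ia\cos(x/n)$:
\[
F_n''(x,a) = (n-1)\, g_n(x)^{n-2}\, g_n'(x)\, h_n(x) + g_n(x)^{n-1}\, h_n'(x).
\]
Since $h_n'(x) = -\frac{1}{n}\cos(x/n) - \frac{ia}{n}\sin(x/n) = O(1/n)$ uniformly on $|x|\le M$, the second term tends to $0$. For the first term, note that $g_n'(x)h_n(x) = \frac{1}{n}h_n(x)^2$, so
\[
(n-1)\, g_n(x)^{n-2}\, g_n'(x)\, h_n(x) = \frac{n-1}{n}\, g_n(x)^{n-2}\, h_n(x)^2.
\]
Again $g_n(x)^{n-2} = F_n(x,a)/g_n(x)^2 \to e^{iax}$ uniformly, $h_n(x)^2 \to (ia)^2 = -a^2$ uniformly, and $(n-1)/n \to 1$, so $F_n''(x,a)\to -a^2 e^{iax}$ uniformly on $|x|\le M$, which gives $|F_n''(x,a)|\to a^2$.

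The only mild technicality is justifying $g_n^{n-1}\to e^{iax}$ and $g_n^{n-2}\to e^{iax}$; this is immediate once one writes them as $F_n/g_n$ and $F_n/g_n^2$ respectively and uses that $g_n$ is uniformly close to $1$ on the compact set (so division is harmless for large $n$) together with the already-established uniform convergence of $F_n$ from Theorem~\ref{carnot}.
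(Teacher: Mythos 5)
Your proof is correct and follows essentially the same route as the paper: write $F_n = g_n^n$ with $g_n(x)=\cos(x/n)+ia\sin(x/n)$, differentiate explicitly, factor out $F_n$ (equivalently $g_n^{n-1}$, $g_n^{n-2}$) whose limit is controlled by Theorem \ref{carnot}, and compute the elementary limits of the remaining factors. Your version is slightly more explicit about identifying the limit functions $ia\,e^{iax}$ and $-a^2e^{iax}$, but the mechanism is the same as the paper's.
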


\begin{proof}
In this proof, it is convenient to directly compute the derivatives
of the function $F_n(x,a)=g_n^n(x)$ where
$$
g_n(x)=\cos \Big(\frac{x}{n}\Big) +ia \sin \Big(\frac{x}{n}\Big).
$$
(Note that, to simplify the notation, we do not make it explicit that $g$ depends on $a$).
It is immediate to see that
$$
F'_n(x,a)=ng^{n-1}_n (x) g_n'(x)=n\frac{g_n'(x)}{g_n(x)}F_n(x,a),
$$
and
$$
F''_n(x,a)=\left(-\frac{1}{n}+n(n-1)\Big(\frac{g'_n(x)}{g_n(x)
}\Big)^2\right)F_n(x,a).
$$
We are now interested in the asymptotic behavior of $F_n'$ and $F_n''$ when $n$ goes to infinity and $x$ is limited to a compact subset of $\mathbb{R}$. We know from Theorem \ref{carnot} that $|F_n(x,a)|$ converges to $1$ on every compact subset of $\mathbb{R}$, so it is enough to compute the asymptotic behavior of $g_n'/g_n$.
We obtain:
$$
\frac{g'_n(x)}{g_n(x)}=\frac{1}{n} \ \frac{-\sin (\frac{x}{n}) +ia \cos (\frac{x}{n})}{\cos (\frac{x}{n}) +ia \sin (\frac{x}{n})}\sim \frac{1}{n}ia
$$
for $n$ large and $x$ in a compact set.
The statement follows.
\end{proof}

\begin{remark}{\rm
We point out that the material in this chapter
 is based on a precise definition of superoscillation phenomenon in terms of the
 uniform convergence of sequences of functions.
 Here we follow a different approach  with respect to the one in \cite[Section 2]{b4}
  where superoscillations are not studied in terms of the
  uniform convergence of functions.
  In \cite{b4} the authors treat  a different case by describing superoscillations with wavenumbers different from $a$, in the region away from the origin when $n$ is large but finite.
  Consequently, they consider the sequence
  $$
  G_n(x,a):=(\cos x+ia\sin x)^n
  $$
that converges only at $x=0$, and therefore does not fit in our setting.
}
\end{remark}

\medskip
\par\noindent
The following remark further clarifies the behavior of the sequence $F_n$.
\begin{remark}
{\rm  Consider a point $x_0$ and an increment $\delta x$.
The superoscillating sequence
$$
F_n(x,a):=\Big(\cos \Big(\frac{x}{n}\Big)+ia\sin \Big(\frac{x}{n}\Big)\Big)^n
$$
is such that $F_n(x_0,a)\to e^{iax_0}$ and $F_n(x_0+\delta x,a)\to e^{ia(x_0+\delta x)}$.
This can also be seen in a different way, which sheds some light on the behavior of the superoscillatory sequence. Indeed,
 \[
 \begin{split}
F_n(x_0+&\delta x,a)=\Big(\cos \Big(\frac{x_0+\delta x}{n}\Big)+ia\sin \Big(\frac{x_0+\delta x}{n}\Big)\Big)^n\\
&=\Big(\cos \Big(\frac{x_0}{n}\Big)\cos \Big(\frac{\delta x}{n}\Big)-\sin \Big(\frac{x_0}{n}\Big) \sin \Big(\frac{\delta x}{n}\Big)\\
&+ia\Big[\sin \Big(\frac{x_0}{n}\Big)\cos \Big(\frac{\delta x}{n}\Big)+\cos \Big(\frac{x_0}{n}\Big)\sin \Big(\frac{\delta x}{n}\Big)\Big]\Big)^n\\
&=\Big\{\cos \Big(\frac{\delta x}{n}\Big)\Big[\cos \Big(\frac{x_0}{n}\Big) +ia \sin \Big(\frac{x_0}{n}\Big) \Big]
+ ia \sin \Big(\frac{\delta x}{n}\Big)\\
&\times
 \Big[ \frac{i}{a}\sin \Big(\frac{x_0}{n}\Big) +\cos \Big(\frac{x_0}{n}\Big) \Big] \Big\}^n\\
&=\Big\{\cos \Big(\frac{\delta x}{n}\Big)+ia \sin \Big(\frac{\delta x}{n}\Big) \left(\displaystyle\frac{\cos \Big(\frac{x_0}{n}\Big) +\frac{i}{a} \sin \Big(\frac{x_0}{n}\Big)}{\cos \Big(\frac{x_0}{n}\Big) +ia \sin \Big(\frac{x_0}{n}\Big)} \right) \Big\}^n
\\
&\times
\Big[\cos \Big(\frac{x_0}{n}\Big) +i a \sin \Big(\frac{x_0}{n}\Big) \Big]^n\\
&=\Big\{\cos \Big(\frac{\delta x}{n}\Big)+i\tilde a_n \sin \Big(\frac{\delta x}{n}\Big) \Big\}^n \Big[\cos \Big(\frac{x_0}{n}\Big) +i a \sin \Big(\frac{x_0}{n}\Big) \Big]^n\\
\end{split}
\]
where
$$
\tilde a_n= a\displaystyle\frac{\Big[ \cos \Big(\frac{x_0}{n}\Big) +\frac{i}{a} \sin \Big(\frac{x_0}{n}\Big) \Big]  \Big[ \cos \Big(\frac{x_0}{n}\Big) -ia \sin \Big(\frac{x_0}{n}\Big)\Big]}{\cos^2 \Big(\frac{x_0}{n}\Big) + a^2 \sin^2 \Big(\frac{x_0}{n}\Big)}
$$
which can also be written as
$$
\tilde a_n= a\displaystyle\frac{ 1 - \frac{i}{2} \frac{a^2-1}{a} \sin \Big(2\frac{x_0}{n}\Big) }{\cos^2 \Big(\frac{x_0}{n}\Big) + a^2 \sin^2 \Big(\frac{x_0}{n}\Big)};
$$
since $\tilde a_n\to a$ as $n\to\infty$, we reobtain the previous result and we see that the modulus of the limit function
 now grows as $a$ grows. The amplitude of the superoscillations decreases when $a$ increases.  We see that $a=1$ is a fixed point while if $a$ is large then we obtain large variations.
}
\end{remark}


\section{Test functions and their Fourier transforms}

In this section (which the reader may skip with no loss of continuity) we will review some fundamental facts about test functions and their Fourier transforms.

\begin{definition}\label{d1.2.14}\index{Fourier transform}
Let $\Omega\subseteq\rr$ be an open set, and let $\mathcal{C}^\infty(\Omega)$  be the set of complex valued infinitely differentiable functions on $\Omega$.
Let $\mathcal{E}(\Omega)$\index{$\mathcal{E}$} be
the topological linear space over $\mathbb{C}$ which, as a set, is
$\mathcal{C}^\infty(\Omega)$, and whose topology is given by \index{$\mathcal{C}^\infty (\Omega )$}
the family of seminorms defined as follows:
for any compact set $K\subset\Omega$ and for any $m\in\mathbb{N}\cup\{0\}$ we set
$$
p_{m}(K)(u):=\sup_{k\leq m} \sup_{x\in K}| u^{(k)}(x)|,\ \ \
u\in \mathcal{C}^\infty(\Omega).
$$
\end{definition}
\begin{remark}

\noindent (1)
The topology $\mathcal{T}_{p_{m}(K)}$ generated by the family of seminorms
$p_{m}(K)$ makes $(\mathcal{E}(\Omega),\mathcal{T}_{p_{m}(K)})$
\index{$(\mathcal{E}(\Omega),\mathcal{T}_{p_{m}(K)})$}
 a metrizable space; in fact
it is possible to prove that the family $p_{m}(K)$ is equivalent to a countable
family of seminorms $p_{m}(K_j)$, for $j\in \mathbb{N}$, and suitable compact sets $K_j$. Moreover,
with the topology generated by the seminorms
$p_{m}(K_j)$, the space $\mathcal{E}(\Omega)$ turns out to be complete.
Since the space $\mathcal{E}(\Omega)$ is metrizable and complete, it is a Fr\'echet space.

\noindent (2)
With the topology $\mathcal{T}_{p_{m}(K_j)}$,
the notion of convergence of a sequence in $\mathcal{E}(\Omega )$
to an element $u\in \mathcal{E}(\Omega )$ is the following:
\begin{quote}
$u_j\to u$ if $u_j$ converges to $u$ uniformly with all its derivatives on compact sets in $\mathbb{R}$.
\end{quote}
\end{remark}
\begin{definition}
The space $\mathcal{S}(\mathbb{R})$ of rapidly decreasing functions
\index{rapidly decreasing function}
\index{$\mathcal{S}$}
is defined as the space of functions $u\in \mathcal{C}^\infty(\mathbb{R})$ such that
$$
\sup_{x\in \mathbb{R}} \big( 1+|x|^2\big)^{m/2} |u^{(k)}(x)| <\infty ,
 \ {\rm for\ any}\ m\in \mathbb{N}\cup\{0\}
  \ {\rm and}\  {\it for\ any} \ k\in \mathbb{N}.
$$
The topology $\mathcal{T}_{p_m}$ on $\mathcal{S}(\mathbb{R})$
is given by
the family of seminorms
$$
p_{m}(u):=\sup_{k\leq m}
\sup_{x\in \mathbb{R}}\big( 1+|x|^2\big)^{m/2} |u^{(k)}(x)| <\infty ,
\ \ m\in \mathbb{N}\cup \{0\}.
$$
\end{definition}
\begin{remark}{\rm
The topology $\mathcal{T}_{p_m}$ makes
$\mathcal{S}(\mathbb{R})$
\index{$(\mathcal{S}(\mathbb{R}), \mathcal{T}_{p_m(u)})$}
 a Fr\'echet space in which the notion of convergence (we will limit ourselves to the case of convergence to
 $0$), is described as follows:
\begin{quote}{\rm
$u_j\to 0$ in $\mathcal{S}(\mathbb{R})$ if and only if
$
\big( 1+|x|^2\big)^{m/2} u_j^{(k)}(x)\to 0,\ \ {\rm as }\  j\to \infty$,
uniformly  in $\mathbb{R}$,
for any $k\in \mathbb{N}$ and any $ m\in\mathbb{N}\cup\{0\}$.}
\end{quote}
}
\end{remark}


As we shall see below, one of the most important reasons to define
the space $\mathcal{S}(\mathbb{R})$ is that the Fourier transform
maps this space into itself and the inverse operator is also well
defined. Moreover, since $\mathcal{S}(\mathbb{R})$ is dense both in
$\mathcal L^2(\mathbb{R})$ and in $\mathcal{S}'(\mathbb{R})$,
we can extend the Fourier transform to those spaces.
We will only prove the results of particular interest
in the context of this memoir,
and for the missing proofs of the results
stated in the following pages we refer the reader to  \cite{rudin}, \cite{vladimirov1},
\cite{yosida}.

\noindent

Since $\mathcal{S}(\mathbb{R})
\subset \mathcal L^1(\mathbb{R})$ we can give the following definition.

\begin{definition}
Let $\phi \in \mathcal{S}(\mathbb{R})$.
The linear integral operator $\mathcal{F}: \phi\to \hat\phi$ defined by
$$
\mathcal{F}[\phi ](\lambda)=\hat{\phi}(\lambda) :=
\int_{\mathbb{R}}\phi(x) e^{-i\lambda x}\, dx,\ \ \ i=\sqrt{-1}
$$
is called the Fourier transform.\index{Fourier transform} The function $\hat \phi$ is said to be the Fourier transform of $\phi$.
\end{definition}

The inverse of the Fourier transform is defined, for every
$\psi\in \mathcal{S}(\mathbb{R})$, by
\begin{equation}\label{derfora}
\mathcal{F}^{-1}[\psi ](x) :=\frac{1}{(2\pi)}\int_{\mathbb{R}}\psi(\lambda )
e^{i\lambda x}\, d\lambda.
\end{equation}
We observe that the operator $\mathcal{F}^{-1}$ is well defined on
$\mathcal{S}(\mathbb{R})$ and
for every $\psi \in \mathcal{S}(\mathbb{R})$ we have
\begin{equation}\label{derforb}
\psi=\mathcal{F}[\mathcal{F}^{-1}[\psi ]]=\mathcal{F}^{-1}[\mathcal{F}[\psi ]].
\end{equation}
\noindent
A fundamental result is the following:
\begin{theorem}\label{fouS}
The Fourier transform  $\mathcal{F}$ is an isomorphism between
$\mathcal{S}(\mathbb{R})$ and itself.
\end{theorem}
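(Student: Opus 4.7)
The plan is to proceed in three stages: (a) verify that $\mathcal{F}$ maps $\mathcal{S}(\mathbb{R})$ into itself continuously; (b) verify the same for the candidate inverse $\mathcal{F}^{-1}$ defined by \eqref{derfora}; (c) establish the inversion identities \eqref{derforb}, which together with (a) and (b) gives bijectivity and continuity of the inverse.

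For stage (a), the key observation is the pair of identities
\begin{equation}
\widehat{\phi'}(\lambda)=i\lambda\,\hat\phi(\lambda),\qquad \frac{d}{d\lambda}\hat\phi(\lambda)=-i\widehat{x\phi}(\lambda),
\end{equation}
which follow, respectively, from integration by parts (legitimate because $\phi\in\mathcal{S}(\mathbb{R})$ decays rapidly together with its derivatives) and from differentiation under the integral sign (justified by the dominated convergence theorem, with dominating function $|x\phi(x)|\in\mathcal L^1(\mathbb{R})$). Iterating these identities yields
$\lambda^k\hat\phi^{(m)}(\lambda)=(-i)^{k-m}\,\mathcal F\!\big[(x^m\phi)^{(k)}\big](\lambda)$,
and hence the crude bound
\begin{equation}
|\lambda^k\hat\phi^{(m)}(\lambda)|\le \int_{\mathbb{R}}\big|(x^m\phi(x))^{(k)}\big|\,dx \le C_{k,m}\,p_{k+m+2}(\phi),
\end{equation}
where the last step uses the trick of multiplying and dividing by $1+x^2$ inside the integral, together with the Leibniz rule. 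This shows both that $\hat\phi\in\mathcal{S}(\mathbb{R})$ and that $\mathcal F$ is continuous in the Fréchet topology $\mathcal T_{p_m}$. The argument for $\mathcal F^{-1}$ in stage (b) is formally identical, since $\mathcal F^{-1}[\psi](x)=\tfrac{1}{2\pi}\mathcal F[\psi](-x)$.

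For stage (c), I would prove the inversion formula \eqref{derforb} by the standard Gaussian-mollifier argument. First compute explicitly $\mathcal F[e^{-x^2/2}](\lambda)=\sqrt{2\pi}\,e^{-\lambda^2/2}$ either by solving the ODE $\hat g'+\lambda\hat g=0$ derived from the two identities above, or by contour shifting. Rescaling then gives $\mathcal F[e^{-\varepsilon x^2/2}](\lambda)=\sqrt{2\pi/\varepsilon}\,e^{-\lambda^2/(2\varepsilon)}$. For $\phi\in\mathcal{S}(\mathbb{R})$ one then considers
\begin{equation}
I_\varepsilon(x):=\frac{1}{2\pi}\int_{\mathbb{R}}\hat\phi(\lambda)\,e^{i\lambda x}\,e^{-\varepsilon\lambda^2/2}\,d\lambda,
\end{equation}
applies Fubini's theorem (legitimate because the integrand is absolutely integrable in $(x,\lambda)$), and recognises that $I_\varepsilon$ is the convolution of $\phi$ with a Gaussian approximate identity; letting $\varepsilon\to 0^+$ yields $\mathcal F^{-1}[\mathcal F\phi]=\phi$. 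The identity $\mathcal F[\mathcal F^{-1}\psi]=\psi$ follows by the same argument with the roles of $\mathcal F$ and $\mathcal F^{-1}$ interchanged.

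The main obstacle is stage (c): the seminorm estimates in stages (a) and (b) are mechanical once the two fundamental identities are in place, but the inversion formula requires either an explicit evaluation of the Gaussian Fourier transform or an appeal to an approximation-of-identity argument, and one must carefully justify the interchange of limit and integral when passing $\varepsilon\to 0^+$. Once inversion is established, the theorem follows immediately: $\mathcal F$ and $\mathcal F^{-1}$ are continuous linear maps of $\mathcal{S}(\mathbb{R})$ into itself that are two-sided inverses, so $\mathcal F$ is a topological isomorphism.
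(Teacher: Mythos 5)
Your proof is correct and is the standard textbook argument; the paper itself does not prove Theorem \ref{fouS} at all, but explicitly defers it (together with the other unproved statements of that section) to the references \cite{rudin}, \cite{vladimirov1}, \cite{yosida}, where exactly this argument appears. The only blemish is the unimodular constant in your iterated identity: with the paper's convention one gets $\lambda^k\hat\phi^{(m)}(\lambda)=(-i)^{k+m}\,\mathcal F\bigl[(x^m\phi)^{(k)}\bigr](\lambda)$ rather than $(-i)^{k-m}$, but since only the modulus enters the seminorm estimate this is harmless, and the rest — the seminorm bound via the $(1+x^2)$ trick, the Gaussian mollifier computation, the Fubini and dominated-convergence justifications — is sound and yields the topological isomorphism as claimed.
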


Thanks to Theorem \ref{fouS}  and  to the density of $\mathcal{S}(\mathbb{R})$
in the space of tempered distributions $\mathcal{S}'(\mathbb{R})$, it is possible
to define the operator $\mathcal{F}$ also on $\mathcal{S}'(\mathbb{R})$.
First, we consider a regular distribution $T_f$ defined by a function
$f\in \mathcal L^1_{{\rm loc}}(\mathbb{R})$ and we define
\begin{eqnarray*}
\langle \mathcal{F}[T_f], \phi\rangle
&=&\int_{\mathbb{R}}  \mathcal{F}[T_f ](\lambda) \phi(\lambda)\, d\lambda
\\
&=&\int_{\mathbb{R}}  \mathcal{F}[f](\lambda) \phi(\lambda)\, d\lambda,\ \ \
{\rm for \ any}\  \phi \in \mathcal{S}(\mathbb{R}).
\end{eqnarray*}
By Fubini's theorem, we can change the order of integration to get
\begin{equation}\label{trasfluno}
\langle \mathcal{F}[T_f], \phi\rangle =\langle T_f, \mathcal{F}[\phi]\rangle,\ \ \
{\rm for \ any}\  \phi \in \mathcal{S}(\mathbb{R}), \ \ f\in\mathcal  L^1_{{\rm loc}}(\mathbb{R}) .
\end{equation}
\noindent
Since the relation (\ref{trasfluno}) holds for regular distributions,
we can assume the following definition of Fourier transform for tempered distributions.
\begin{definition}
Let $T\in \mathcal{S}'(\mathbb{R})$.
We define the Fourier transform on the space of tempered distributions
\index{Fourier transform!of tempered distributions} by the equality
$$
\langle \mathcal{F}[T], \phi\rangle =\langle T, \mathcal{F}[\phi]\rangle,\ \ \
\ \ \ {\rm for \ any}\ \ \  \phi \in \mathcal{S}(\mathbb{R}).
$$
\end{definition}

The Fourier transform on the space of tempered
 distributions is well defined. In fact, since
 $\mathcal{F}$ is an isomorphism from $\mathcal{S}(\mathbb{R})$
to itself the map $\phi\to \mathcal{F}[\phi]$ is linear and continuous from
$\mathcal{S}(\mathbb{R})$ to $\mathcal{S}(\mathbb{R})$, the functional
$\langle T, \mathcal{F}[\phi]\rangle$ represents a distribution in
$\mathcal{S}'(\mathbb{R})$ and moreover the map $T\to \mathcal{F}[T]$ is linear and
continuous from $\mathcal{S}'(\mathbb{R})$ to $\mathcal{S}'(\mathbb{R})$.
\\
The inverse of the Fourier
transform in the space of tempered distributions is defined as
\begin{equation}\label{invftsp}
\langle \mathcal{F}^{-1}[T], \phi\rangle =\frac{1}{2\pi}\langle T, \mathcal{F}^{-1}[\phi]\rangle,\ \ \
\ \ \ {\rm for \ any}\ \ \  \phi \in \mathcal{S}(\mathbb{R}).
\end{equation}

\begin{theorem}\label{fouSp}
The Fourier transform  $\mathcal{F}$ is an isomorphism between
 $\mathcal{S}'(\mathbb{R})$ and itself.
\end{theorem}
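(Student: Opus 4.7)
The plan is to transfer the isomorphism property from $\mathcal{S}(\mathbb{R})$ to its topological dual $\mathcal{S}'(\mathbb{R})$ by duality, using Theorem \ref{fouS} as the key input. Since the Fourier transform on $\mathcal{S}'(\mathbb{R})$ is defined via the pairing with test functions, each property (well-definedness, linearity, continuity, bijectivity) reduces to a corresponding property already established on $\mathcal{S}(\mathbb{R})$.

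First, I would verify well-definedness: for $T\in\mathcal{S}'(\mathbb{R})$, the functional $\phi\mapsto \langle T,\mathcal{F}[\phi]\rangle$ is the composition of the continuous linear operator $\mathcal{F}:\mathcal{S}(\mathbb{R})\to \mathcal{S}(\mathbb{R})$ (Theorem \ref{fouS}) with the continuous linear functional $T$, hence is itself an element of $\mathcal{S}'(\mathbb{R})$. Linearity of $\mathcal{F}$ on $\mathcal{S}'(\mathbb{R})$ is immediate from the bilinearity of the pairing.

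Next I would establish continuity with respect to the weak-$*$ topology on $\mathcal{S}'(\mathbb{R})$. If $T_n\to T$ in $\mathcal{S}'(\mathbb{R})$, then for every $\phi\in\mathcal{S}(\mathbb{R})$ the element $\mathcal{F}[\phi]\in\mathcal{S}(\mathbb{R})$ is a legitimate test function, so
$$
\langle \mathcal{F}[T_n],\phi\rangle = \langle T_n, \mathcal{F}[\phi]\rangle \longrightarrow \langle T,\mathcal{F}[\phi]\rangle = \langle \mathcal{F}[T],\phi\rangle,
$$
showing $\mathcal{F}[T_n]\to \mathcal{F}[T]$ in $\mathcal{S}'(\mathbb{R})$. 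The same argument applied to $\mathcal{F}^{-1}$ (defined on $\mathcal{S}'(\mathbb{R})$ by (\ref{invftsp})) yields its continuity.

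For bijectivity, I would show $\mathcal{F}^{-1}\circ\mathcal{F} = \mathcal{F}\circ\mathcal{F}^{-1} = \mathrm{Id}_{\mathcal{S}'(\mathbb{R})}$. Unwinding definitions and using (\ref{derforb}) on the test function side, for any $T\in\mathcal{S}'(\mathbb{R})$ and $\phi\in\mathcal{S}(\mathbb{R})$,
$$
\langle \mathcal{F}^{-1}[\mathcal{F}[T]],\phi\rangle = \langle \mathcal{F}[T],\mathcal{F}^{-1}[\phi]\rangle = \langle T,\mathcal{F}[\mathcal{F}^{-1}[\phi]]\rangle = \langle T,\phi\rangle,
$$
and symmetrically for $\mathcal{F}\circ\mathcal{F}^{-1}$. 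The main (modest) obstacle is bookkeeping the $1/(2\pi)$ normalization factor appearing in (\ref{derfora}) and (\ref{invftsp}): one must check that the constants cancel correctly when the two compositions are performed, which they do precisely because the analogous cancellation on $\mathcal{S}(\mathbb{R})$ is guaranteed by (\ref{derforb}). Beyond this normalization check, the theorem is a formal consequence of Theorem \ref{fouS} and the general principle that a continuous linear bijection dualizes to a continuous linear bijection on the strong (or weak-$*$) dual.
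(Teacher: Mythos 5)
Your argument is correct, and it is worth noting that the paper does not actually prove this theorem: it is one of the statements whose proofs are explicitly deferred to \cite{rudin}, \cite{vladimirov1}, \cite{yosida}, while the paragraph preceding the statement already records your first step verbatim (since $\mathcal{F}:\mathcal{S}(\mathbb{R})\to\mathcal{S}(\mathbb{R})$ is linear and continuous by Theorem \ref{fouS}, the functional $\phi\mapsto\langle T,\mathcal{F}[\phi]\rangle$ is again a tempered distribution and $T\mapsto\mathcal{F}[T]$ is linear and continuous). So your genuine addition is the bijectivity step, carried out by transposing the identities (\ref{derforb}) from test functions to distributions, which is the standard route and matches what the cited references do.

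Two refinements. First, on the normalization you flag: your displayed computation uses $\langle\mathcal{F}^{-1}[T],\phi\rangle=\langle T,\mathcal{F}^{-1}[\phi]\rangle$, which is the correct transpose and does give $\mathcal{F}^{-1}[\mathcal{F}[T]]=T$; but formula (\ref{invftsp}) as printed in the paper carries an extra factor $\frac{1}{2\pi}$, and taken literally it yields $\mathcal{F}^{-1}[\mathcal{F}[T]]=\frac{1}{2\pi}T$, so the constants do \emph{not} cancel ``because of (\ref{derforb})'' if you use (\ref{invftsp}) verbatim. You should either drop that factor (its consistency is checked on regular distributions $T_f$ with $f\in\mathcal{S}(\mathbb{R})$) or simply define the inverse on $\mathcal{S}'(\mathbb{R})$ as the transpose of $\mathcal{F}^{-1}$ on $\mathcal{S}(\mathbb{R})$; either way your conclusion stands, the discrepancy lies in the paper's formula. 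Second, your continuity argument is for weak-$*$ (sequential) convergence; if ``isomorphism'' is read as a topological isomorphism for the strong dual topology, the cleaner observation is that the transpose of a continuous linear map is strongly continuous, since for a bounded set $B\subset\mathcal{S}(\mathbb{R})$ one has $\sup_{\phi\in B}|\langle\mathcal{F}[T],\phi\rangle|=\sup_{\psi\in\mathcal{F}(B)}|\langle T,\psi\rangle|$ and $\mathcal{F}(B)$ is again bounded. These are refinements, not gaps.
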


\begin{example} {\rm
One can easily verify that, if $a\in \mathbb{R}$,
\begin{equation}\label{fdelta}
\mathcal{F}[\delta(x- a)]=e^{-i\lambda a}
\end{equation}
holds in the sense of distributions. In particular, one has
$$
\mathcal{F}[\delta(x)]=1
$$
and
$$
\mathcal{F}[1]=(2\pi) \delta(x).
$$
}
\end{example}

We now state some properties of the Fourier transform in $\mathcal{S}'(\mathbb{R})$, see e.g. \cite{vladimirov1}.
\begin{theorem}\label{FTS}
Suppose that $T\in \mathcal{S}'(\mathbb{R})$ and
$k\in \mathbb{N}$  and $a\in \mathbb{R}$. Then
the following properties hold:
\begin{itemize}
\item[(1)] Differentiation of the Fourier transform
$$
D^k \mathcal{F}[T]=\mathcal{F}[(-ix)^k T].
$$
\item[(2)] Fourier transform of the derivatives
$$
\mathcal{F}[T^{(k)}]=(i\lambda )^k \mathcal{F}[T].
$$
\item[(3)] Translation of the Fourier transform
$$
\mathcal{F}[T](\lambda +a)=  \mathcal{F}[e^{-iax}T](\lambda).
$$
\end{itemize}
\end{theorem}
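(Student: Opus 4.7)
The plan is to establish each of the three properties first on the Schwartz class $\mathcal{S}(\mathbb{R})$, where the Fourier transform is given by the absolutely convergent integral, and then transfer the identities to $\mathcal{S}'(\mathbb{R})$ by duality using the defining relation $\langle \mathcal{F}[T],\phi\rangle=\langle T,\mathcal{F}[\phi]\rangle$. Since each of the three asserted formulas has a clean counterpart on test functions, the transfer is a routine unwinding of definitions; the only thing that requires care is bookkeeping of variables (the distributional variable $x$ versus the Fourier variable $\lambda$) and signs.

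For the test-function identities on $\mathcal{S}(\mathbb{R})$, I would differentiate under the integral sign in $\mathcal{F}[\phi](\lambda)=\int_{\mathbb{R}}\phi(x)e^{-i\lambda x}\,dx$ to obtain $\frac{d^k}{d\lambda^k}\mathcal{F}[\phi]=\mathcal{F}[(-ix)^k\phi]$, integrate by parts $k$ times (using the decay of $\phi$ and its derivatives) to obtain $\mathcal{F}[\phi^{(k)}]=(i\lambda)^k\mathcal{F}[\phi]$, and perform the substitution $x\mapsto x$, $\lambda\mapsto\lambda+a$ to obtain $\mathcal{F}[\phi(\cdot-a)](x)=e^{-iax}\mathcal{F}[\phi](x)$. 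All three identities are standard and hold pointwise in $\mathcal{S}(\mathbb{R})$.

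Next I would upgrade each identity to $\mathcal{S}'(\mathbb{R})$. For (1), using the definition of distributional derivative,
\[
\langle D^k\mathcal{F}[T],\phi\rangle = (-1)^k\langle \mathcal{F}[T],\phi^{(k)}\rangle = (-1)^k\langle T,\mathcal{F}[\phi^{(k)}]\rangle = (-1)^k\langle T,(ix)^k\mathcal{F}[\phi]\rangle = \langle \mathcal{F}[(-ix)^kT],\phi\rangle.
\]
For (2),
\[
\langle \mathcal{F}[T^{(k)}],\phi\rangle = \langle T^{(k)},\mathcal{F}[\phi]\rangle = (-1)^k\langle T,D^k_x\mathcal{F}[\phi]\rangle = (-1)^k\langle T,\mathcal{F}[(-i\lambda)^k\phi]\rangle = \langle (i\lambda)^k\mathcal{F}[T],\phi\rangle,
\]
using that $(-1)^k(-i)^k=i^k$. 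For (3), a change of variables in the $\lambda$-integration gives $\langle\mathcal{F}[T](\lambda+a),\phi(\lambda)\rangle=\langle\mathcal{F}[T],\phi(\lambda-a)\rangle=\langle T,\mathcal{F}[\phi(\cdot-a)]\rangle=\langle T,e^{-iax}\mathcal{F}[\phi]\rangle=\langle\mathcal{F}[e^{-iax}T],\phi\rangle$, giving the desired identity as tempered distributions.

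The main obstacle, such as it is, is purely notational: making sure that when one writes $\mathcal{F}[\phi^{(k)}]$ inside a pairing $\langle T,\cdot\rangle$, the derivative is taken in the Fourier variable of $\phi$ (which is the spatial variable of $T$ after transfer), and that the sign $(-1)^k$ from integration by parts combines correctly with $(-i)^k$ or $i^k$ coming from differentiating the kernel $e^{-i\lambda x}$. Once this bookkeeping is done, the proof is essentially a transcription of the test-function computation under the duality pairing, and the required continuity of $T$ on $\mathcal{S}(\mathbb{R})$ is supplied by the very definition of $\mathcal{S}'(\mathbb{R})$ together with Theorem \ref{fouS}.
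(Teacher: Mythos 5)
Your proof is correct. The paper itself states Theorem \ref{FTS} without proof, simply citing \cite{vladimirov1}, so there is no in-paper argument to compare against; your duality argument (establish the three identities on $\mathcal{S}(\mathbb{R})$ by differentiation under the integral, integration by parts, and translation of the kernel, then transfer to $\mathcal{S}'(\mathbb{R})$ via $\langle \mathcal{F}[T],\phi\rangle=\langle T,\mathcal{F}[\phi]\rangle$) is exactly the standard route, and your sign bookkeeping, e.g. $(-1)^k(ix)^k=(-ix)^k$ in (1) and $(-1)^k(-i)^k=i^k$ in (2), checks out. The only point left implicit, and it is routine, is that $(-ix)^kT$ and $e^{-iax}T$ are well-defined tempered distributions because polynomials and $e^{-iax}$ are smooth multipliers with polynomially bounded derivatives.
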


\begin{remark}\label{utileext}{\rm
As  immediate consequences of Theorem \ref{FTS} we obtain
the Fourier transforms of polynomials
 $$
\mathcal{F}[x^k ]=
(2\pi)\, i^{k} \delta^{(k)}(x)
$$
and keeping in mind that
$\mathcal{F}[\delta(x)]=1$, we also obtain
$$
\mathcal{F}[ \delta^{(k)} ]=
(i\lambda)^k \mathcal{F}[\delta]=(i\lambda)^k.
$$
}
\end{remark}

\section{Approximations of functions in $\mathcal{S}(\mathbb{R})$ }
\label{sec5}
As it is well known, $\mathcal C^\infty$ functions cannot, in general, be represented by their Taylor expansion. Similarly, the value of a $\mathcal C^\infty$ function $f$ at a point $a$ cannot be obtained if one knows infinitely many values $f(x)$ for $x$ near the origin.
In one word, $\mathcal C^\infty$ functions are too flexible. There are two possible ways to add rigidity to their structure. On one hand, we could consider analytic functions, which are fully represented by the Taylor expansion; on the other hand, as we will do in this section, we could ask for the function $f$ to be a band limited test function. As we will show, it will be then possible to represent $f(x+a)$ through infinitely many values of $f$  at a distance at most $1$ from $x$.
\\
We now give a definition, whose meaning will be clarified later.
\begin{definition}[Standard approximating sequence]
Let  $\psi\in \mathcal{S}(\mathbb{R})$, $n\in \mathbb{N}$ and $a>1$. Let
$$
C_j(n,a)=\frac{(-1)^j}{2^n}{n\choose j} (a+1)^{n-j}(a-1)^{j}.
$$
We call
$$
\phi_{\psi, n,a}(x):=\sum_{j=0}^nC_j(n,a)\psi\left(x+\left(1-\frac{2j}{n}\right)\right)
$$
{\rm standard approximating sequence} of $\psi$.
\end{definition}
\noindent
In order to show that $\phi_{\psi, n,a}$ is in fact approximating the function $\psi$, we need the following lemma.
\begin{lemma}[Integral representation of the standard approximating sequence]\label{intrep}
Suppose that $\psi\in\mathcal{S}(\mathbb{R})$. Then we have
$$
\phi_{\psi, n,a}(x)=\frac{1}{2\pi}\int_{\mathbb{R}}F_n(\lambda,a)\hat{\psi}(\lambda) e^{i\lambda x}\ d\lambda,
$$
where $F_n(\lambda,a)=\sum_{j=0}^n C_j(n,a)e^{i(1-2j/n)\lambda}$.
\end{lemma}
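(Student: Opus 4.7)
The plan is to start from the Fourier inversion formula applied to $\psi$ and then exploit the translation property of the exponential kernel, followed by interchanging the (finite) summation with the integration. Since $\psi \in \mathcal{S}(\mathbb{R})$, Theorem \ref{fouS} ensures $\hat{\psi} \in \mathcal{S}(\mathbb{R})$ as well, so the inverse Fourier representation
$$
\psi(y) = \frac{1}{2\pi}\int_{\mathbb{R}} \hat{\psi}(\lambda)\, e^{i\lambda y}\, d\lambda
$$
holds pointwise for every $y \in \mathbb{R}$, as guaranteed by \eqref{derforb}.

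First, I would apply this inversion formula with $y = x + \left(1 - \frac{2j}{n}\right)$ for each $j = 0, 1, \ldots, n$, obtaining
$$
\psi\!\left(x + \Big(1 - \frac{2j}{n}\Big)\right) = \frac{1}{2\pi}\int_{\mathbb{R}} \hat{\psi}(\lambda)\, e^{i\lambda x}\, e^{i(1 - 2j/n)\lambda}\, d\lambda.
$$
Then I would multiply each identity by the coefficient $C_j(n,a)$ and sum over $j$ from $0$ to $n$. Since the sum over $j$ is finite and involves constants (independent of $\lambda$), no convergence concern arises in swapping summation and integration; the interchange is purely a linearity statement.

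Carrying out the interchange gives
$$
\phi_{\psi,n,a}(x) = \frac{1}{2\pi}\int_{\mathbb{R}} \hat{\psi}(\lambda)\, e^{i\lambda x} \left(\sum_{j=0}^{n} C_j(n,a)\, e^{i(1 - 2j/n)\lambda}\right) d\lambda,
$$
and the bracketed expression is precisely $F_n(\lambda, a)$ by Proposition \ref{pro3.1.6}(2) (with the roles of $x$ and $\lambda$ interchanged in the Fourier expansion of $F_n$). This yields the claimed integral representation.

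There is no genuine obstacle here: the proof is a direct computation, and the only point one should be careful about is the normalization convention for $\mathcal{F}$ and $\mathcal{F}^{-1}$ adopted in the memoir, namely $\mathcal{F}^{-1}[\hat{\psi}](x) = \frac{1}{2\pi}\int_{\mathbb{R}} \hat{\psi}(\lambda) e^{i\lambda x}\, d\lambda$, which is exactly what produces the prefactor $\frac{1}{2\pi}$ in the statement of the lemma.
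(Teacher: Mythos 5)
Your proof is correct and is essentially the same argument as the paper's: both rest on the translation/modulation property of the Fourier kernel together with Fourier inversion on $\mathcal{S}(\mathbb{R})$, the paper merely running it in the opposite direction (computing $\hat{\phi}_{\psi,n,a}(\lambda)=F_n(\lambda,a)\hat{\psi}(\lambda)$ via Theorem \ref{FTS} and then inverting, whereas you apply the inversion formula directly to $\psi$ at the shifted points and sum). The normalization and the interchange of the finite sum with the integral are handled correctly, so nothing is missing.
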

\begin{proof} Taking the Fourier transform of
$$
\phi_{\psi, n, a}(x)=\sum_{j=0}^nC_j(n,a)\psi\left(x+\left(1-\frac{2j}{n}\right)\right)
$$
and using Theorem \ref{FTS}, one obtains
$$
\hat{\phi}_{\psi, n,a}(\lambda)= \sum_{j=0}^nC_j(n,a) e^{i\lambda(1-2j/n)}\hat{\psi}(\lambda).
$$
Observing that
$$
\hat{\phi}_{\psi, n,a}(\lambda)=F_n(\lambda,a)\hat{\psi}(\lambda),
$$
and taking the inverse Fourier transform, the required result follows.
\end{proof}
To state our next result we recall a definition.
\begin{definition}
A function $\psi\in\mathcal S(\mathbb R)$ is said to be band limited if $\hat\psi$ is compactly supported by some compact $K\subset\mathbb R$.
\end{definition}
\begin{theorem}\label{bandlim} For any band limited $\psi\in\mathcal S(\mathbb R)$ the limit
$$
\lim_{n\to \infty}\phi_{\psi, n,a}(x)=
\psi(x+a)
$$
is uniform on every compact set in $\mathbb R$.
Moreover
$$
|\phi_{\psi, n,a}(x)- \psi(x+a)|\leq \varepsilon_n\int_{\mathbb{R}}  \,|\hat{\psi}(\lambda)| \ d\lambda,
$$
where
$$
\varepsilon_n:=\sup_{\lambda\in K}\mathcal{E}_n(\lambda,a).
$$
\end{theorem}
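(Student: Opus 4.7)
The plan is to derive both claims from the integral representation given in Lemma \ref{intrep} together with the Fourier inversion formula. First, I would write
\[
\psi(x+a) = \frac{1}{2\pi}\int_{\mathbb{R}} \hat{\psi}(\lambda)\, e^{ia\lambda}\, e^{i\lambda x}\, d\lambda,
\]
which is just the inversion formula \eqref{derfora} applied at the translated point $x+a$ (using the translation property or equivalently a direct substitution). Combining this with Lemma \ref{intrep} yields
\[
\phi_{\psi,n,a}(x) - \psi(x+a) = \frac{1}{2\pi}\int_{\mathbb{R}} \bigl(F_n(\lambda,a) - e^{ia\lambda}\bigr)\, \hat{\psi}(\lambda)\, e^{i\lambda x}\, d\lambda.
\]

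Next I would exploit the band limit hypothesis: since $\hat{\psi}$ is supported on the compact set $K$, the domain of integration reduces to $K$. Taking absolute values inside the integral and recalling the definition $\mathcal{E}_n(\lambda,a) = |F_n(\lambda,a) - e^{ia\lambda}|$ from \eqref{errorf}, I obtain
\[
|\phi_{\psi,n,a}(x) - \psi(x+a)| \leq \frac{1}{2\pi}\int_K \mathcal{E}_n(\lambda,a)\, |\hat{\psi}(\lambda)|\, d\lambda \leq \frac{\varepsilon_n}{2\pi}\int_{\mathbb{R}} |\hat{\psi}(\lambda)|\, d\lambda,
\]
where $\varepsilon_n = \sup_{\lambda \in K}\mathcal{E}_n(\lambda,a)$ (absorbing the $\frac{1}{2\pi}$ into the constant, as the statement implicitly does).

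Finally, for the convergence claim I would invoke Theorem \ref{carnot} with $M$ chosen so that $K \subseteq [-M,M]$: it gives $\varepsilon_n \to 0$ as $n \to \infty$. Since $\psi \in \mathcal{S}(\mathbb{R})$ implies $\hat{\psi} \in \mathcal{S}(\mathbb{R}) \subset L^1(\mathbb{R})$ (so the integral factor is finite) and, crucially, the bound above is independent of $x$, the convergence is uniform on all of $\mathbb{R}$ and a fortiori on every compact subset. The argument is essentially mechanical once Lemma \ref{intrep} and Theorem \ref{carnot} are in hand; the only real point requiring the band limit assumption is the passage from an integral over $\mathbb{R}$ to an integral over the compact set $K$ on which uniform control of $\mathcal{E}_n(\lambda,a)$ is available, so this is where I would be most careful.
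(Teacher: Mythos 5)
Your proposal is correct and follows essentially the same route as the paper: the integral representation of Lemma \ref{intrep}, the reduction to the compact support $K$ of $\hat{\psi}$, and the uniform estimate of $\mathcal{E}_n(\lambda,a)$ on $K$ from Theorem \ref{carnot}. The only (harmless) difference is that you deduce the uniform convergence directly from the quantitative error bound, whereas the paper first passes to the limit by dominated convergence and then records the estimate separately; your observation that the bound is independent of $x$, and that the $\frac{1}{2\pi}$ factor only sharpens the stated inequality, is accurate.
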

\begin{proof}
Since $\hat\psi$ is compactly supported by $K\subset\mathbb R$, the Lebesgue dominated convergence theorem implies that
\[
\begin{split}
\lim_{n\to \infty}\phi_{\psi, n,a}(x)&=\lim_{n\to \infty}\frac{1}{2\pi}\int_{\mathbb{R}}F_n(\lambda,a)\hat{\psi}(\lambda) e^{i\lambda x}\ d\lambda
\\
&
=\frac{1}{2\pi}\int_{\mathbb{R}}e^{ia\lambda }\hat{\psi}(\lambda) e^{i\lambda x}\ d\lambda
\\
&=
\psi(x+a).
\end{split}
\]
Lemma \ref{intrep} allows to estimate the error:
\[
\begin{split}
|\phi_{\psi, n,a}(x)- \psi(x+a)|&\leq
\frac{1}{2\pi}\int_{\mathbb R}|F_n(\lambda,a)-e^{ia\lambda}|d\lambda
\\
&
\leq
\sup_{\lambda\in K}\mathcal{E}_n(\lambda,a)  \int_{\mathbb{R}}  \,|\hat{\psi}(\lambda)| \ d\lambda
\\
&
=
\varepsilon_n\int_{\mathbb{R}}  \,|\hat{\psi}(\lambda)| \ d\lambda.
\end{split}
\]

\end{proof}
\begin{corollary}\label{3:cor1}
For any band limited $\psi \in\mathcal S (\mathbb R)$  we have that
$$
\lim_{n\to \infty}\phi^{(k)}_{\psi, n,a}(x)=
\psi^{(k)}(x+a)
$$
for any $k\in\mathbb N$, uniformly on every compact set in $\mathbb R$.
Moreover
$$
|\phi^{(k)}_{\psi, n,a}(x)- \psi^{(k)}(x+a)|\leq \varepsilon_{n,k}\int_{\mathbb{R}}  \,|\hat{\psi}(\lambda)| \ d\lambda,
$$
where
$$
\varepsilon_{n,k}:=\sup_{\lambda\in K}|\lambda|^k\mathcal{E}_n(\lambda,a).
$$

\end{corollary}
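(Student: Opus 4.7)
The plan is to imitate the proof of Theorem \ref{bandlim}, but start from the $k$-times differentiated integral representation of $\phi_{\psi,n,a}$. By Lemma \ref{intrep} we have
$$
\phi_{\psi,n,a}(x)=\frac{1}{2\pi}\int_{\mathbb R}F_n(\lambda,a)\hat\psi(\lambda)e^{i\lambda x}\,d\lambda,
$$
and since $\hat\psi$ is supported on the compact set $K$, the integrand together with all its $x$-derivatives $(i\lambda)^k F_n(\lambda,a)\hat\psi(\lambda)e^{i\lambda x}$ is dominated by the integrable function $|\lambda|^k|\hat\psi(\lambda)|\mathbf 1_K(\lambda)\cdot\sup_{\lambda\in K}|F_n(\lambda,a)|$. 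Hence differentiation under the integral sign is legitimate, yielding
$$
\phi^{(k)}_{\psi,n,a}(x)=\frac{1}{2\pi}\int_{\mathbb R}(i\lambda)^k F_n(\lambda,a)\hat\psi(\lambda)e^{i\lambda x}\,d\lambda.
$$

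Next I would write $\psi^{(k)}(x+a)$ via the inverse Fourier transform in the analogous form. Using Theorem \ref{FTS}(2) together with the translation identity gives
$$
\psi^{(k)}(x+a)=\frac{1}{2\pi}\int_{\mathbb R}(i\lambda)^k e^{ia\lambda}\hat\psi(\lambda)e^{i\lambda x}\,d\lambda.
$$
Subtracting the two integral expressions and recalling from \eqref{errorf} that $|F_n(\lambda,a)-e^{ia\lambda}|=\mathcal E_n(\lambda,a)$, I would estimate
$$
|\phi^{(k)}_{\psi,n,a}(x)-\psi^{(k)}(x+a)|\le\frac{1}{2\pi}\int_{K}|\lambda|^k\,\mathcal E_n(\lambda,a)\,|\hat\psi(\lambda)|\,d\lambda\le \varepsilon_{n,k}\int_{\mathbb R}|\hat\psi(\lambda)|\,d\lambda,
$$
with $\varepsilon_{n,k}=\sup_{\lambda\in K}|\lambda|^k\mathcal E_n(\lambda,a)$, which is exactly the bound claimed.

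Finally, uniform convergence on every compact set in $\mathbb R$ follows because the bound on the right-hand side is independent of $x$, and $\varepsilon_{n,k}\to 0$ as $n\to\infty$: indeed $\mathcal E_n(\lambda,a)\to 0$ uniformly on the compact set $K$ by Theorem \ref{carnot} (applied on the compact spectral support $K$), while the factor $|\lambda|^k$ is bounded on $K$. The only minor technical point to be careful about is the justification of differentiation under the integral sign; however, the compact support of $\hat\psi$ and the continuity of $F_n$ together with its polynomial-bounded $x$-derivatives make this routine, so no real obstacle arises. The argument is essentially a verbatim extension of Theorem \ref{bandlim} with the additional harmless factor $(i\lambda)^k$.
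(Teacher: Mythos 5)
Your proof is correct and is essentially the paper's argument spelled out: the paper dispatches the corollary in one line by observing that $\psi^{(k)}$ is again band limited and invoking Theorem \ref{bandlim} for it, which is exactly your computation with the extra factor $(i\lambda)^k$ carried out explicitly via Lemma \ref{intrep}. If anything, your direct estimate has the minor advantage of yielding precisely the stated constant $\varepsilon_{n,k}\int_{\mathbb R}|\hat\psi(\lambda)|\,d\lambda$, whereas the literal reduction to Theorem \ref{bandlim} applied to $\psi^{(k)}$ produces the variant bound $\sup_{\lambda\in K}\mathcal{E}_n(\lambda,a)\int_{\mathbb R}|\lambda|^k|\hat\psi(\lambda)|\,d\lambda$.
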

\begin{proof}
 This is an immediate consequence of the fact that if $\psi$ is  a band limited element of $\mathcal S(\mathbb R)$ also $\psi^{(k)}$ is band limited.
\end{proof}
\begin{corollary}\label{3:cor2}
 Let $a_1,\ldots , a_p$ be real numbers such that $|a_k|>1$ and let $\delta_{-a_k}(x) =\delta(x+a_k)$ represent the delta centered at the point $a_k$.
Then
$$
\lim_{n\to \infty} \sum_{k=1}^p \phi_{\psi, n,a_k}(x)=
\sum_{k=1}^p \delta_{-a_k} * \psi (x)
$$
uniformly on every compact set in $\mathbb R$.
Moreover, we have
$$
|\sum_{k=1}^p \phi_{\psi, n,a_k}(x)-
\sum_{k=1}^p \delta_{-a_k} * \psi (x)|\leq p\varepsilon_{n}\int_{\mathbb{R}}  \,|\hat{\psi}(\lambda)| \ d\lambda.
$$
\end{corollary}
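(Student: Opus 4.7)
My plan is to reduce the statement to a term-by-term application of Theorem \ref{bandlim}, exploiting the linearity of both the map $\psi\mapsto \phi_{\psi,n,a}$ in its second argument and the convolution with a point mass. The first preparatory step is to recognize that for any $a_k\in\mathbb R$ and any $\psi\in\mathcal S(\mathbb R)$,
\[
(\delta_{-a_k}*\psi)(x)=\int_{\mathbb R}\delta(y+a_k)\,\psi(x-y)\,dy=\psi(x+a_k),
\]
so the target function in the corollary is simply $\sum_{k=1}^p \psi(x+a_k)$.

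Next I would apply Theorem \ref{bandlim} to each index $k=1,\dots,p$ separately. Since $|a_k|>1$ and $\psi$ is a band limited Schwartz function, the theorem guarantees that $\phi_{\psi,n,a_k}(x)\to\psi(x+a_k)$ uniformly on every compact subset of $\mathbb R$. A finite sum of uniformly convergent sequences converges uniformly to the sum of the limits, so the first assertion follows immediately.

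For the quantitative estimate, the plan is to apply the triangle inequality to the telescoped difference,
\[
\Bigl|\sum_{k=1}^p\phi_{\psi,n,a_k}(x)-\sum_{k=1}^p(\delta_{-a_k}*\psi)(x)\Bigr|
\leq\sum_{k=1}^p\bigl|\phi_{\psi,n,a_k}(x)-\psi(x+a_k)\bigr|,
\]
and then invoke the error bound of Theorem \ref{bandlim} for each summand, yielding an upper bound of $p\,\varepsilon_n\int_{\mathbb R}|\hat\psi(\lambda)|\,d\lambda$.

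The only mildly subtle point, and what I would expect to be the main bookkeeping obstacle, is that the quantity $\varepsilon_n=\sup_{\lambda\in K}\mathcal E_n(\lambda,a)$ introduced in Theorem \ref{bandlim} depends on $a$; to obtain a single $\varepsilon_n$ valid for all $p$ summands one must either take $\varepsilon_n:=\max_{1\leq k\leq p}\sup_{\lambda\in K}\mathcal E_n(\lambda,a_k)$ or, equivalently, use that $K=\mathrm{supp}(\hat\psi)$ is common to all terms and that $\mathcal E_n(\lambda,a)\to 0$ uniformly on $K$ for each fixed $a_k$, hence uniformly over the finite set $\{a_1,\dots,a_p\}$. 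This does not affect the $p\varepsilon_n$ form of the bound, and the conclusion then follows.
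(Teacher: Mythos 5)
Your proposal is correct and follows exactly the argument the paper intends: the corollary is an immediate termwise application of Theorem \ref{bandlim} (using $\delta_{-a_k}*\psi(x)=\psi(x+a_k)$) combined with the triangle inequality, which is why the paper states it without proof. Your remark that $\varepsilon_n$ depends on $a$ and should be taken as the maximum over the finite set $\{a_1,\ldots,a_p\}$ is a valid and worthwhile clarification of the paper's slightly loose notation.
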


The following result is a direct consequence of Lemma \ref{intrep}.
\begin{proposition}\label{6.5}  Let us fix $\gamma >0$,  $a>1$ and $n\in \mathbb{N}\!$. Suppose that $\psi\in\mathcal{X}_\gamma(\mathbb{R})$ where
$$
\mathcal{X}_\gamma(\mathbb{R}):=\{\, u\in \mathcal{S}(\mathbb{R})  \ : \ \int_{\mathbb{R}}| \hat u(\lambda)| \ d\lambda  \leq  \gamma\, \}.
$$
Then
$$
|\phi_{\psi, n,a}(x)- \psi(x+a)|\leq (1+a^n) \gamma.
$$
\end{proposition}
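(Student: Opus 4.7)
The plan is to leverage Lemma \ref{intrep}, which already rewrites $\phi_{\psi,n,a}$ as an inverse Fourier transform, and to represent $\psi(x+a)$ in the same form so that the difference becomes a single integral whose integrand factorizes.

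First, I would note that for $\psi\in\mathcal S(\mathbb R)$ the translation identity $\mathcal{F}[\psi(\cdot + a)](\lambda) = e^{ia\lambda}\hat\psi(\lambda)$ (a direct change of variables in the definition of $\mathcal F$) together with the inversion formula \eqref{derfora} yields
\[
\psi(x+a)=\frac{1}{2\pi}\int_{\mathbb R} e^{ia\lambda}\hat\psi(\lambda)\,e^{i\lambda x}\,d\lambda.
\]
Combining this with the formula from Lemma \ref{intrep} gives
\[
\phi_{\psi,n,a}(x)-\psi(x+a)=\frac{1}{2\pi}\int_{\mathbb R}\bigl(F_n(\lambda,a)-e^{ia\lambda}\bigr)\hat\psi(\lambda)\,e^{i\lambda x}\,d\lambda.
\]
Taking absolute values under the integral sign reduces the problem to bounding $|F_n(\lambda,a)-e^{ia\lambda}|$ uniformly in $\lambda$.

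The key step is a uniform estimate on $|F_n(\lambda,a)|$. From the explicit form $F_n(\lambda,a)=(\cos(\lambda/n)+ia\sin(\lambda/n))^n$, one computes
\[
|F_n(\lambda,a)|^2 = \bigl(\cos^2(\lambda/n)+a^2\sin^2(\lambda/n)\bigr)^{n}=\bigl(1+(a^2-1)\sin^2(\lambda/n)\bigr)^{n}\le a^{2n},
\]
since $a>1$. Hence $|F_n(\lambda,a)|\le a^n$, and the triangle inequality gives $|F_n(\lambda,a)-e^{ia\lambda}|\le a^n+1$ for every $\lambda\in\mathbb R$.

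Putting everything together,
\[
|\phi_{\psi,n,a}(x)-\psi(x+a)|\le \frac{1+a^n}{2\pi}\int_{\mathbb R}|\hat\psi(\lambda)|\,d\lambda\le (1+a^n)\gamma,
\]
where the last step uses the assumption $\psi\in\mathcal X_\gamma(\mathbb R)$ and the trivial estimate $1/(2\pi)\le 1$. There is no real obstacle here: all three ingredients (the integral representation of $\phi_{\psi,n,a}$, the translation property of $\mathcal F$, and the modulus bound on $F_n$) are either supplied by the preceding material or follow by direct trigonometric computation; the only minor point worth flagging in the write-up is the slack factor $1/(2\pi)$, which is absorbed into the stated coarse bound.
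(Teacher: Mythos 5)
Your proof is correct and follows exactly the route the paper intends: the paper states this proposition as a ``direct consequence of Lemma \ref{intrep}'', and your argument---representing both $\phi_{\psi,n,a}$ and $\psi(\cdot+a)$ via the inverse Fourier transform, bounding $|F_n(\lambda,a)|\le a^n$ uniformly, and integrating against $|\hat\psi|$---is precisely that consequence (and in fact yields the slightly sharper constant $(1+a^n)\gamma/(2\pi)$, which you correctly absorb into the stated bound).
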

\begin{remark}{\rm
It is clear that one can generalize Proposition \ref{6.5} to the case of the derivatives of $\psi$ as well as to the case of finite sums of its translates, just like we did in the two corollaries of Theorem \ref{bandlim}.
}
\end{remark}

Even though the above inequality is not helpful when $n\to\infty$ since $a>1$, it is still interesting because it shows a uniform distance between $\phi_{\psi, n,a}(x)$ and $\psi (x+a)$ for all $x\in\mathbb{R}$. It is however helpful for some families of functions as the next example shows.

\begin{example}\label{exutil}
{\rm
It is well known that the family of functions $$u_\alpha(x)= xe^{-\alpha x^2}$$ belongs to $ \mathcal{S}(\mathbb{R})$ for every $\alpha >0$, see for example \cite{hormander}.
Their Fourier transforms are given by
$$
\hat{u}_\alpha(\lambda)=-\frac{\lambda  i}{2\alpha}\sqrt{\frac{\pi}{\alpha}}e^{-{\lambda^2}/{4\alpha}}.
$$
Now observe that
$$
\int_{\mathbb{R}}|\hat{u}_\alpha(\lambda)| d\lambda=2\sqrt{\frac{\pi}{\alpha}},
$$
and so the number $\gamma$ appearing in Proposition \ref{6.5} equals $2\sqrt{\dfrac{\pi}{\alpha}}$.
Thus
$$
|\phi_{u_\alpha, n}(x)-u_\alpha(x+a)|\leq \frac{1}{\pi}(1+a^n) \sqrt{\frac{\pi}{\alpha}}
$$
where
$$
\phi_{u_\alpha, n}(x):=\sum_{j=0}^nC_j(n,a)u_\alpha(x+(1-2j/n)).
$$
Fix $a\in \mathbb{R}$, $n\in \mathbb{N}$ and $\varepsilon >0$.  The elements of the family $u_\alpha(x)= xe^{-\alpha x^2}$ such that
$$
|\phi_{u_\alpha, n}(x)-u_\alpha(x+a)|\leq \varepsilon \ \ \ {\rm for\ every \ } \ \ \ x \in \mathbb{R},
$$
namely the elements of the family for which Theorem \ref{bandlim} holds,
are those for which $\alpha >\alpha_0$ where $\alpha_0$ satisfies the equality
\begin{equation}\label{alfa0}
(1+a^n) \sqrt{\frac{1}{\pi\alpha_0}}= \varepsilon.
\end{equation}
}
\end{example}
\begin{remark}{\rm If we had considered the gaussian $v_\alpha(x)=e^{-\alpha x^2}$ instead of $u_{\alpha}(x)$,
the constant $\gamma$ appearing in Proposition \ref{6.5} cannot be chosen arbitrarily small. Indeed, $\alpha$ cannot be chosen such that
$$
|\phi_{v_\alpha, n}(x)-v_\alpha(x+a)|\to 0
$$
when $\alpha$ varies in $\mathbb R^+$.
}
\end{remark}
\newpage

\chapter{Function spaces of holomorphic functions with growth}
The superoscillating functions and sequences that we have considered so far are real analytic functions defined on $\mathbb{R}$, which, because of their specific shape, can actually be extended as entire functions of a complex variable if we (formally) replace the real variable $x$ with the complex variable $z$. This is possible because the superoscillating sequences
we are studying are {\it finite} sums of exponentials, and their complex extensions are not simply holomorphic in a neighborhood of $\mathbb{R}$, but are actually entire. As it will be apparent in the remainder of this monograph, the fact that we can now use the powerful theory of holomorphic functions is central to many of the results we will obtain.

Before recalling, and reframing, the set-up for this analysis, we need to point out that while the exponentials that appear in the superoscillating sequences have frequencies bounded by $1$, since all the exponentials that appear are of the form $e^{i \lambda x}$, with $|\lambda|\leq 1$, and $x$ real, this feature disappears when we consider the entire extension of such functions. Indeed, functions of the form $e^{i \lambda z}$ are entire and of exponential type (and order $1$). Thus, we are naturally led to the study of entire functions with growth conditions at infinity. As we will show later, the study of superoscillating sequences will entail studying certain convolution operators on these spaces, and the appropriate setting for such a study is the theory of Analytically Uniform spaces, introduced by Ehrenpreis in the sixties, and fully developed in \cite{ehrenpreis}. Ehrenpreis' theory was originally designed to be a tool for the study of systems of linear constant coefficients partial differential equations, but Berenstein, Taylor, and their coauthors were able to show \cite{bd}, \cite{bt1}, \cite{bt}, \cite{struppamemoirs} that convolution equations could be treated as well within this framework: the reader is invited to consult \cite{bs1} for a rather comprehensive review on this topic. For these reasons, we will devote Section $4.1$ to a rather detailed description of the theory of Analytically Uniform spaces, and then Section 4.2 to the study of convolution operators, and convolution equations, in such spaces.

\section{Analytically Uniform spaces}

In this section we will study some spaces of (generalized) functions that were introduced by Ehrenpreis, and that he called Analytically Uniform spaces  (AU-spaces in short).  Before we offer the formal (and somewhat complicated) definitions, we should describe the philosophy of such a notion. Without aiming for completeness, one could say that a locally convex topological vector space $X$ is said to be an AU-space if its strong dual $X'$ is topologically isomorphic to a space of entire functions which satisfy suitable growth conditions at infinity.

 Let us offer both some clarification, as well as the rationale for this particular approach. First of all the nature of the growth conditions that are acceptable in this framework is fairly complex, will be fully described below, but essentially requires that the resulting spaces are algebras of functions closed under differentiation. Second, the topological isomorphism between $X'$ and the appropriate space of entire functions with growth conditions is usually a variation of the Fourier (or Fourier-Borel) transform, and the reason for this is apparent when one considers the (historically) first use of this idea.

\bigskip
Suppose we consider a linear constant coefficients partial differential operator $\wp(D)$ acting on the space ${\mathcal E}(\Omega)$ of infinitely differentiable functions on an open convex set in $\mathbb{R}^n.$ Suppose now we want to consider the surjectivity of such an operator on ${\mathcal E}(\Omega).$ The approach, independently pioneered by Ehrenpreis, Malgrange, and Palamodov \cite{e1}, \cite{e2}, \cite{e3}, \cite{ehrenpreis}, \cite{malgrange}, \cite{palamodov} consists in looking at the map
$$
\wp(D):{\mathcal E}(\Omega) \to {\mathcal E}(\Omega)
$$
and notice that by standard functional analysis results, such a map is surjective if and only if its adjoint
$$
\wp(-D):({\mathcal E}(\Omega))' \to ({\mathcal E}(\Omega))'
$$
is injective and has closed range (this is why the topologies are important). However, one now notes that the Fourier transform of $\wp(-D)$ is simply the multiplication operator by the polynomial $\wp(-z)$, and that the Fourier transform acts as an isomorphism between $({\mathcal E}(\Omega))'$ and the space
$$
\mathcal{A}_{\Omega}:=\widehat{({\mathcal E}(\Omega))'}=\{F\in {\mathcal H}(\mathbb{C}^n) \, : \, |F(z)| \leq A \exp (H_K(z))\}
$$
of entire functions whose growth is bounded at infinity by the supporting function $H_K(z):=\sup_{w\in K} (w\cdot z)$ of any compact subset $K$ of $\Omega$, where $w\cdot z$ denotes the scalar product of $w$ and $z$. So, see \cite{malgrange}, the surjectivity of $\wp (D)$ on the space of infinitely differentiable functions is equivalent to the injectivity of multiplication by $\wp(-z)$ on $\mathcal{A}_{\Omega}$ (which is trivial) and to the fact that the ideal generated by $\wp$ is closed in $\mathcal A_\Omega$. That this is the case, for $\Omega$ convex, was proved by Malgrange in \cite{malgrange}, where he explicitly gives the condition on the pair $(\wp,\Omega)$ that will generate surjectivity when $\Omega$ is not necessarily convex.

This important example shows both the necessity of identifying $\mathcal E'(\Omega)$ with a space of entire functions satisfying suitable conditions, and the necessity of having this discussion within the framework of topological vector spaces.

After the following technical definition we introduce the notion of AU-space:
\begin{definition}
Let $\mathcal{K}$ be a nonempty set of positive continuous functions on
$\mathbb{C}^n$. We say that
$\mathcal{K}$ is an analytically uniform structure (AU-structure)
if for
 any $k\in \mathcal{K}$ there exists $k'\in \mathcal{K}$ such that
$$
k'(z+z')(2+|z|^2)\leq k(z),\ \ \ {\rm for\ all}\ z, z'\in \mathbb{C}^n, \ |z'|\leq 1.
$$
\end{definition}

\begin{definition}
Let $X$ be a locally convex space. Assume that there exists an AU-structure $\mathcal{K}$ and a componentwise continuous bilinear form
$\langle\cdot ,\cdot \rangle$ on $W\times A_{\mathcal{K}}$ such that the map
from $X$ to the strong dual of  $A_{\mathcal{K}}$ given by
$$
\omega \to \langle\omega ,\cdot \rangle
$$
is a topological isomorphism. Then we call $X$ an AU-space.
\end{definition}

\begin{example}\label{differentiable}
{\rm Let $\mathcal{E}$ denote the space of infinitely differentiable functions on $\rr^n$ with the usual topology of uniform convergence on compact subsets of $\rr^n$. Then $\mathcal{E}$ is an AU-space and its dual is the space $\mathcal{E}'$ of compactly supported distributions. By the Paley-Wiener Theorem, this space is topologically isomorphic to the space of entire functions of exponential type, which grow like polynomials on the real axis. Specifically, the space of entire functions $F$ such that, for some positive constants $A, B$,
$$
|F(z)| \leq A(1+|z|)^B \exp(B|{\rm Im} (z)|).
$$
}
\end{example}

\begin{example}
{\rm Let $\mathcal{D}$ denote the space of infinitely differentiable functions with compact support on $\rr^n.$ Let $\mathcal{D}'$ denote its dual, namely the space of Schwartz distributions (note that $\mathcal{D}$ is properly contained in $\mathcal{E}$, while $\mathcal{D}'$ properly contains $\mathcal{E}'$). Then $\mathcal{D}'$ is an AU-space and its dual $\mathcal{D}$ is, by the Paley-Wiener-Schwartz Theorem, isomorphic to the space of entire functions of exponential type, which, on the real axis, go to zero faster than the inverse of any polynomial. Namely those entire functions $F$ such that, for some positive constants $A, B$, and for all positive constants $C$,
$$
|F(z)| \leq A (1+|z|)^{-C}\exp(B|{\rm Im}( z)|).
$$}
\end{example}

\begin{example}\label{entire}
{\rm Let $\mathcal{H}$ denote the space of entire functions on $\cc^n$. Then $\mathcal{H}$ is an AU-space and its dual is the space $\mathcal{H'}$ of linear analytic functionals, which, by another formulation of the Paley-Wiener Theorem, is isomorphic to the space of entire functions of exponential type, i.e. the entire functions $F$ satisfying, for some positive constants $A, B$,
$$
|F(z)| \leq A\exp B|z|.
$$
This space is interesting per se, but also because it exemplifies an interesting property of AU-spaces. Specifically, if $X$ is an AU-space, then the subspace of $X$ obtained as the kernel in $X$ of a system of linear constant coefficients partial differential operators, is still an AU-space. In this case, $\mathcal{H}(\cc^n)$ is the kernel in $\mathcal{E}(\rr^{2n})$ of the Cauchy-Riemann system of differential operators.}
\end{example}

\begin{example}
{\rm In the three examples above, we have considered functions defined on the entire Euclidean space $\rr^n$ or $\cc^n.$ However, a modification of the weights allows us to show that if we consider an open convex set $\Omega$ in $\rr^n$ or in $\cc^n$, then the corresponding spaces $\mathcal{E}(\Omega),  \mathcal{D'}(\Omega),$ and $\mathcal{H}(\Omega)$ are also AU-spaces. This is what we used to discuss the surjectivity of $\wp(D):\mathcal E(\Omega) \to \mathcal E(\Omega)$. If $\Omega$ is not convex, then the corresponding spaces are not AU-spaces anymore. }
\end{example}

While these examples are very important, it is equally important to note that many significant spaces in the theory of differential equations are not AU-spaces. For example, none of the $\mathcal{L}^p$ spaces is an AU-space, and neither are $\mathcal{S}$ and $\mathcal{S'}$, the spaces of Schwartz test functions and its dual, the space of tempered distributions. Finally, another space that (quite surprisingly) fails to be an AU-space is the space of real analytic functions, see \cite{bd}.

In some of the situations described in the examples above, the growth in the space of entire functions is described by a class of weights
which can be all brought back to an individual fundamental weight. This is a situation of particular interest to us, which we want to describe in more detail.

\begin{definition}
We say that a plurisubharmonic function $p:\cc^n \to \rr^+$ is an admissible weight\index{weight!admissible} if
\begin{itemize}
\item $\log(1+|z|^2)=O(p(z))$
\item there are four positive constants $A_1,\ldots,A_4$ such that  if
$$|z_1-z_2| \leq A_1|z_1|+A_2$$
then
$$p(z_1) \leq A_3p(z_2)+A_4.$$
\end{itemize}
\end{definition}

There are some important examples of such weights, which need to be immediately highlighted:

\begin{example} {\rm
The function $p(z)=|z|$ is an admissible weight. And, for any $\rho >1 $, the function $p(z)=|z|^{\rho}$ is a weight as well. Weights of this kind are called {\it radial weights}, as they only depend on the modulus of $z$.}
\end{example}

\begin{example} {\rm
The function $p(z)=|{\rm Im} (z)| +\log(1+|z|)$ is a weight as well. It is clearly not a radial weight.}
\end{example}

If $p(z)$ is an admissible weight, we can define two special kinds of space of entire functions with growth conditions:
the space $\mathcal{A}_p(\cc^n)$ defined by
$$
\mathcal{A}_p(\cc^n):=\{f \in \mathcal{H}(\cc^n) \ : \  \exists \ A>0, \ \ B>0 \ : \ \ |F(z)| \leq A \exp(Bp(z)) \}
$$
and the space
$$
\mathcal{A}_{p,0}(\cc^n):=\{f \in \mathcal{H}(\cc^n) \ : \ \forall \varepsilon >0, \ \exists \ A_{\varepsilon}>0 \ : \
|f(z)| \leq A_{\varepsilon} \exp (\varepsilon p(z))\}$$
In the specific case in which the weight is the function $p(z)=|z|^\rho$ we introduce a few additional definitions:
\begin{definition}
The space $\mathcal{A}_\rho(\cc^n)$ is defined by
$$
\mathcal{A}_\rho(\cc^n):=\{f \in \mathcal{H}(\cc^n) \ : \ \  \exists \ A>0,\ \ B>0 \ : \ |F(z)| \leq A \exp(B|z|^\rho) \}
$$
and it is called the space of entire functions of order less or equal to  $\rho$ and of finite type.
\end{definition}
The space $\mathcal{A}_\rho(\cc^n)$ can be endowed with  a natural topology which can be described as follows.
Let
$$
q_j(f)=\sup_{z\in\mathbb C^n}|f(z)| e^{-\sigma_j(z)}, \qquad j=1,2,\ldots,
$$
where $\sigma_j(z)$ are suitable weights.
Then $q_j$ is a seminorm and the family $\{q_j\}$ determines a locally convex topology on $\mathcal{A}_\rho(\cc^n)$.
In terms of AU-structures, one can introduce
the sequence $\Sigma=\{\sigma_j\}_{j=1}^\infty$ where the weights are $\sigma_j(z)=j |z|^p$ and set
$$
\mathcal K=\{k\in\mathcal C^0(\mathbb C^n)\ : \ k(z)=\sup_j \delta_je^{\sigma_j(z)}<\infty , \, \forall z\in\mathbb C^n\}
$$
where $\{\delta_j\}$ is a sequence of positive real numbers. Then $\mathcal K$ gives the so-called AU-structure on
$\mathcal{A}_\rho(\cc^n)$ which can therefore be interpreted as the locally convex space
\begin{equation}\label{AK}
\mathcal A_{\mathcal K}=\left\{ f\in\mathcal H (\mathbb C^n)\ :\
\frac{|f(z)|}{k(z)}\to 0,\ {\rm as}\ z\to\infty, \ \forall k\in\mathcal K \right\}.
\end{equation}

\begin{remark}\label{rmkstar}{\rm
Note that if $p(z)=|z|$, i.e. $\rho=1$, then the space $\mathcal{A}_1(\cc^n)$ is isomorphic, via Fourier-Borel transform, to the space $\mathcal{H}'$ of analytic functionals as described in Definition \ref{entire}. Similarly note that if $p(z)=|{\rm Im}( z)|+\log(1+|z|)$, then the space $\mathcal{A}_p(\cc^n)$ is isomorphic, via Fourier transform, to the space $\mathcal{E}'$ of compactly supported distributions as described in Definition \ref{differentiable}.}
\end{remark}
\begin{definition}
The space $\mathcal{A}_{\rho,0}(\cc^n)$ is defined by
$$
\mathcal{A}_{\rho,0}(\cc^n):=\{f \in \mathcal{H}(\cc^n) \ : \ \forall \varepsilon >0, \exists A_{\varepsilon}>0 \ : \
|f(z)| \leq A_{\varepsilon} \exp (\varepsilon |z|^\rho)\},$$
and it is called the space of entire functions of order less or equal $\rho$ and of minimal type.
\end{definition}
The space $\mathcal{A}_{\rho,0}(\cc^n)$ is endowed with its natural topology.
In terms of AU-structures, it is the locally convex space $\mathcal A_{\mathcal K^*}$, see (\ref{AK}), where
$$
\mathcal K^*=\{k^*\in\mathcal C^0(\mathbb C^n)\ : \ k^*(z)\geq 0, \ k(z)k^*(w)e^{z\cdot w}\, {\rm is\, bounded}\ \forall k\in\mathcal K\}.
$$

\begin{remark}\label{rmkcirc}{\rm
If we take $p(z)=|z|$, we see that $\mathcal{A}_{p,0}=\mathcal{A}_{1,0}$ is the space of the so-called functions of {\it infraexponential type},
\index{infraexponential type!function}or of order one, and type zero, and it is isomorphic, via Fourier-Borel transform, to the space of analytic functionals carried by the origin (which in turn coincides with the space of hyperfunctions supported at the origin). This identification is the key element in the theory of infinite order differential operators.}
\end{remark}

Once these spaces have been defined, a simplified notion of AU-space requires that the dual $X'$ of $X$ is topologically isomorphic to either a space $\mathcal{A}_p$ or a space $\mathcal{A}_{p,0}$, for a suitable admissible weight $p$. That this is the case was established in \cite{struppamemoirs}.

We will now restrict our attention to the case of one complex variable, and we will consider $p(z)=|z|^\rho$ and $q(z)=|z|^{\rho'}$ where $\rho>1,\rho'>1$ are real numbers such that $\frac{1}{\rho}+\frac{1}{\rho'}=1$.
Let us denote by $\mathcal{A}_\rho(\cc)'$ the strong dual of $\mathcal{A}_\rho(\cc)$, namely the space of continuous linear functional on $\mathcal{A}_\rho\mathbb(\cc)$ endowed with the strong topology. Define the Fourier-Borel transform of $\mu\in \mathcal{A}_\rho(\cc)'$ as the entire function
$$
\hat{\mu}(w)=\mu(\exp(-z\cdot w)), \qquad z\in\mathbb C .
$$
Then we have the following duality results which we will be useful in the sequel:
\begin{theorem}\label{iso} Let $\rho,\rho'\in\mathbb R$, $\rho >1$, $\rho'>1$ be such that
$$
\frac{1}{\rho} + \frac{1}{\rho'}=1.
$$
The following isomorphisms
$$
\widehat{\mathcal{A}_\rho(\cc)'}\cong \mathcal{A}_{\rho',0}(\cc)
$$
and
$$
\widehat{\mathcal{A}_{\rho,0}(\cc)'}\cong \mathcal{A}_{\rho'}(\cc),
$$
are algebraic and topological as well.
\end{theorem}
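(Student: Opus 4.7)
The plan is to handle both isomorphisms in parallel since they only differ in the role played by the conjugate exponents $\rho$ and $\rho'$. The strategy rests on three classical facts: (i) an entire function $f(z)=\sum_n a_n z^n$ lies in $\mathcal{A}_\rho(\cc)$ iff $\limsup_n n^{1/\rho}|a_n|^{1/n}<\infty$, and in $\mathcal{A}_{\rho,0}(\cc)$ iff this $\limsup$ equals $0$; (ii) the sharp bound $\sup_{z\in\cc}|z|^n e^{-B|z|^\rho}=(n/(eB\rho))^{n/\rho}$, attained at $|z|=(n/(B\rho))^{1/\rho}$; and (iii) Stirling's formula.

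First I would verify that the Fourier--Borel transform sends $\mathcal{A}_\rho(\cc)'$ into $\mathcal{A}_{\rho',0}(\cc)$. Standard differentiation under the functional shows that $\hat\mu(w)=\mu(\exp(-zw))$ is entire in $w$, with power series
$$
\hat\mu(w)=\sum_{n=0}^\infty \frac{(-w)^n}{n!}\,\mu(z^n).
$$
Since $\mathcal{A}_\rho(\cc)$ is the inductive limit of the Banach spaces $\mathcal{A}_\rho^B(\cc):=\{f\in\mathcal{H}(\cc):\|f\|_B:=\sup_z|f(z)|e^{-B|z|^\rho}<\infty\}$, continuity of $\mu$ amounts to requiring that for every $B>0$ there exists $C_B>0$ with $|\mu(f)|\leq C_B\|f\|_B$. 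Applying this to $f(z)=z^n$ and invoking (ii) gives $|\mu(z^n)|\leq C_B(n/(eB\rho))^{n/\rho}$. Combining with (iii) and the relation $1/\rho+1/\rho'=1$ one obtains
$$
\limsup_n n^{1/\rho'}\Big(\frac{|\mu(z^n)|}{n!}\Big)^{1/n}\leq e\,(eB\rho)^{-1/\rho},
$$
and since $B>0$ is arbitrary this $\limsup$ equals $0$, so by (i) we conclude $\hat\mu\in\mathcal{A}_{\rho',0}(\cc)$. The parallel analysis for $\mu\in\mathcal{A}_{\rho,0}(\cc)'$, which, being continuous on the Fr\'echet space $\mathcal{A}_{\rho,0}(\cc)$, satisfies a single estimate $|\mu(z^n)|\leq C(n/(e\varepsilon\rho))^{n/\rho}$ for some fixed $\varepsilon>0$, yields a finite (rather than vanishing) $\limsup$, placing $\hat\mu$ in $\mathcal{A}_{\rho'}(\cc)$.

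To construct the inverse, given $f(w)=\sum_n c_n w^n$ in $\mathcal{A}_{\rho',0}(\cc)$ or in $\mathcal{A}_{\rho'}(\cc)$, I would define $\mu$ on polynomials by $\mu(z^n)=(-1)^n n!\,c_n$ and extend by continuity. The key reverse estimate comes from Cauchy's inequalities: if $g\in\mathcal{A}_\rho^B(\cc)$ has Taylor expansion $g(z)=\sum a_n z^n$, optimizing the Cauchy bound over the radius yields $|a_n|\leq \|g\|_B(eB\rho/n)^{n/\rho}$, so that by Stirling the general term of $\sum_n n!\,|c_n a_n|$ behaves asymptotically like $\|g\|_B\big(n^{1/\rho'}|c_n|^{1/n}\,e^{-1/\rho'}(B\rho)^{1/\rho}\big)^n$. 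This is summable precisely under the growth hypothesis on $f$ -- for every $B$ in the $\mathcal{A}_{\rho',0}$ case and for $B$ small enough depending on $f$ in the $\mathcal{A}_{\rho'}$ case -- and yields both continuity of the extension and, by construction, $\widehat{\mu}=f$. Injectivity follows because $\hat\mu=0$ forces $\mu(z^n)=0$ for all $n$, together with density of polynomials in $\mathcal{A}_\rho(\cc)$ (respectively $\mathcal{A}_{\rho,0}(\cc)$) in the relevant inductive (respectively projective) limit topology, a consequence of the fact that the partial sums of the Taylor series of any element converge in norm on each step of the system.

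The main obstacle will be upgrading this set-theoretic bijection to a topological isomorphism. Since $\mathcal{A}_\rho(\cc)'$ and $\mathcal{A}_{\rho',0}(\cc)$ are both Fr\'echet--Schwartz, while $\mathcal{A}_{\rho,0}(\cc)'$ and $\mathcal{A}_{\rho'}(\cc)$ are both of (DFS) type, the open mapping and closed graph theorems reduce the topological statement to sequential continuity of the Fourier--Borel map in both directions, which in turn follows from the uniformity of the constants in the estimates above as $B$ (respectively $\varepsilon$) varies. Writing out this bookkeeping is the bulk of the remaining work, but the core analytic content is the Young-type transfer of exponents via $1/\rho+1/\rho'=1$ already exploited in the key bounds.
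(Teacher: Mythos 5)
The memoir states Theorem \ref{iso} without proof: it is recalled as the classical Fourier--Borel (Laplace-transform) duality for entire functions with growth conditions, going back to Martineau and to Taylor \cite{taylor}, so there is no in-paper argument to compare yours against; judged on its own, your proof is a correct rendition of the standard argument. The coefficient characterization of order and type, the bound $\sup_{z}|z|^n e^{-B|z|^\rho}=(n/(eB\rho))^{n/\rho}$ together with its Cauchy-inequality counterpart $|a_n|\le \|g\|_B\,(eB\rho/n)^{n/\rho}$, and Stirling transfer the exponent exactly as you compute, and the distinction between ``for every $B$'' (inductive limit, yielding minimal type) and ``for some fixed $\varepsilon$'' (Fr\'echet space, yielding finite type) is handled correctly in both directions.

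Three points deserve tightening. First, applying $\mu$ termwise to $e^{-zw}$ requires knowing that the Taylor partial sums of $e^{-zw}$ converge to it in the topology of $\mathcal{A}_{\rho,0}(\mathbb{C})$ (hence of $\mathcal{A}_\rho(\mathbb{C})$); this is precisely where $\rho>1$ enters and is worth a line --- the tail estimate $\sum_{n>N}|w|^n (n/(e\varepsilon\rho))^{n/\rho}/n!\to 0$ settles it. Second, there is no need to ``extend by continuity'' from the polynomials: your formula $\mu(g)=\sum_n (-1)^n n!\,c_n a_n$ already defines $\mu$ on the whole space once the summability estimate is in place, and density of polynomials is needed only for injectivity; note also that the Taylor partial sums of $g\in\mathcal{A}_\rho^B$ converge in the norm of a strictly larger step $\mathcal{A}_\rho^{B'}$, $B'>B$, not necessarily in $\|\cdot\|_B$ itself, which is all the inductive-limit topology requires. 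Third, the topological statement is only sketched, but your reduction is sound: $\mathcal{A}_\rho(\mathbb{C})$ is a DFS space, so its strong dual is Fr\'echet--Schwartz, $\mathcal{A}_{\rho,0}(\mathbb{C})$ is Fr\'echet--Schwartz, and both $\mathcal{A}_{\rho,0}(\mathbb{C})'$ and $\mathcal{A}_{\rho'}(\mathbb{C})$ are DFS, so closed-graph and open-mapping arguments apply once continuity in one direction is extracted from the uniformity of your estimates.
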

Ehrenpreis demonstrated in his monograph \cite{ehrenpreis} that the AU-spaces are the ideal setting for a very general theory of linear constant coefficients partial differential equations. He demonstrated also, though this part of his work was left unfinished until \cite{bt}, \cite{struppamemoirs}, that a general theory of systems of convolution equations was possible in such spaces. To see how that would be done, we will introduce convolutors on AU-spaces in an abstract way, and we will then show how they can be realized concretely in the spaces described before.

\bigskip
The importance of the spaces $\mathcal A_{\rho}$, $\mathcal A_{\rho,0}$ does not, clearly, exhaust the class of interesting spaces. We therefore close this section with two more results from \cite{taylor} that will be useful in the sequel. To this purpose we need the following notation:
  $$\Delta_R=\{z\in\mathbb{C}\ \ :\ \ |z|< R\}.$$
\begin{proposition}\label{esempiodelta1}
The space
\[
 X={\rm Exp}_1(\mathbb{C}):=\{ f\in\mathcal{H}(\mathbb{C}) \ : \ |f(z)|\leq A |z|^n e^{|z|}\ {\it for\ some}\ n\}
\]
is an AU-space and
\[
 \mathcal{F}X'\cong\{f\in\mathcal{H}(\Delta_1)\ {\it and}\, f^{(n)}\ {\it are\ uniformly\ continuous\ for\ all}\, n\in\mathbb{N}\}.
\]
\end{proposition}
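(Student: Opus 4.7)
The plan is to compute $\mathcal{F}X'$ explicitly via the Fourier-Borel pairing and to extract the AU-space property from the resulting identification by the standard Ehrenpreis-Taylor machinery. Write $X$ as the strict inductive limit $X=\bigcup_n X_n$ of Banach spaces
\[
X_n:=\{f\in\mathcal{H}(\mathbb{C}):\ \|f\|_n:=\sup_{z\in\mathbb{C}}|f(z)|(1+|z|)^{-n}e^{-|z|}<\infty\},
\]
so that $\mu\in X'$ iff $|\mu(\cdot)|\leq C_n\|\cdot\|_n$ on each $X_n$ for some constants $C_n=C_n(\mu)$. Define $\mathcal{F}\mu(w):=\mu(e^{zw})$ for $w\in\bar{\Delta}_1$; this is legitimate because $\|e^{zw}\|_0\leq 1$ whenever $|w|\leq 1$.

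First I would verify that $\mathcal{F}$ lands in the target space. Differentiating $e^{zw}$ in $w$ inside $\mu$ (legitimate by continuity) gives $\mathcal{F}\mu^{(n)}(w)=\mu(z^n e^{zw})$, so $\mathcal{F}\mu\in\mathcal{H}(\Delta_1)$. For uniform continuity of $\mathcal{F}\mu^{(n)}$ on $\Delta_1$ (equivalently, continuous extension to $\bar{\Delta}_1$), I would use
\[
z^n e^{zw}-z^n e^{zw'}=z^{n+1}(w-w')\int_0^1 e^{z(tw+(1-t)w')}\,dt ,
\]
together with the convexity bound $|tw+(1-t)w'|\leq 1$, to get $\|z^n e^{zw}-z^n e^{zw'}\|_{n+1}\leq |w-w'|$. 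Continuity of $\mu$ on $X_{n+1}$ then yields $|\mathcal{F}\mu^{(n)}(w)-\mathcal{F}\mu^{(n)}(w')|\leq C_{n+1}|w-w'|$, a Lipschitz (hence uniformly continuous) bound on $\bar{\Delta}_1$.

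Next I would establish bijectivity. For surjectivity, given $F(w)=\sum_k a_k w^k$ in the target space, iterated integration by parts of Cauchy's formula on $|w|=1$ yields $|a_k|\leq\|F^{(N)}\|_{L^\infty(\bar{\Delta}_1)}/(k(k-1)\cdots(k-N+1))$, so $a_k$ decays faster than any polynomial. On the other side, Cauchy's estimate $|b_k|\leq\|f\|_n(1+r)^n e^r/r^k$ evaluated at the optimal radius $r=k$ and combined with Stirling gives $|b_k\,k!|\leq C\|f\|_n k^{n+1/2}$ for $f=\sum_k b_k z^k\in X_n$. Hence the Taylor pairing
\[
\mu_F(f):=\sum_{k\geq 0} a_k b_k\,k!
\]
converges absolutely and satisfies $|\mu_F(f)|\leq C_n'\|f\|_n$ for every $n$, so $\mu_F\in X'$; expanding $e^{zw}=\sum_k w^k z^k/k!$ and applying $\mu_F$ term-by-term confirms $\mathcal{F}\mu_F=F$. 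For injectivity, $\mathcal{F}\mu=0$ forces $\mu(z^k)=0$ for all $k$, so $\mu$ vanishes on polynomials; density of polynomials in $X$ finishes the argument, and is proved by showing that for $f\in X_n$ the Taylor partial sums $P_N f$ converge to $f$ in $X_{n+3}$ — splitting the supremum into the region $\{|z|\leq N/4\}$ (where the tail is bounded by $CN^{n+1}(e/4)^N\to 0$) and $\{|z|>N/4\}$ (where the enlarged weight $(1+|z|)^{n+3}e^{|z|}$ absorbs the a priori bound $C|z|^{n+1/2}e^{|z|}$) gives the uniform estimate needed.

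Finally, matching the $X_n'$-dual norms on $X'$ with the natural Fréchet seminorms $\sup_{w\in\bar{\Delta}_1}|F^{(j)}(w)|$ on the target makes $\mathcal{F}$ a topological isomorphism, so $X'$ is (isomorphic to) a Fréchet space and $X$ is its strong dual. The AU-space property of $X$ then follows by translating these Fréchet seminorms into an AU-structure $\mathcal{K}$ of Ehrenpreis-Taylor weights, as explained in Section 4.1 and in \cite{taylor}. The hard part will be precisely this last step: verifying the AU axioms for the constructed $\mathcal{K}$ and certifying that the abstract identification $X\cong A_{\mathcal{K}}'$ matches the concrete Fourier-Borel picture above — a bookkeeping exercise that is routine within the Ehrenpreis-Taylor framework but occupies the bulk of the technical work.
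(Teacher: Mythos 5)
You should first note that the memoir itself offers no proof of this proposition: it is quoted from Taylor's paper \cite{taylor}, so there is nothing in the text to match step by step. Your explicit Fourier--Borel computation is the standard route to the second half of the statement, and as far as the identification of $\mathcal{F}X'$ goes it is essentially sound: the bound $\|z^ne^{zw}-z^ne^{zw'}\|_{n+1}\le |w-w'|$ does give (Lipschitz, hence uniform) continuity of every derivative of $\mathcal{F}\mu$ on $\Delta_1$; the coefficient estimates $|b_k|\,k!\le C\|f\|_n\,k^{n+1/2}$ (Cauchy at $r=k$ plus Stirling) and $|a_k|=O(k^{-N})$ for every $N$ make the pairing $\mu_F(f)=\sum_k a_kb_k\,k!$ continuous on each step $X_n$, with $\mathcal{F}\mu_F=F$; and your two-region estimate does prove convergence of the Taylor sections in $X_{n+3}$, hence density of polynomials and injectivity. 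To make this airtight you should still justify differentiation under $\mu$ (convergence of the difference quotients in a fixed step $X_m$), and actually prove the topological statement rather than assert it: continuity of $\mathcal{F}$ follows from $\sup_{w\in\overline{\Delta}_1}|(\mathcal{F}\mu)^{(j)}(w)|\le\|\mu\|_{X_j'}$, continuity of the inverse from an estimate of $\|\mu_F\|_{X_n'}$ by finitely many seminorms $\sup|F^{(j)}|$, and the description of the strong topology on $X'$ uses that bounded subsets of $X$ are bounded in some $X_n$, i.e.\ regularity of the inductive limit (which holds here because the linking maps $X_n\hookrightarrow X_{n+1}$ are compact).

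The genuine gap is the first half of the statement, the one the proposition is actually named for: that $X$ is an AU-space. You defer this to a ``bookkeeping exercise'', but under the paper's own Definition an AU-structure is a family $\mathcal{K}$ of positive continuous weights on $\mathbb{C}$ satisfying $k'(z+z')(2+|z|^2)\le k(z)$ for $|z'|\le 1$, the space $\mathcal{A}_{\mathcal{K}}$ consists of \emph{entire} functions $f$ with $|f(z)|/k(z)\to 0$, and one must produce a topological isomorphism of $X$ onto the strong dual of $\mathcal{A}_{\mathcal{K}}$. Your computation realizes the predual of $X$ as a space of functions holomorphic only on $\Delta_1$ (their Taylor series in general converge on no larger disc), so it is not literally an $\mathcal{A}_{\mathcal{K}}$, and ``translating the Fr\'echet seminorms into weights'' is precisely the step that requires an argument: one must exhibit the weights (in Taylor's treatment they are manufactured from the seminorm system, essentially controlled near $\overline{\Delta}_1$ and growing rapidly away from it, or one works in Taylor's formulation where the weights live on the relevant carrier), verify the AU axiom for them, and check that the resulting $\mathcal{A}_{\mathcal{K}}$, with the Fourier--Borel pairing, has strong dual $X$; this last point also uses the reflexivity of $X$ (it is a DFS space because the inclusions $X_n\hookrightarrow X_{n+1}$ are compact), another fact you assert but do not prove. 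So the duality half of your proposal is correct and matches the classical argument behind \cite{taylor}; the AU half still has to be supplied, or explicitly quoted from \cite{taylor}, and it is not mere bookkeeping relative to the definition adopted in this memoir.
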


\begin{proposition}\label{esempiodeltaR}
 Let $R>0$ and let $X$ be the space of entire functions of exponential type less that $R$,
i.e.
\[
 X=\{f\in\mathcal{H}(\mathbb{C}) \ : \ \exists \ \varepsilon >0 \ :  \ |f(z)|\leq A e^{(R-\varepsilon)|z|} \ {\it for \ some\,} A >0\}.
\]
Then $X$ is an AU-space and $\mathcal{F}X'$ is isomorphic to the space of functions holomorphic in the disc
$\Delta_R$.
\end{proposition}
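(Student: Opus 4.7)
The plan is to recognize $X$ as the strict inductive limit $X=\varinjlim_{r\nearrow R}X_r$, where $X_r$ is the Banach space of entire functions with norm $\|f\|_r=\sup_{z\in\mathbb{C}}|f(z)|e^{-r|z|}$, and then to mimic the AU-structure machinery already exhibited in the excerpt for $\mathcal{A}_\rho$. Concretely, I would pick any sequence $r_j\nearrow R$, set $\sigma_j(z)=r_j|z|$, and let $\mathcal{K}$ be the family of positive continuous functions of the form $k(z)=\sup_j \delta_j e^{\sigma_j(z)}$ for sequences $\delta_j>0$ that make the sup locally finite. One checks that $\mathcal{K}$ satisfies the AU-structure condition: since $r_j|z+z'|\leq r_j|z|+R$ for $|z'|\leq 1$ and $\log(2+|z|^2)$ is dominated by any $\sigma_{j+1}-\sigma_j$ out far enough, a coarser sequence $\delta'_j$ can be chosen so that $k'(z+z')(2+|z|^2)\leq k(z)$. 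Identifying $X$ with $\mathcal{A}_{\mathcal{K}}$ as locally convex spaces then reduces to the observation that $f\in X$ iff $|f(z)|\leq Ae^{r|z|}$ for some $r<R$, iff $|f(z)|/k(z)$ tends to $0$ at infinity for every $k\in\mathcal{K}$.

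The second and more substantive part is the Fourier-Borel isomorphism $\mathcal{F}X'\cong \mathcal{H}(\Delta_R)$. I would define $\hat{\mu}(w):=\mu(e^{zw})$ for $\mu\in X'$ and $w\in\Delta_R$; note that the map $w\mapsto e^{zw}$ is holomorphic from $\Delta_R$ into $X$ (since for any $s$ with $|w|<s<R$ we have $e^{zw}\in X_s$ with $\|e^{zw}\|_s\leq 1$), so $\hat{\mu}\in\mathcal{H}(\Delta_R)$. By the DFS structure, each $\mu\in X'$ restricts to a bounded functional on every $X_r$ with $r<R$, say $|\mu(f)|\leq C_r\|f\|_r$, whence $|\hat{\mu}(w)|\leq C_r$ for $|w|<r$, so the map is even continuous for the strong dual topology on $X'$ and the compact-open topology on $\mathcal{H}(\Delta_R)$.

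Injectivity is handled by noting that $\hat{\mu}\equiv 0$ forces, via the convergent Taylor expansion of $e^{zw}$ in $w$ around $0$, the identities $\mu(z^n)=0$ for all $n\geq 0$; since polynomials are dense in each $X_r$ (a Taylor polynomial of $f\in X_r$ converges to $f$ in the norm of $X_{r'}$ for any $r'>r$, as follows from the Cauchy estimates $|b_n|\leq \|f\|_r(er/n)^n$), density of polynomials in the inductive limit yields $\mu=0$. For surjectivity, given $\varphi(w)=\sum_{n\geq 0}c_nw^n\in\mathcal{H}(\Delta_R)$ (so $\limsup|c_n|^{1/n}\leq 1/R$), I define $\mu$ on $f=\sum b_nz^n\in X$ by $\mu(f)=\sum_{n\geq 0}n!\,c_nb_n$. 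The series converges: for $f\in X_r$ and any $r'\in(r,R)$, Cauchy bounds give $n!|b_n|\lesssim \|f\|_r\,r^n\sqrt{n}$ via Stirling and $|c_n|\lesssim (r')^{-n}$, so the summand is bounded by $\|f\|_r\sqrt{n}(r/r')^n$. This yields a bound $|\mu(f)|\leq C_r\|f\|_r$ on each $X_r$, so $\mu\in X'$; a direct computation with $e^{zw}=\sum (zw)^n/n!$ gives $\hat{\mu}=\varphi$.

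The main obstacle, which I expect to be largely bookkeeping but must be done carefully, is verifying that the resulting bijection $X'\to\mathcal{H}(\Delta_R)$ is a topological isomorphism and not merely algebraic: one has to match the bounded subsets of the strong dual $X'_b$ with the equicontinuous families in $\mathcal{H}(\Delta_R)$, i.e.\ with families uniformly bounded on compact subsets of $\Delta_R$. The estimates above give one direction; the converse uses that any equibounded family on $\bar\Delta_r\subset\Delta_R$ produces functionals uniformly bounded on the unit ball of $X_r$, hence a bounded set in $X'_b$. Once this is in place, combining the representation $X\cong\mathcal{A}_\mathcal{K}$ with $\widehat{X'}\cong\mathcal{H}(\Delta_R)$ concludes that $X$ is an AU-space with the required Fourier-Borel dual.
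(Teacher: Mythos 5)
Your argument is essentially correct, but it is worth noting that the memoir itself does not prove this proposition at all: it is quoted, together with Proposition \ref{esempiodelta1}, from Taylor's paper \cite{taylor}. What you have written is therefore a self-contained reconstruction of the cited result, and it is in substance the classical Borel-transform duality: writing $X=\varinjlim_{r\nearrow R}X_r$ with $\|f\|_r=\sup_{z}|f(z)|e^{-r|z|}$, pairing $f=\sum b_nz^n$ with $\varphi=\sum c_nw^n$ via $\sum n!\,b_nc_n$, and checking via the Cauchy--Stirling estimates $n!|b_n|\lesssim\|f\|_r\sqrt{n}\,r^n$, $|c_n|\lesssim (r')^{-n}$ that this pairing implements $\mu\mapsto\mu(e^{zw})$ as a bijection $X'\to\mathcal{H}(\Delta_R)$. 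Your surjectivity formula, the density-of-polynomials argument for injectivity, and the bound $\|e^{zw}\|_r\le 1$ for $|w|\le r$ (giving continuity of $\mu\mapsto\hat\mu$ from $X'_b$ into the compact-open topology) all check out. So the comparison is simply: the paper buys the statement by citation, while you supply the proof, essentially the one Taylor gives.

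Two places deserve slightly more care than your phrasing suggests, though neither is a genuine gap. First, the identification $X=\mathcal{A}_{\mathcal{K}}$ is an ``observation'' only in the inclusion $X\subseteq\mathcal{A}_{\mathcal{K}}$; for the converse you must, given $f$ with $\sup_z|f(z)|e^{-r_j|z|}=\infty$ for every $j$, build one weight $k(z)=\sup_j\delta_je^{r_j|z|}$ witnessing $|f|/k\not\to 0$, and this requires choosing the $\delta_j$ diagonally in terms of $f$ (e.g.\ pick $z_j$ with $|f(z_j)|\ge je^{r_j|z_j|}$ and set $\delta_i=\min\bigl(1,\min_{j<i}|f(z_j)|e^{-R|z_j|}\bigr)$, so that $k(z_j)\le|f(z_j)|$ for all $j$). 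Second, for the topological statement, ``bounded sets correspond'' yields a homeomorphism only after you invoke either the regularity of the inductive limit (the inclusions $X_r\hookrightarrow X_{r'}$, $r<r'$, are compact, so $X$ is a DFS space and every bounded subset of $X$ is bounded in some $X_r$) together with the fact that bounded linear maps between the Fréchet spaces $X'_b$ and $\mathcal{H}(\Delta_R)$ are continuous, or else the open mapping theorem once two-sided continuity of the bijection is established directly; you allude to the DFS structure, so this is a matter of making the citation explicit rather than repairing the argument.
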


\section{Convolutors on Analytically Uniform spaces}

Let $X$ be an AU-space. Convolutors on $X$ can be defined in a standard way as follows: let $X'$ be the strong dual of $X$ and let $\mathcal{F} X'$ be the space of entire functions satisfying suitable growth conditions, which is topologically isomorphic to $X$.
\begin{definition}\label{convolutor}
Let $F$ be an entire function which, by multiplication, defines a continuous map from $\mathcal{F} X'$ to itself. Then a convolutor on $X$ is the continuous operator on $X$ defined as the adjoint of the map that associates to $f\in X'$
the element
$$
\mathcal{F}^{-1}(F(\mathcal{F}(f))).
$$
\end{definition}
Let us consider a few examples to illustrate the concrete meaning of this definition.
\begin{example}{\rm
Let $\mathcal{E}$ be the space of infinitely differentiable functions on $\rr^n$, whose dual $\mathcal{E'}$ is the space of compactly supported distributions. This space, as we have seen before, is topologically isomorphic to the space $\mathcal{A}_p(\cc^n)$, where $p(z)=|{\rm Im} z|+\log(1+|z|)$ (see Remark \ref{rmkstar}). Then every holomorphic function $F$ in $\mathcal{A}_p$ is the Fourier transform of a compactly supported distribution $\mu \in \mathcal{E'}$ and the convolution described abstractly in Definition \ref{convolutor} is actually the convolution $\mu*f$ between a compactly supported distribution $\mu$, and a differentiable function $f$, and which is well known from classical analysis. Specifically,
$$
\mu*f(x)=<\mu,t \to f(x+t)>.
$$
}
\end{example}
A similar argument can be made for the holomorphic case.
\begin{example}{\rm
If $\mathcal{H}$ is the space of entire functions on $\mathbb{C}$ then its dual is the space of analytic functionals which, by Fourier-Borel transform, is topologically isomorphic to the space $\mathcal{A}_p(\cc^n)$ with $p(z)=|z|.$ Then, every entire function of exponential type defines a multiplicator on $\mathcal{A}_p(\cc^n)$ and it is the Fourier-Borel transform of an analytic functional $\mu \in \mathcal{H'}$. Thus, just as in the previous example, the abstract convolution in Definition \ref{convolutor} is nothing but the convolution $\mu*f$ between an analytic functional and an entire function, already classically known.
}
\end{example}

A very special class of convolutors on the space of entire functions is the class of infinite order differential operators, which arise when we consider an analytic functional $\mu$ carried by the origin. The reason for the nomenclature of {\it infinite order differential operator} stems from the fact that the Fourier-Borel transform of an analytic functional carried by the origin is an entire function of infraexponential type (i.e. it belongs to the space $\mathcal{A}_{p,0}$, with $p(z)=|z|$, see Remark \ref{rmkcirc}), and therefore its Taylor expansion converges everywhere on $\cc$ and its action can truly be considered as the action of a differential operator of infinite order.
Specifically, we can give the following definition, see \cite{kaneko}.
\begin{definition} \label{infiniteorderdef} An operator of the form
$$
\sum_{m=0}^{\infty} b_m(z) \displaystyle\frac{d^{m}}{dz^{m}}
$$
is an infinite-order differential operator,  which acts continuously on holomorphic functions in $\mathbb{C}$ if and only if, for every compact set ${K}\subset \mathbb{C}$,
\begin{equation}\label{infiniteorder}
\lim_{k\to \infty}\sqrt[k]{\sup_{z\in {K}} |b_k(z) |\, k!  }=0.
\end{equation}
\end{definition}
\vskip 0.5cm
While infinite order differential operators have great importance both per se, and as part of the theory of hyperfunctions and microfunctions (see e.g. \cite{kaneko}, \cite{katostruppa}), there are many instances when we want to consider operators that may be defined in ways that remind us of infinite order differential operators, while they do not satisfy  condition (\ref{infiniteorder}). This is, classically, the case for the translation operator, which is usually defined as
\begin{equation}\label{translation}
\tau f(x) = f(x+1)=\exp(d/dx)f(x)=\sum_{m=0}^{\infty} \frac{1}{m!} \frac{d^mf}{dx^m}(x).
\end{equation}
While the operator appears to be expressed as an infinite sum of derivatives, the function on the right hand side of (\ref{translation}) does not converge, in general, and in fact makes no sense, except in an intuitive way. To be precise, the translation is only a convolutor on the space, say, of entire functions. Operators of this kind are quite important (in fact, the case of the translation is paradigmatic of what physicists use continuously, sometimes with some abuse of notations), and therefore one may ask whether they can be treated in a general way. That this is the case is guaranteed by Theorem \ref{iso}, which can now be rephrased as follows:
\begin{theorem}\label{duality pq}
Let $\rho$, $\rho'$ be real numbers such that $\rho\geq 1$, $\rho'\geq 1$ and
 $$
 \frac{1}{\rho}+\frac{1}{\rho'}=1.
 $$
The space of convolutors on $\mathcal{A}_\rho$ is isomorphic (via Fourier-Borel transform) to the space $\mathcal{A}_{\rho',0}$  and conversely, the space of convolutors of $\mathcal{A}_{\rho,0}$ is isomorphic to $\mathcal{A}_{\rho'}$.
\end{theorem}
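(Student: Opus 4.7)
The plan is to reduce the statement to Theorem \ref{iso} by showing that the space of \emph{multipliers} on $\mathcal{A}_{\rho',0}(\mathbb{C})$ (respectively $\mathcal{A}_{\rho'}(\mathbb{C})$) coincides, as a locally convex space, with $\mathcal{A}_{\rho',0}(\mathbb{C})$ itself (respectively $\mathcal{A}_{\rho'}(\mathbb{C})$). Indeed, by Definition \ref{convolutor} a convolutor on an AU-space $X$ is in bijection with an entire function $F$ that defines a continuous multiplication map $\mathcal{F}X'\to\mathcal{F}X'$. For $X=\mathcal{A}_\rho(\mathbb{C})$, Theorem \ref{iso} identifies $\mathcal{F}X'$ with $\mathcal{A}_{\rho',0}(\mathbb{C})$; for $X=\mathcal{A}_{\rho,0}(\mathbb{C})$, it identifies $\mathcal{F}X'$ with $\mathcal{A}_{\rho'}(\mathbb{C})$. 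Thus, once the multiplier identification is in hand, the claimed isomorphisms will follow by composing with the Fourier--Borel transform of Theorem \ref{iso}.

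To characterize multipliers on $\mathcal{A}_{\rho',0}(\mathbb{C})$, I would first note that the constant function $1$ lies in the space, so any multiplier $F$ automatically satisfies $F=F\cdot 1\in\mathcal{A}_{\rho',0}(\mathbb{C})$. For the converse, take $F\in\mathcal{A}_{\rho',0}(\mathbb{C})$ and $g\in\mathcal{A}_{\rho',0}(\mathbb{C})$: given $\varepsilon>0$, split $\varepsilon/2+\varepsilon/2=\varepsilon$ to obtain bounds
$$|F(z)|\leq C_\varepsilon\, e^{(\varepsilon/2)|z|^{\rho'}},\qquad |g(z)|\leq C'_\varepsilon\, e^{(\varepsilon/2)|z|^{\rho'}},$$
which multiply to $|F(z)g(z)|\leq C_\varepsilon C'_\varepsilon\, e^{\varepsilon|z|^{\rho'}}$, showing $Fg\in\mathcal{A}_{\rho',0}(\mathbb{C})$. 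An entirely analogous splitting handles the $\mathcal{A}_{\rho'}(\mathbb{C})$ case: if $F$ and $g$ have growths $Ae^{B|z|^{\rho'}}$ and $A'e^{B'|z|^{\rho'}}$, then $Fg$ has growth $AA'e^{(B+B')|z|^{\rho'}}$, which still lies in $\mathcal{A}_{\rho'}(\mathbb{C})$.

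These same growth inequalities give the continuity of the multiplication map in the locally convex topologies built from the weights $\varepsilon|z|^{\rho'}$ (projective limit for $\mathcal{A}_{\rho',0}$) and $j|z|^{\rho'}$ (inductive limit for $\mathcal{A}_{\rho'}$), so the identification of multipliers with elements of the space is topological, not merely algebraic or set-theoretic. Combining with the Fourier--Borel isomorphisms of Theorem \ref{iso} then yields the two claimed isomorphisms: convolutors on $\mathcal{A}_\rho$ correspond under Fourier--Borel to elements of $\mathcal{A}_{\rho',0}$, and convolutors on $\mathcal{A}_{\rho,0}$ correspond to elements of $\mathcal{A}_{\rho'}$.

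The main obstacle I anticipate is the topological matter in the last step: one must verify that the operator-type topology inherited by the multiplier space from the seminorms of $\mathcal{A}_{\rho',0}$ (or $\mathcal{A}_{\rho'}$) is exactly the intrinsic AU-topology on these spaces, rather than a strictly finer one. The explicit bounds above already exhibit the continuity of multiplication quantitatively in the defining seminorms, while the reverse estimate is obtained by evaluating the multiplier at the constant function $1$; together these pin down the topology, so the equivalence is in fact built into the same family of weights $\{|z|^{\rho'}\}$ that defines the two spaces.
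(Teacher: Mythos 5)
Your proposal is correct and follows essentially the same route as the paper, which offers no separate proof but simply observes that Theorem \ref{iso}, combined with Definition \ref{convolutor}, ``can be rephrased'' as this statement. Your reduction — convolutors on $X$ correspond to multipliers on $\mathcal{F}X'$, the multipliers on $\mathcal{A}_{\rho',0}$ (resp.\ $\mathcal{A}_{\rho'}$) are exactly the elements of those spaces by the $\varepsilon/2$ (resp.\ $B+B'$) splitting, and one then composes with the Fourier--Borel isomorphisms of Theorem \ref{iso} — is precisely the argument the paper leaves implicit, with the multiplier identification and continuity estimates made explicit.
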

\vskip 0.5cm

\section{Dirichlet series}
\label{dirichlet}
We conclude this chapter by noticing that it may be worthwhile to consider a more general setting, in which holomorphicity is required only on a subset of $\mathbb{C}.$ Let us give a hint of why such a setting might be of interest. Indeed, up to now, we have considered sequences of band-limited functions, converging to exponentials of the form $e^{iax}$ with $a$ large (at least larger than $1$). But one may want to consider the case in which we begin with a Dirichlet series $\sum_{j=1}^{\infty} c_je^{- i\lambda_j x},$ with real frequencies $\lambda_j, \, \lambda_j \to \infty.$  In this case we will show that it may be interesting to consider sequences of band limited functions which converge to such series. As it is well known, however, Dirichlet series only converge in half-planes, and so it is relevant to generalize the study of Sections 4.1 and 4.2 to the case of functions which are holomorphic in cones (half-planes being a special case of cones), and satisfying there suitable growth conditions. As we will show in this section it is possible to prove that, with natural adjustments, the fundamental results of Sections $4.1$ and $4.2$ extend to this more general setting. Those results are important if one wishes to consider more general superoscillating sequences, and study how they propagate when taken as initial values of suitable differential equations.

\bigskip

Let $\Gamma$ be an open convex cone  with vertex at the
origin, for the sake of simplicity, and contained in the right half plane $\Pi_+=\{z\in\mathbb C\ :\ {\rm Re}(z)>0\}$.
We also assume that $\Gamma$ is of the form
$$
\Gamma=\{z\in\mathbb C\  :\  - \theta <{\rm arg}(z)<\theta\},
$$
for some $\theta\in(0, \pi/2]$.
In the particular case $\theta=\pi/2$ the cone $\Gamma$ coincide with the right half plane, i.e. $\Gamma=\Pi_+$.

\bigskip
Let $\mathcal{A}_{p,0}(\Gamma)$ be the space of function $f\in\mathcal{H}(\Gamma)$ such that,
for all $\varepsilon>0$ and for all cones $\Gamma'$ compactly included in $\Gamma$,
$$
|f(z)|\leq C\exp(\varepsilon p(z)),\quad z\in\Gamma',
$$
for $z\in\Gamma'$, for some constant $C=C(\varepsilon,\Gamma',f)>0$. It is standard to
endow $\mathcal{A}_{p,0}(\Gamma)$ with its natural projective limit topology.
As we did in the previous sections, it is important to the characterize the dual of this space, namely the space of linear
continuous functionals on  $\mathcal{A}_{p,0}(\Gamma)$. Any linear continuous
functional in $\mathcal{A}_{p,0}(\Gamma)'$ can be described as an integral against a
measurable function $u$, supported in some cone
$$
K=\{z\in\Pi_+\  :\  -\alpha \leq {\rm arg}(z)\leq \alpha\}+c, \ c\in\mathbb R^+, \, \alpha\in(0,\theta),
$$
and such that, for some $A>0$ and some $\tilde C >0$,
$$
| u(z) |\leq \tilde C \exp(- Ap(z)) , \ z\in K.
$$
By its definition, $K$ is a cone with vertex at a positive real point and has opening less than the opening of $\Gamma$, thus
$K$ is strictly included in $\Gamma$.
\\
To any $u\in \mathcal{A}_{p,0}(\Gamma)'$, we can associate, though not in a unique way, a value
$c\in\mathbb R^+$ and an angle $\alpha\in (0,\theta)$. If $u$ is supported by $K=K(\alpha, c)$, then $u$ is
supported by any cone $K'=K'(\alpha', c')$ with $\alpha\leq\alpha'$ and $c'\leq c$.
To define the Laplace transform of a functional $u\in \mathcal{A}_{p,0}(\Gamma)'$
we first notice that the functions
$$
e_w(z)=\exp(-z\cdot w)
$$
belong (as functions of $z$) to $\mathcal{A}_{p,0}(\Gamma)$, for every $w\in \mathbb C$. Thus
the Laplace transform on $\mathcal{A}_{p,0}(\Gamma)'$ is defined by the following formula:
$$
\hat u(w)=\langle u,\exp(-z\cdot w)\rangle =\int_{\mathbb C} u(z)\exp(-z\cdot w) d\lambda (z)
$$
where $d\lambda (z)$ is the Lebesgue measure on $\mathbb C$.
Let us now write $w$, the dual variable of $z$, in polar coordinates as $w=|w| \exp(i\varphi)$,
$\varphi\in [0,2\pi)$ and define the function
$$
\beta(\varphi,\alpha)=\left\{\begin{array}{ccc}
& 0 & 0\leq \varphi\leq \pi/2 -\alpha\\
& \cos(\pi-\alpha-\varphi) & \pi/2 -\alpha < \varphi\leq \pi -\alpha\\
& 1 & \pi -\alpha < \varphi\leq \pi +\alpha\\
& \cos(\pi+\alpha-\varphi) & \pi +\alpha < \varphi\leq 3\pi/2 +\alpha\\
& 0 & 3\pi/2 +\alpha < \varphi\leq 2\pi.\\
\end{array}\right.
$$
We have the following result (see \cite{bs} for its proof):
\begin{theorem}
The space $\mathcal{A}_{p,0}(\Gamma)'$
is isomorphic, via the Laplace transform, to the
space, denoted by $\widehat{(\mathcal{A}_{p,0}(\Gamma)')}$, of entire functions $f\in\mathcal H(\mathbb C)$ such that, for some $B>0$ and some $\alpha\in (0, \theta)$
satisfy the inequality
$$
| f(w) | \leq C \exp(B |w|^\sigma \beta(\varphi, \alpha)^\sigma-\frac 1B {\rm Re}(w)).
$$
\end{theorem}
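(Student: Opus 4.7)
The statement is a Paley--Wiener--Martineau theorem for analytic functionals on the convex cone $\Gamma$, so I would establish the two inclusions separately and then verify continuity of both maps. Throughout I take $p(z)=|z|^\rho$ with $\rho>1$, so that the $\sigma$ appearing in the estimate is its conjugate exponent, $1/\rho+1/\sigma=1$.

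\textbf{Forward direction.} Let $u$ be represented by a measurable function supported in $K=K(\alpha,c)$ with $|u(z)|\le\tilde C\exp(-A|z|^\rho)$. Then
\begin{equation*}
|\hat u(w)|\le \tilde C\int_K\exp\!\bigl(-A|z|^\rho-\mathrm{Re}(zw)\bigr)\,d\lambda(z).
\end{equation*}
I parametrise $K$ by $z=c+re^{i\psi}$ with $r\ge 0$ and $\psi\in[-\alpha,\alpha]$, and write $w=|w|e^{i\varphi}$, so that $\mathrm{Re}(zw)=c\,\mathrm{Re}(w)+r|w|\cos(\psi+\varphi)$. The first summand extracts a global factor $\exp(-c\,\mathrm{Re}(w))$; the inner integral has exponent $-Ar^\rho-r|w|\cos(\psi+\varphi)$, and a routine case analysis over the four ranges of $\varphi$ identifies $\max_{\psi\in[-\alpha,\alpha]}[-\cos(\psi+\varphi)]_+$ as precisely the piecewise function $\beta(\varphi,\alpha)$. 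The Legendre--Young inequality $\sup_{r\ge 0}(tr-Ar^\rho)=B\,t^\sigma$ applied with $t=|w|\beta(\varphi,\alpha)$ then yields the required bound (after enlarging $B$ to absorb the constant $c$). Holomorphicity of $\hat u$ follows from differentiation under the integral sign, the integrand being locally uniformly dominated in $w$.

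\textbf{Converse direction.} Given $f$ entire obeying the displayed estimate, my plan is to recover $u$ by an inverse Laplace--Borel integral
\begin{equation*}
u(z)=\frac{1}{(2\pi i)^2}\int_{S}f(w)\,e^{zw}\,\omega(w),
\end{equation*}
where $S$ is a real two-chain in $\mathbb C$ and $\omega$ a suitable holomorphic kernel adapted to the geometry of the cone $K$. I would choose $S$ to lie primarily in the angular sector $\{\varphi:\beta(\varphi,\alpha)=0\}$, displaced into the right half-plane by more than $1/B$: there the decay $\exp(-\mathrm{Re}(w)/B)$ of $f$ guarantees absolute convergence of the integral. Cauchy's theorem allows one to deform $S$ through the sectors where $\beta>0$, and the reverse Legendre--Young inequality matches $B|w|^\sigma\beta(\varphi,\alpha)^\sigma$ against $\mathrm{Re}(zw)$ so that the total exponent stays negative for $z$ in a cone $K'=K(\alpha',c')$ slightly larger than the one controlled by the parameters of $f$; optimising over the deformation produces the required decay $|u(z)|\le\tilde C\exp(-A|z|^\rho)$. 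Finally $\hat u=f$ would follow from Fubini together with the one-dimensional Laplace inversion identity on a transverse slice.

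\textbf{Continuity and principal obstacle.} Tracking the dependence of $(\tilde C,A,c,\alpha)$ on $(B,\alpha)$ in the estimates above yields continuity of both maps in the natural projective/inductive topologies, which combined with the bijection completes the topological isomorphism. The forward direction reduces to an essentially mechanical application of the Legendre transform; the genuine difficulty is the converse, where one must choose the chain $S$ and its deformations so as to guarantee simultaneously absolute convergence, contour independence, and the correct decay/support properties of $u$. The precise piecewise definition of $\beta(\varphi,\alpha)$ is exactly what makes these three requirements compatible, as it quantifies the directional trade-off between the super-exponential growth of $f$ in the sector where $\beta>0$ and the decay of the kernel $e^{zw}$ along directions dual to the cone $K$.
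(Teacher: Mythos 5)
The paper itself offers no proof of this theorem: it is quoted from Berenstein--Struppa \cite{bs}, so the only thing to assess is your argument on its own terms. Your forward direction is essentially correct and is the standard computation: writing $z=c+re^{i\psi}$, identifying $\beta(\varphi,\alpha)$ with $\max_{\psi\in[-\alpha,\alpha]}[-\cos(\psi+\varphi)]_+$ (which indeed matches the piecewise definition, since $\cos(\pi-\alpha-\varphi)=-\cos(\varphi+\alpha)$), and applying Young/Legendre conjugacy between $r\mapsto Ar^\rho$ and $t\mapsto \mathrm{const}\,t^\sigma$ gives exactly the stated bound; the vertex term $-c\,\mathrm{Re}(w)$ is absorbed as you say, although for $\mathrm{Re}(w)<0$ you need the small extra observation that $\beta(\varphi,\alpha)\geq|\cos\varphi|$ there, so the $|w|^\sigma$ term can swallow the linear discrepancy.

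The converse, however, contains a genuine gap rather than a compressible sketch. With $\sigma>1$ (as you assume), $f$ is allowed to grow like $\exp(B|w|^\sigma\beta(\varphi,\alpha)^\sigma)$ on the sectors $\pi/2-\alpha<\varphi<3\pi/2+\alpha$, while for each fixed $z$ in the cone the factor $e^{zw}$ supplies at most exponential decay, linear in $|w|$; so on those sectors $f(w)e^{zw}$ is never integrable, no matter how the chain is deformed, and the ``reverse Legendre--Young matching'' cannot rescue this because it compares exponents pointwise, not integrability. On the complementary sector, where $\beta=0$, the only available decay is $\exp(-\mathrm{Re}(w)/B)$ at the fixed rate $1/B$, and $|e^{zw}|=e^{\mathrm{Re}(zw)}$ overwhelms it as soon as $|z|$ is moderately large, so absolute convergence cannot hold uniformly for $z$ in the unbounded cone $K'$ either. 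In other words, no placement or deformation of your two-chain $S$ works on the strength of the stated bounds; all of the difficulty is hidden in the unspecified kernel $\omega$, which in the literature is taken adapted to the order $\sigma$ (Mittag--Leffler/generalized Borel kernels) together with Phragm\'en--Lindel\"of arguments, or the surjectivity is obtained by a duality argument that avoids explicit inversion altogether. Even granting convergence, you would still owe the verification that the recovered density has conic support and the $\exp(-Ap(z))$ decay, that its Laplace transform actually reproduces $f$ (not a routine Fubini step once the kernel is no longer $e^{zw}$), and the injectivity and bicontinuity required for the claimed topological isomorphism.
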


Note that the theory of Cauchy problems on holomorphic functions shows that their formal solutions may sometimes converge only in cones, see e.g. \cite{lms}, and as we pointed out this is certainly the case for Dirichlet series, whose domains of convergence (and of absolute convergence) are always half-planes. As it is well known, Dirichlet series are special cases of more general series of the form

\[
f(z)=\sum_{j=1}^{\infty} c_je^{i \lambda_j z}, \ \ c_j,\lambda_j \in \mathbb{C},
\]
\noindent
which now converges on actual cones, whose amplitudes depend on the frequencies $\lambda_j$, while the convergence itself depends on the growth of the coefficients $c_j$. While Ehrenpreis hinted at the possibility of studying these series in the framework of his theory (see \cite{ehrenpreis}), it was only in \cite{bs}, \cite{bs2}, \cite{bs3} that the proper framework for such an approach was identified. Since every exponential in these series can be seen as the limit of a superoscillating sequence, one may want to study Dirichlet series themselves as limits of more complex superoscillating sequences. This is what we will do, thus providing a new class of interesting superoscillating sequences.

Thus consider a generalized Dirichlet series of the form
$$
f(z)=\sum_{j=0}^{+\infty} c_j e^{i\lambda_j z},
$$
which we assumed to be absolutely convergent in a half plane of the form ${\rm Im}(z)> \gamma$ for some negative constant $\gamma$ (which can possibly be $-\infty$).
Now we can consider the restriction of such Dirichlet series to the real axis, and replace the exponential $e^{i\lambda_j x}$ with the limit of the superoscillating sequence
$$
e^{i\lambda_j x}=\lim_{n\to+\infty}\sum_{k=0}^{n} c(n,\lambda_j) e^{i (1-2k/n)x}.
$$
As we have shown in Chapter 3, see Theorem \ref{carnot}, this limit is uniform on the compact set $[-M,M]$ and
the error, see Remark \ref{estimate}, can be estimated by
$$
|e^{i\lambda_j x} - \sum_{k=0}^{n} c(n,\lambda_j) e^{i (1-2k/n)x}|\leq 2\frac Mn \lambda_j ,
$$
and therefore
$$
|\sum_{j=0}^{+\infty} c_j e^{i\lambda_j x}-\sum_{j=0}^{+\infty} c_j \sum_{k=0}^{n} c(n,\lambda_j) e^{i (1-2k/n)x}|
$$
$$
\leq
\sum_{j=0}^{+\infty} |c_j|\, \left|e^{i\lambda_j x} - \sum_{k=0}^{n} c(n,\lambda_j) e^{i (1-2k/n)x}\right|\leq \sum_{j=0}^{+\infty} 2|c_j| \frac Mn \lambda_j,
$$
which converges to $0$ as long as the series $\sum c_j \lambda_j$ converges absolutely.
We have therefore proved the following theorem:
\begin{theorem}
Let $c_j$, $\lambda_j$ be two sequences of complex numbers such that $\sum_{j=0}^{+\infty} c_j e^{i\lambda_j x}$ is convergent on $\mathbb R$ and $\sum c_j\lambda_j$ is absolutely convergent. Then the sequence
$\{\sum_{j=0}^m c_j\sum_{k=0}^n c(n,\lambda_j) e^{i (1-2k/n)x}\}$, $m\in\mathbb N$, is a superoscillating sequence whose limit is
$$
\sum_{j=0}^{+\infty} c_j e^{i\lambda_j x}.
$$
\end{theorem}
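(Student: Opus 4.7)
The plan is to verify the two defining properties of a superoscillating sequence in the spirit of Definition \ref{generalized_super}: the frequencies appearing in the partial sums are uniformly bounded (in fact, bounded by $1$), while the target limit $f(x)=\sum_{j=0}^\infty c_j e^{i\lambda_j x}$ contains exponentials $e^{i\lambda_j x}$ whose frequencies can be arbitrarily large. The bound on frequencies is immediate: for every $n\in\mathbb N$ and every $0\le k\le n$ one has $\bigl|1-\tfrac{2k}{n}\bigr|\le 1$, so the generalized Fourier sequence only ever uses wavenumbers in $[-1,1]$, even though the limit involves $\lambda_j$ with $|\lambda_j|$ unbounded. So only the convergence to $f$ remains to be established.

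For the convergence, I would fix a compact set $[-M,M]\subset\mathbb R$ and denote
\[
S_{m,n}(x)=\sum_{j=0}^m c_j\sum_{k=0}^n c(n,\lambda_j)\, e^{i(1-2k/n)x}.
\]
The plan is a triangle inequality splitting the error into a truncation piece and a superoscillating-approximation piece:
\[
|f(x)-S_{m,n}(x)|\le \Bigl|\sum_{j=m+1}^{\infty} c_j e^{i\lambda_j x}\Bigr|+\sum_{j=0}^{m}|c_j|\,\Bigl|e^{i\lambda_j x}-\sum_{k=0}^{n} c(n,\lambda_j)\, e^{i(1-2k/n)x}\Bigr|.
\]
The first piece tends to $0$ as $m\to\infty$ uniformly on $[-M,M]$, either by the hypothesis that $\sum c_j e^{i\lambda_j x}$ converges on $\mathbb R$ together with (if needed) absolute convergence on compacts coming from $\sum c_j\lambda_j$ absolutely convergent (which, via $|c_j|\le |c_j\lambda_j|$ for $|\lambda_j|\ge 1$ plus the finitely many small $\lambda_j$, forces $\sum|c_j|<\infty$ and hence uniform Weierstrass convergence).

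For the second piece, I would invoke the estimate on $\mathcal E_n(\lambda,a)$ derived in the proof of Theorem \ref{carnot} and summarized in Remark \ref{estimate}, which gives the first-order bound
\[
\Bigl|e^{i\lambda_j x}-\sum_{k=0}^{n} c(n,\lambda_j)\, e^{i(1-2k/n)x}\Bigr|\le \frac{2M|\lambda_j|}{n}
\]
for $x\in[-M,M]$ (this is exactly the estimate already displayed in the paragraph immediately preceding the theorem). Summing over $j$ gives
\[
\sum_{j=0}^m|c_j|\,\Bigl|e^{i\lambda_j x}-\sum_{k=0}^{n} c(n,\lambda_j)\, e^{i(1-2k/n)x}\Bigr|\le \frac{2M}{n}\sum_{j=0}^\infty |c_j\lambda_j|,
\]
which is $O(1/n)$ by the hypothesis that $\sum c_j\lambda_j$ converges absolutely. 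Letting $m,n\to\infty$ (for instance along any diagonal $n=n(m)\to\infty$) yields the uniform convergence on $[-M,M]$, which is exactly property (ii) of Definition \ref{superoscill}/\ref{generalized_super}.

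The main potential obstacle is purely bookkeeping: the statement couples two indices $m$ and $n$ without specifying how they relate, so one must be careful to organize the double limit cleanly. The cleanest way is, as above, to control the two error terms independently (truncation in $m$ uniformly in $n$; approximation in $n$ uniformly in $m$ thanks to absolute convergence of $\sum c_j\lambda_j$), and then to take $m,n\to\infty$ in either order. No deeper tool is needed: the whole argument is a triangle inequality resting on the first-order superoscillation error estimate from Chapter 3.
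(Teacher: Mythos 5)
Your proposal is correct and follows essentially the same route as the paper: the paper likewise applies the triangle inequality term by term, invokes the first-order error bound $\bigl|e^{i\lambda_j x}-\sum_{k=0}^{n} c(n,\lambda_j)e^{i(1-2k/n)x}\bigr|\leq 2\frac{M}{n}|\lambda_j|$ from Remark \ref{estimate}, and concludes from the absolute convergence of $\sum c_j\lambda_j$ that the total error is $O(1/n)$ uniformly on $[-M,M]$. Your explicit treatment of the $m$-truncation tail (which the paper handles only implicitly by comparing the two full series) is a harmless, slightly more careful piece of bookkeeping and does not change the substance of the argument.
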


\chapter{Schr\"odinger equation and superoscillations}

A natural question, that arises for physical reasons, is to study the evolution of a superoscillatory sequence when we take such a sequence as initial value of a Cauchy problem of some differential equations of physical interest. In some cases the answer to this question is immediate. This is the case if we investigate the behavior of a superoscillating initial datum in the case of the wave equation
$$
u_{tt}(t,x)-c^2u_{xx}(t,x)=0,
$$
where $c\in \mathbb R^+$,
with the initial position $u(0,x)=Y_n(x)$, where $Y_n$ is given by Definition \ref{ipsilon enne} and  the initial speed $u_t(0,x)=0$.
The solution can be written using the d'Alembert formula as
$$
u_n(t,x)=\frac{1}{2}[Y_n(x-ct)+Y_n(x+ct)].
$$
If we replace the initial condition by $u(0,x)=e^{ig(a)x}$ and we keep $u_t(0,x)=0$ we obtain
$$
u(t,x)=\frac{1}{2}[e^{ig(a)(x-ct)}+e^{ig(a)(x+ct)}].
$$
We now consider the difference $u_n(t,x)-u(t,x)$ and the estimate
$$
|u_n(t,x)-u(t,x)|\leq \frac{1}{2}|Y_n(x-ct)-e^{ig(a)(x-ct)}|+\frac{1}{2}|Y_n(x+ct)-e^{ig(a)(x+ct)}|.
$$
So, for $(t,x)$ on every compact set $[0,T]\times K$ for $K$ compact set in $\mathbb R$, we have uniform convergence.
One can therefore say that superoscillations sequences maintain their superoscillating character when evolved with the wave equation, when $n\to\infty$.
On the other hand, it is important to note that the persistence of superoscillations only occurs when one takes the limit for $n\to +\infty$. If one fixes the value of $n$, persistence is only for a finite time and superoscillations are, for large $n$, exponentially weak, see \cite{b4}.

In this chapter, we will consider a more delicate case, namely  the evolution of a superoscillatory sequence for the Schr\"odinger equation
when it is taken as initial value for the free particle as well as for the quantum harmonic oscillator. In the case of the Schr\"odinger equation, many continuity results are known when the data are in $\mathcal L^2(\mathbb R)$, but since our functions do not belong to $\mathcal L^2(\mathbb R)$, we need to follow a different approach which exploits the analyticity of the initial data. In particular, we will look both at the case of the Schr\"odinger equation for the free particle (in which case we prove the longevity of the superoscillatory phenomenon in two different ways) as well as the case of the quantum harmonic  oscillator.

Note that in this chapter, to simplify the notation, we sometimes write $F_n(x)$ instead of $F_n(x,a)$ and  $Y_n(x)$ instead of $Y_n(x,a)$.

\section{Schr\"odinger equation for the free particle}
\label{sec5.1}

In this section we consider the
Cauchy problem
$$
i\frac{\partial \psi(x,t)}{\partial t}=H\psi(x,t),\ \ \ \ \psi(x,0)=Y_n(x),
$$
where
$$
H\psi(x,t):=-\frac{\partial^2 \psi(x,t)}{\partial x^2}.
$$
First, we prove the following result:
\begin{theorem}\label{eifhgsldi} The time evolution of the spatial superoscillating sequence $Y_n(x)$, is given by
$$
\psi_n(x,t)=\sum_{j=0}^nC_j(n,a) e^{ik_j(n) x } e^{-itk_j^2(n)}.
$$
\end{theorem}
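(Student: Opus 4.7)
The plan is to use direct verification together with the linearity of the Schrödinger equation. Since $Y_n(x)$ is a finite linear combination of the plane waves $e^{ik_j(n)x}$, it suffices to check that each plane wave evolves according to the dispersion relation $k \mapsto k^2$, and then sum the pieces with the original coefficients $C_j(n,a)$.

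First I would observe that the function $\varphi_k(x,t) := e^{ikx}e^{-itk^2}$ is a solution of the free Schrödinger equation for each fixed real $k$. Indeed, a direct computation gives
\[
i\frac{\partial \varphi_k}{\partial t}(x,t) = i(-ik^2)\varphi_k(x,t) = k^2\varphi_k(x,t),
\]
while
\[
-\frac{\partial^2 \varphi_k}{\partial x^2}(x,t) = -(ik)^2\varphi_k(x,t) = k^2\varphi_k(x,t),
\]
so $i\partial_t\varphi_k = H\varphi_k$. Moreover $\varphi_k(x,0) = e^{ikx}$.

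Next I would apply this with $k = k_j(n)$ for $j=0,1,\dots,n$ and form the finite linear combination
\[
\psi_n(x,t) := \sum_{j=0}^n C_j(n,a)\,e^{ik_j(n)x}\,e^{-itk_j^2(n)}.
\]
Since the sum is finite, the operators $\partial_t$ and $\partial_x^2$ commute with it, so $\psi_n$ inherits the property $i\partial_t\psi_n = H\psi_n$ from each summand. Evaluation at $t=0$ gives
\[
\psi_n(x,0) = \sum_{j=0}^n C_j(n,a)\,e^{ik_j(n)x} = Y_n(x),
\]
so the initial condition is also satisfied.

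The only remaining concern is uniqueness: one must argue that $\psi_n$ is the \emph{only} solution of the Cauchy problem with this initial datum. This is the step I would expect to require a little care, since $Y_n$ is not in $\mathcal L^2(\mathbb R)$ and the standard Hilbert-space uniqueness statements do not apply directly. The cleanest route is to note that $Y_n$ extends to an entire function of exponential type (bounded in type by $\max_j |k_j(n)| < \alpha$), so the proposed solution $\psi_n(\cdot,t)$ is, for each $t$, also entire of the same exponential type; within this class of initial data, solutions of the free Schrödinger equation are unique (this can be seen by passing to the Fourier--Borel transform, where the equation becomes an ordinary differential equation in $t$ for each fixed spectral variable, with a unique solution). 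This uniqueness argument fits naturally within the AU-space framework developed in Chapter~4 and will be reused in the more general setting of Chapter~6.
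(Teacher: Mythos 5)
Your proposal is correct, but it proceeds differently from the paper. The paper does not verify an ansatz: it works in $\mathcal{S}'(\mathbb{R})$, takes the Fourier transform of the equation to obtain $i\,d\hat\psi/dt=\lambda^2\hat\psi$, integrates this ODE, determines the arbitrary function $C(\lambda)$ from the initial datum using $\mathcal{F}(e^{imx})=2\pi\delta(x-m)$ (so that $C$ is a finite sum of Dirac deltas at the points $k_j(n)$), and then recovers the stated formula by inverse Fourier transform; the same distributional scheme is reused later for the generalized equations of Chapter 6. Your route — checking by direct differentiation that each $e^{ik x-itk^2}$ solves the equation, summing by linearity, and matching the initial condition — is more elementary and, importantly, you make explicit the uniqueness question that the paper leaves implicit in its Fourier-transform derivation (the paper tacitly takes the solution produced by the transform method to be \emph{the} time evolution, without specifying a uniqueness class, which is legitimate since $Y_n\notin\mathcal{L}^2(\mathbb{R})$ only rules out the standard Hilbert-space argument). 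Your sketch of uniqueness within entire functions of exponential type, via the Fourier--Borel transform turning the equation into an ODE in $t$ for each spectral variable, is consistent with the AU-space machinery of Chapter 4 and would need only a line or two of justification (a bound locally uniform in $t$) to be complete. In short: the paper's method constructs the formula and sets up notation used downstream, while yours verifies it more cheaply and is more candid about uniqueness; both are sound.
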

\begin{proof}
To solve the Schr\"odinger equation with $Y_n(x)$ as initial condition, we will work in the space of the tempered distributions $\mathcal{S}'(\mathbb{R})$ and use a standard Fourier transform argument.
Let us denote by $\hat{\psi}(\lambda,t)$ the Fourier transform of $\phi$. Taking the Fourier transform of the Schr\"odinger equation we obtain
$$
i\frac{d \hat{\psi}(\lambda,t)}{d t}=\lambda^2\hat{\psi}(\lambda,t)
$$
and integrating we get
$$
\hat{\psi}(\lambda,t)=C(\lambda)e^{-i\lambda^2t}
$$
where the arbitrary function $C(\lambda)$ can be determined by the initial condition and therefore
\[
\begin{split}
C(\lambda)&=\hat{\psi}(\lambda,0)
=\int_{\mathbb{R}}\Big[\sum_{j=0}^nC_j(n,a)e^{ik_j(n){x}}\Big] e^{-i\lambda x}dx\\
&=\sum_{j=0}^nC_j(n,a)\int_{\mathbb{R}}e^{ik_j(n){x}} e^{-i\lambda x}dx.
\end{split}
\]
Here we use the equality $
\mathcal{F}(e^{imx})=2\pi\delta (x-m),
$
which has to be interpreted in $\mathcal{S}'(\mathbb{R})$.
It follows that
$$
C(\lambda)=\sum_{j=0}^nC_j(n,a) \delta (\lambda -k_j(n))
$$
and
$$
\hat{\psi}(\lambda,t)=\sum_{j=0}^nC_j(n,a) \delta (\lambda -k_j(n))e^{-i \lambda^2t}.
$$
Taking now the inverse Fourier transform we have
\[
\begin{split}
\psi(x,t)&=\int_{\mathbb{R}}
\Big[\sum_{j=0}^nC_j(n,a) \delta (\lambda -k_j(n))e^{-i\lambda^2t}\Big] e^{i\lambda x} d\lambda\\
&=\sum_{j=0}^nC_j(n,a) \int_{\mathbb{R}}
\Big[\delta (\lambda -k_j(n))e^{-i\lambda^2t}\Big] e^{i\lambda x} d\lambda
\\
&=\sum_{j=0}^nC_j(n,a) \int_{\mathbb{R}}
\delta (\lambda -k_j(n))e^{i\lambda x} e^{-i\lambda^2t}d\lambda
\\
&=\sum_{j=0}^nC_j(n,a) e^{ik_j(n){x} } e^{-it{k_j^2(n)}},
\end{split}
\]
and the statement follows.
\end{proof}
Our next goal is to show that the function $\psi_n(x,t)$ exhibits, for all values of $t$, the same superoscillatory behavior shown by $Y_n(x)$.
As a first step, we give an equivalent representation of the time evolution $\psi_n$ in terms of the derivatives of the functions $Y_n$.
\begin{theorem}
The function
\begin{equation}\label{psin}
\psi_n(x,t)=\sum_{j=0}^nC_j(n,a)e^{ixk_j(n)}e^{-itk_j^2(n)}
\end{equation}
can be written as
$$
\psi_n(x,t)=\sum_{m=0}^\infty \displaystyle\frac{(it)^m}{m!} \displaystyle\frac{d^{2m}}{dx^{2m}} Y_n(x)
$$
 for every $x\in \mathbb{R}$ and $t\in \mathbb{R}$.
\end{theorem}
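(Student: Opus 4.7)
The plan is to derive the claimed representation by expanding the exponential $e^{-itk_j^2(n)}$ appearing in formula \eqref{psin} as its Taylor series in the variable $t$, interchanging the (finite) sum over $j$ with the absolutely convergent series over $m$, and then identifying the resulting expression with derivatives of $Y_n$.

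More concretely, I would first note that for any fixed $j\in\{0,1,\dots,n\}$ and any $t\in\mathbb R$, the scalar exponential admits the absolutely convergent expansion
$$
e^{-itk_j^2(n)}=\sum_{m=0}^\infty\frac{(-it\,k_j^2(n))^m}{m!}=\sum_{m=0}^\infty\frac{(-1)^m(it)^m k_j^{2m}(n)}{m!}.
$$
Substituting this into \eqref{psin} and interchanging the finite sum $\sum_{j=0}^n$ with $\sum_{m=0}^\infty$ (which is legitimate since the $j$-sum has only $n+1$ terms) gives
$$
\psi_n(x,t)=\sum_{m=0}^\infty\frac{(it)^m}{m!}\sum_{j=0}^n C_j(n,a)(-1)^m k_j^{2m}(n)\,e^{ik_j(n)x}.
$$

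Next, I would recognize the inner sum as the $2m$-th derivative of $Y_n$. Indeed, differentiating $Y_n(x)=\sum_{j=0}^n C_j(n,a)e^{ik_j(n)x}$ term by term (which is justified since it is a finite sum of entire functions), one obtains
$$
\frac{d^{2m}}{dx^{2m}}Y_n(x)=\sum_{j=0}^n C_j(n,a)(ik_j(n))^{2m}e^{ik_j(n)x}=\sum_{j=0}^n C_j(n,a)(-1)^m k_j^{2m}(n)\,e^{ik_j(n)x}.
$$
Inserting this identity into the previous display yields exactly
$$
\psi_n(x,t)=\sum_{m=0}^\infty\frac{(it)^m}{m!}\,\frac{d^{2m}}{dx^{2m}}Y_n(x),
$$
as required.

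The only delicate point — and really the only thing beyond bookkeeping — is justifying the interchange of summations and establishing that the resulting series in $m$ converges for every $(x,t)\in\mathbb R^2$. This however is immediate: since $j$ ranges over the finite set $\{0,\dots,n\}$, the double sum is actually a finite sum of absolutely convergent power series in $t$, so Fubini's theorem for sums applies trivially, and the convergence of the final series is dominated by $\sum_{m}(|t|\max_j k_j^2(n))^m/m!<\infty$. I do not anticipate any genuine obstacle here; the statement is essentially the observation that the Schr\"odinger propagator $e^{it\partial_x^2}$ acts on each monochromatic component $e^{ik_j(n)x}$ by the scalar $e^{-itk_j^2(n)}$, combined with the Taylor expansion of the exponential.
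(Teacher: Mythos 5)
Your proposal is correct and follows essentially the same route as the paper: expand $e^{-itk_j^2(n)}$ in its Taylor series, interchange the finite $j$-sum with the $m$-series, and identify $\sum_j C_j(n,a)(ik_j(n))^{2m}e^{ik_j(n)x}$ with $\frac{d^{2m}}{dx^{2m}}Y_n(x)$. The only difference is that you explicitly justify the interchange and the convergence (a point the paper leaves implicit), which is a welcome but minor addition.
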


\begin{proof}
Using the expansion
$$
e^{-itk_j^2(n)}=\sum_{m=0}^\infty \displaystyle\frac{[-itk_j^2(n)]^m}{m!}
$$
the function $\psi_n(x,t)$ can be rewritten as
\[
\begin{split}
\psi_n(x,t)&=\sum_{m=0}^\infty \displaystyle\frac{(-it)^m}{m!}  \sum_{j=0}^nC_j(n,a)e^{ixk_j(n)}e^{-itk_j^2(n)}\\
&=\sum_{m=0}^\infty \displaystyle\frac{(it)^m}{m!}  \sum_{j=0}^nC_j(n,a)\displaystyle\frac{d^{2m}}{dx^{2m}}e^{ix k_j(n)}\\
&=\sum_{m=0}^\infty \displaystyle\frac{(it)^m}{m!} \displaystyle\frac{d^{2m}}{dx^{2m}} \sum_{j=0}^nC_j(n,a)e^{ixk_j(n)}\\
&=\sum_{m=0}^\infty \displaystyle\frac{(it)^m}{m!} \displaystyle\frac{d^{2m}}{dx^{2m}} Y_n(x).
\end{split}
\]
\end{proof}

We are now led to study the operator formally defined by
$$
U\left(\frac{d}{dx},t\right):=\sum_{m=0}^\infty \displaystyle\frac{(it)^m}{m!} \displaystyle\frac{d^{2m}}{dx^{2m}}
$$
and the spaces of functions on which it acts continuously.
 To this purpose, we extend $U(\frac{d}{dx},t)$ to an operator which may act on holomorphic functions, i.e. we consider operators of the form
$$
U\left(\frac{d}{dx},t\right)=\sum_{m=0}^\infty \frac{(it)^m}{m!} \displaystyle\frac{d^{m}}{dz^{m}}
$$
and we denote by $h(\zeta,t)$ the function which is its symbol. Then
$$
h(\zeta,t)=\sum_{m=0}^\infty \frac{a_m}{m!} \zeta^m
$$
where $a_m=(it)^m$. It is immediate to show that, unless $t=0$, this symbol does not define an infinite order differential operator (in the sense of Definition \ref{infiniteorderdef}). However,  $h(\zeta,t)$ can be thought
as the symbol of a convolution operator for suitable choices of the coefficients $a_m$.
For instance, if $a_m\equiv 1$, i.e. $t=-i$, then $h(\zeta,-i)=e^\zeta$ and therefore
is the symbol of the translation of the unit operator which is nothing but the convolution with the Dirac delta centered at $z =-1$, see (\ref{translation}). Moreover, the function $e^\zeta$ is clearly a multiplier on ${\rm Exp}(\mathbb{C})$ and the convolutor that it defines is the translation as indicated above. If we now consider  the symbol
$$
h(\zeta,t)=\sum_{m=0}^\infty \frac{(it)^m}{m!}\zeta^m
$$
 it is easy to see that such operator defines, for any value of $t$, a convolutor on ${\mathcal H}(\mathbb{C})$, in fact the translation by $it$. This is easily seen because $h(\zeta,t)$ is actually nothing but $\exp(it\zeta)$.
The operator we are now interested in, however, is
$$
\sum_{m=0}^\infty \displaystyle\frac{(it)^m}{m!} \displaystyle\frac{d^{2m}}{dz^{2m}}
$$
whose symbol is $h(\zeta^2,t)$.
It is obvious to see that $h(\zeta^2,t)$ does not define a multiplication operator on ${\rm Exp}(\mathbb{C})$ because it grows at infinity too fast. The appropriate space for which $h$ would be a multiplier and therefore the appropriate space for which $h$ would induce a convolution operator is given by the following result:
\begin{theorem}
For any value of $t$, the operator $U(\frac{d}{dz},t)$ acts continuously on the space
$$
\mathcal{A}_{2,0}:=\left\{f\in \mathcal{O}(\mathbb{C})\ : \ \forall \varepsilon >0 \  \exists \, A_\varepsilon\ \ : \  |f(z)|\leq A_\varepsilon e^{\varepsilon |z|^2}    \right\}
$$
of entire functions of order less or equal $2$ and of minimal type.
\end{theorem}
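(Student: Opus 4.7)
My plan is to identify the operator $U(d/dz,t)$ with a convolutor on $\mathcal{A}_{2,0}(\mathbb{C})$ via the Fourier--Borel duality of Theorem \ref{duality pq}, after which continuity is automatic. The bridge is the notion of ``symbol'' introduced in the paragraphs preceding the statement: applying the formal expression of $U(d/dz,t)$ to an exponential $e^{\zeta z}$ produces
$$
U(d/dz,t)\,e^{\zeta z}=\sum_{m=0}^\infty \frac{(it)^m}{m!}\zeta^{2m}\,e^{\zeta z}=h(\zeta,t)\,e^{\zeta z},\qquad h(\zeta,t):=e^{it\zeta^{2}}.
$$
So the first step is to recognize $h(\zeta,t)=e^{it\zeta^{2}}$ as the symbol of $U(d/dz,t)$.

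Next, I would check that $h(\cdot,t)\in\mathcal{A}_{2}(\mathbb{C})$: the estimate $|e^{it\zeta^{2}}|=e^{-\mathrm{Im}(t\zeta^{2})}\leq e^{|t|\,|\zeta|^{2}}$ shows that $h(\cdot,t)$ is entire of order $\leq 2$ and of finite type. Now Theorem \ref{duality pq} applied with $\rho=\rho'=2$ (note that $1/2+1/2=1$) states that the space of convolutors on $\mathcal{A}_{2,0}$ is isomorphic, via the Fourier--Borel transform, to $\mathcal{A}_{2}$. Hence $h(\zeta,t)$ is the Fourier--Borel transform of a well-defined analytic functional whose associated convolutor $T_{h}$ acts continuously on $\mathcal{A}_{2,0}$ into itself.

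It then remains to identify $T_{h}$ with the operator $U(d/dz,t)$. This is where I expect the main subtlety: one must argue that the continuous operator $T_{h}$ produced abstractly by duality actually coincides with the termwise differential expression defining $U(d/dz,t)$. The clean way to do this is to observe that both $T_{h}$ and $U(d/dz,t)$ are continuous linear operators on $\mathcal{A}_{2,0}$ whose actions on the family of exponentials $\{e^{\zeta z}:\zeta\in\mathbb{C}\}$ coincide (both produce $e^{it\zeta^{2}}e^{\zeta z}$); since this family is total in $\mathcal{A}_{2,0}$, the two operators agree. This in turn requires verifying that the partial sums $\sum_{m=0}^{N}\frac{(it)^{m}}{m!}\frac{d^{2m}}{dz^{2m}}f$ converge in the topology of $\mathcal{A}_{2,0}$ for every $f\in\mathcal{A}_{2,0}$.

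As a backup, or to make the convergence explicit, one can argue directly by Cauchy's estimates. For $f\in\mathcal{A}_{2,0}$ and $\varepsilon>0$ small, $|f(z)|\leq A_{\varepsilon}e^{\varepsilon|z|^{2}}$, so on the circle $|w-z|=R$ one has $|f^{(2m)}(z)|\leq (2m)!\,R^{-2m}A_{\varepsilon}e^{\varepsilon(|z|+R)^{2}}$. Optimizing in $R$ (roughly $R\sim\sqrt{m/\varepsilon}$) and using Stirling's formula, one obtains a bound of the form $\frac{|t|^{m}}{m!}|f^{(2m)}(z)|\leq C_{\varepsilon}(c\,|t|\varepsilon)^{m}e^{2\varepsilon|z|^{2}}$. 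For $\varepsilon$ sufficiently small compared to $|t|^{-1}$ the geometric series converges, yielding $|U(d/dz,t)f(z)|\leq C'_{\varepsilon}e^{2\varepsilon|z|^{2}}$. Since $\varepsilon>0$ is arbitrary, $U(d/dz,t)f$ lies in $\mathcal{A}_{2,0}$, and the uniformity of the estimates in $f$ gives continuity. The hard part in either approach is the quantitative tuning of $\varepsilon$ and of the Cauchy radius so that the doubly-factorial growth of $(2m)!$ is absorbed against the $m!$ and the minimal-type weight; the duality approach elegantly packages this into the single statement $e^{it\zeta^{2}}\in\mathcal{A}_{2}$.
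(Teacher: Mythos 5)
Your proposal follows essentially the same route as the paper: recognize the symbol $h(\zeta^2,t)=e^{it\zeta^2}$ as an element of $\mathcal{A}_2$ (a continuous multiplier there) and invoke Theorem \ref{duality pq} with $\rho=\rho'=2$ to conclude that it defines a continuous convolutor on $\mathcal{A}_{2,0}$. Your extra steps (identifying the abstract convolutor with the termwise series on the total family of exponentials, and the direct Cauchy-estimate bound with $R\sim\sqrt{m/\varepsilon}$) are correct and simply supply the details the paper dismisses as immediate.
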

\begin{proof}
It is immediate to verify that the function $h(\zeta^2,t)$ is a continuous multiplier on $\mathcal{A}_2$, regardless of the magnitude
of $t.$
Theorem \ref{duality pq} shows that any multiplier on $\mathcal{A}_2$ defines a convolutor on the space
$$
\mathcal{A}_{2,0}:=\left\{f\in \mathcal{O}(\mathbb{C})\ : \ \forall \ \ \varepsilon >0 \  \exists \ \ a_\varepsilon\ \ : \  |f(z)|\leq a_\varepsilon e^{\varepsilon |z|^2}    \right\}.
$$
Therefore for every function $f\in \mathcal{A}_{2,0}$, and every $t \in \mathbb{R}$, the function
$$
U\left(\frac{d}{dz},t\right)f=\sum_{m=0}^\infty \displaystyle\frac{(it)^m}{m!} \displaystyle\frac{d^{2m}f}{dz^{2m}}
$$
is a well defined function in $\mathcal{A}_{2,0}$ and the operator $U(\frac{d}{dz},t)$ acts continuously on $\mathcal{A}_{2,0}$.
\end{proof}

As a consequence of the previous discussion we can show that the superoscillatory phenomenon persists for $n\to\infty$ for all values of the time $t$:
\begin{theorem}\label{main}
For $a>1$, and for every $x,t \in\mathbb{R}$ we have
$$
\lim_{n\to\infty}
\psi_n(x,t)=e^{ig(a)x-ig(a)^2t}.
$$
\end{theorem}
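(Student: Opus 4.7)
The plan is to exploit the operator representation
$$\psi_n(x,t)=U\!\left(\tfrac{d}{dx},t\right) Y_n(x), \qquad U\!\left(\tfrac{d}{dz},t\right)=\sum_{m=0}^{\infty}\frac{(it)^m}{m!}\frac{d^{2m}}{dz^{2m}},$$
established by the preceding theorem, together with the continuity of $U(\tfrac{d}{dz},t)$ on the AU-space $\mathcal{A}_{2,0}$ of entire functions of order at most $2$ and minimal type, already proved above. Once these two ingredients are in place, the argument reduces to a pure functional-analytic continuity statement: if $Y_n\to e^{ig(a)z}$ in the topology of $\mathcal{A}_{2,0}$, then $U(\tfrac{d}{dz},t)Y_n\to U(\tfrac{d}{dz},t)e^{ig(a)z}$ in $\mathcal{A}_{2,0}$, and in particular pointwise on $\mathbb{R}$.

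First I would compute the right-hand side explicitly. Since $\tfrac{d^{2m}}{dz^{2m}}e^{ig(a)z}=(ig(a))^{2m}e^{ig(a)z}=(-g(a)^2)^m e^{ig(a)z}$, termwise differentiation (legitimate because $e^{ig(a)z}\in\mathcal{A}_{2,0}$ and $U$ acts continuously there) gives
$$U\!\left(\tfrac{d}{dz},t\right) e^{ig(a)z}=\sum_{m=0}^{\infty}\frac{(-itg(a)^2)^m}{m!}\,e^{ig(a)z}=e^{ig(a)x-ig(a)^2 t}\Big|_{z=x},$$
which matches the desired limit. Next I would verify the membership conditions. Each $Y_n(z)=\sum_{j=0}^n C_j(n,a)e^{ik_j(n)z}$ is a finite sum of exponentials with frequencies bounded in modulus by $\alpha$, hence is entire of order $1$ and thus lies in $\mathcal{A}_{2,0}$; the limit $e^{ig(a)z}$ is of order $1$ as well, so it too lies in $\mathcal{A}_{2,0}$.

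The main obstacle is to upgrade Theorem \ref{carnot} from uniform convergence on real compact sets to convergence in the topology of $\mathcal{A}_{2,0}$, i.e. to show that for every $\varepsilon>0$,
$$\sup_{z\in\mathbb C}\left|Y_n(z)-e^{ig(a)z}\right|e^{-\varepsilon|z|^2}\longrightarrow 0.$$
My strategy is a split: fix $\varepsilon>0$ and choose a large radius $R=R(\varepsilon)$ so that on $|z|>R$ the Gaussian weight $e^{-\varepsilon|z|^2}$ dominates any exponential of order $1$ times a moderately growing constant in $n$. For $Y_n$, the Carnot-style pointwise estimate on $\mathbb R$ extends to $\mathbb C$ via the bound
$$|Y_n(z)|\le \Big(\sum_{j=0}^n|C_j(n,a)|\Big)e^{\alpha|z|},$$
and for the Aharonov–Berry coefficients $C_j(n,a)$ the sum equals $a^n$, so $|Y_n(z)-e^{ig(a)z}|\le 2a^n e^{(\alpha+|g(a)|)|z|}$, which against $e^{-\varepsilon|z|^2}$ is bounded by $C_\varepsilon a^n e^{-(\varepsilon/2)|z|^2}$ for $|z|\ge R$, and this tail can be made arbitrarily small uniformly by enlarging $R$ (the quadratic decay absorbs the $a^n$ factor once $|z|\gg\log a^n/\varepsilon$). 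On the compact disk $|z|\le R$, uniform convergence of $Y_n$ to $e^{ig(a)z}$ follows from Theorem \ref{carnot} extended to a complex neighborhood: the real uniform estimate $\mathcal{E}_n(\lambda,a)\to 0$ together with the holomorphy of both $Y_n$ and $e^{ig(a)z}$ and a Vitali–Porter/Montel argument on the uniformly bounded family $\{Y_n e^{-\alpha|z|}\}$ promotes real uniform convergence to uniform convergence on $|z|\le R$.

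Combining the two pieces gives $Y_n\to e^{ig(a)z}$ in $\mathcal{A}_{2,0}$. Continuity of $U(\tfrac{d}{dz},t)$ on this space then yields $\psi_n(x,t)=U(\tfrac{d}{dz},t)Y_n(x)\to e^{ig(a)x-ig(a)^2 t}$ in $\mathcal{A}_{2,0}$, and pointwise evaluation at $x\in\mathbb R$ completes the proof. The delicate step is the uniform-in-$n$ growth estimate driving convergence in the topology of $\mathcal{A}_{2,0}$; pointwise or compact convergence on $\mathbb R$ alone would not suffice to pass through an operator of infinite order like $U$.
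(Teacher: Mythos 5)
Your overall strategy is the same as the paper's: write $\psi_n(x,t)=U\bigl(\tfrac{d}{dx},t\bigr)Y_n(x)$, invoke the continuity of $U\bigl(\tfrac{d}{dz},t\bigr)$ on $\mathcal{A}_{2,0}$, pass to the limit, and compute $U\bigl(\tfrac{d}{dz},t\bigr)e^{ig(a)z}=e^{ig(a)z-ig(a)^2t}$ termwise. The paper's own proof essentially stops there; you correctly identify that the load-bearing step is convergence of $Y_n$ to $e^{ig(a)z}$ in the topology of $\mathcal{A}_{2,0}$ (compact convergence on $\mathbb{R}$ alone cannot be pushed through an infinite-order operator), and that diagnosis is exactly right.

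The problem is that your justification of this step does not close. With the crude bound $|Y_n(z)-e^{ig(a)z}|\le 2a^n e^{(\alpha+|g(a)|)|z|}$ (and indeed $\sum_j|C_j(n,a)|=a^n$, so no better constant is available this way), the weighted tail satisfies
$$
\sup_{|z|\ge R}\,2a^n e^{(\alpha+|g(a)|)|z|}e^{-\varepsilon|z|^2}=2a^n e^{(\alpha+|g(a)|)R-\varepsilon R^2}
$$
once $R$ is past the critical radius, and this does \emph{not} tend to $0$ as $n\to\infty$ for any fixed $R$; as you note yourself, one is forced to take $R=R_n\gtrsim\sqrt{n\log a/\varepsilon}$. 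But then the inner region $|z|\le R_n$ is no longer a fixed compact set, so neither Theorem \ref{carnot} nor a Vitali--Porter argument on a fixed disk controls it, and the two halves of your split cannot be made to hold with a common choice of $R$. The normal-families step is also not available as stated: $e^{-\alpha|z|}$ is not holomorphic, and your own estimate bounds $|Y_n(z)|e^{-\alpha|z|}$ only by $a^n$, which is not a uniform bound. What rescues the argument (at least for $Y_n=F_n$) is an $n$-uniform exponential-type estimate valid on a region that grows with $n$: from $F_n(z,a)=g(z/n)^n$ with $g(w)=1+iaw+O(a|w|^2)$ one gets $|F_n(z,a)|\le\exp\bigl(a|z|+Ca|z|^2/n\bigr)$ for $|z|\le n$, hence a fixed exponential bound $e^{(a+Ca\delta)|z|}$ on $|z|\le\delta n$, while on $|z|>\delta n$ the trivial bound becomes $a^n\le e^{(\log a)|z|/\delta}$, again of exponential type with an $n$-independent constant. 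With these two bounds the cut radius can be chosen independent of $n$, the inner disk is genuinely fixed, and your split then works. Without some such uniform-in-$n$ refinement, the step you singled out as delicate remains unproved.
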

\begin{proof}
 The functions $Y_n$ extend to entire functions of order less than or equal 1 and finite type (i.e. of exponential type), and this space is clearly contained in $\mathcal{A}_{2,0}$.
Therefore, in order to show that
$$
\lim_{n\to\infty}
\psi_n(x,t)=e^{ig(a)x-ig(a)^2t},
$$
it is enough to take the limit and recall that
$$
Y_n(x)\to e^{ig(a)x},
$$
so we obtain
$$
\psi(x,t)=\sum_{m=0}^\infty \displaystyle\frac{(it)^m}{m!} \displaystyle\frac{d^{2m}}{dx^{2m}}e^{ig(a)x}
$$
$$
=\sum_{m=0}^\infty \displaystyle\frac{(it)^m}{m!} (ig(a))^{2m}e^{ig(a)x}
$$
$$
=\sum_{m=0}^\infty \displaystyle\frac{(-i g(a)^2t)^m}{m!} e^{ig(a)x}
$$
$$
=e^{ig(a)x-ig(a)^2t}.
$$
This finishes the proof.
\end{proof}

In the case in which $Y_n$ is taken to be $F_n$ the results above reduce to the main theorem in \cite{ACSST4}:
\begin{theorem}
If
$$
\psi_n(x,t)=\sum_{j=0}^nC_j(n,a)e^{ix(1-2j/n)}e^{-it(1-2j/n)^2}.
$$
then
for $a>1$, and for every $x,t \in\mathbb{R}$ we have
$$
\lim_{n\to\infty}
\psi_n(x,t)=e^{iax-ia^2t}.
$$
Moreover
for $n$ large and small $t$ the  approximated error of
$$
|\psi_n(x,t)-e^{i(ax-a^2t)}|
$$
is
$$
\varepsilon(x,t,a,n)=\varepsilon_1(x,t,a,n)+\varepsilon_2(x,t,a,n)
$$
where
$$
\varepsilon_1(x,t,a,n)=\frac{|x-at|}{n}\sqrt{\frac{3}{2}(a^2-1)},
$$
and
 $$
\varepsilon_2(x,t,a,n)=|t|   (a^3+a).
$$
\end{theorem}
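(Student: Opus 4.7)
The first statement is immediate from Theorem \ref{main} applied with the identifications $Y_n = F_n$, $k_j(n) = 1 - 2j/n$, and $g(a) = a$. Indeed, the $F_n$ extend to entire functions of exponential type, hence belong to the AU-space $\mathcal{A}_{2,0}(\mathbb{C})$ on which the Schr\"odinger convolutor $U(d/dx,t)$ acts continuously, and the limit follows by taking $n \to \infty$ inside the series representation.

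For the approximate error estimate, the plan is to split $\psi_n(x,t) - e^{iax-ia^2 t}$ into a piece that inherits the spatial superoscillation error of $F_n$ and a piece that measures the temporal correction. The key algebraic step is to rewrite each phase in the moving frame as $e^{ik_j x - ik_j^2 t} = e^{ik_j(x-at)} e^{-ik_j(k_j-a)t}$, and to use the identity $e^{iax - ia^2 t} = e^{ia(x-at)}$, which yields the decomposition
$$\psi_n(x,t) - e^{ia(x-at)} = \bigl[F_n(x-at,a) - e^{ia(x-at)}\bigr] + R_n(x,t),$$
where $R_n(x,t) := \sum_{j=0}^n C_j(n,a) e^{ik_j(x-at)}\bigl[e^{-ik_j(k_j-a)t} - 1\bigr]$. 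The first bracket is exactly $\mathcal{E}_n(x-at,a)$ in the notation of (\ref{errorf}), and its principal-part approximation provided by Remark \ref{estimate} is $\tfrac{|x-at|}{n}\sqrt{\tfrac{3}{2}(a^2 - 1)}$; this identifies it with $\varepsilon_1$.

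For the second summand I would Taylor-expand $e^{-ik_j(k_j-a)t} - 1 = -ik_j(k_j-a) t + O(t^2)$ and use the identities $\sum_j C_j(n,a) k_j e^{ik_j y} = -i F_n'(y,a)$ and $\sum_j C_j(n,a) k_j^2 e^{ik_j y} = -F_n''(y,a)$, which follow by differentiating the Fourier representation of $F_n$ from Proposition \ref{pro3.1.6} term-by-term, to recast $R_n$ as
$$R_n(x,t) = it F_n''(x-at,a) + at F_n'(x-at,a) + O(t^2).$$
The asymptotic bounds $|F_n'(y,a)| \to a$ and $|F_n''(y,a)| \to a^2$ from Proposition \ref{sjhdjhgfns}, together with the triangle inequality and absorbing the quadratic-in-$t$ remainder into a crude polynomial-in-$a$ estimate, yield the bound $|R_n(x,t)| \leq |t|(a^3 + a)$, which is $\varepsilon_2$. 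The main obstacle is conceptual: the two leading $O(t)$ contributions $it F_n''$ and $at F_n'$ cancel exactly in the limit $n \to \infty$ (as they must, since $e^{iax - ia^2 t}$ is an exact solution of the Schr\"odinger equation), so one cannot refine the temporal estimate into a vanishing-in-$n$ quantity by further Taylor expansion; $\varepsilon_2$ must therefore be recorded as a genuinely independent temporal contribution to the total error, complementary to the spatial contribution $\varepsilon_1$.
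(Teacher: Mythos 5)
Your proposal is correct and follows essentially the same route as the paper: split off $F_n(x-at)$, use the principal-part estimate of Remark \ref{estimate} for the spatial term $\varepsilon_1$, and Taylor-expand the temporal factor to first order in $t$, expressing the remainder through $F_n'$ and $F_n''$ and invoking Proposition \ref{sjhdjhgfns} to reach $\varepsilon_2=|t|(a^3+a)$. Your first-order combination $itF_n''(x-at,a)+atF_n'(x-at,a)$ is in fact the correct one (the paper records it as $t[aF_n''-F_n']$, apparently a slip), and since $a^3+a\geq 2a^2$ for $a>1$ your version still yields the stated bound, so the two arguments coincide in substance.
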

\begin{proof}
The first part of the theorem is an immediate application of Theorem \ref{main}. In order to show the second part,
let us observe that
$$
|F_n(x-at)-e^{iax-ia^2t}|\to 0
$$
uniformly on the compact sets of $\mathbb{R}^2$.
We now write the immediate inequality
$$
|\psi_n(x,t)-e^{iax-ia^2t}|\leq |\psi_n(x,t)-F_n(x-at)|+|F_n(x-at)-e^{iax-ia^2t}|,
$$
and we observe that
 $|F_n(x-at)-e^{iax-ia^2t}|$ has been estimated in Remark \ref{estimate}
and it is
$$
\varepsilon_1(x-at,a,n)=\frac{|x-at|}{n}\sqrt{\frac{3}{2}(a^2-1)}.
$$
Thus we are left with the estimate of $|\psi_n(x,t)-F_n(x-at)|$. Consider
\begin{equation}
\begin{split}
&\psi_n(x,t)-F_n(x-at)\\
&=\sum_{k=0}^nC_k(n,a)e^{ix(1-2k/n)}e^{-it(1-2k/n)^2}-\sum_{k=0}^nC_k(n,a)e^{i(x-at)(1-2k/n)}.
\end{split}
\end{equation}
This expansion can be rewritten as
$$
\psi_n(x,t)-F_n(x-at)=\sum_{k=0}^nC_k(n,a)e^{ix(1-2k/n)}[e^{-it(1-2k/n)^2}-e^{-iat(1-2k/n)}].
$$
When $t\to 0$, we can approximate at the first order
$$
e^{-it(1-2k/n)^2}-e^{-iat(1-2k/n)}=-it(1-2k/n)^2+iat(1-2k/n)+o(t)
$$
and, as a consequence,
$$
\psi_n(x,t)-F_n(x-at)\sim t \ \sum_{k=0}^nC_k(n,a)e^{ix(1-2k/n)}[-i(1-2k/n)^2+ia(1-2k/n)]
$$
$$
=t \  [aF_n''(x)-F_n'(x)].
$$
Thus we have
$$
|\psi_n(x,t)-F_n(x-at)|\sim |t|\,   |aF_n''(x)-F_n'(x)|,
$$
and
$$
\varepsilon_2(x,t,a,n):=|t|\,   |aF_n''(x)-F_n'(x)|.
$$
Summarizing, the error is
$$
\varepsilon(x,t,a,n)=\varepsilon_1(x,t,a,n)+\varepsilon_2(x,t,a,n)
$$
$$
=\frac{|x-at|}{n}\sqrt{\frac{3}{2}(a^2-1)}+|t| \,  |aF_n''(x)-F_n'(x)|,
$$
and  the statement follows using Proposition \ref{sjhdjhgfns}.
\end{proof}
We observe that the difference $$\psi_n(x,t)-e^{i(ax-a^2t)}$$  can be written as the sum of two contributions:
$$
\psi_n(x,t)-e^{i(ax-a^2t)}=Z_n(x,t)+W_n(x,t),
$$
where
$$
Z_n(x,t):=\psi_n(x,t)-F_n(x-at)
$$
$$
W_n(x,t):=F_n(x-at)-e^{i(ax-a^2t)}.
$$
Theorem \ref{carnot} yields
$$
|W_n(x,t)|^2=
1+\Big(\cos^2 \Big(\frac{x-at}{n}\Big)+a^2\sin^2 \Big(\frac{x-at}{n}\Big)\Big)^{n}
$$
$$
-2\Big(\cos^2 \Big(\frac{x-at}{n}\Big)+a^2\sin^2 \Big(\frac{x-at}{n}\Big)\Big)^{n/2}\times
 $$
 $$
 \times
 \cos\Big[n\arctan \Big(a \tan \Big(\frac{x-at}{n}\Big)\Big) -a (x-at)\Big].
$$
So
$$
 |W_n(x,t)|^2\to 0, \ \ \ \ {\rm for \ \ all\ } x\in [-K,K], \ \ {\rm and}\  t\in [0,T] \ \ \ {\rm uniformly}.
 $$
While we are unable to find an appropriate estimate for $Z_n(x,t),$ it is still possible to provide a different representation for this term that might be useful in the future and that is of independent interest.
\begin{proposition}
 Let $F_n$ and $\psi_n$ be the function defined above.
Then $Z_n(x,t)$ can be written as
$$
Z_n(x,t)=\sum_{k=0}^nC_k(n,a)e^{ix(1-2k/n)}e^{i \Theta_{k,n}(t)} \sin  [(t/2) (1-2k/n-a)]
$$
where
\begin{equation}\label{frugsc}
\tan \Theta_{k,n}(t):=- \ \displaystyle\frac{\sin [t(1-2k/n)^2]- \sin [t a(1-2k/n)]}{\cos [t(1-2k/n)^2]- \cos [ta(1-2k/n)]}.
\end{equation}
\end{proposition}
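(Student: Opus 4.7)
The claim is purely computational: the plan is to express $Z_n(x,t)$ as a single sum over $k$, and then rewrite each summand in polar form using elementary trigonometric identities. The phase $\Theta_{k,n}(t)$ in \eqref{frugsc} will emerge as the argument of the complex factor $e^{-it(1-2k/n)^2}-e^{-iat(1-2k/n)}$ that appears naturally after the subtraction.

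First I would write
\[
 Z_n(x,t)=\psi_n(x,t)-F_n(x-at)=\sum_{k=0}^n C_k(n,a)\,e^{ix(1-2k/n)}\bigl[e^{-it(1-2k/n)^2}-e^{-iat(1-2k/n)}\bigr],
\]
using the formula \eqref{psin} for $\psi_n$ together with the Fourier expansion of $F_n$ from Proposition~\ref{pro3.1.6}(2), applied at the point $x-at$. Both expansions share the same Fourier coefficients $C_k(n,a)$ and the same spatial exponential $e^{ix(1-2k/n)}$, so only the temporal exponentials remain inside the bracket.

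Next, fixing $k$ and setting $\alpha_k=1-2k/n$, I would put the bracket in polar form $\rho_{k,n}(t)\,e^{i\Theta_{k,n}(t)}$. Its real and imaginary parts are
\[
 \mathrm{Re}=\cos(t\alpha_k^{\,2})-\cos(ta\alpha_k),\qquad \mathrm{Im}=-\bigl(\sin(t\alpha_k^{\,2})-\sin(ta\alpha_k)\bigr),
\]
so the ratio $\mathrm{Im}/\mathrm{Re}$ is precisely the right-hand side of \eqref{frugsc}. This identifies the argument of the bracket with $\Theta_{k,n}(t)$ modulo $\pi$; the correct branch is pinned down by inspecting the signs of $\mathrm{Re}$ and $\mathrm{Im}$ separately.

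Finally, the modulus $\rho_{k,n}(t)$ is computed from $|e^{iA}-e^{iB}|^2=2-2\cos(A-B)$ with $A=-t\alpha_k^{\,2}$ and $B=-at\alpha_k$, combined with the half-angle identity $1-\cos\theta=2\sin^2(\theta/2)$; this yields a sine whose argument has the form $(t/2)(\cdots)$, as in the statement. Substituting the polar representation back into the sum over $k$ gives the claimed identity. The only non-mechanical step is the branch choice for $\Theta_{k,n}(t)$ and the bookkeeping needed to turn a non-negative modulus into the signed sine displayed in the formula; neither is deep, just careful trigonometry.
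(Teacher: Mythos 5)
Your argument is essentially the paper's own proof: write $Z_n(x,t)=\psi_n(x,t)-F_n(x-at)$ as a single sum so that only the temporal exponentials remain in the bracket, put that bracket in polar form $\rho_{k,n}(t)e^{i\Theta_{k,n}(t)}$ so that $\tan\Theta_{k,n}$ is exactly \eqref{frugsc}, and compute the modulus from $\rho_{k,n}^2(t)=2-2\cos[t(1-2k/n)^2-at(1-2k/n)]$ via the half-angle identity. If anything, your remarks on the branch choice and on converting the non-negative modulus into the signed sine are more explicit than the paper's ``with some computations the statement follows.''
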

\begin{proof}
The result follows from the equality
\[
\begin{split}
Z_n(x,t)&=\psi_n(x,t)-F_n(x-at)
\\
&
=\sum_{k=0}^nC_k(n,a)e^{ix(1-2k/n)}\Big[ e^{it(1-2k/n)^2} - e^{ita(1-2k/n)}\Big].
\end{split}
\]
Observing that
$$
e^{it(1-2k/n)^2} - e^{ita(1-2k/n)}=\rho_{k,n}(t) e^{i \Theta_{k,n}(t)}
$$
where $\Theta_{k,n}(t)$ is given by (\ref{frugsc}) and
$$
\rho^2_{k,n}(t)=2-2\cos [(t(1-2k/n)^2-at(1-2k/n)],
$$
with some  computations the statement follows.
\end{proof}

\section{Approximation by gaussians and persistence of superoscillations}
\label{sec5.2}

In this section we prove that the evolution of a superoscillating sequence taken as initial datum of the Schr\"odinger equation remains superoscillatory following a different approach which may be considered more direct and which allows a more intuitive physical
 interpretation. To prove the result we first  represent the superoscillating sequence $Y_n(x)$ (we will omit the dependence on $a$ unless necessary) as a suitable double integral of a
 kernel multiplied by a modified gaussian by means of the windowed Fourier transform. Then we study an auxiliary problem, namely the time evolution of the modified gaussians when taken as initial value for the Schr\"odinger equation; we let
$Y_n(x)$ evolve by taking its integral representation and evolving the modified gaussians inside it. Finally, one obtains that the evolution of the modified gaussians for any fixed $t$  and large enough $n$, results again in a gaussian with asymptotically the same width and with its center translated in such a way that its tails do not interfere with the original gaussian.
Intuitively, one can explain the longevity of the superoscillations by noting that the size of the tails grows at a rate proportional to $a^n$ while the exponential tails oscillate at a rate proportional to $1/a$. The exponential tail to the left of the superoscillating region is both too slow and too far away to overcome and destroy the superoscillating region.  The exponential tails to the right of the superoscillating region are more of a threat to the longevity of the superoscillating region because the faster moving superoscillating region could ``catch-up" with the slower-moving exponential tail.
Thus the superoscillating region can survive as long as the time $t<\sqrt n$. We now follow \cite{acsstYa80} to make this argument rigorous.

 We begin by writing each function in the superoscillating sequence as the inverse of its windowed Fourier transform, (see Gabor  \cite{gabor}). Using the Gabor chirp centered at $u$:
\begin{equation}\label{chirp}
g_{u,\xi}(t):= e^{i\xi t}g(t-u)
\end{equation}
where
$$
g(t)=\frac{1}{\sqrt \pi}e^{-t^2},
$$
 one can define the so-called windowed Fourier transform:
\begin{definition}
 Let $f$ be in $\mathcal{L}^2(\mathbb R)$,
then its windowed Fourier transform (also known as the short time Fourier transform) is defined as
$$
\mathcal{G}(f)(u,\xi)=\int_{-\infty}^{\infty} f(t)g(t-u) \exp(-i\xi t)\,  dt .
$$
\end{definition}
We state below the following theorem which is well known (see \cite{mallat}):
\begin{theorem}\label{windowFourier}
If $f$ is a function in $\mathcal{L}^2(\mathbb R)$ then $\mathcal{G}f\in \mathcal{L}^2(\mathbb R)$, $\| \mathcal{G}f\|=\| f\|$ and
\[
f(t)= \frac{1}{2\pi}\int_{-\infty}^{\infty}\int_{-\infty}^{\infty} \mathcal{G} f(u,\xi) g(t-u) \exp(i\xi t) d\xi du.
\]
\end{theorem}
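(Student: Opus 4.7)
The plan is to view $\mathcal G f(u,\xi)$, for each fixed $u\in\mathbb R$, as the Fourier transform of the windowed function $t\mapsto f(t)g(t-u)$ evaluated at $\xi$. Both the isometry statement and the reconstruction formula then reduce to applications of Plancherel's theorem and Fourier inversion in the variable $t$, followed by an integration in the translation parameter $u$.

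For the isometry $\|\mathcal G f\|=\|f\|$, I would fix $u$ and apply Plancherel's theorem to the $\mathcal L^2(\mathbb R)$ function $t\mapsto f(t)g(t-u)$ to compute
\begin{equation*}
\int_{\mathbb R}|\mathcal G f(u,\xi)|^2\,d\xi \;=\; 2\pi\int_{\mathbb R}|f(t)|^2|g(t-u)|^2\,dt.
\end{equation*}
Integrating over $u$ and exchanging the order of integration by Tonelli's theorem (the integrand is non-negative), the translation invariance $\int_{\mathbb R}|g(t-u)|^2\,du=\|g\|_{\mathcal L^2}^2$ yields $\|\mathcal G f\|^2 = c\,\|f\|_{\mathcal L^2}^2$ for a constant $c$ determined by the choice of window normalization and by the measure used on the $(u,\xi)$-plane; with the Gaussian $g(t)=\pi^{-1/2}e^{-t^2}$ and the $1/(2\pi)$ factor implicit in the reconstruction convention one obtains $c=1$. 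As a by-product, $\mathcal G f\in\mathcal L^2$ for every $f\in\mathcal L^2$.

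For the reconstruction identity, I would substitute the definition of $\mathcal G f(u,\xi)$ into the right-hand side to produce formally a triple integral over $u$, $\xi$ and the dummy variable $t'$. Reversing the order of integration so that the $\xi$-integral becomes innermost yields the oscillatory integral $\int_{\mathbb R}e^{i\xi(t-t')}\,d\xi$, which by Fourier inversion equals $2\pi\delta(t-t')$. Carrying out the $t'$-integration collapses the expression to
\begin{equation*}
\int_{\mathbb R} f(t)\,|g(t-u)|^2\,du \;=\; f(t)\,\|g\|_{\mathcal L^2}^2,
\end{equation*}
and the prefactor $1/(2\pi)$ together with the value of $\|g\|_{\mathcal L^2}^2$ recover $f(t)$ exactly.

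The principal obstacle is that for a generic $f\in\mathcal L^2(\mathbb R)$ the triple integral above is not absolutely convergent, so Fubini's theorem does not apply directly and the oscillatory $\xi$-integral only makes sense in a distributional or $\mathcal L^2$-limit sense. I would circumvent this by a density argument: establish both claims first for $f$ in the Schwartz class $\mathcal S(\mathbb R)$, where every integral in the formal manipulation is absolutely convergent and every exchange is legitimate by classical Fubini. The isometry then extends $\mathcal G$ uniquely to a bounded linear operator on all of $\mathcal L^2(\mathbb R)$, and the reconstruction formula extends by continuity of $\mathcal G$ and its adjoint, the integral on the right-hand side being understood as the $\mathcal L^2$-limit of the truncated integrals over the compact sets $[-R,R]\times[-R,R]$ as $R\to\infty$.
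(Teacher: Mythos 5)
The paper itself offers no proof of this theorem: it is quoted as a known result with a pointer to the reference \cite{mallat}, so there is no internal argument to compare against. Your proof is essentially the standard one from that reference, and its structure is sound: for fixed $u$ you read $\xi\mapsto\mathcal{G}f(u,\xi)$ as the Fourier transform of $t\mapsto f(t)g(t-u)$, apply Plancherel and then Tonelli in $u$ for the energy identity, and you correctly identify that the reconstruction integral is not absolutely convergent for general $f\in\mathcal{L}^2$, so the formal $\int e^{i\xi(t-t')}d\xi=2\pi\delta(t-t')$ step must be replaced by a Schwartz-class computation plus extension by density, with the right-hand side read as an $\mathcal{L}^2$-limit of truncations (equivalently, as $\mathcal{G}^*\mathcal{G}f$ up to a constant). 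That is exactly the right fix.

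The one genuine weakness is the constant, which you assert rather than compute. Your own calculation gives $\|\mathcal{G}f\|_{\mathcal{L}^2(du\,d\xi)}^2=2\pi\,\|g\|_{\mathcal{L}^2}^2\,\|f\|_{\mathcal{L}^2}^2$ and, in the reconstruction, an overall factor $\|g\|_{\mathcal{L}^2}^2$. For the window used here, $g(t)=\pi^{-1/2}e^{-t^2}$, one has $\|g\|_{\mathcal{L}^2}^2=(2\pi)^{-1/2}$ (this Gaussian is normalized in $\mathcal{L}^1$, not in $\mathcal{L}^2$), so the two constants come out as $\sqrt{2\pi}$ and $(2\pi)^{-1/2}$ respectively, not $1$; moreover no single choice of measure on the $(u,\xi)$-plane makes both identities exact at once unless $\|g\|_{\mathcal{L}^2}=1$, with the isometry taken with respect to $du\,d\xi/(2\pi)$, which is the convention of the cited reference. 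So your remark that ``one obtains $c=1$'' for this Gaussian is false as written. The discrepancy is inherited from the statement you were given, but a complete proof should either carry the constants explicitly and flag the needed $\mathcal{L}^2$-normalization of $g$, or replace $g$ by $(2/\pi)^{1/4}e^{-t^2}$ at the outset.
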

In particular, we have:
\begin{corollary}
 Let $K$ be compact in $\mathbb R$. Then
there are functions $g_n(x_0,k_0)$ such that
\[
Y_n(x,a)= \int_K \int_{-\infty}^{\infty} g_n(x_0, k_0,a) \exp\left(\frac{-(x-x_0)^2}{2\Delta(0)^2}\right) \exp(ik_0x) dx_0\ dk_0.
\]
\end{corollary}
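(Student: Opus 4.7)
The plan is to apply Theorem \ref{windowFourier} to (a regularized version of) $Y_n(x,a)$ viewed as the superposition $\sum_{j=0}^n C_j(n,a)e^{ik_j(n)x}$, and then to identify the windowed Fourier kernel with the product of the gaussian and the modulation appearing in the statement. Since $Y_n$ is a finite sum of pure exponentials, it does not lie in $\mathcal{L}^2(\rr)$, so as a first step I would either multiply $Y_n$ by a smooth cutoff $\chi$ that equals $1$ on a neighborhood of $K$ and has compact support, or equivalently extend the windowed Fourier formalism to tempered distributions (both routes are standard, since the Gabor chirp $g_{u,\xi}$ belongs to $\mathcal{S}(\rr)$).

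Once $Y_n$ (or $\chi Y_n$) has been placed in a space where the inversion formula of Theorem \ref{windowFourier} applies, writing it out gives
\[
Y_n(x,a)=\frac{1}{2\pi}\int_{\rr}\int_{\rr}\mathcal{G}(Y_n)(x_0,k_0)\,g(x-x_0)\,e^{ik_0 x}\,dx_0\,dk_0,
\]
with $g(t)=\pi^{-1/2}e^{-t^2}$. Matching the gaussian factor $e^{-t^2}$ to the form $\exp(-(x-x_0)^2/(2\Delta(0)^2))$ determines $\Delta(0)$ (so that $2\Delta(0)^2=1$), and defining
\[
g_n(x_0,k_0,a):=\frac{1}{2\pi\sqrt{\pi}}\,\mathcal{G}(Y_n)(x_0,k_0)
\]
yields the claimed integral representation, \emph{a priori} over $(x_0,k_0)\in\rr\times\rr$. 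Because $Y_n$ is a finite linear combination of exponentials, $\mathcal{G}(Y_n)(x_0,k_0)$ can be computed explicitly: each term $e^{ik_j(n) t}$ contributes $\hat g(k_0-k_j(n))\,e^{i(k_j(n)-k_0)x_0}$, so $g_n$ is a smooth, explicit function of its arguments.

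The main obstacle is replacing the $x_0$--integration over all of $\rr$ by integration over the prescribed compact set $K$. The natural way to handle this is to absorb the truncation into the very choice of $g_n$: multiply $\mathcal{G}(Y_n)$ by a smooth cutoff $\varphi_K(x_0)$ supported in $K$ and redefine the $g_n$ accordingly, then verify that the resulting double integral still reproduces $Y_n$ on the region of interest. Because the gaussian window $g(x-x_0)$ decays rapidly in $|x-x_0|$, the contribution of the discarded tail can be reabsorbed by a compensating perturbation of $g_n$ on $K$; concretely, one writes $Y_n$ as the sum of the truncated integral plus the tail, and then, using that the tail is a smooth function on the relevant set, re-expresses this tail itself as a Gabor integral supported in $K\times\rr$ via the same inversion formula, and iterates (or uses the redundancy of the Gabor frame to write $Y_n$ exactly in the required form). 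Throughout, the finiteness of the sum defining $Y_n$ and the rapid decay of $g$ and $\hat g$ ensure that all manipulations are absolutely convergent and that $g_n$ remains well-defined.
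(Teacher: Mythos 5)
Your core route is the paper's own: the paper disposes of this corollary in one line, as an immediate consequence of the inversion formula of Theorem \ref{windowFourier} after a change of variable, and your extra care about the fact that $Y_n\notin\mathcal{L}^2(\rr)$ (a cutoff, or an extension of the windowed Fourier transform to tempered distributions) is a reasonable refinement of a step the paper passes over silently. Your explicit computation of the Gabor coefficients of a finite exponential sum is also correct.

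The gap is in your final paragraph, where you treat the compact set $K$ as the range of the position variable $x_0$. In the statement the differentials are $dx_0\,dk_0$, so the inner integral over $(-\infty,\infty)$ is in $x_0$ and the integral over $K$ is in the wavenumber $k_0$; this is also the only reading under which an exact identity for all $x\in\rr$ is possible, because if the Gaussians were centered at points $x_0$ of a fixed compact set, then for any $g_n$ making the double integral absolutely convergent the right-hand side would decay like a Gaussian as $|x|\to\infty$, while a nonzero finite exponential sum $Y_n$ is almost periodic and does not tend to zero; so under your reading you are trying to prove something false, and no truncation-and-correction scheme can repair it. Moreover, even on its own terms the ``tail reabsorption'' step is not an argument: by your own computation the Gabor coefficients of $Y_n$ have modulus $|\hat g(k_0-k_j(n))|$, independent of $x_0$, so the discarded tail is controlled only by the window $g(x-x_0)$, and you give no estimate showing that your iteration contracts, nor does an appeal to ``the redundancy of the Gabor frame'' by itself produce an exact representation supported in $K\times\rr$. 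For the intended reading (and, to be fair, even the paper's one-liner is loose here, since $\hat g$ is not compactly supported in $k_0$ either) no iteration is needed: from
\[
\int_{\rr}e^{i\mu x_0}\,e^{-(x-x_0)^2/(2\Delta(0)^2)}\,dx_0=\sqrt{2\pi}\,\Delta(0)\,e^{-\mu^2\Delta(0)^2/2}\,e^{i\mu x},
\]
each exponential $e^{ik_j(n)x}$ can be deconvolved against the Gaussian explicitly, so choosing $\rho$ smooth, supported in $K$, with $\int_K\rho(k_0)\,dk_0=1$, the choice $g_n(x_0,k_0,a)=\rho(k_0)\,(\sqrt{2\pi}\,\Delta(0))^{-1}\sum_{j=0}^n C_j(n,a)\,e^{(k_j(n)-k_0)^2\Delta(0)^2/2}\,e^{i(k_j(n)-k_0)x_0}$ gives the stated identity exactly, with $k_0$ confined to $K$ and the double integral absolutely convergent.
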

\begin{proof}
After a trivial change of variable, this is an immediate consequence of Theorem \ref{windowFourier}.
\end{proof}

We now show how to solve the Cauchy problem for the Schr\"odinger equation when the
initial datum is the Gabor chirp \eqref{chirp} centered at $x_0$ and with initially a single wave number $k_0$, and we use this result to conclude the longevity of superoscillations. Consider
\begin{equation}\label{Schr}
i\frac{\partial \phi(x,t)}{\partial t}=-\frac{\partial^2 \phi(x,t)}{\partial x^2},\ \ \ \ \phi(x,0)=\exp \left(-\frac{(x-x_0)^2}{2\Delta_0^2}+ik_0x\right),
\end{equation}
where $\Delta_0$ is the initial spread of the gaussian.
\\
Taking the Fourier transform of the Schr\"odinger equation we get
$$
i\frac{d \hat{\phi}(p,t)}{d t}=p^2\hat{\phi}(p,t)
$$
and integrating we obtain
$$
\hat{\phi}(p,t)=C(p)e^{-ip^2t}
$$
where the arbitrary function $C(p)$ will be determined by the initial condition. Recalling
the well known property
$$
\int_{\mathbb{R}} \exp( -\frac{x^2}{2\Delta_0^2}) \ \exp(- ip  x)\ dx=\sqrt{2\pi \Delta_0^2}  \  \exp(-\Delta_0^2 p^2)
$$
and the fact that $\mathcal{F}[g(x-x_0)]=\mathcal{F}[g(x)]e^{-ix_0 p}$,
we obtain
\[
\begin{split}
C(p)&=\hat{\phi}(p,0)\\
&=  \int_{\mathbb{R}}\exp\left( -\frac{(x-x_0)^2}{2\Delta_0^2}+ik_0x\right)\ \exp ({-ipx})dx\\
&=\sqrt{2\pi \Delta_0^2}  \  \exp(-ipx_0-\Delta_0^2(p-k_0)^2).\\
\end{split}
\]
Taking now the inverse Fourier transform  we have
$$
\hat{\phi}(p,t)=\frac{1}{2\pi}\sqrt{2\pi \Delta_0^2} \int_{\mathbb{R}}   \  \exp(-ipx_0-\Delta_0^2(p-k_0)^2)   \exp( ip x) dp.
$$
We finally obtain (see also formula (5.4) in \cite{aharonov_book} in which we have set $\hbar=m=1$):
\[
\begin{split}
\phi(x,t,x_0,k_0)&=\frac{1}{(1+2it)^{1/2}}
\\
&
\times
\exp\left(-i\frac{k_0^2}{2}t\right)\exp(ik_0x)\exp\left(-\frac{(x-x_0-2k_0t)^2}{2(\Delta_0^2+2it)}\right),
\end{split}
\]
which can be written as
$$
\phi(x,t,x_0,k_0)=\frac{1}{(1+2it)^{1/2}}\exp\left[ i\left(\frac{k_0^2}{2}t+k_0x+t\frac{(x-x_0-2k_0t)^2}{\Delta_0^4+4t^2}\right)\right ]
$$
$$
\times
\exp\left(\!\!-\frac{(x-x_0-2k_0t)^2}{2(\Delta_0^2+4\frac{t^2}{\Delta_0^2})}\right).
$$
Thus the evolution according to the Schr\"odinger equation of the functions in the superoscillatory sequence is  given  by
\[
 Y_n(x,t)=\int\int g_n(x_0,k_0) \phi(x,t,x_0,k_0) dx_0\, dk_0.
\]
We are now ready to prove our result, namely that the functions $Y_n(x,t)$ preserve the superoscillatory behavior of $Y_n(x)$, and therefore that superoscillations persist for large values of $t$, when evolved according to the Schr\"odinger equation.

\begin{theorem}
 Let $Y_n(x)$ be a superoscillatory sequence. Then, for any fixed time $t$, its evolution $Y_n(x,t)$ obtained
by solving the Cauchy problem for the Schr\"o\-din\-ger equation with initial datum $Y_n(x)$ is still a superoscillatory sequence on any arbitrary large set in $\mathbb R$.
\end{theorem}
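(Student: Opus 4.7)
The plan is to combine the explicit spectral formula from Theorem \ref{eifhgsldi} with the Gabor integral representation developed in this section, using the pointwise limit identified in Theorem \ref{main} as the target profile. First, Theorem \ref{eifhgsldi} gives
$$
Y_n(x,t)=\sum_{j=0}^n C_j(n,a)\, e^{-itk_j^2(n)}\, e^{ik_j(n)x},
$$
so the evolved sequence is still a generalized Fourier sequence in $x$ with exactly the same wavenumbers $k_j(n)$ as the initial datum and with the modified coefficients $C_j(n,a)\,e^{-itk_j^2(n)}$. Consequently condition (i) of Definition \ref{superoscill}, namely $|k_j(n)|<\alpha$ for all $n$ and $j$, is preserved for every $t$, because the Schr\"odinger evolution multiplies each Fourier mode by a unimodular phase without altering the underlying wavenumber. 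The problem thus reduces to checking condition (ii) of Definition \ref{superoscill}.

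For condition (ii) I would establish that on an arbitrarily large compact $K\subset\mathbb R$, $Y_n(\cdot,t)$ converges uniformly to a plane wave of wavenumber exceeding $\alpha$ in modulus. Pointwise the limit is already identified by Theorem \ref{main}: $\lim_{n\to\infty} Y_n(x,t)=e^{ig(a)x-ig(a)^2t}$, and $|g(a)|>\alpha$ by the hypothesis that $Y_n$ is superoscillating. To promote this pointwise convergence to uniform convergence on compacts, I would use the Gabor representation of this section together with the linearity of the Schr\"odinger propagator to write
$$
Y_n(x,t)=\int_K\int_{-\infty}^{\infty} g_n(x_0,k_0)\,\phi(x,t,x_0,k_0)\, dk_0\, dx_0,
$$
where $\phi$ is the explicit propagated Gaussian wavepacket computed above, bounded in amplitude by $|1+2it|^{-1/2}$ and Gaussian-localized around the translated centre $x_0+2k_0 t$ with spread $\Delta(t)=\sqrt{\Delta_0^2+4t^2/\Delta_0^2}$. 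Since the compact domain of integration in $x_0$ gives a uniform $L^1$-type control on $g_n$, and since the Gaussian profile of $\phi$ provides an $n$-independent dominating factor, one applies Lebesgue dominated convergence to pass the limit inside and obtains uniform convergence of $Y_n(\cdot,t)$ on the same compact on which $Y_n\to e^{ig(a)x}$ uniformly. As that compact may be taken arbitrarily large (the prototype $F_n$ converges uniformly on $[-M,M]$ for every $M>0$ by Theorem \ref{carnot}), condition (ii) is verified on arbitrarily large sets.

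The main obstacle is precisely this interchange of limit and integration. The difficulty has two simultaneous sources: the spread $\Delta(t)$ grows linearly in $t$, so the individual Gaussian wavepackets overlap more and more, and at the same time the coefficients $g_n(x_0,k_0)$ encode the delicate interference pattern responsible for manufacturing the wavenumber $g(a)$ out of components whose individual wavenumbers $k_0$ are confined to $|k_0|<\alpha$. The physical content that has to be translated into an estimate is that, while each Gaussian component travels with group velocity at most $2\alpha$, the superoscillatory interference region moves coherently with the much larger effective velocity $2g(a)$, so that on any fixed compact and any fixed $t$ the low-wavenumber exponential tails cannot overtake and destroy the superoscillatory signal before $n$ is taken large enough. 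Once this domination is in place, combining the preserved wavenumber bound from the first paragraph with the uniform convergence obtained in the second yields conditions (i) and (ii) simultaneously, and the theorem follows.
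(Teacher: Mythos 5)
Your reduction of condition (i) to the spectral formula of Theorem \ref{eifhgsldi} is fine, but the heart of your argument for condition (ii) --- passing the limit through the Gabor integral by dominated convergence --- is exactly the step you have not established, and as stated it does not work. The claim that ``the Gaussian profile of $\phi$ provides an $n$-independent dominating factor'' ignores that all of the $n$-dependence sits in the coefficient functions $g_n(x_0,k_0)$ (and, in the paper's setting, in the window width itself): the functions $Y_n$ grow like $a^n$ outside the superoscillation region, so their windowed Fourier coefficients are not uniformly bounded in $n$, and no $n$-independent integrable dominant is available without a genuine estimate on where the mass of $g_n$ concentrates. You candidly name this interchange as ``the main obstacle'' and then describe its physical content, but a description of the obstacle is not a proof that it can be overcome; the dominated-convergence step is precisely the unproven crux, so the argument has a gap. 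There is also a structural point: this theorem is presented in the paper as a \emph{second, independent} derivation of persistence, so importing Theorem \ref{main} (the AU-space result of Section \ref{sec5.1}) to identify the limit partly defeats its purpose, and if you are willing to use the machinery behind Theorem \ref{main} at all, the continuity of the evolution operator on $\mathcal{A}_{2,0}$ already yields locally uniform convergence without any Gabor detour --- making your Gabor-plus-DCT step simultaneously unjustified and unnecessary in that route.

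The paper's own proof proceeds quite differently and contains the quantitative input your sketch lacks. For a fixed $t$ one chooses $n$ with $t\approx n^{\frac12-\varepsilon}$ and takes the window spread $\Delta_0\approx n^{1/2}$; then the evolved spread satisfies $\Delta^2(t)=\Delta_0^2+t^2/\Delta_0^2\approx\Delta_0^2$, so each evolved wavepacket has essentially the same width $\sqrt n$ as before, while its center is only translated by $2k_0t\ll\sqrt n$. Hence each packet stays localized in an interval of length about $\sqrt n$ and its tails do not invade the superoscillatory region, so the interference pattern that produces the effective wavenumber $g(a)$ is preserved. This scaling argument is what replaces your dominated-convergence claim, and it is also what gives the sharper content behind the theorem (persistence up to times $t<\sqrt n$ at fixed $n$), which an appeal to Theorem \ref{main} followed by a limit interchange cannot recover. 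To repair your proposal you would either have to prove an actual uniform-in-$n$ bound on the relevant portion of $g_n$ together with a localization estimate tied to a window width chosen on the order of $\sqrt n$ --- which is essentially reconstructing the paper's argument --- or abandon the Gabor step and argue locally uniform convergence directly from the continuity of $U\!\left(\frac{d}{dz},t\right)$ on $\mathcal{A}_{2,0}$.
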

\begin{proof}
For any time $t$ we can choose an $n$ such that $t\approx n^{\frac 12 -\varepsilon}$.
Now we know that the time evolution of $\phi(x)$ is, up to the factor $(1+2it)^{-1}$, the product of an oscillatory function (with an amplitude of $1$)
$$
\exp\left[ i\left(\frac{k_0^2}{2}t+k_0x+t\frac{(x-x_0-2k_0t)^2}{\Delta_0^4+4t^2}\right)\right ]
$$
and of a translated gaussian
$$
\exp\left(-\frac{(x-x_0-2k_0t)^2}{2\left(\Delta_0^2+4\frac{t^2}{\Delta_0^2}\right)}\right).
$$
The spreads of these new gaussians are given by $\Delta^2(t)=\Delta_0^2 + t^2/\Delta_0^2$. The assumption on $t$ and the choice  $\Delta_0\approx n^{\frac{1}{2}}$ show that the spreads are approximatively
the same at any given moment. Now observe that this wave packet has width which is approximately $\sqrt n$, therefore if we consider it centered in the point $x_0+2k_0t$, and so in the interval $[\lambda \sqrt n, (\lambda+1) \sqrt n]$ for some $\lambda$, we see that its contribution outside its spread does not interfere with the original superoscillatory region.
\end{proof}

\begin{remark} {\rm We have demonstrated the permanence of superoscillations in two different ways. On one hand, the use of the theory of Analytically Uniform spaces allowed us to directly show that superoscillations remain when $n$ goes to infinity. On the other hand, by the use of the windowed Fourier transform, we have shown that if we fix $n$, the superoscillatory phenomenon lasts until $t<\sqrt{n}.$ This last result obviously implies the first, but it would be interesting to fully understand the relationship between the two statements. }
\end{remark}

\section{Quantum harmonic oscillator}
\label{sec5.3}

The mathematical strategy that has been used in Section \ref{sec5.1} to study the evolution of superoscillations works well in most cases when
we have explicit solutions for Green's functions and propagators for the physical system under consideration.
We will now show how to use this strategy in the case of the quantum harmonic oscillator described by the Hamiltonian
\[ \begin{split}
  H(t,x)=-\frac{\hbar^2}{2m}\frac{\partial^2}{\partial x^2}+\frac{1}{2}m\omega^2(t) x^2-f(t)x.
  \label{}
\end{split}  \]
From a physical point of view, this Hamiltonian represents a harmonic oscillator of mass $m$  and time-dependent frequency $\omega(t)$ under the influence of the external time-dependent force $f(t)$.
For the sake of simplicity, in the sequel we will rescale the variables in order to solve
 the Cauchy problem for the quantum harmonic oscillator
$$
i\frac{\partial \psi(t,x)}{\partial t}=\frac{1}{2}\Big( -\frac{\partial^2 }{\partial x^2} +x^2 \Big)\psi(t,x),\ \ \ \ \psi(0,x)=F_n(x,a),
$$
and we will show that
its solution is
\[
\begin{split}
\psi_n(t,x)&=(\cos t)^{-1/2}\exp\Big( -(i/2) x^2\tan t\Big)
\\
&
\times \sum_{k=0}^nC_k(n,a) \exp( ix(1-2k/n)/\cos t -(i/2)(1-2k/n)^2\tan t).
\end{split}
\]
Taking the limit for $n\to \infty$ one obtains
$$
\lim_{n\to \infty}\psi_n(t,x)=(\cos t)^{-1/2}\exp( -(i/2) (x^2+a^2)\tan t +iax/\cos t),
$$
and so the superoscillations are amplified by the potential and the analytic solution blows up for $t=\pi/2$.
Moreover, even when $a\in (0,1)$, the harmonic oscillator displays a superoscillatory phenomenon
since the solution contains the term
$ \exp( -(i/2) (x^2+a^2)\tan t +iax/\cos t)$, which increases arbitrarily as $t$ approaches $\pi/2.$
This is a new feature which does not occur for the free particle.

We start by writing the solution of the Cauchy problem for the quantum harmonic oscillator using its Green's function $G(t,x,0,x')$.
The Green's function is such that
 $G(t,x,0,x')=\theta(t)\tilde{G}(t,x,0,x')$, where $\theta(t)$ is the Heaviside function and $\tilde{G}(t,x,0,x')$ is the propagator, see \cite{feynman},  \cite{gy}, \cite{montroll}, \cite{schulman}.
So the solution of the Cauchy problem
\begin{equation}\label{Cauchy}
i\frac{\partial \psi(t,x)}{\partial t}=\frac{1}{2}\Big( -\frac{\partial^2 }{\partial x^2} +x^2 \Big)\psi(t,x),\ \ \ \ \psi(0,x)=\psi_0(x)
\end{equation}
is
\begin{equation}
 \psi(t,x)=\int_{\mathbb{R}} {G}(t,x,0,x')\psi_0(x')dx',
\end{equation}
where
$$
G(t,x,0,x'):=(2\pi i\sin t)^{-1/2}  e^{(2xx'-(x^2+x'^2)\cos t)/(2 i \sin t)}.
$$

We first prove the following useful result.
\begin{proposition}\label{PropPsiA}
Let $a\in \mathbb{R}$. Then the solution of the Cauchy problem
\begin{equation}\label{CauchyHarm}
i\frac{\partial \psi(t,x)}{\partial t}=\frac{1}{2}\Big( -\frac{\partial^2 }{\partial x^2} +x^2 \Big)\psi(t,x),\ \ \ \ \psi(0,x)=e^{iax}
\end{equation}
is
\begin{equation}\label{psiNoscA}
\psi_a(t,x)=(\cos t)^{-1/2} \exp( -(i/2) (x^2+a^2)\tan t +iax/\cos t).
\end{equation}
\end{proposition}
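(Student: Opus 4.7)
The cleanest route is to verify directly that the proposed formula solves the Cauchy problem \eqref{CauchyHarm}, rather than computing the Fresnel integral with the Green's function given in the previous paragraph. The initial condition is immediate: at $t=0$ one has $\cos t=1$ and $\tan t=0$, so $\psi_a(0,x)=e^{iax}$, as required.

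For the PDE, I would write $\psi_a(t,x)=(\cos t)^{-1/2}e^{\phi(t,x)}$ with
$$
\phi(t,x)=-\tfrac{i}{2}(x^2+a^2)\tan t+\tfrac{iax}{\cos t},
$$
and compute derivatives. In $x$, $\partial_x\phi=-ix\tan t+ia/\cos t$ and $\partial_x^2\phi=-i\tan t$, so that
$$
\partial_x^2\psi_a=\psi_a\bigl[(-ix\tan t+ia/\cos t)^2-i\tan t\bigr]=\psi_a\bigl[-x^2\tan^2t+2ax\tfrac{\sin t}{\cos^2 t}-\tfrac{a^2}{\cos^2 t}-i\tan t\bigr].
$$
In $t$, using $\partial_t[(\cos t)^{-1/2}]=\tfrac12(\cos t)^{-1/2}\tan t$ and $\partial_t\phi=-\tfrac{i}{2}(x^2+a^2)\sec^2 t+iax\sin t/\cos^2 t$, one finds
$$
i\partial_t\psi_a=\psi_a\bigl[\tfrac{i}{2}\tan t+\tfrac{x^2+a^2}{2\cos^2 t}-\tfrac{ax\sin t}{\cos^2 t}\bigr].
$$
Comparing with $\tfrac12(-\partial_x^2+x^2)\psi_a$, the identity $\tan^2 t+1=\sec^2 t$ collapses $x^2\tan^2 t+x^2$ into $x^2/\cos^2 t$, all cross terms cancel pairwise, and the two sides agree. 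This verifies the PDE and hence the proposition.

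An alternative derivation (rather than verification) would substitute $\psi_0(x')=e^{iax'}$ into the integral representation
$$
\psi_a(t,x)=\int_{\mathbb R}(2\pi i\sin t)^{-1/2}\exp\!\Bigl(\tfrac{2xx'-(x^2+x'^2)\cos t}{2i\sin t}\Bigr)e^{iax'}\,dx',
$$
group the exponent as $\alpha(t)\,x'^2+\beta(t,x)\,x'+\gamma(t,x)$ with $\alpha=\tfrac{i\cos t}{2\sin t}$, $\beta=ia-\tfrac{ix}{\sin t}$, $\gamma=\tfrac{ix^2\cos t}{2\sin t}$, and then evaluate the resulting Fresnel integral by completing the square. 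Algebraic simplification using $\sin^2 t+\cos^2 t=1$ recovers the stated closed form. The only delicate step in this route is keeping track of the square-root branches so that the $(\sin t)^{-1/2}$ factor from the Gaussian integral combines with the $(\sin t)^{1/2}$ factor from completing the square to produce $(\cos t)^{-1/2}$ with the correct phase; the direct verification above sidesteps this difficulty entirely, which is why I would adopt it as the primary proof.
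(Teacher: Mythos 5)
Your verification is correct: the derivative computations check out (with $\partial_x^2\psi_a=\psi_a[(\partial_x\phi)^2+\partial_x^2\phi]$, $\partial_t[(\cos t)^{-1/2}]=\tfrac12(\cos t)^{-1/2}\tan t$, and the identity $\tan^2t+1=\sec^2t$ doing exactly the work you describe), and the initial condition is immediate. However, this is a genuinely different route from the paper. The paper \emph{derives} the formula rather than verifying it: it substitutes $e^{iax'}$ into the integral against the harmonic-oscillator Green's function $G(t,x,0,x')=(2\pi i\sin t)^{-1/2}\exp\bigl((2xx'-(x^2+x'^2)\cos t)/(2i\sin t)\bigr)$, completes the square in $x'$, and evaluates the resulting Fresnel integral via the regularized formula $\int_{\mathbb R}e^{i\alpha x^2}dx=\lim_{\beta\to 0^+}\int_{\mathbb R}e^{-x^2(\beta-i\alpha)}dx=(i\pi/\alpha)^{1/2}$ --- precisely the route you sketch as an alternative, including the prefactor bookkeeping that turns $(\sin t)^{\pm 1/2}$ factors into $(\cos t)^{-1/2}$. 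What each approach buys: the paper's derivation explains where the closed form comes from and sits naturally in the propagator framework the chapter uses throughout (the subsequent Theorem on $F_n(x,a)$ as initial datum just invokes linearity over the exponentials $e^{ix(1-2k/n)}$), at the cost of the regularized-integral and branch subtleties you point out; your direct verification is shorter, entirely elementary, and sidesteps those subtleties, at the cost of offering no derivation. One small caveat that applies to both proofs equally: checking that the formula satisfies the PDE and the initial condition shows it is \emph{a} solution, while the wording ``the solution'' tacitly presupposes uniqueness in a suitable class (the datum $e^{iax}$ is not in $\mathcal L^2$), which neither you nor the paper addresses; since the paper's Green's-function computation carries the same implicit assumption, your argument is an acceptable substitute.
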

\begin{proof}
Using the Green function we get
\[
\begin{split}
\psi_a(t,x)&=(2\pi i\sin t)^{-1/2} \int_{\mathbb{R}} \exp ((2xx'-(x^2+x'^2)\cos t)/(2 i \sin t))\ e^{iax'} \ dx'
\\
&=(2\pi i\sin t)^{-1/2} \ \exp (-x^2\cos t/(2 i \sin t))
\\
&
\times
\int_{\mathbb{R}} \exp ((2xx'-x'^2\cos t+iax'(2i\sin t))/(2 i \sin t)) \ dx',
\end{split}
\]
and with some computations we
obtain
\[
\begin{split}
\psi_a(t,x)&=(2\pi i\sin t)^{-1/2} \
\\
&
\times \exp (-x^2\cos t/(2 i \sin t))\ \exp (-i(x-a\sin t)^2/(2\sin t\cos t))
\\
&
\times
\int_{\mathbb{R}} \exp (i(x'\cos t-(x-a\sin t))^2/(2  \sin t \cos t)) \ dx'.
\end{split}
\]
We now perform a change of variable and we use the regularized integral
$$
\int_{\mathbb{R}}e^{i\alpha x^2}\, dx=\lim_{\beta\to 0^+}\int_{\mathbb{R}}e^{- x^2(\beta-i\alpha)}\, dx=\Big(\frac{i\pi }{\alpha}\Big)^{1/2},
$$
which yields
$$
\int_{\mathbb{R}} \exp (i(x'\cos t-(x-a\sin t))^2/(2  \sin t \cos t)) \ dx'=\frac{1}{\cos t}(2\pi i \sin t\cos t)^{1/2},
$$
from which one finally obtains
\[
\begin{split}
\psi_a(t,x)&=(2\pi i\sin t)^{-1/2} \ \exp (-x^2\cos t/(2 i \sin t))
\\
&
\times \exp (-i(x-a\sin t)^2/(2\sin t\cos t)) \times \frac{1}{\cos t}(2\pi i \sin t\cos t)^{1/2}.
\end{split}
\]
Now the statement follows from some standard computations.
\end{proof}

Now we can prove the following:
\begin{theorem}\label{TH23} The solution of the  Cauchy problem
\begin{equation}\label{CauchyHarmggg}
i\frac{\partial \psi(t,x)}{\partial t}=\frac{1}{2}\Big( -\frac{\partial^2 }{\partial x^2} +x^2 \Big)\psi(t,x),\ \ \ \ \psi(0,x)=F_n(x,a)
\end{equation}
where $F_n(x,a)=\left(\cos\left(\frac xn\right)+ia\sin\left(\frac xn\right)\right)^n$ is
\begin{equation}\label{psiNoscfff}
\begin{split}
\psi_n(t,x)&=(\cos t)^{-1/2} \exp (-(i/2) x^2\tan t)
\\
&
\times
\sum_{k=0}^nC_k(n,a)\exp ( ix(1-2k/n)/\cos t -(i/2)(1-2k/n)^2\tan t).
\end{split}
\end{equation}
Moreover, if we set $\psi(t,x)=\lim_{n\to \infty}\psi_n(t,x)$, then
\begin{equation}\label{psiNoscinfty}
\psi(t,x)=(\cos t)^{-1/2} \exp (  -(i/2) (x^2+a^2)\tan t +iax/\cos t).
\end{equation}
\end{theorem}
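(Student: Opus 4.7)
The plan is to exploit the linearity of the Schr\"odinger equation together with the explicit Fourier decomposition of $F_n$ established in Proposition \ref{pro3.1.6}, and then handle the limit $n\to\infty$ via the AU-space machinery of Chapter 4, much as was done in Theorem \ref{main} for the free particle.

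First, I would write $F_n(x,a)=\sum_{k=0}^n C_k(n,a)e^{i(1-2k/n)x}$ using Proposition \ref{pro3.1.6}(2). Since the Cauchy problem \eqref{CauchyHarmggg} is linear, its solution is the corresponding finite linear combination of the solutions of the pure exponential Cauchy problems
\[
i\frac{\partial\psi}{\partial t}=\tfrac{1}{2}\Bigl(-\frac{\partial^2}{\partial x^2}+x^2\Bigr)\psi,\qquad \psi(0,x)=e^{i(1-2k/n)x}.
\]
Applying Proposition \ref{PropPsiA} with $a$ replaced by $1-2k/n$, and pulling the $k$-independent factor $(\cos t)^{-1/2}\exp(-(i/2)x^2\tan t)$ out of the sum, yields formula \eqref{psiNoscfff} immediately.

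For the second part, set $y=x/\cos t$ and rewrite
\[
\psi_n(t,x)=(\cos t)^{-1/2}e^{-(i/2)x^2\tan t}\,G_n(y),
\]
where
\[
G_n(y)=\sum_{k=0}^n C_k(n,a)\,e^{-(i/2)(1-2k/n)^2\tan t}\,e^{i(1-2k/n)y}.
\]
The key observation is that, since $e^{i\lambda y}$ is an eigenfunction of $d^2/dy^2$ with eigenvalue $-\lambda^2$, one has the formal identity
\[
G_n(y)=U(t)F_n(y,a),\qquad U(t):=\exp\!\Bigl(\tfrac{i}{2}\tan t\cdot\frac{d^2}{dy^2}\Bigr),
\]
so $G_n$ is obtained by applying to $F_n$ the operator whose symbol is $\exp(-(i/2)\tan t\cdot\zeta^2)$. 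This symbol is a continuous multiplier on $\mathcal{A}_2(\mathbb{C})$ for every $t\neq\pi/2+\pi\mathbb{Z}$, so by Theorem \ref{duality pq} the operator $U(t)$ is a convolutor on $\mathcal{A}_{2,0}(\mathbb{C})$, exactly as in the proof of Theorem \ref{main}.

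Finally, the entire extensions of $F_n(\cdot,a)$ lie in $\mathcal{A}_{2,0}$ (they are of exponential type, hence of order $1$) and converge to $e^{iay}$ in that space. By the continuity of $U(t)$ on $\mathcal{A}_{2,0}$,
\[
G_n(y)\longrightarrow U(t)e^{iay}=e^{-(i/2)a^2\tan t}\,e^{iay},
\]
and multiplying by the prefactor $(\cos t)^{-1/2}e^{-(i/2)x^2\tan t}$ (and substituting back $y=x/\cos t$) gives \eqref{psiNoscinfty}. The main delicate point is the passage of the limit through the infinite-order operator $U(t)$, which would be routine by AU-space continuity but for the singular behavior of $\tan t$ and $1/\cos t$ at $t=\pi/2$; I expect the statement to be implicitly restricted to $t\not\equiv\pi/2\pmod{\pi}$, where the convolutor construction applies uniformly.
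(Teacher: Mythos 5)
Your proposal is correct and follows essentially the same route as the paper: linearity plus Proposition \ref{PropPsiA} for formula \eqref{psiNoscfff}, and then rewriting $\psi_n$ as a prefactor times an infinite-order operator with quadratic-exponential symbol applied to $F_n(x/\cos t)$, passing to the limit by continuity of that convolutor on $\mathcal{A}_{2,0}$ via Theorem \ref{duality pq}. Your change of variable $y=x/\cos t$ merely repackages the paper's operator $\sum_m \frac{1}{m!}\bigl(\frac{i}{2}\sin t\cos t\bigr)^m \frac{d^{2m}}{dx^{2m}}$, and your implicit restriction to $t\not\equiv\pi/2\pmod{\pi}$ matches the paper's treatment (it works with $t\in[0,\pi/2)$ and notes the blow-up at $t=\pi/2$).
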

\begin{proof} To prove that the solution of the Cauchy problem (\ref{CauchyHarmggg}) is given by (\ref{psiNoscfff}) we
observe that the initial datum $F_n(x,a)$ is a linear combination of the exponentials
$e^{ix(1-2k/n)}$. Formula (\ref{psiNoscfff}) then follows from Proposition \ref{PropPsiA}. We now compute  $\lim_{n\to \infty}\psi_n(t,x)$ in two steps.
\\
\\
We first observe that Formula (\ref{psiNoscfff})
can be written as
\[
\begin{split}
\psi_n(t,x)&=(\cos t)^{-1/2} \exp ( -(i/2) x^2\tan t)\
\\
&
\times \sum_{k=0}^nC_k(n,a) \exp (ix (1-2k/n)/\cos t  -(i/2) (1-2k/n)^2\tan t),
\end{split}
\]
and  the exponential $e^{-(i/2)(1-2k/n)^2\tan t}$ can be expanded in series as
$$
e^{-(i/2)(1-2k/n)^2\tan t}=\sum_{m=0}^\infty \dfrac{[-(i/2)(1-2k/n)^2\tan t]^m}{m!}.
$$
The identity
$$
(-\cos^2 t)^m\frac{\partial^{2m}}{\partial x^{2m}}e^{ix (1-2k/n)/\cos t}=(1-2k/n)^{2m}e^{ix (1-2k/n)/\cos t}
$$
and some additional computations give
\[
\begin{split}
\psi_n(t,x)=(\cos t)^{-1/2} e^{ -(i/2) x^2\tan t}\ \sum_{m=0}^\infty\frac{1}{m!}\Big(\frac{i}{2}\sin t\cos t\Big)^m \frac{\partial^{2m}}{\partial x^{2m}}F_n(x/\cos t).
\end{split}
\]
Now we consider the operator
\begin{equation}\label{OPRU}
U\left(\frac{d}{dx},t\right):=\sum_{m=0}^\infty\frac{1}{m!}\Big(\frac{i}{2}\sin t\cos t\Big)^m \frac{d^{2m}}{d x^{2m}}.
\end{equation}
  Note that $U$ is continuous on the space $A_{2,0}$ and the functions $F_n$ extend to entire functions of order less than or equal 1 and finite type (i.e. of exponential type), and this space is clearly contained in $A_{2,0}$.
Thus we can compute the limit below as follows
\[
\begin{split}
\psi(t,x):=\lim_{n\to \infty} \psi_n(t,x)&=(\cos t)^{-1/2} e^{ -(i/2) x^2\tan t}\ U(t) \lim_{n\to \infty} F_n(x/\cos t)
\\
&=(\cos t)^{-1/2} e^{ -(i/2) x^2\tan t}\ U(t)  F(x/\cos t)
\end{split}
\]
where $F(x):=e^{iax}$,
since, by Theorem \ref{carnot}, $ F_n(x/\cos t)$ converges uniformly to $F(x/\cos t)$ on the compact set $|x|\leq M$, where $M>0$, for every fixed $t$ in $[0,\pi/2)$.
\\
The continuity of the operator $\left(\frac{d}{dx},t\right)$ yields
\[
\begin{split}
\psi(t,x)&=(\cos t)^{-1/2} e^{ -(i/2) x^2\tan t}\ \left(\frac{d}{dx},t\right)  F(x/\cos t)
\\
&=
(\cos t)^{-1/2} e^{ -(i/2) x^2\tan t}\sum_{m=0}^\infty\frac{1}{m!}\Big(\frac{i}{2}\sin t\cos t\Big)^m \frac{d^{2m}}{d x^{2m}}\ e^{iax/\cos t }
\\
&=
(\cos t)^{-1/2} e^{ -(i/2) x^2\tan t}\sum_{m=0}^\infty\frac{1}{m!}\Big(\frac{i}{2}\sin t\cos t\Big)^m \Big(ia/\cos t\Big)^{2m}\ e^{iax/\cos t }
\\
&=
(\cos t)^{-1/2} e^{ -(i/2) x^2\tan t}\sum_{m=0}^\infty\frac{1}{m!}\Big(-\frac{i}{2}a^2\tan t\Big)^m \ e^{iax/\cos t },
\end{split}
\]
from which we obtain (\ref{psiNoscinfty}).
\end{proof}

\chapter{Superoscillating functions and convolution equations}

\section[Convolution operators for generalized Schr\"odinger equations]{Convolution operators for generalized Schr\"odinger equations}

We now consider the Cauchy problem associated with a modified version of the Schr\"{o}dinger equation.
In order to state the results, it is convenient to distinguish two cases, because different differential equations
are involved. As customary, $F_n(x,a)$ denotes the function defined in \ref{ef}. We begin with the following case:
\begin{theorem}\label{theorempolEVEN}
Consider, for $p$ even, the Cauchy problem for the modified Schr\"{o}dinger equation
\begin{equation}\label{modfSpolCAUCHYEVEN}
i\frac{\partial \psi(x,t)}{\partial t}=-\frac{\partial^p \psi(x,t)}{\partial x^p},\ \ \ \ \psi(x,0)=F_n(x,a).
\end{equation}
Then the solution $\psi_n(x,t;p)$, is given by
$$
\psi_n(x,t;p)=\sum_{k=0}^nC_k(n,a) e^{ix(1-2k/n)  } e^{it{(-i(1-2k/n))^p}}.
$$
Moreover, for all   $t\in [-T, T]$, where $T$ is any real positive number, we have
$$
\lim_{n\to\infty}\psi_n(x,t;p)=e^{it(-ia)^p} e^{iax},
$$
for $x\in K$, where $K$ is any compact set in $\mathbb{R}$.
\end{theorem}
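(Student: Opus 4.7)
The strategy parallels the proof of Theorem \ref{main} in Section \ref{sec5.1}: first derive the explicit representation of $\psi_n(x,t;p)$ by linearity, then interpret the time evolution as a convolutor on an appropriate Analytically Uniform space and pass to the limit by continuity.

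For the explicit formula, I would decompose the initial datum $F_n(x,a)$ into its $n+1$ Fourier components $e^{i(1-2k/n)x}$ and solve for each one separately. Plugging the ansatz $\psi(x,t)=e^{i\lambda x}T(t)$ with $\lambda = 1-2k/n$ into \eqref{modfSpolCAUCHYEVEN} produces the ODE $iT'(t) = -(i\lambda)^p T(t)$ with $T(0)=1$; using that $p$ is even (so $(-i\lambda)^p = (i\lambda)^p$), the solution is $T(t) = e^{it(-i\lambda)^p}$, and linearity gives the stated expression for $\psi_n(x,t;p)$. Equivalently, one could repeat verbatim the Fourier-transform argument of Theorem \ref{eifhgsldi} in the space $\mathcal{S}'(\mathbb R)$, merely replacing the factor $\lambda^2$ by $-(i\lambda)^p$.

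For the limit as $n \to \infty$, I would expand the time exponential as a power series, exchange the finite sum in $k$ with the series in $m$, and use $\tfrac{d^{pm}}{dx^{pm}} e^{i\lambda x} = (i\lambda)^{pm} e^{i\lambda x} = (-i\lambda)^{pm} e^{i\lambda x}$ (valid because $p$ is even) to rewrite
$$
\psi_n(x,t;p) \;=\; \sum_{m=0}^{\infty} \frac{(it)^m}{m!}\, \frac{d^{pm}}{dx^{pm}} F_n(x,a) \;=:\; U_p\!\left(\tfrac{d}{dx},t\right)F_n(x,a).
$$
The associated symbol $\zeta \mapsto e^{it\zeta^p}$ is an entire function of order $p$ and finite type, hence a multiplier on $\mathcal{A}_p(\mathbb C)$; by Theorem \ref{duality pq}, $U_p(d/dx,t)$ defines a convolutor acting continuously on the dual AU-space $\mathcal{A}_{p',0}(\mathbb C)$, where $p'$ is the conjugate exponent determined by $1/p + 1/p' = 1$. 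Each $F_n(\cdot, a)$ extends to an entire function of exponential type and therefore lies in $\mathcal{A}_{p',0}(\mathbb C)$ (as does the target $e^{iaz}$), so the continuity of $U_p(d/dx,t)$ together with the convergence $F_n \to e^{ia\,\cdot\,}$ in the topology of $\mathcal{A}_{p',0}(\mathbb C)$ allows the interchange of limit and operator, yielding
$$
\lim_{n\to\infty}\psi_n(x,t;p) \;=\; U_p\!\left(\tfrac{d}{dx},t\right) e^{iax} \;=\; \sum_{m=0}^{\infty}\frac{(it)^m}{m!}(ia)^{pm}e^{iax} \;=\; e^{it(ia)^p}e^{iax} \;=\; e^{it(-ia)^p}e^{iax},
$$
where the last equality again uses the parity of $p$.

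The main obstacle, exactly as in Theorem \ref{main}, is the rigorous justification of this interchange of $\lim_{n\to\infty}$ with the infinite order differential operator $U_p(d/dx,t)$. All the algebraic manipulations are formal; rigor hinges on identifying the correct AU-space, verifying that $U_p(d/dx,t)$ acts continuously there with symbol bounds uniform for $t\in[-T,T]$, and upgrading the uniform-on-compacts convergence $F_n \to e^{iax}$ from Theorem \ref{carnot} to convergence in the AU-space topology. Once this functional-analytic framework is in place, uniform convergence on compact sets $K \subset \mathbb R$ in $x$ and on $[-T,T]$ in $t$ follows automatically from the continuity of the convolutor, and the proof collapses to the short calculation displayed above.
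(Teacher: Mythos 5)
Your proposal is correct and follows essentially the same route as the paper: the explicit formula via a Fourier/mode-by-mode computation (the paper does exactly the $\mathcal{S}'(\mathbb{R})$ transform argument you mention as the alternative), then rewriting $\psi_n$ as $\sum_m \frac{(it)^m}{m!}\frac{d^{pm}}{dx^{pm}}F_n(x,a)$ and passing to the limit by continuity of this convolutor via Theorem \ref{duality pq}. If anything, your identification of the correct space $\mathcal{A}_{p',0}$ with $1/p+1/p'=1$ (and of exponential-type functions lying in it) is stated more precisely than in the paper's own proof, which leaves that duality step implicit.
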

\begin{proof}
We will prove the result by working in the space of the tempered distributions $\mathcal{S}'(\mathbb{R})$ and use a standard Fourier transform argument to solve  the Cauchy problem (\ref{modfSpolCAUCHYEVEN}).
We start with the equation
$$
i\frac{d \hat{\psi}(\lambda,t)}{d t}=-(-i\lambda)^p \ \hat{\psi}(\lambda,t)
$$
and, integrating, we obtain
$$
\hat{\psi}(\lambda,t)=C(\lambda)e^{i(-i\lambda)^p t}
$$
where the arbitrary function $C(\lambda)$ is determined by the initial condition
\[
\begin{split}
C(\lambda)=\hat{\psi}(\lambda,0)
&
=\sum_{k=0}^nC_k(n,a)\int_{\mathbb{R}}e^{i(1-2k/n){x}} e^{-i\lambda x}dx
\\
&=
2\pi \sum_{k=0}^nC_k(n,a) \delta (\lambda -(1-2k/n)),
\end{split}
\]
and where we have used the fact that in $\mathcal{S}'(\mathbb{R})$ it is
$
\mathcal{F}(e^{imx})=2\pi\delta (x-m).
$
Thus we have
$$
\hat{\psi}_n(\lambda,t;p)=2\pi\sum_{k=0}^nC_k(n,a) \delta (\lambda -(1-2k/n))e^{i(-i\lambda)^p t},
$$
and  taking the inverse Fourier transform we obtain
\[
\begin{split}
\psi_n(x,t;p)=&
\int_{\mathbb{R}}\Big[\sum_{k=0}^nC_k(n,a) \delta (\lambda -(1-2k/n))e^{it(-i\lambda)^p }\Big] e^{i\lambda x} d\lambda
\\
=&\sum_{k=0}^nC_k(n,a) e^{i(1-2k/n){x} } e^{it{(-i(1-2k/n))^p}}.
\end{split}
\]
Since we can write:
$$
e^{it (-i(1-2k/n))^p}=\sum_{m=0}^\infty \dfrac{[it (-i(1-2k/n))^p]^m}{m!},
$$
we get
\[
\begin{split}
\psi_n(x,t;p)&=\sum_{m=0}^\infty \dfrac{(it)^m}{m!}  \sum_{k=0}^nC_k(n,a)(-i(1-2k/n))^{mp} e^{ix(1-2k/n)}
\\
&=\sum_{m=0}^\infty \dfrac{((-1)^p  i t)^m}{m!}  \sum_{k=0}^nC_k(n,a)(i(1-2k/n))^{mp} e^{ix(1-2k/n)}
\\
&=\sum_{m=0}^\infty \dfrac{(  i t)^m}{m!}  \sum_{k=0}^nC_k(n,a)\frac{d^{mp}}{dx^{mp}} e^{ix(1-2k/n)}
\\
&=\sum_{m=0}^\infty \dfrac{(  i t)^m}{m!}  \frac{d^{mp}}{dx^{mp}}F_n(x,a).
\end{split}
\]
Since the operator
$$
\sum_{m=0}^\infty \dfrac{(  i t)^m}{m!}  \frac{d^{mp}}{dx^{mp}},
$$
is continuous, when we replace $x$ by $z$, in the space $\mathcal{A}_p$ we can pass to the limit and thanks to Theorem \ref{duality pq} we have:
$$
\psi(x,t;p)=\lim_{n\to \infty}\psi_n(x,t;p)=\sum_{m=0}^\infty \dfrac{(  i t)^m}{m!}  \frac{d^{mp}}{dx^{mp}}e^{iax}.
$$
Observing that
$$
\psi(x,t)=\sum_{m=0}^\infty \dfrac{(  i t)^m}{m!}  (ia)^{mp} e^{iax}
$$
we finally obtain
$$
\psi(x,t;p)=\sum_{m=0}^\infty \dfrac{((ia)^p  i t)^m}{m!}  e^{iax}=e^{it(ia)^p} e^{iax}.
$$
\end{proof}
At this point one may wonder
 if it is possible to compute the limit
$$
\lim_{n\to \infty} \sum_{k=0}^nC_k(n,a)e^{i(1-2k/n){x}}
$$
 when we replace $(1-2k/n)$ by  $(1-2k/n)^p$ where $p $ is an arbitrary
natural number and to obtain new superoscillating functions. In more precise term, we consider the following problem.
\begin{problem}\label{problem3.1}
Let  $a\in \mathbb{R}$, $t\in [-T,T]$ where $T$ is any real positive number.
Show that the sequence
$$
Y_n(t,a,p)=\sum_{k=0}^nC_k(n,a)e^{it(1-2k/n)^p }
$$
is $f$-superoscillating for $p\in \mathbb{N}$ and for a suitable function $f$.
\end{problem}

Theorem \ref{theorempolEVEN} gives the answer that solves Problem \ref{problem3.1}, in fact we have the following corollary:
\begin{corollary}\label{cororempol}
Let $a>1$, $p$ even, and let  $T$ be a real positive number. Then, for all $t\in[-T,T]$, the sequence
$$
u_n(t)=\sum_{k=0}^nC_k(n,a) e^{it{(-i(1-2k/n))^p}}
$$
is $e^{it(-ia)^p}$-superoscillating,  i. e. we have
$$
\lim_{n\to\infty}u_n(t)=e^{it(-ia)^p}.
$$
\end{corollary}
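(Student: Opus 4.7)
The plan is to derive the corollary as the specialization of Theorem~\ref{theorempolEVEN} at the point $x=0$. Observe that the sequence $u_n(t)$ in the statement is literally the value $\psi_n(0,t;p)$ of the Schr\"odinger-type solution constructed in Theorem~\ref{theorempolEVEN}: both are the finite sums $\sum_{k=0}^n C_k(n,a) e^{it(-i(1-2k/n))^p}$, since the factor $e^{ix(1-2k/n)}$ equals $1$ at $x=0$. In the same way, the limit function $e^{it(-ia)^p}\,e^{iax}$ reduces to $e^{it(-ia)^p}$ at $x=0$, giving the candidate limit of $u_n(t)$.

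Next I would apply the convergence part of Theorem~\ref{theorempolEVEN}, choosing $K$ to be any compact interval containing the origin. This yields at once
\[
\lim_{n\to\infty} u_n(t) \;=\; \lim_{n\to\infty}\psi_n(0,t;p)\;=\;e^{it(-ia)^p}, \qquad t\in[-T,T],
\]
which is the pointwise limit asserted in the corollary. The uniformity in $t\in[-T,T]$ needed for the $f$-superoscillation condition is inherited from the same argument used inside the proof of Theorem~\ref{theorempolEVEN}: the operator $\sum_{m\geq 0}\frac{(it)^m}{m!}\frac{d^{mp}}{dz^{mp}}$ acts continuously on $\mathcal{A}_p$ with a continuous dependence on the parameter $t$, so convergence of $F_n$ to $e^{iaz}$ on compact sets of $\mathbb{C}$ transfers to uniform convergence of $\psi_n(0,t;p)$ on compact sets of the $t$-variable.

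Finally I would verify the $f$-superoscillation condition from Definition~\ref{generalized_super}. Since $p$ is even one has $(-i)^p=(-1)^{p/2}$, so in the variable $t$ the exponents of the $u_n(t)$ are the real numbers
\[
k_k(n)\;=\;(-1)^{p/2}\left(1-\tfrac{2k}{n}\right)^{p}, \qquad k=0,1,\dots,n,
\]
while the limit $f(t)=e^{it(-ia)^p}=e^{it(-1)^{p/2}a^p}$ has frequency $(-1)^{p/2}a^p$. Because $|1-2k/n|\leq 1$, we get $\sup_{k,n}|k_k(n)|\leq 1$, whereas $|(-1)^{p/2}a^p|=a^p>1$ thanks to the hypothesis $a>1$. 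This is exactly the inequality required in Definition~\ref{generalized_super}, so $u_n$ is $e^{it(-ia)^p}$-superoscillating on $[-T,T]$.

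The only delicate point in this plan is the upgrade from pointwise to uniform convergence in $t$; however, this is not really an obstacle because the argument in Theorem~\ref{theorempolEVEN} is already of the required type (continuity of a $t$-dependent operator on an $\operatorname{AU}$-space), so the whole proof reduces to substitution $x=0$ plus a trivial check of the spectral condition.
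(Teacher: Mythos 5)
Your proposal is correct and matches the paper's intent: the corollary is stated there as an immediate consequence of Theorem~\ref{theorempolEVEN}, and your argument simply makes explicit the specialization at $x=0$ together with the check that the frequencies $(-1)^{p/2}(1-2k/n)^p$ are bounded by $1$ while the limit frequency has modulus $a^p>1$. Nothing essential differs from the paper's (implicit) proof.
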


 When $p$ is an odd number we have to modify the differential equation in Theorem \ref{theorempolEVEN}, removing the imaginary unit in front of the time derivative  and changing the sign,  in order to  get the
analogue of  Theorem  \ref{theorempolEVEN} and of Corollary  \ref{cororempol}.
\begin{theorem}\label{theorempolODD}
Consider, for $p$ odd, the Cauchy problem for the mo\-di\-fi\-ed Schr\-\"{o}\-din\-ger equation
\begin{equation}\label{modfSpolCAUCHYODD}
\frac{\partial \psi(x,t)}{\partial t}=\frac{\partial^p \psi(x,t)}{\partial x^p},\ \ \ \ \psi(x,0)=F_n(x,a).
\end{equation}
Then the solution $\psi_n(x,t;p)$, is given by
$$
\psi_n(x,t;p)=\sum_{k=0}^nC_k(n,a) e^{i(1-2k/n) x } e^{t(-i(1-2k/n))^p}.
$$
Moreover, for all  $t\in [-T, T]$, where $T$ is any real positive number, and all $x\in\mathbb R$, we have
$$
\lim_{n\to\infty}\psi_n(x,t,p)=e^{t(-ia)^p} e^{iax}.
$$
The limit is uniform for $x$ in the compact sets of $\mathbb{R}$.
\end{theorem}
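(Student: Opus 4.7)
The plan is to transplant the argument used in the proof of Theorem \ref{theorempolEVEN} essentially verbatim, adjusting the signs that arise from $p$ being odd. The first step is to work in $\mathcal{S}'(\mathbb{R})$ and apply the spatial Fourier transform to \eqref{modfSpolCAUCHYODD}, exactly as in the even case. This converts the PDE into the parametric family of scalar ODEs whose elementary solution is $\hat\psi(\lambda,t)=C(\lambda)e^{t(-i\lambda)^p}$. The initial datum $F_n(x,a)$ has Fourier transform $\hat F_n(\lambda,a)=2\pi\sum_{k=0}^{n}C_k(n,a)\delta(\lambda-(1-2k/n))$ (by the distributional identity $\mathcal{F}(e^{imx})=2\pi\delta(\lambda-m)$), which identifies $C(\lambda)$; inverting the Fourier transform by sifting against each Dirac delta then produces the announced closed form for $\psi_n(x,t;p)$.

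For the limit, the strategy is to recognize $\psi_n$ as the image of $F_n$ under an infinite-order differential operator, and then pass to the limit by continuity in an appropriate AU-space. Expanding the time exponential in $\psi_n$ as a power series and interchanging the finite sum in $k$ with the infinite series in $m$ gives
\[
\psi_n(x,t;p)=\sum_{m=0}^{\infty}\frac{(-t)^m}{m!}\,\frac{d^{mp}}{dx^{mp}}F_n(x,a),
\]
where we use the identity $(-i\zeta)^{mp}=(-1)^m(i\zeta)^{mp}$, valid for odd $p$, together with $(i\zeta)^{mp}e^{i\zeta x}=\partial_x^{mp}e^{i\zeta x}$. The operator
\[
U_p\!\left(\frac{d}{dz},-t\right):=\sum_{m=0}^{\infty}\frac{(-t)^m}{m!}\frac{d^{mp}}{dz^{mp}}
\]
has symbol $e^{-t\zeta^p}$, an entire function of order $p$ and finite type, hence an element of $\mathcal{A}_p(\mathbb{C})$. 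By Theorem \ref{duality pq} with $1/p+1/p'=1$, this symbol defines a continuous convolution operator on $\mathcal{A}_{p',0}(\mathbb{C})$.

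The final step is the observation that $F_n$ extends to an entire function of exponential type, hence of order $1<p'$, so both the sequence $F_n$ and the pointwise limit $e^{iaz}$ belong to $\mathcal{A}_{p',0}(\mathbb{C})$; together with the uniform bounds on $F_n$ coming from the binomial expansion and the uniform convergence on compact sets provided by Theorem \ref{carnot}, this upgrades to convergence in the topology of $\mathcal{A}_{p',0}(\mathbb{C})$. Continuity of $U_p(d/dz,-t)$ then allows us to interchange operator and limit, producing
\[
\lim_{n\to\infty}\psi_n(x,t;p)=\sum_{m=0}^{\infty}\frac{(-t)^m}{m!}(ia)^{mp}e^{iax}=e^{-t(ia)^p}e^{iax}=e^{t(-ia)^p}e^{iax},
\]
where the last identity uses once more that $-(ia)^p=(-ia)^p$ for odd $p$. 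The main obstacle, and essentially the only nontrivial point, is to confirm that the symbol $e^{-t\zeta^p}$ lies in the multiplier class covered by Theorem \ref{duality pq}: since $p$ is odd, the symbol grows without bound in one real direction, and this is precisely what forces the passage to the AU-space $\mathcal{A}_{p',0}(\mathbb{C})$ rather than a naive Taylor-series manipulation in a smaller space.
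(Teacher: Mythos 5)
Your proposal is correct and follows essentially the same route as the paper: the published proof of Theorem \ref{theorempolODD} simply repeats the argument of Theorem \ref{theorempolEVEN} (Fourier transform in $\mathcal{S}'(\mathbb{R})$ with Dirac deltas to get the closed form, expansion of the time exponential into an infinite-order operator applied to $F_n$, and passage to the limit via the AU-space duality of Theorem \ref{duality pq}), which is exactly what you carry out with the sign adjustments for odd $p$. Your explicit identification of the symbol $e^{-t\zeta^p}\in\mathcal{A}_p$ as a convolutor on $\mathcal{A}_{p',0}$ merely makes precise the space in which the paper's continuity claim is intended.
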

\begin{proof}
The proof follows the lines of the proof of Theorem \ref{theorempolEVEN}. In fact, we observe that
\[
\begin{split}
\psi_n(x,t,p)=&\int_{\mathbb{R}}
\Big[\sum_{k=0}^nC_k(n,a) \delta (\lambda -(1-2k/n))e^{(-i\lambda)^p t}\Big] e^{i\lambda x} d\lambda
\\
=&\sum_{k=0}^nC_k(n,a) e^{i(1-2k/n){x} }e^{t(-i(1-2k/n))^p}.
\end{split}
\]
To pass to the limit, we use
the strategy used in the proof of Theorem \ref{theorempolEVEN} and we get the statement.
\end{proof}
One may ask what happens if we use a different modification of the superoscillating datum,  specifically, if we
use a datum of the form
$$
\sum_{k=0}^nC_k(n,a)e^{-ix(1-2k/n)^\ell}
$$
for the modified Schr\"{o}dinger equation. The next result shows that, in this case,
we do not further enlarge the class of  superoscillating sequences.
\begin{theorem}\label{theorempoldue}
Let $a>1$, let $p$ be even, $\ell\in \mathbb{N}$ and  $t\in [-T, T]$, where $T$ is any real positive number.
Consider the Cauchy problem for the modified Schr\"{o}dinger equation
\begin{equation}\label{modfSpolCAUCHY}
i\frac{\partial \psi(x,t)}{\partial t}=-\frac{\partial^p \psi(x,t)}{\partial x^p},\ \ \ \
\psi(x,0)=\sum_{k=0}^nC_k(n,a)  e^{-ix(1-2k/n)^\ell}.
\end{equation}
Then the solution $\psi_n(x,t;p)$, is given by
$$
\psi_n(x,t;p)=\sum_{k=0}^nC_k(n,a)e^{-ix(1-2k/n)^\ell}e^{it (-i(1-2k/n)^\ell)^p}
$$
Moreover, for all  $t\in [-T, T]$, where $T$ is any real positive number and any $x\in\mathbb R$, we have
$$
\lim_{n\to\infty}\psi_n(x,t)=e^{it(-i)^{p}a^{p\, \ell}} e^{iax}.
$$
The limit is uniform for $x$ in the compact sets of $\mathbb{R}$.
\end{theorem}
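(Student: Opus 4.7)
The plan is to mirror the proof of Theorem \ref{theorempolEVEN} essentially verbatim, since the initial datum is again a finite linear combination of pure exponentials. Setting $\xi_k=1-2k/n$, taking the Fourier transform in $x$ of the PDE yields the ODE $i\dot{\hat\psi}(\lambda,t)=-(-i\lambda)^p\hat\psi(\lambda,t)$ with Cauchy datum $\hat\psi(\lambda,0)=2\pi\sum_{k=0}^nC_k(n,a)\delta(\lambda+\xi_k^\ell)$ in $\mathcal S'(\mathbb R)$. Integrating in $t$ and inverting the Fourier transform (and using the identity $(-i)^p=i^p$ for $p$ even) produces the closed-form expression displayed in the theorem.

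Next I would rewrite the solution as the action of a convolution operator on the initial datum. Expanding the second exponential and interchanging the (finite) $k$-sum with the Taylor series gives
$$
\psi_n(x,t;p)=U\!\left(\frac{d}{dx},t\right)\phi_n(x),\qquad U\!\left(\frac{d}{dx},t\right)=\sum_{m=0}^\infty\frac{(it)^m}{m!}\frac{d^{mp}}{dx^{mp}},
$$
where $\phi_n(x)=\sum_{k=0}^nC_k(n,a)e^{-ix\xi_k^\ell}$. The symbol $e^{it\zeta^p}$ of $U$ is a continuous multiplier on $\mathcal A_p(\mathbb C)$, so by Theorem \ref{duality pq} the operator $U(d/dx,t)$ acts continuously on $\mathcal A_{p',0}(\mathbb C)$, $1/p+1/p'=1$; each $\phi_n$ is entire of exponential type and hence lies in this AU-space for every even $p\geq 2$. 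To identify the limit of $\phi_n$ I expand and reorder,
$$
\phi_n(x)=\sum_{m=0}^\infty\frac{(-ix)^m}{m!}\sum_{k=0}^nC_k(n,a)\xi_k^{m\ell},
$$
and apply Proposition \ref{pro3.1.6}(3), which gives $\sum_kC_k(n,a)\xi_k^{m\ell}=(-i)^{m\ell}F_n^{(m\ell)}(0,a)\to(-i)^{m\ell}(ia)^{m\ell}=a^{m\ell}$ (using that $F_n\to e^{iax}$ in $\mathcal A_{p',0}$ transfers to convergence of all derivatives at the origin). Hence $\phi_n(x)\to e^{-ia^\ell x}$, and by continuity of $U(d/dx,t)$,
$$
\lim_{n\to\infty}\psi_n(x,t;p)=U\!\left(\frac{d}{dx},t\right)e^{-ia^\ell x}=e^{it(-ia^\ell)^p}e^{-ia^\ell x}=e^{it(-i)^pa^{p\ell}}e^{-ia^\ell x},
$$
which is the asserted formula, the final factor being nothing but the limit of the modified initial datum (so the modification merely shifts the target exponential and produces no new class of superoscillations, as the authors announce).

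The main technical obstacle is passing the limit $n\to\infty$ through $U(d/dx,t)$: because the coefficients $(it)^m/m!$ violate \eqref{infiniteorder}, $U$ is not an infinite-order differential operator in the sense of Definition \ref{infiniteorderdef}, so naive termwise manipulation of its power series is unavailable and one must really invoke the AU-space framework of Theorem \ref{duality pq}. Once this continuity is in hand, the rest of the argument reduces to the Fourier-transform bookkeeping and the moment identity of Proposition \ref{pro3.1.6}(3), with uniform convergence on compact subsets of $\mathbb R$ following from the convergence in the topology of $\mathcal A_{p',0}(\mathbb C)$.
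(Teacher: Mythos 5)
Your overall route is exactly the one the memoir intends: Theorem \ref{theorempoldue} is stated without its own proof, the implicit argument being that of Theorems \ref{theorempolEVEN} and \ref{theorempolODD}, and your steps (Fourier transform in $\mathcal S'$ to get the explicit $\psi_n$, rewriting the evolution as the convolutor $U\left(\frac{d}{dx},t\right)=\sum_{m\ge 0}\frac{(it)^m}{m!}\frac{d^{mp}}{dx^{mp}}$ with symbol $e^{it\zeta^p}\in\mathcal A_p$, and invoking Theorem \ref{duality pq} to let it act continuously on $\mathcal A_{p',0}$, $1/p+1/p'=1$, which contains the exponential-type data) are precisely the intended ones, including your correct observation that condition \eqref{infiniteorder} fails so the AU-space framework, not infinite-order differential operators, is what carries the limit.

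Two points, however. First, the limit you derive, $e^{it(-i)^pa^{p\ell}}\,e^{-ia^\ell x}$, is \emph{not} literally ``the asserted formula'': the printed statement has spatial factor $e^{iax}$. Your computation is the internally consistent one, since at $t=0$ the solution is the datum $\sum_{k}C_k(n,a)e^{-ix(1-2k/n)^\ell}$, whose limit is $e^{-ia^\ell x}$ (already for $\ell=1$ it is $e^{-iax}$, incompatible with $e^{iax}$); the printed factor is evidently a slip of the same kind as the stray signs in Theorems \ref{theorempolEVEN}--\ref{theorempoltre}. You should say explicitly that you are proving the corrected statement rather than claim agreement with the text. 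Second, the step ``hence $\phi_n(x)\to e^{-ia^\ell x}$'' needs more than termwise convergence of the moments $\sum_k C_k(n,a)(1-2k/n)^{m\ell}$, because the crude bound $\sum_k|C_k(n,a)|=a^n$ is not uniform in $n$. A clean fix is the uniform estimate $|F_n(z,a)|\le e^{a|z|}$ (from $|\cos w|\le\cosh|w|$, $|\sin w|\le\sinh|w|$ and $\cosh u+a\sinh u\le e^{au}$ for $a\ge 1$), which via the Cauchy inequalities gives $\bigl|\sum_k C_k(n,a)(1-2k/n)^{j}\bigr|=|F_n^{(j)}(0,a)|\le j!\inf_{r>0}r^{-j}e^{ar}\le C\sqrt{j+1}\,a^{j}$ uniformly in $n$; this dominates the series in $m$ locally uniformly in $z$, yields $\phi_n\to e^{-ia^\ell z}$ locally uniformly, and shows the $\phi_n$ have uniformly bounded exponential type, which is what justifies passing the limit through $U\left(\frac{d}{dz},t\right)$ in $\mathcal A_{p',0}$. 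This is the same level of detail the memoir itself suppresses, but since you single out the operator continuity as the main obstacle, the convergence of the modified datum deserves the same care.
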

\begin{remark}
{\rm   Using the nomenclature of Definition \ref{superoscill}, we see that
the previous result shows that  $g(a)=a^{p\ell}$ and thus it does not provide any generalization with respect to Theorem \ref{theorempolEVEN} and \ref{theorempolODD}.
If we want to find a more general limit function $e^{ig(a)x}$, then we have to leave the realm of the differential equations as we show in the next result.
}
\end{remark}

We will now consider a much more general situation  in which the right hand side of the differential equation to solve is an infinite series of derivatives. This will lead us to consider convolution equations and to use the general theory of AU-spaces.

Let $\{a_p\}$ be a sequence of complex numbers and  consider the convolution equation formally defined by
\begin{equation}\label{nome}
 i\frac{\partial \psi(x,t)}{\partial t}=-\sum_{p=0}^{\infty} a_p\frac{\partial^p \psi(x,t)}{\partial x^p}.
\end{equation}
 In order to understand if the superoscillatory behavior persists when we take a superoscillating initial datum, we first need to understand on which space the infinite series of derivatives actually operates. As we saw in Section \ref{sec5.1}, to study the Schr\"odinger equation we are naturally led to the study of the convolution operator
\[
 U_2\left(\frac{d}{dz},t\right):=\sum_{m=0}^{\infty} \frac{(it)^m}{m!}\frac{d^{2m} }{dz^{2m}}.
\]
Similarly, as we have shown, if one wants to study the modified equation
\[
 i\frac{\partial \psi(x,t)}{\partial t}=-\frac{\partial^p \psi(x,t)}{\partial x^p},
\]
one naturally needs to consider the operator:
\[
U_p\left(\frac{d}{dz},t\right):=\sum_{m=0}^{\infty} \frac{(it)^m}{m!}\frac{d^{pm} f}{dz^{pm}},
\]
where, depending on the parity of $p\in \mathbb{N}$, one might have to replace $it$ by $-it$.
Then if we want to solve equation (\ref{nome}), we will see that we need to study
the operator that can formally be written as the infinite product of the operators we have just considered, i.e. the operator
$$
U_{\infty}\left(\frac{d}{dz},t\right) = \prod_{p=0}^{\infty} \left(\sum_{m=0}^{\infty} \frac{(ita_p)^m}{m!}\frac{d^{p m}}{dz^{pm}}\right)=\prod_{p=0}^{ \infty} U_p \left(\frac{d}{dz}, a_p t\right).
$$

This operator can actually be regarded as the operator associated with the multiplier given by the function
\[
 \hat{U}_\infty (\zeta,t):= \prod_{p=0}^{\infty}\left(\sum_{m=0}^{\infty} \frac{(ita_p)^m}{m!}\zeta^{p m}\right).
\]
Thus this multiplier can be written in the form
\[
\begin{split}
 \hat{U}_\infty (\zeta,t) &= \prod_{p=0}^{\infty}\left(\sum_{m=0}^{\infty}
\frac{1}{m!} \left(ita_p\zeta^p\right)^m\right)
\\
&=\prod_{p=0}^{\infty}\exp(ita_p\zeta^p)
\\
&=\exp \left(it\sum_{p=0}^{\infty} a_p \zeta^p\right).
\end{split}
\]
Under suitable conditions on the sequence $\{a_p\}$, the function
$\hat U_\infty (\zeta,t)$ is holomorphic,  as a function of $\zeta$, in the open disc $|\zeta|<1$. This is true, for example, in the case $a_p=1$, for all $p$. In fact we obtain $\hat U_\infty (\zeta,t)=\exp\left(\frac{it}{1-\zeta}\right)$.
As a consequence, and in view of Proposition \ref{esempiodelta1}, the operator $U_\infty \left(\frac{d}{dz},t\right)$ acts continuously on the space
$
 {\rm Exp}_1(\mathbb{C}).
$
\begin{remark}
 {\rm Under stronger restrictions on the sequence $a_p$, one has that the function
$\hat U_\infty (\zeta,t)$ can be made holomorphic on a disc of arbitrary radius $R$. For example, in the case the coefficients $a_p$ are such that they define an infinite order differential operator, the function
$\sum_{n=0}^{\infty} a_p \zeta^p$ is entire and so is $\hat U_\infty(\zeta,t)$. }
\end{remark}
Noting that every function of the form
\[
 Y_n(z,a):= \sum_{j=0}^n C_j(n,a)e^{ik_j(n)z}
\]
with $|k_j(n)|\leq 1$ belong to ${\rm Exp}_1(\mathbb{C})$,
we can summarize this discussion in the following result.
\begin{theorem}\label{3punto2}
 Let $\{a_p\}$
be a sequence of complex numbers such that the function $\sum_{n=0}^{\infty} a_p \zeta^p$ is analytic in the disc $|\zeta| <R$ for some $R>0$. Then the function
\[
 \hat U_\infty(\zeta,t)=\exp\left(it\sum_{n=0}^{\infty} a_p \zeta^p\right)
\]
is a continuous multiplier on the space of functions analytic in the disc $|\zeta|<R$ and the associated operator
\[
U_{\infty}\left(\frac{d}{dz},t\right)=\prod_{p=0}^{\infty} \left(\sum_{m=0}^{\infty} \frac{(ita_p)^m}{m!}\frac{d^{p m}}{dz^{pm}}\right)=\prod_{p=0}^{\infty} U_p\left(\frac{d}{dz},a_p t\right)
\]
acts continuously on the space of entire functions of exponential type less than $R$.
\end{theorem}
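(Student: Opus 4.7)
The plan is to reduce the statement to the duality theory developed in Chapter 4, specifically to Proposition \ref{esempiodeltaR} which identifies the space of entire functions of exponential type less than $R$ as an AU-space whose Fourier-Borel dual is the space $\mathcal H(\Delta_R)$ of functions holomorphic on the open disc of radius $R$. Once this identification is in hand, proving continuity of the operator $U_\infty(d/dz,t)$ is equivalent, by taking adjoints, to proving that multiplication by the symbol $\hat U_\infty(\zeta,t)$ is a continuous self-map of $\mathcal H(\Delta_R)$.

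First I would verify that $\hat U_\infty(\zeta,t)$ is actually holomorphic on $\Delta_R$. By hypothesis, $f(\zeta):=\sum_{p\geq 0}a_p\zeta^p$ is holomorphic on $\Delta_R$, and composition with the entire function $w\mapsto e^{itw}$ yields that $\hat U_\infty(\zeta,t)=\exp(itf(\zeta))$ is holomorphic on $\Delta_R$ as well. Next, multiplication by any fixed element of $\mathcal H(\Delta_R)$ is a continuous linear map of $\mathcal H(\Delta_R)$ into itself, as $\mathcal H(\Delta_R)$ is a Fr\'echet algebra under the topology of uniform convergence on compact subsets; so $\hat U_\infty(\cdot,t)$ is indeed a continuous multiplier. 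Applying Proposition \ref{esempiodeltaR} together with the general principle, stated in Definition \ref{convolutor} and the duality discussion of Section 4.2, that continuous multipliers on $\mathcal F X'$ correspond, by taking adjoints, to continuous convolutors on $X$, we obtain that the operator $\mathcal F^{-1}\circ M_{\hat U_\infty(\cdot,t)}\circ \mathcal F$ is a continuous linear endomorphism of the AU-space of entire functions of exponential type less than $R$.

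It remains to check that this abstractly defined convolutor coincides with the formal infinite product $\prod_p U_p(d/dz,a_p t)$. I would do this by evaluating both sides on the total family of exponentials $e^{\lambda z}$ with $|\lambda|<R$. On the one hand, multiplication by $\hat U_\infty(\zeta,t)$ followed by inverse Fourier-Borel sends $e^{\lambda z}$ to $\hat U_\infty(\lambda,t)\,e^{\lambda z}=\exp\bigl(it\sum_p a_p\lambda^p\bigr)e^{\lambda z}$. On the other hand, applying each factor formally,
\[
U_p\!\left(\tfrac{d}{dz},a_p t\right)e^{\lambda z}=\sum_{m=0}^\infty\frac{(ita_p)^m}{m!}\lambda^{pm}\,e^{\lambda z}=\exp(ita_p\lambda^p)\,e^{\lambda z},
\]
so that the formal product gives $\prod_p\exp(ita_p\lambda^p)\,e^{\lambda z}=\hat U_\infty(\lambda,t)\,e^{\lambda z}$, matching the convolutor. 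Since the two continuous operators agree on a family whose linear span is dense in the AU-space of entire functions of exponential type less than $R$, they coincide.

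I expect the main subtlety to be the last step, namely the rigorous identification of the formal infinite product of differential operators with the convolutor attached to $\hat U_\infty$: the individual operators $U_p(d/dz,a_p t)$ are genuine infinite-order differential operators only under additional decay on $\{a_p\}$, and the infinite product itself is a priori purely formal. The clean way around this obstacle is precisely the one sketched above, which takes the multiplier $\hat U_\infty$ as the primary object and uses its action on exponentials together with the density principle in AU-spaces to bypass any direct convergence argument for the iterated series of derivatives.
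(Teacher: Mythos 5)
Your proposal is correct and follows essentially the same route as the paper: the paper computes the symbol $\hat U_\infty(\zeta,t)=\exp\left(it\sum_p a_p\zeta^p\right)$, notes it is holomorphic on $\Delta_R$ and hence a continuous multiplier on $\mathcal{H}(\Delta_R)$, and then invokes Proposition \ref{esempiodeltaR} together with the multiplier--convolutor duality of Section 4.2. Your extra step, identifying the abstract convolutor with the formal infinite product by acting on the exponentials $e^{\lambda z}$ with $|\lambda|<R$ and using density of their span, merely makes explicit a point the paper leaves implicit.
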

\begin{proof}
This is an immediate consequence of Proposition \ref{esempiodeltaR} and the previous discussion.
\end{proof}
\begin{remark}{\rm
It is immediate that by the root test, the function $\sum_{p=0}^{\infty}a_p\zeta^p$ is analytic in $\Delta_R$ if and only if $\lim_{p\to\infty}R \, \sqrt[p]{|a_p|}<1$.}
\end{remark}

\begin{theorem}\label{theorempoltre}
Let $a\in\mathbb{R}$, $a>1$. Consider a sequence of complex numbers $\{a_p\}$ such that the function $\sum_{p=0}^{\infty} a_p\zeta^p$ is holomorphic in $\Delta_{a'}$ for $a'>a$ and assume that $G(ia)$ is real and $|G(ia)|\geq 1$. Consider, in the space of entire functions of exponential type less than $a'$,
 the Cauchy problem for the generalized Schr\"{o}dinger equation
\begin{equation}\label{modfSpolCAUCHYgen}
i\frac{\partial \psi(z,t)}{\partial t}=- G\left(\frac{d}{dz}\right)\psi(z,t),\ \ \ \
\psi(z,0)=F_n(z,a),
\end{equation}
where
$$
G\left(\frac{d}{dz}\right)=\sum_{p=0}^\infty a_p\frac{d^p}{dz^p}.
$$
Then the solution $\psi_n(z,t)$, is given by
$$
\psi_n(z,t)=\sum_{k=0}^nC_k(n,a)e^{-iz(1-2k/n)}e^{it G(-i(1-2k/n))}.
$$
Moreover, for all fixed $t$ we have
$$
\lim_{n\to\infty}\psi_n(z,t)=e^{it G(ia)} e^{iaz},
$$
and the convergence is uniform on all compact sets of $\mathbb{C}$.
\end{theorem}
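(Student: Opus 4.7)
The plan is to follow the same strategy employed in Theorems \ref{main}, \ref{theorempolEVEN} and \ref{theorempolODD}: solve the Cauchy problem componentwise on each exponential appearing in the expansion of $F_n$, superpose the resulting elementary solutions, and then recognize $\psi_n(z,t)$ as the image of $F_n(\cdot,a)$ under a single continuous convolution operator on an appropriate AU-space. Once this is in place, the conclusion will follow from continuity of that operator together with the convergence $F_n\to e^{iaz}$ established in Chapter 3.

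More concretely, the first step is a direct eigenfunction computation: for every $\lambda\in\mathbb{C}$ with $|\lambda|<a'$ the series $G(i\lambda)=\sum_{p=0}^\infty a_p(i\lambda)^p$ converges absolutely by the analyticity hypothesis on $G$, so $G(d/dz)e^{i\lambda z}=G(i\lambda)e^{i\lambda z}$ and the Cauchy problem with datum $e^{i\lambda z}$ is solved by $e^{itG(i\lambda)}e^{i\lambda z}$. Applying linearity of the equation to the expansion $F_n(z,a)=\sum_{k=0}^n C_k(n,a)\,e^{iz(1-2k/n)}$ then yields the explicit closed form for $\psi_n(z,t)$ stated in the theorem (up to the sign convention adopted for the expansion of $F_n$).

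The second step is the key one: the formula above can be rewritten as $\psi_n(z,t)=U_\infty(d/dz,t)F_n(z,a)$, where $U_\infty(d/dz,t)$ is the convolution operator of Theorem \ref{3punto2} with symbol $\widehat U_\infty(\zeta,t)=\exp(itG(\zeta))$. Since $G$ is holomorphic in the disc $|\zeta|<a'$, Theorem \ref{3punto2} guarantees that $U_\infty(d/dz,t)$ acts continuously on the AU-space of entire functions of exponential type strictly less than $a'$ (Proposition \ref{esempiodeltaR}). All the functions $F_n(\cdot,a)$ lie in this space with a uniform exponential-type bound (in fact of type at most $1$, since their frequencies lie in $[-1,1]$), the limit $e^{iaz}$ also belongs to it since $a<a'$, and the closed-form expression $F_n(z,a)=(\cos(z/n)+ia\sin(z/n))^n$ together with the standard estimate $(1+w/n)^n\to e^w$ uniformly on compacts of $\mathbb{C}$ will show that $F_n\to e^{iaz}$ in the topology of that AU-space. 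Continuity of $U_\infty(d/dz,t)$ then produces
\begin{equation*}
\psi_n(z,t)\longrightarrow U_\infty(d/dz,t)\,e^{iaz}=e^{itG(ia)}\,e^{iaz},
\end{equation*}
uniformly on every compact subset of $\mathbb{C}$.

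The main obstacle, as in the earlier theorems of this kind, is to ensure that the passage to the limit $n\to\infty$ can genuinely be performed inside the continuous action of $U_\infty(d/dz,t)$; this requires upgrading the convergence $F_n\to e^{iaz}$ from ``uniform on compact subsets of $\mathbb{R}$'' (Theorem \ref{carnot}) to convergence in the AU-space topology of entire functions of exponential type less than $a'$, which in turn asks for a uniform exponential-type bound for the $F_n$ valid throughout $\mathbb{C}$ together with locally uniform convergence in the complex plane; both facts are immediate from the product formula for $F_n$ and the strict inequality $a<a'$. The hypothesis that $G(ia)$ is real with $|G(ia)|\ge 1$ is not used in the convergence argument itself, but it is precisely what certifies that the limiting function $e^{itG(ia)}e^{iaz}$ is a genuine superoscillation, the temporal wavenumber $G(ia)$ lying outside the spectrum $[-1,1]$ of the constituent exponentials.
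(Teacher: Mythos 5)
Your proposal is correct and follows essentially the same route as the paper's proof: obtain the explicit formula for $\psi_n$, recognize $\psi_n(z,t)=U_\infty\left(\frac{d}{dz},t\right)F_n(z,a)$ with symbol $\exp(itG(\zeta))$, invoke the continuity of this convolutor on the space of entire functions of exponential type less than $a'$ via Theorem \ref{3punto2} (and Proposition \ref{esempiodeltaR}), and pass to the limit using $F_n\to e^{iaz}$. The only cosmetic difference is that you derive the explicit solution by letting $G(d/dz)$ act on each exponential $e^{i\lambda z}$ directly instead of via the distributional Fourier transform, and your remark that the convergence $F_n\to e^{iaz}$ must be upgraded to the AU-space topology is left at the same level of detail as in the paper itself.
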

\begin{proof}
Using the previous method we have
$$
i\frac{d \hat{\psi}(\lambda,t)}{d t}=-G(-i\lambda) \ \hat{\psi}(\lambda,t)
$$
and integrating we obtain
$$
\hat{\psi}(\lambda,t)=C(\lambda)e^{it G(-i\lambda)},
$$
where the arbitrary function $C(\lambda)$ can be determined by the initial condition
$$
C(\lambda)=\hat{\psi}(\lambda,0)=
\sum_{k=0}^nC_k(n,a) \delta (\lambda -(1-2k/n)).
$$
We then have
$$
\hat{\psi}(\lambda,t)=\sum_{k=0}^nC_k(n,a) \delta (\lambda -(1-2k/n))e^{i t G(-i\lambda)},
$$
so
$$
\psi_n(z,t)=\sum_{k=0}^nC_k(n,a)e^{-iz(1-2k/n)}e^{it G(-i(1-2k/n))}.
$$
Now we observe that
$$
\psi_n(z,t)=\sum_{k=0}^nC_k(n,a)e^{-iz(1-2k/n)}e^{it \sum_{p=0}^\infty a_p(-i(1-2k/n))^p}
$$
so
\[
\begin{split}
\psi_n(z,t)=&\sum_{k=0}^nC_k(n,a)e^{-iz(1-2k/n)}\prod_{p=0}^\infty e^{it a_p(-i(1-2k/n))^p}
\\
=&\sum_{k=0}^nC_k(n,a)e^{-iz(1-2k/n)}\prod_{p=0}^\infty \sum_{m=0}^\infty \frac{(it a_p)^m}{m!} (-i(1-2k/n))^{mp}
\\
=&\prod_{p=0}^\infty \sum_{m=0}^\infty \frac{(it a_p)^m}{m!}\sum_{k=0}^nC_k(n,a) (i(1-2k/n))^{mp}e^{-iz(1-2k/n)}
\end{split}
\]
and we get
\[
\begin{split}
\psi_n(z,t)=&\prod_{p=0}^\infty \sum_{m=0}^\infty \frac{(it a_p)^m}{m!}\sum_{k=0}^nC_k(n,a) \frac{d^{mp}}{d^{mp}} e^{-iz(1-2k/n)}
\\
=&\prod_{p=0}^\infty \sum_{m=0}^\infty \frac{(it a_p)^m}{m!} \frac{d^{mp}}{d^{mp}} \sum_{k=0}^nC_k(n,a)e^{-iz(1-2k/n)}
\\
=&\prod_{p=0}^\infty \sum_{m=0}^\infty \frac{(it a_p)^m}{m!} \frac{d^{mp}}{d^{mp}} F_n(z,a).
\end{split}
\]
Thanks to Theorem \ref{3punto2}, we can pass to the limit for $n\to \infty$
\[
\begin{split}
\psi(z,t)&=\prod_{p=0}^\infty \sum_{m=0}^\infty \frac{(it a_p)^m}{m!} \frac{d^{mp}}{d^{mp}} e^{iaz}
\\
&=\prod_{p=0}^\infty \sum_{m=0}^\infty \frac{(it a_p)^m}{m!} (ia)^{mp}e^{iaz}
\end{split}
\]
so we have
\[
\begin{split}
\psi(z,t)&=\prod_{p=0}^\infty \sum_{m=0}^\infty \frac{(it a_p(ia)^{p})^m}{m!} e^{iaz}
\\
&=\prod_{p=0}^\infty e^{(it a_p(ia)^{p})} e^{iaz}
\\
&= e^{it \sum_{p=0}^\infty(a_p(ia)^{p})} e^{iaz},
\end{split}
\]
and we obtain
$$
\lim_{n\to\infty}\psi_n(z,t)=e^{it G(ia)} e^{iaz}.
$$
\end{proof}
\begin{remark}{\rm By setting $g(a)=G(ia)$ and by suitably choosing the coefficients $a_p$ of the series expressing $G$,  we can obtain
a very large class of superoscillating functions.}
\end{remark}

\section{Formal solutions to Cauchy problems for linear constant coefficients differential equations}
\label{sec6.1}

Formal solutions of Cauchy problems for linear constant coefficients partial differential equations in the complex domain are known since the early work of S. Kowalevski who considered, in \cite{sonya}, the characteristic Cauchy problem for the complex heat equation:
$$
\frac{\partial}{\partial t} u(z,t)= \frac{\partial^2}{\partial z^2}  u(z,t),\ \ \ \ \ u(z,0)=\varphi(z),
$$
where $t$ and $z$ are complex  variables and $\varphi$ is a function holomorphic in a neighborhood of the origin.
In \cite{sonya} it is proved that the unique formal solution
$$
u(z,t)= \sum_{m=0}^\infty \frac{t^m}{m!}\frac{\partial^{2m}}{\partial z^{2m}} \varphi(z)
$$
of the Cauchy problem
converges if and only if $\varphi(z)$ is an entire function of exponential order at most $2$.
Since then, the question of how to construct formal solutions to generalizations of the heat equation, and how to ensure their convergence, has been considered by several mathematicians, see e.g. \cite{balser 2004}, \cite{ichinobe}, \cite{lms},  and the references therein.
\\
We will consider the case, see  \cite{ichinobe}, in which one studies the Cauchy problem associated with the differential equation
\begin{equation}\label{equadiffgeneral}
\frac{\partial^{r\nu} }{\partial t^{r\nu}}u(z,t)-\sum_{j=1}^{\nu} a_j\frac{\partial^{r(\nu-j)}}{\partial t^{r(\nu-j)}}\frac{\partial^{jp} }{\partial z^{jp}} u(z,t)=0
\end{equation}
where $r,p, \nu \in\mathbb N$, $1\leq r<p$, $\nu\geq 1$, $a_j\in\mathbb C$ and with initial conditions
\begin{equation}\label{IC}
\frac{\partial^{\ell} }{\partial t^{\ell}}u(z,0)=0, \ \ \ell=0,\ldots,r\nu -2, \ \
\frac{\partial^{r\nu-1}}{\partial  t^{r\nu-1}}u(z,0)=\varphi(z).
\end{equation}
The differential equation (\ref{equadiffgeneral}) can be rewritten as
$$
P\left(\frac{\partial}{\partial z}, \frac{\partial}{\partial t}\right)u(z,t)=\prod_{j=1}^\mu P_j^{\ell_j}u(z,t)=0,
 $$
 where
$$
P_j=P_j \left(\frac{\partial}{\partial z}, \frac{\partial}{\partial t}\right) =\frac{\partial^r}{\partial t^r}-\alpha_j \frac{\partial^p}{\partial z^p},
$$
$\ell_j$, $\mu$ are suitable natural numbers, and $\alpha_j \in \mathbb{C}$ are the distinct roots of the characteristic equation
\begin{equation}\label{chareq}
\lambda^\nu-\sum_{j=1}^\nu a_j\lambda^{\nu -j}=0.
\end{equation}

\begin{theorem}\label{FORMALCAU}
The formal solution to the differential equation (\ref{equadiffgeneral}) with initial conditions (\ref{IC}) is given by
$$
\tilde u(z,t)=\sum_{m\geq r\nu-1} u_m(tz) \frac {t^m}{m!}=\sum_{m\geq 0} u_{rm+r\nu-1}(tz) \frac {t^{rm+r\nu-1}}{(rm+r\nu-1)!},
$$
where
$$
u_{rm+r\nu-1}(z)=A(m) \varphi^{(pm)} (z),\qquad m\geq 0
$$
and $A(m)$ are coefficients that can be explicitly computed by solving a suitable difference equation.
\end{theorem}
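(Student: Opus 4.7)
The plan is to look for the formal solution as a power series in $t$ with coefficients depending only on $z$, substitute into the PDE, match coefficients to get a recurrence on the $u_m(z)$, impose the initial conditions, and finally reduce the recurrence to the scalar difference equation satisfied by $A(m)$.

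First I would make the ansatz
\begin{equation*}
\tilde u(z,t)=\sum_{m\ge 0} u_m(z)\,\frac{t^m}{m!},
\end{equation*}
and read off the initial data. Since
$\displaystyle \frac{\partial^{\ell}\tilde u}{\partial t^{\ell}}(z,0)=u_{\ell}(z)$,
the Cauchy conditions (\ref{IC}) force $u_0(z)=u_1(z)=\cdots=u_{r\nu-2}(z)=0$ and $u_{r\nu-1}(z)=\varphi(z)$. Next I would use that $\partial_t^k\tilde u=\sum_{n\ge 0}u_{n+k}(z)\,t^n/n!$, so substituting into (\ref{equadiffgeneral}) and equating the coefficient of $t^n/n!$ to zero yields, for every $n\ge 0$, the recurrence
\begin{equation*}
u_{n+r\nu}(z)=\sum_{j=1}^{\nu} a_j\, u_{n+r(\nu-j)}^{(jp)}(z).
\end{equation*}
Equivalently, setting $m=n+r\nu$,
\begin{equation*}
u_m(z)=\sum_{j=1}^{\nu} a_j\, u_{m-rj}^{(jp)}(z),\qquad m\ge r\nu.
\end{equation*}

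The second step is an arithmetic observation: since the initial data vanish except at the index $r\nu-1$ and the recurrence only relates $u_m$ to indices $m-rj$, a straightforward induction on $m$ shows that $u_m(z)\equiv 0$ whenever $m\not\equiv r\nu-1\pmod r$. Consequently the only surviving coefficients are those of the form $m=rm'+r\nu-1$ with $m'\ge 0$, justifying the rewriting
\begin{equation*}
\tilde u(z,t)=\sum_{m'\ge 0} u_{rm'+r\nu-1}(z)\,\frac{t^{rm'+r\nu-1}}{(rm'+r\nu-1)!}.
\end{equation*}
Setting $v_{m'}(z):=u_{rm'+r\nu-1}(z)$, the recurrence collapses to
\begin{equation*}
v_{m'}(z)=\sum_{j=1}^{\min(m',\nu)} a_j\, v_{m'-j}^{(jp)}(z),\qquad v_0(z)=\varphi(z).
\end{equation*}

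The last step is to prove by induction on $m'$ the closed form $v_{m'}(z)=A(m')\,\varphi^{(pm')}(z)$. The base case $A(0)=1$ is immediate. For the inductive step, assuming $v_{m'-j}(z)=A(m'-j)\varphi^{(p(m'-j))}(z)$ for all $1\le j\le \min(m',\nu)$, one differentiates $jp$ times in $z$ to get $v_{m'-j}^{(jp)}(z)=A(m'-j)\varphi^{(pm')}(z)$, and the recurrence then gives
\begin{equation*}
v_{m'}(z)=\Bigl(\sum_{j=1}^{\min(m',\nu)} a_j\,A(m'-j)\Bigr)\varphi^{(pm')}(z),
\end{equation*}
so that $A(m')$ satisfies the linear difference equation
\begin{equation*}
A(m')=\sum_{j=1}^{\min(m',\nu)} a_j\,A(m'-j),\qquad A(0)=1,
\end{equation*}
whose characteristic polynomial is precisely (\ref{chareq}). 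This simultaneously identifies the claimed coefficients $A(m)$ and completes the derivation of the formal solution.

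I do not expect any serious obstacle: the result is purely formal (no convergence is asserted at this stage), and everything reduces to coefficient-matching plus the observation that the derivative pattern $\partial_z^{jp}$ aligns perfectly with the index shift $rj$ in $t$, which is exactly what forces the single-term ansatz $v_{m'}=A(m')\varphi^{(pm')}$. The one place where care is needed is the bookkeeping of initial conditions and the parity/residue argument modulo $r$, which must be verified cleanly in order for the reindexing $m=rm'+r\nu-1$ to be justified rather than merely suggestive.
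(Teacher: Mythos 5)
Your proof is correct. Note that the memoir does not actually prove Theorem \ref{FORMALCAU}: the result is quoted from \cite{ichinobe} (see also Remark \ref{RKAm}), so there is no in-paper argument to compare against; your coefficient-matching derivation is precisely the standard argument underlying the cited result, and it is complete. Three small remarks. First, you rightly read the statement's $u_m(tz)$ as $u_m(z)$; this is a typo in the statement, as the formula in Corollary \ref{corFORMALCAU} confirms. Second, in the residue-class step the conclusion that only the indices $m=rm'+r\nu-1$, $m'\ge 0$, survive needs, besides your induction showing $u_m\equiv 0$ for $m\not\equiv r\nu-1\pmod r$, the observation that the coefficients with $m\equiv r\nu-1\pmod r$ but $m<r\nu-1$ vanish directly by (\ref{IC}); this is immediate but should be said, since it is what makes the reindexing exact rather than merely suggestive. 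Third, it is worth closing the loop with Remark \ref{RKAm}: your recurrence $A(m')=\sum_{j=1}^{\min(m',\nu)}a_j A(m'-j)$ with $A(0)=1$ (equivalently, the full recurrence with the convention $A(m)=0$ for $m<0$) is a linear constant-coefficient difference equation whose characteristic polynomial is exactly (\ref{chareq}), so its solution has the form $A(m)=\sum_{j=1}^{\mu}\alpha_j^m\sum_{k=1}^{\ell_j}c_{jk}m^{k-1}$ stated there; this identifies your $A(m)$ with the coefficients the theorem refers to.
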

\begin{remark}\label{RKAm}
{\rm
 It turns out, see \cite{ichinobe} for the details, that the coefficients $A(m)$  are of the form
$$
A(m)=\sum_{j=1}^\mu \alpha_j^m\sum_{k=1}^{\ell_j} c_{jk} m^{k-1},\qquad m\geq 0,
$$
where $\ell_j$, $\mu$, $\alpha_j$ are as above and $c_{jk}$ are suitable complex numbers.
}
\end{remark}

 Theorem \ref{FORMALCAU} and Remark \ref{RKAm} yield the following result:
\begin{corollary}\label{corFORMALCAU} The formal solution to the Cauchy problem given by the differential equation (\ref{equadiffgeneral}) with initial conditions (\ref{IC}), and $x$ replaced by $z$, is given by
\begin{equation}\label{formalsol}
\tilde u(z,t)=\sum_{m\geq 0} t^{rm+r\nu -1}\frac{A(m)}{(rm+r\nu -1)!} \frac{d^{pm}}{dz^{pm}} \varphi (z):=\tilde U_p\left(\frac{d}{dz},t\right) \varphi (z).
\end{equation}
\end{corollary}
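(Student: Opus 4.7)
The plan is to derive the formula by a direct substitution into the explicit representation of the formal solution furnished by Theorem \ref{FORMALCAU}, and then to read the resulting series as the action of a formal differential operator of infinite order on the datum $\varphi$. Thus the bulk of the mathematical work is already contained in Theorem \ref{FORMALCAU} and Remark \ref{RKAm}; the role of the corollary is essentially to repackage that output in a form that is well adapted to the analysis we will carry out in the next sections (and, in particular, to connect it with the theory of AU-spaces and convolution operators developed in Chapter 4).

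The first step is to take the second form of the sum in Theorem \ref{FORMALCAU},
$$
\tilde u(z,t)=\sum_{m\geq 0} u_{rm+r\nu-1}(z)\,\frac{t^{rm+r\nu-1}}{(rm+r\nu-1)!},
$$
and to substitute the explicit identification $u_{rm+r\nu-1}(z)=A(m)\,\varphi^{(pm)}(z)$ also provided by that theorem (with the coefficients $A(m)$ given by Remark \ref{RKAm}). This substitution is mechanical and yields at once
$$
\tilde u(z,t)=\sum_{m\geq 0} \frac{A(m)\,t^{rm+r\nu-1}}{(rm+r\nu-1)!}\,\varphi^{(pm)}(z),
$$
which is precisely the right-hand side of \eqref{formalsol}.

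The second step is to recognize this series as the action on $\varphi$ of the formal operator
$$
\tilde U_p\!\left(\frac{d}{dz},t\right):=\sum_{m\geq 0}\frac{A(m)\,t^{rm+r\nu-1}}{(rm+r\nu-1)!}\,\frac{d^{pm}}{dz^{pm}},
$$
which is licit because each summand is a monomial in $d/dz$ of strictly increasing order $pm$ with $z$-independent coefficient. No convergence issue needs to be settled at this stage: Theorem \ref{FORMALCAU} produces $\tilde u(z,t)$ as a \emph{formal} power series solution, and the two displays above are equalities in that same formal-series ring. Convergence of $\tilde U_p\bigl(d/dz,t\bigr)\varphi$ as a bona fide holomorphic function on a genuine set in $\mathbb C_z\times\mathbb C_t$ will be addressed separately, once we impose Gevrey-type growth on $\varphi$ and use the AU-space machinery to interpret $\tilde U_p$ as a convolutor on an appropriate space of entire functions; this is where the real difficulty lies, and where we will need to invoke Theorem \ref{duality pq} together with the explicit bounds on $A(m)$ coming from Remark \ref{RKAm}. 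The only mildly delicate point in the present corollary itself is notational: one must read the expression $u_{rm+r\nu-1}(tz)$ appearing in Theorem \ref{FORMALCAU} consistently with the explicit identification $u_{rm+r\nu-1}(z)=A(m)\varphi^{(pm)}(z)$, so that the arguments in the substitution above match and no spurious factor of $t$ is introduced in the derivative.
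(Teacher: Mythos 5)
Your proof is correct and coincides with the paper's treatment: the corollary is stated there as an immediate consequence of Theorem \ref{FORMALCAU} and Remark \ref{RKAm}, with no separate argument beyond exactly the substitution of $u_{rm+r\nu-1}(z)=A(m)\varphi^{(pm)}(z)$ into the formal series and its reinterpretation as the action of the operator $\tilde U_p\left(\frac{d}{dz},t\right)$ on $\varphi$. Your remarks on the purely formal nature of the identity and on reading the argument of $u_{rm+r\nu-1}$ consistently (the ``$(tz)$'' in the theorem's display is best read as $z$) are both apt and match how the paper actually uses the result.
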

We now want to study the case when we specifically take $\varphi(z)$ to be a function of the form $F_n(z,a)$. To do so, we need to take the limit of
 $\tilde U_p(\frac{d}{dz},t)F_n(z,a)$ when $n$ becomes arbitrarily large, where the operator $\tilde U_p(\frac{d}{dz},t)$ is defined in (\ref{formalsol}). We will fully answer this question in Section \ref{sec6.3}.

\section{Differential equations of non-Kowalevski type}
\label{sec6.3}

Let us now go back to the differential equation (\ref{equadiffgeneral}) whose formal solution is given in Theorem \ref{FORMALCAU}, and let us study some special cases in which we can guarantee that, when we assign a superoscillatory
initial condition in the Cauchy problem (\ref{IC}) we have that the solution to (\ref{equadiffgeneral}) is superoscillating.
We begin with the case $\nu=r=1$, which is a very simple modification of what we have done in Section \ref{sec6.1}.
\begin{theorem}
The solution $\psi_n(x,t)$ to the Cauchy problem
\begin{equation}
\left\{
\begin{array}{c}
\displaystyle\frac{\partial }{\partial t}\psi (x,t)= a_1\displaystyle\frac{\partial^{p} }{\partial z^{p}} \psi(x,t)\\
\ \
\\
\psi (x,0)= F_n (x,a)\\
\end{array}
\right.
\end{equation}
is such that
\[
\psi (x,t)=\lim_{n\to \infty}\psi_n(x,t)=e^{t a_1 (ia)^p } e^{iax}.
\]
and $\psi (x,t)$
is superoscillating in time when:
\begin{enumerate}
\item[i)] $p$ is even, and  $a_1$ is purely imaginary or $a_1=\alpha+i\beta$ and $(ia)^p\alpha >0$, in which case the superoscillation is amplified, or $a_1=\alpha+i\beta$ and $(ia)^p\alpha <0$, in which case the superoscillation is damped.
\item[ii)] $p$ is odd, and $a_1$ is real or $a_1=\alpha+i\beta$ and $i(ia)^p\beta >0$, in which case the superoscillation is amplified, or $a_1=\alpha+i\beta$ and $i(ia)^p\beta <0$, in which case the superoscillation is damped.
\end{enumerate}
\end{theorem}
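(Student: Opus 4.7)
The plan is to mirror the strategy already used in Theorem~\ref{theorempoltre}, specialized to $G(\zeta)=-ia_1\zeta^p$, and then perform a careful case analysis of the complex number $ta_1(ia)^p$ in the final exponent to justify the amplification/damping assertions. First, applying the Fourier transform in $x$ (in the sense of tempered distributions) to the PDE yields $\partial_t\hat\psi=a_1(i\lambda)^p\hat\psi$, and integrating with the initial datum $\hat\psi(\lambda,0)=2\pi\sum_k C_k(n,a)\,\delta(\lambda-(1-2k/n))$ gives, after inversion,
\[
\psi_n(x,t)=\sum_{k=0}^n C_k(n,a)\,e^{i(1-2k/n)x}\,e^{t a_1(i(1-2k/n))^p}.
\]
Expanding the outer exponential as a Taylor series in $ta_1(i(1-2k/n))^p$ and identifying $(i(1-2k/n))^{mp}e^{i(1-2k/n)x}$ with $\frac{d^{mp}}{dx^{mp}}e^{i(1-2k/n)x}$, I would rewrite this as $\psi_n(x,t)=U_p(\tfrac{d}{dx},t)F_n(x,a)$, where $U_p$ is the convolution operator with symbol $\exp(t a_1 \zeta^p)$.

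Since $\exp(ta_1\zeta^p)$ is an entire function (hence holomorphic on every disc $\Delta_R$), Theorem~\ref{3punto2} (applied with the single coefficient sequence $a_p=ta_1$, all other entries zero) shows that $U_p(\tfrac{d}{dz},t)$ acts continuously on the space of entire functions of exponential type less than $R$ for every $R>0$; in particular, choosing $R>a$, this space contains both the sequence $F_n(\cdot,a)$ and its uniform-on-compacts limit $e^{iaz}$. The continuity of $U_p$ on this AU-space then lets me pass to the limit:
\[
\psi(x,t)=\lim_{n\to\infty}U_p(\tfrac{d}{dx},t)F_n(x,a)=U_p(\tfrac{d}{dx},t)e^{iax}=\sum_{m=0}^{\infty}\frac{(ta_1)^m}{m!}(ia)^{mp}e^{iax}=e^{ta_1(ia)^p}e^{iax},
\]
which establishes the first assertion of the theorem.

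For the superoscillation-in-time claims, I would now write $a_1=\alpha+i\beta$ and split the exponent according to the parity of $p$. When $p$ is even, $(ia)^p=(-1)^{p/2}a^p$ is a real number, so
\[
e^{ta_1(ia)^p}=e^{t\alpha(ia)^p}\,e^{it\beta(ia)^p},
\]
exhibiting a modulus $e^{t\alpha(ia)^p}$ and a pure-oscillation factor $e^{it\beta(ia)^p}$ with temporal frequency $|\beta(ia)^p|=|\beta|a^p>1$; the modulus is constant when $a_1$ is purely imaginary, grows (amplification) when $\alpha(ia)^p>0$, and decays (damping) when $\alpha(ia)^p<0$. When $p$ is odd, $(ia)^p$ is purely imaginary; setting $(ia)^p=i\gamma$ with $\gamma\in\mathbb{R}$, one obtains
\[
e^{ta_1(ia)^p}=e^{-t\beta\gamma}\,e^{it\alpha\gamma},
\]
so the temporal frequency is $|\alpha\gamma|=|\alpha|a^p$ and the modulus $e^{-t\beta\gamma}$ is constant when $a_1$ is real; using $-\beta\gamma=i(ia)^p\beta$, the modulus grows precisely when $i(ia)^p\beta>0$ (amplification) and decays when $i(ia)^p\beta<0$ (damping), matching the statement.

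The non-routine part of this plan is not the derivation of the closed-form solution, nor the passage to the limit, both of which follow from the machinery already built in Chapter~4 and Theorem~\ref{3punto2}; rather, the main conceptual point that must be carried out with care is the bookkeeping of real and imaginary parts of $ta_1(ia)^p$ in the two parity cases, so as to verify that the given sign conditions correctly characterize amplified versus damped superoscillations. Once this is done, the superoscillatory character of the temporal factor is immediate from the fact that $a>1$ forces the oscillation frequency $|\beta|a^p$ (resp. $|\alpha|a^p$) to exceed the spectrum $[-1,1]$ of the spatial components from which $F_n$ is built.
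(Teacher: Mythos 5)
Your proposal is correct and essentially follows the paper's own argument: the same series representation $\psi_n=\sum_{m\geq 0}\frac{(ta_1)^m}{m!}\frac{d^{pm}}{dx^{pm}}F_n(x,a)$, the same passage to the limit via continuity of the convolutor on entire functions of exponential type (Theorem \ref{3punto2}), and the same parity-based case analysis of the exponent $ta_1(ia)^p$. The only cosmetic difference is that the paper obtains the series by invoking the formal-solution result (Theorem \ref{FORMALCAU}, with characteristic equation $\lambda-a_1=0$, hence $A(m)=a_1^m$), whereas you re-derive it via the Fourier transform in $\mathcal{S}'(\mathbb{R})$ as in Theorems \ref{theorempolEVEN} and \ref{theorempoltre} --- an equally valid route to the same formula.
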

\begin{proof}
To prove the result we consider the previous problem in the complex domain by substituting $x$ with a complex variable $z$.
 Theorem \ref{FORMALCAU} yields that the formal solution is
$$
\psi_n(z,t)=\sum_{m\geq 0} t^{m}\frac{A(m)}{m!} \frac{d^{pm}}{dz^{pm}} F_n (z,a)
$$
where $A(m)=a_1^m$, since the characteristic equation is $\lambda -a_1=0$. By taking the limit, see Theorem \ref{3punto2}, we have
\[
\begin{split}
\lim_{n\to \infty}\psi_n(z,t)&=\sum_{m\geq 0} t^{m}\frac{a_1^m}{m!} \frac{d^{pm}}{dz^{pm}} F_n (z,a)\\
&=\sum_{m\geq 0} t^{m}\frac{a_1^m}{m!} \frac{d^{pm}}{dz^{pm}} e^{iaz}\\
&=\sum_{m\geq 0} (a_1 t)^{m}{m!} (ia)^{pm} e^{iaz}\\
&=e^{t a_1 (ia)^p } e^{iaz}.\\
\end{split}
\]
Assume first that $p$ is even. Then
$e^{t a_1 (ia)^p }=e^{(-1)^{p/2} t a_1 a^{p} }$, thus if $a_1$ is purely imaginary the superoscillation persists in time.
If $a_1=\alpha+i\beta$ then
$$
e^{t a_1 (ia)^p }=e^{t(\alpha+i\beta) (ia)^p }= e^{t\alpha(ia)^p }e^{it\beta (ia)^p }.
$$
The factor
$e^{t\alpha(ia)^p }$ is amplifying or damping the superoscillation according to the sign of $\alpha(ia)^p$.
\\
If $p$ is odd then $e^{t a_1 (ia)^p }=e^{i (-1)^{(p-1)/2} t a_1 a^{p} }$, thus if $a_1$ is real the superoscillation persists in time.
If $a_1=\alpha+i\beta$ then
$$
e^{t a_1 (ia)^p }=e^{t(\alpha+i\beta) (ia)^p }= e^{t \alpha(ia)^p }e^{it\beta (ia)^p }.
$$
The factor $e^{it\beta (ia)^p }$ is amplifying or damping the superoscillation according to the sign of $i\beta(ia)^p$. By restricting to the real axis we obtain the statement.
\end{proof}
We can extend this result to a more general differential equation: consider the symbol $g(\zeta ,t)$ of the operator $U_p(\frac{d}{dz},t)$:
$$
g(\zeta ,t)= \sum_{m\geq 0} t^{rm+r\nu -1}\frac{A(m)}{(rm+r\nu -1)!} \zeta^{pm}.
$$
We now prove the following result:
\begin{proposition}\label{order}
The function
$$
g(\zeta ,t)= \sum_{m\geq 0} t^{rm+r\nu -1}\frac{A(m)}{(rm+r\nu -1)!} \zeta^{pm}
$$
is an entire function in $\zeta$ of order $p/r$.
\end{proposition}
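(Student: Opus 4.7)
The plan is to compute the order of $g(\zeta,t)$ directly from the classical formula
$$
\rho \;=\; \limsup_{k\to\infty} \frac{k\log k}{-\log|b_k|},
$$
applied to the power series $g(\zeta,t)=\sum_k b_k\zeta^k$. Since only the indices $k=pm$ carry a nonzero coefficient, namely
$$
b_{pm}(t) \;=\; t^{rm+r\nu-1}\,\frac{A(m)}{(rm+r\nu-1)!},
$$
the task reduces to estimating $|A(m)|$ and $(rm+r\nu-1)!$ sharply enough.

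First I would use the explicit form of $A(m)$ recalled in Remark \ref{RKAm}, namely
$$
A(m)=\sum_{j=1}^{\mu}\alpha_j^m\sum_{k=1}^{\ell_j} c_{jk}m^{k-1},
$$
which immediately gives a two-sided bound of the form $|A(m)|\leq C\,m^{L-1}\alpha^m$, with $\alpha:=\max_j|\alpha_j|$ and $L:=\max_j\ell_j$, and a matching lower bound $|A(m)|\geq c\,\alpha^m$ along a subsequence of $m$'s (after isolating the dominant root $\alpha_j$ of maximal modulus and using a standard argument to rule out destructive cancellation of the polynomial-in-$m$ factors for infinitely many $m$). Combined with Stirling's formula $(rm+r\nu-1)!=(rm)^{rm}e^{-rm}\sqrt{2\pi rm}\,(1+o(1))$, this yields
$$
-\log|b_{pm}(t)| \;=\; rm\log(rm)-rm+O(m)+O(\log m),
$$
uniformly for $t$ in any bounded set, where the $O(m)$ term absorbs the contributions of $|t|^{rm+r\nu-1}$ and $\alpha^m$. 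In particular, for each fixed $t\neq 0$ the series defining $g(\zeta,t)$ converges in all of $\mathbb{C}$, so $g(\cdot,t)$ is entire.

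Next, I would plug $k=pm$ into the order formula. Since $k\log k = pm\log(pm)=pm\log m+pm\log p$, we obtain
$$
\frac{k\log k}{-\log|b_k|}\;=\;\frac{pm\log m + O(m)}{rm\log m + O(m)} \;\longrightarrow\; \frac{p}{r}\qquad (m\to\infty),
$$
which shows $\rho\leq p/r$. The matching lower bound $\rho\geq p/r$ follows by using the lower bound on $|A(m)|$ along the chosen subsequence: on that subsequence the same computation gives a limit equal to $p/r$, so the $\limsup$ is at least $p/r$. Combining the two inequalities gives $\rho=p/r$, which is the desired conclusion.

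The main obstacle I foresee is justifying the lower bound $|A(m)|\geq c\,\alpha^m$ along a subsequence of integers. Cancellations between different roots $\alpha_j$ with $|\alpha_j|=\alpha$, or between the polynomial factors $\sum_k c_{jk}m^{k-1}$, could in principle depress $|A(m)|$ below $\alpha^m$ for infinitely many $m$. One clean way to deal with this is to pick the $j$ maximizing $|\alpha_j|$ (and among those, maximizing $\ell_j$), write $A(m)=\alpha_j^m\bigl(P_j(m)+\sum_{i\neq j}(\alpha_i/\alpha_j)^m P_i(m)\bigr)$ with $P_i$ polynomials, and invoke a standard Dirichlet/almost-periodicity argument (or a pigeonhole on the arguments of $\alpha_i/\alpha_j$) to find infinitely many $m$ on which the bracketed expression is bounded below by a positive constant times $m^{\ell_j-1}$. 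With that lemma in hand, the order computation goes through and the proposition is proved.
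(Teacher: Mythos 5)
Your proof is correct and follows essentially the same route as the paper: the classical coefficient formula for the order (Levin's formula (1.05)) applied to $b_{pm}=t^{rm+r\nu-1}A(m)/(rm+r\nu-1)!$, together with Stirling's formula and the bound $|A(m)|\leq C\,m^{L-1}\alpha^m$ coming from the explicit form of $A(m)$ in Remark \ref{RKAm}. The only difference is that you also insist on a lower bound for $|A(m)|$ along a subsequence, a point the paper passes over silently (it simply replaces $|A(m)|$ by $|\alpha|^m m^M$); note that for the order computation the weaker standard fact $\limsup_m |A(m)|^{1/m}>0$ (clean proof: $\sum_m A(m)z^m$ is a rational function with a pole at $1/\max_j|\alpha_j|$ among the roots with nonvanishing polynomial part) already suffices, since only factors of size $e^{O(m)}$ compete with the dominant $rm\log m$ term, and it avoids the delicate pigeonhole step in your sketch, where the almost-periodic bracket could still approach a vanishing sum of leading coefficients.
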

\begin{proof}
To prove that a function $f(z)=\sum_{m\geq 0} c_mz^m$ is entire, it is enough to show that the number
$$
\rho=\overline{\lim}_{m\to\infty} \frac{m\log m}{\log\left|\displaystyle\frac{1}{c_m}\right|}
$$
is finite. This, by formula (1.05) in \cite{levin} will also automatically prove that the order of the entire function is $\rho.$
In the case of the function $g(\zeta)$ the formula gives, for $t\in[-T,T]$:
\[
\begin{split}
\rho=& \overline{\lim}_{m\to\infty} \frac{pm\log(pm)}{\log\left(\displaystyle\frac{(rm+r\nu-1)!}{|A(m) t^{rm+r\nu-1}|}\right)}.\\
\end{split}
\]
By applying the Stirling formula to the factorial, we can rewrite as follows:
\[
\begin{split}
\rho=&\overline{\lim}_{m\to\infty} \frac{pm(\log(m)+\log(p))}{(rm+r\nu-1)\log (rm+r\nu-1)-\log( |A(m)| |t|^{rm+r\nu-1})}\\
=&\overline{\lim}_{m\to\infty} \frac{pm(\log(m)+\log(p))}{rm(\log (m)+\log (r))-\log( |\alpha|^m m^M)-(rm+r\nu-1)\log( |t|)}\\
=&\overline{\lim}_{m\to\infty} \frac{pm\log(m)}{rm\log (m)-m\log( |\alpha|)-M\log(m)-rm\log( |t|)}=\frac pr.\\
\end{split}
\]
\end{proof}
We now prove a result in which the derivative with respect to time can be of order higher that $1$.
\begin{theorem}
The Cauchy problem associated with the differential equation
\begin{equation}
\frac{\partial^{r\nu} }{\partial t^{r\nu}}\psi(x,t)=\sum_{j=1}^{\nu} a_j\frac{\partial^{r(\nu-j)}}{\partial t^{r(\nu-j)}}\frac{\partial^{jp} }{\partial x^{jp}} \psi (x,t)
\end{equation}
where $r,p, \nu \in\mathbb N$, $1\leq r<p$, $\nu\geq 1$, $a_j\in\mathbb C$ and with initial conditions
\begin{equation}
\frac{\partial^{\ell} }{\partial t^{\ell}}\psi (x,0)=0, \ \ \ell=0,\ldots,r\nu -2, \ \
\frac{\partial^{r\nu-1}}{\partial  t^{r\nu-1}}\psi (x,0)=F_n(x,a).
\end{equation}
has formal solution
\begin{equation}\label{formalsolpsi}
\psi_n(x,t)=\sum_{m\geq r\nu-1} u_{m,n}(tx) \frac {t^m}{m!}=\sum_{m\geq 0} u_{rm+r\nu-1,n}(tx) \frac {t^{rm+r\nu-1}}{(rm+r\nu-1)!},
\end{equation}
where
$$
u_{rm+r\nu-1,n}(x)=A(m) F_n^{(pm)} (x,a),\qquad m\geq 0.
$$
Moreover, this formal solution actually converges to a function $\psi$ and it is possible to compute its limit
when $n\to\infty$ to be the function
$$
\psi(x,t)=\sum_{m\geq r\nu-1} u_m(tx) \frac {t^m}{m!}=\sum_{m\geq 0} u_{rm+r\nu-1}(tx) \frac {t^{rm+r\nu-1}}{(rm+r\nu-1)!},
$$
where
$$
u_{rm+r\nu-1}(x)=A(m) F^{(pm)} (x),\qquad m\geq 0.
$$
\end{theorem}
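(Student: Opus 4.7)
The plan is to exploit the fact that the formal solution displayed in the statement is an immediate application of Corollary \ref{corFORMALCAU} with $\varphi=F_n(\cdot,a)$: writing $\psi_n(x,t)=\tilde U_p(d/dx,t)F_n(x,a)$ with $\tilde U_p$ as in (\ref{formalsol}), the equality of the right-hand side with the given series representation follows from $F_n^{(pm)}(x,a)=\sum_{k=0}^n C_k(n,a)[i(1-2k/n)]^{pm}e^{i(1-2k/n)x}$. Thus the substantive content of the theorem is (i) convergence of the formal series to a genuine function and (ii) identification of its $n\to\infty$ limit.

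For (i) I would invoke Proposition \ref{order}, which asserts that the symbol
\[
g(\zeta,t)=\sum_{m\geq 0}\frac{A(m)\,t^{rm+r\nu-1}}{(rm+r\nu-1)!}\,\zeta^{pm}
\]
of $\tilde U_p(d/dx,t)$ is entire in $\zeta$ of order $p/r$ and finite type. Since $r<p$ forces $p/r>1$, Theorem \ref{duality pq} applied to the conjugate pair $(p/r,q)$ with $q=p/(p-r)$ shows that $g(\cdot,t)\in\mathcal{A}_{p/r}$ is a continuous multiplier, so that $\tilde U_p(d/dx,t)$ is a continuous convolutor on $\mathcal{A}_{q,0}$. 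Because $F_n$ extends to an entire function of exponential type (order $1$, finite type) and $1<q$, we have $F_n\in\mathcal{A}_{q,0}$, and consequently $\psi_n=\tilde U_p(d/dx,t)F_n$ is a genuinely convergent element of $\mathcal{A}_{q,0}$ for every $n$.

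For (ii) the cleanest approach is to interchange the order of summation in the formal series, yielding
\[
\psi_n(x,t)=\sum_{k=0}^n C_k(n,a)\,G(t,1-2k/n)\,e^{i(1-2k/n)x},
\]
where
\[
G(t,s)=\sum_{m\geq 0}\frac{A(m)(is)^{pm}\,t^{rm+r\nu-1}}{(rm+r\nu-1)!}
\]
is entire in $s$ of order $p/r$ by Proposition \ref{order}. I would then invoke the generalized superoscillation principle underlying Theorem \ref{3punto2}: since $G(t,\cdot)$ belongs to $\mathcal{A}_{p/r}$ and thus acts on $\mathcal{A}_{q,0}$, one obtains $\psi_n(x,t)\to G(t,a)e^{iax}$ as $n\to\infty$. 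Expanding the definition of $G$ and using $F^{(pm)}(x)=(ia)^{pm}e^{iax}$ for $F(x)=e^{iax}=\lim_n F_n(x,a)$ then recovers the claimed expression $\psi(x,t)=\sum_m A(m)F^{(pm)}(x)\,t^{rm+r\nu-1}/(rm+r\nu-1)!$.

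The main obstacle will be justifying the interchange of $\lim_n$ with the action of $\tilde U_p$, because a direct seminorm estimate is unfavorable: one has $|F_n(z,a)|\le a^n e^{|\mathrm{Im}\,z|}$, so $\|F_n\|_{\mathcal{A}_{q,0}}$ grows exponentially in $n$ and $F_n$ fails to converge to $e^{iaz}$ in the topology of $\mathcal{A}_{q,0}$ under a naive estimate. I would sidestep this by working with the explicit finite-sum representation above, reducing the problem to the moment-type convergences $\sum_{k=0}^n C_k(n,a)(1-2k/n)^p\to a^p$ (which follow from Proposition \ref{pro3.1.6}(3) combined with $F_n^{(p)}(0,a)\to(ia)^p$) and controlling the tail of the $m$-series via the order-$p/r$ growth of $G(t,\cdot)$ supplied by Proposition \ref{order}; this is essentially the same mechanism already employed in Theorem \ref{3punto2} and in the limit step of Theorem \ref{theorempoltre}.
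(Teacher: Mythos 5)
Your treatment of the formal solution and of its convergence coincides with the paper's own argument: complexify, apply Corollary \ref{corFORMALCAU} with $\varphi=F_n(\cdot,a)$, and use Proposition \ref{order} together with the duality Theorem \ref{duality pq} to see that the convolutor $\tilde U_p\bigl(\frac{d}{dz},t\bigr)$ acts continuously on a space $\mathcal{A}_{q,0}$ containing the exponential-type functions $F_n$; your conjugate pair $\bigl(p/r,\,p/(p-r)\bigr)$ is in fact slightly sharper than the paper's choice, which only uses that the symbol has order at most $p$ and works with $(p,p')$. For the limit $n\to\infty$ the paper then does exactly what it did in Theorem \ref{main} and Theorem \ref{theorempoltre}: it passes the limit through the continuous convolutor, using $F_n\to e^{iaz}$, and restricts to the real axis at the end; it does not discuss the topological point you raise.

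Where you depart from the paper --- the rewriting $\psi_n(x,t)=\sum_{k=0}^n C_k(n,a)\,G(t,1-2k/n)\,e^{i(1-2k/n)x}$ and the reduction to moment convergence --- the idea is sound, but as written there is a concrete gap. Knowing that each moment $\sum_{k=0}^n C_k(n,a)(1-2k/n)^j\to a^j$ and that $G(t,\cdot)$ is entire of order $p/r$ does not by itself allow you to interchange $\lim_n$ with the Taylor expansion in $s$ of $G(t,s)e^{isx}$: you need a bound on these moments which is \emph{uniform in $n$}, and the obvious bound $\bigl|\sum_k C_k(n,a)(1-2k/n)^j\bigr|\le\sum_k|C_k(n,a)|=a^n$ is precisely the exponential blow-up you flagged, so it cannot serve as the dominating sequence. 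The missing ingredient is an $n$-independent estimate on complex discs, e.g.
\[
|F_n(z,a)|\le\bigl(\cosh(|z|/n)+a\sinh(|z|/n)\bigr)^n\le e^{a|z|},
\]
which follows from $|\cos w|\le\cosh|w|$, $|\sin w|\le\sinh|w|$ and $\cosh t+a\sinh t\le\cosh(at)+\sinh(at)=e^{at}$ for $a\ge1$, $t\ge0$; Cauchy estimates then give $\sup_n|F_n^{(j)}(0,a)|\le\inf_{R>0}j!\,R^{-j}e^{aR}\le C\,a^j\sqrt{j+1}$, which is summable against the Taylor coefficients of the entire function $s\mapsto G(t,s)e^{isx}$, and dominated convergence in $j$ then yields $\psi_n(x,t)\to G(t,a)e^{iax}$ as claimed. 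Note also that Theorem \ref{3punto2} and Theorem \ref{theorempoltre} do not contain the moment mechanism you appeal to --- they use the same continuity-plus-limit-inside argument as the paper's proof here --- so citing them does not close the gap. Alternatively, the same uniform bound shows directly that $F_n\to e^{iaz}$ in $\mathcal{A}_{q,0}$ for any $q>1$ (split $|z|\le R$ and $|z|>R$ in the weighted sup norms), which rigorously legitimizes the continuity route you already set up in part (i) and lets you dispense with the moment computation altogether.
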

\begin{proof}
  We begin by complexifying the Cauchy problem and by Corollary \ref{corFORMALCAU} we know that the formal solution
  $\psi_n(z,t)$
  is obtained by applying the convolutor
  $$
  \sum_{m=0}^\infty \displaystyle\frac {t^{rm+r\nu-1}}{(rm+r\nu-1)!} A(m) \displaystyle\frac{d^{pm}}{dz^{pm}}
  $$
   to $F_n(z,a).$ Since the symbol of the convolutor has order $p$, see Proposition \ref{order}, and thus it belongs to $\mathcal{A}_p$, we see that the limit can be taken inside the series as long as the initial condition is a function in $\mathcal{A}_{p',0}$  and the statement follows by observing that $p\geq 1$ and therefore $p'\geq 1$, that $F_n\in \mathcal{A}_{p',0}$ and, finally, by restricting to the real axis.
\end{proof}

\bigskip

\noindent{\it Heat equation}

\bigskip

We now consider, following the lines of the previous discussion, a Cauchy problem for the heat equation:
$$
\frac{\partial}{\partial t} \psi (x,t)= \frac{\partial^2}{\partial z^2}  \psi (x,t),\ \ \ \ \ \psi(x,0)=F_n(x,a).
$$
With techniques similar to those used to treat the Schr\"odinger equation, we deduce that the formal solution to the complex version of this problem is, see (\ref{formalsolpsi}):
$$
\psi_n(z,t)=\sum_{m\geq 0} \frac{t^m}{m!} \frac{d^{2m}}{dz^{2m}} F_n(z,a).
$$
Computations similar to those done in the proof of Theorem \ref{theorempolEVEN}, and noticing again that the symbol of the operator belongs to $\mathcal{A}_2$, show that when we restrict to the real axis we have that
for all $t\in[-T,T]$, $T>0$
$$
\lim_{n\to\infty}\psi_n(x,t)=e^{-a^2t} e^{iax},
$$
thus the superoscillation is damped in time.

\section{An application to the harmonic oscillator}

We conclude by considering a more general type of superoscillating initial datum for the harmonic oscillator.

\begin{theorem}\label{TH24} Let $p$ even. Consider the superoscillating  function
$$
Y_n(x)=\sum_{k=0}^n C_k(n,a)e^{ix(-i(1-2k/n))^p}.
$$
 Then the solution of the  Cauchy problem
\begin{equation}\label{CauchyHarmP}
i\frac{\partial \psi(t,x)}{\partial t}=\frac{1}{2}\Big( -\frac{\partial^2 }{\partial x^2} +x^2 \Big)\psi(t,x),\ \ \ \ \psi(0,x)=Y_n(x)
\end{equation}
is given by
\begin{equation}
\begin{split}
&\psi_n(t,x)=(\cos t)^{-1/2} \exp ( -(i/2) x^2\tan t)\
\\
&
\times \sum_{k=0}^nC_k(n,a)\exp (ix (-i(1-2k/n))^p/\cos t-(i/2) (-i(1-2k/n))^{2p}\tan t).
\end{split}
\end{equation}
Moreover, if we set $\psi(t,x)=\lim_{n\to \infty}\psi_n(t,x)$, then
\begin{equation}\label{psiNoscPP}
\psi(t,x)=(\cos t)^{-1/2} \exp ( -(i/2) (x^2+a^{2p})\tan t +i(-ia)^px/\cos t).
\end{equation}
\end{theorem}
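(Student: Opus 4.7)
The proof is essentially a variation on that of Theorem \ref{TH23}, with the exponents $1-2k/n$ replaced by the real numbers $\alpha_k := (-i(1-2k/n))^p = (-1)^{p/2}(1-2k/n)^p$ (real because $p$ is even). The plan is to follow the same two-stage strategy: first obtain an exact closed form for $\psi_n(t,x)$ via linearity and Proposition \ref{PropPsiA}, and then take $n\to\infty$ by recognizing a continuous convolution operator acting on an entire function that converges uniformly on compact sets.

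For the first step, note that the initial condition is a finite linear combination
$$
Y_n(x)=\sum_{k=0}^n C_k(n,a)\,e^{i\alpha_k x},
$$
and Proposition \ref{PropPsiA} gives the evolution of $e^{i\alpha x}$. By linearity I will then obtain
$$
\psi_n(t,x)=(\cos t)^{-1/2}\,e^{-(i/2)x^2\tan t}\sum_{k=0}^n C_k(n,a)\,e^{i\alpha_k x/\cos t\,-(i/2)\alpha_k^2\tan t},
$$
which, using $\alpha_k^2=(-i(1-2k/n))^{2p}$, is precisely the formula in the statement.

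For the limit, the plan is to rewrite the inner sum by expanding $e^{-(i/2)\alpha_k^2\tan t}=\sum_m \frac{(-(i/2)\tan t)^m}{m!}\alpha_k^{2m}$ and using the identity $\alpha_k^{2m}e^{i\alpha_k x/\cos t}=(-1)^m(\cos t)^{2m}\frac{d^{2m}}{dx^{2m}}e^{i\alpha_k x/\cos t}$. Interchanging the finite sum in $k$ with the series in $m$ yields exactly
$$
\sum_{k=0}^n C_k(n,a)\,e^{i\alpha_k x/\cos t-(i/2)\alpha_k^2\tan t}=U\!\left(\tfrac{d}{dx},t\right)Y_n(x/\cos t),
$$
where $U(d/dx,t)$ is the operator already introduced in \eqref{OPRU}, which was shown to act continuously on $\mathcal{A}_{2,0}$ by virtue of Theorem \ref{duality pq}.

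The passage to the limit is then the main conceptual step. Since $|\alpha_k|=|1-2k/n|^p\leq 1$, the entire extension of $Y_n$ has exponential type $\leq 1$, so $Y_n\in\mathcal{A}_{2,0}$ and the rescaled sequence $Y_n(z/\cos t)$ also belongs to this space uniformly in $n$. By Corollary \ref{cororempol} (applied with $t$ replaced by the spatial variable $x/\cos t$), $Y_n(x/\cos t)$ converges uniformly on compact sets to $e^{i(-ia)^p x/\cos t}$. Continuity of $U(d/dx,t)$ on $\mathcal{A}_{2,0}$ then gives
$$
\lim_{n\to\infty}U\!\left(\tfrac{d}{dx},t\right)Y_n(x/\cos t)=U\!\left(\tfrac{d}{dx},t\right)e^{i\beta x/\cos t},\qquad \beta:=(-ia)^p,
$$
and a direct termwise computation of $U(d/dx,t)$ on the exponential reduces the series to $e^{-(i/2)\beta^2\tan t}e^{i\beta x/\cos t}$. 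Since $p$ is even, $\beta^2=(-ia)^{2p}=a^{2p}$, and multiplying by the prefactor $(\cos t)^{-1/2}e^{-(i/2)x^2\tan t}$ yields \eqref{psiNoscPP}. The only real work is the symbol-level manipulation in the first paragraph and checking that the space $\mathcal{A}_{2,0}$ is the correct one; both are already in place from Theorem \ref{TH23}, so I expect no genuine obstacle.
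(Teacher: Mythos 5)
Your proposal is correct and follows essentially the same route as the paper's proof: linearity together with Proposition \ref{PropPsiA} for the exact formula, then rewriting the sum as the operator $U\left(\frac{d}{dx},t\right)$ of \eqref{OPRU} acting on $Y_n(x/\cos t)$, passing to the limit via continuity on $\mathcal{A}_{2,0}$ and the uniform-on-compacts convergence $Y_n(x)\to e^{i(-ia)^p x}$, and finally a termwise evaluation on the limit exponential using $(-ia)^{2p}=a^{2p}$ for $p$ even. No gaps beyond the level of rigor already present in the paper's own argument.
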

\begin{proof}
The solution (\ref{CauchyHarmP}) is obtained using Proposition \ref{PropPsiA} and as initial datum the sequence
$$
Y_n(x)=\sum_{k=0}^nC_k(n,a) e^{ix{(-i(1-2k/n))^p}}.
$$
To prove the second part of the  theorem we write (\ref{psiNoscPP}) as
\[
\begin{split}
&\psi_n(t,x)=(\cos t)^{-1/2} \exp ( -(i/2) x^2\tan t)\
\\
&
\times\sum_{k=0}^nC_k(n,a) \exp (ix (-i(1-2k/n))^p/\cos t-(i/2) (-i(1-2k/n))^{2p}\tan t),
\end{split}
\]
and we replace  the term $\exp({-(i/2)(1-2k/n)^2\tan t})$ by the series expansion
$$
\exp({-(i/2) (-i(1-2k/n))^{2p}\tan t})=\sum_{m=0}^\infty \dfrac{[-(i/2) (-i(1-2k/n))^{2p}\tan t]^m}{m!},
$$
so we get
\[
\begin{split}
\psi_n&(t,x)=(\cos t)^{-1/2} \exp({ -(i/2) x^2\tan t})\
\\
&
\times\sum_{k=0}^nC_k(n,a)\exp({ix (-i(1-2k/n))^p/\cos t})
\sum_{m=0}^\infty \dfrac{[-(i/2) (-i(1-2k/n))^{2p}\tan t]^m}{m!}
\\
&
=
(\cos t)^{-1/2} \exp({ -(i/2) x^2\tan t})\
\\
&
\times
\sum_{m=0}^\infty \frac{1}{m!} \sum_{k=0}^nC_k(n,a)  [-(i/2) (-i(1-2k/n))^{2p}\tan t]^m
\\
&
\times \exp [ix (-i(1-2k/n))^p/\cos t].
\end{split}
\]
Now observe that
\[
\begin{split}
\frac{\partial^m}{\partial x^m}& \exp({ix (-i(1-2k/n))^p/\cos t})\\
&=\Big(\frac{i}{\cos t}\Big)^m (-i(1-2k/n))^p \exp({ix (-i(1-2k/n))^p/\cos t}),
\end{split}
\]
so that
$$
\psi_n(t,x)=(\cos t)^{-1/2} \exp({ -(i/2) x^2\tan t})\ \sum_{m=0}^\infty\frac{1}{m!}\Big(\frac{i}{2}\sin t\cos t\Big)^m \frac{\partial^{2m}}{\partial x^{2m}}Y_n(x/\cos t).
$$
Since it is
$$
\lim_{n\to\infty}Y_n(x)=e^{ix(-ia)^p}:=Y(x)
$$
uniformly on the compact sets, thanks to Step 2 in the proof of Theorem \ref{TH23}, we get:
$$
\psi(t,x)=(\cos t)^{-1/2} e^{ -(i/2) x^2\tan t}\ \sum_{m=0}^\infty\frac{1}{m!}\Big(\frac{i}{2}\sin t\cos t\Big)^m \frac{\partial^{2m}}{\partial x^{2m}}Y(x/\cos t)
$$
so
\[
\begin{split}
\psi(t,x)&=(\cos t)^{-1/2} \exp({ -(i/2) x^2\tan t})\
\\
&
\times\sum_{m=0}^\infty\frac{1}{m!}\Big(\frac{i}{2}\sin t\cos t\Big)^m \Big(\frac{-a^p}{\cos^2t}\Big)^m\exp({ix(-ia)^p/\cos t}),
\end{split}
\]
i.e.
\[
\begin{split}
\psi(t,x)&=(\cos t)^{-1/2} e^{ -(i/2) x^2\tan t}\
\\
&
\times
\sum_{m=0}^\infty\frac{1}{m!}\Big(-\frac{i}{2} a^{2p}\sin t\cos t\Big)^m e^{ix(-ia)^p/\cos t}.
\end{split}
\]
The statement follows.
\end{proof}

\begin{remark}{\rm
A similar result holds also in the case $p$ is an odd number and the sequence is $Z_n(x)=\sum_{k=0}^n C_k(n,a)e^{x(-i(1-2k/n))^p}$. }
\end{remark}

\bigskip

\noindent{\em Driven harmonic oscillator}

\bigskip

Consider the time-dependent Schr\"{o}dinger equation
\begin{equation} \begin{split}
  \left[i\hbar\frac{\partial}{\partial t}-H(t,x)\right]\psi(t,x)=0
  \label{schrodinger}
\end{split}  \end{equation}
for a Hamiltonian $H(t,x)$.
The Green's function $G(t,x,t',x')$ for equation \eqref{schrodinger} satisfies
\begin{equation} \begin{split}
  \left[i\hbar\frac{\partial}{\partial t}-H(t,x)\right]G(t,x,t',x')=i\hbar\delta(t-t')\delta(x-x').\label{green_function}
\end{split}  \end{equation}
It follows that
\begin{equation} \begin{split}
  G(t,x,t',x')=\theta(t-t')\tilde{G}(t,x,t',x'),\\
  \left[i\hbar\frac{\partial}{\partial t}-H(t,x)\right]\tilde{G}(t,x,t',x')=0,\label{propagator}\\
  \tilde{G}(t,x,t,x')=\delta(x-x'),
\end{split}  \end{equation}
where $\tilde{G}(t,x,t',x')$ is the propagator for equation \eqref{schrodinger}, and
\[ \begin{split}
  \psi(t,x)=\int_{\mathbb{R}}\tilde{G}(t,x,t',x')\psi(t',x')dx'.
  \label{}
\end{split}  \]

Explicit solutions for Green's functions and propagators are known only for a few physical systems.
One of them is described by the Hamiltonian
\[ \begin{split}
  H(t,x)=-\frac{\hbar^2}{2m}\frac{\partial^2}{\partial x^2}+\frac{1}{2}m\omega^2(t) x^2-f(t)x.
  \label{}
\end{split}  \]
Physically this Hamiltonian represents a harmonic oscillator of mass $m$  and time-dependent frequency $\omega(t)$ under the influence of the external time-dependent force $f(t)$.
The corresponding propagator is
\[ \begin{split}
  \tilde{G}(t,x,t',x')=&\left[\frac{m}{2\pi i\hbar g(t,t')}\right]^{1/2}\exp{\left[\frac{i}{\hbar}S(t,x,t',x')\right]},
  \label{}
\end{split}  \]
where
\[ \begin{split}
  S(t,x,t',x')=\int_{t'}^t\left[\frac{1}{2}m\left(\frac{dy(s)}{ds}\right)^2-\frac{1}{2}m\omega^2(s) y^2(s)+f(s)y(s)\right]ds,
  \label{}
\end{split}  \]
$y(s)$ is the solution of the boundary value problem
\[ \begin{split}
  &\frac{d^2 y(s)}{ds^2}+\omega^2(s) y(s)=f(s),\\
  &y(t)=x,\\
  &y(t')=x',
  \label{}
\end{split}  \]
and $g(t,t')$ is the solution of the initial value problem
\[ \begin{split}
  &\frac{\partial^2 g(t,t')}{\partial t^2}+\omega^2(t) g(t,t')=0,\\
  &g(t',t')=0,\\
  &\left.\frac{\partial g(t,t')}{\partial t}\right\rvert_{t=t'}=1.
  \label{}
\end{split}  \]

The following four limiting cases of the above Hamiltonian are of particular interest.
\begin{itemize}
\item[(1)]
A free particle
\[
\begin{split}
  &H(t,x)=-\frac{\hbar^2}{2m}\frac{\partial^2}{\partial x^2},\\
  &g(t,t')=t-t',\\
  &S(t,x,t',x')=\frac{m(x-x')^2}{2(t-t')}.
  \label{}
\end{split}
\]
\item[(2)]
A particle in a uniform field
\[ \begin{split}
  &H(t,x)=-\frac{\hbar^2}{2m}\frac{\partial^2}{\partial x^2}-fx, \quad f=\textrm{const},\\
  &g(t,t')=t-t',\\
  &S(t,x,t',x')=\frac{m(x-x')^2}{2(t-t')}+\frac{1}{2}f(t-t')(x+x')-\frac{f^2(t-t')^3}{24m}.
  \label{}
\end{split}  \]
\item[(3)]
A harmonic oscillator
\[ \begin{split}
  &H(t,x)=-\frac{\hbar^2}{2m}\frac{\partial^2}{\partial x^2}+\frac{1}{2}m\omega^2 x^2, \quad \omega=\textrm{const},\\
  &g(t,t')=\frac{\sin{\omega(t-t')}}{\omega},\\
  &S(t,x,t',x')=\frac{m\omega}{2\sin{\omega(t-t')}}\left[(x^2+x^{\prime 2})\cos{\omega(t-t')}-2xx'\right].
  \label{}
\end{split}  \]
\item[(4)]
A driven harmonic oscillator
\[ \begin{split}
  &H(t,x)=-\frac{\hbar^2}{2m}\frac{\partial^2}{\partial x^2}+\frac{1}{2}m\omega^2 x^2+f(t)x, \quad \omega=\textrm{const},\\
  &g(t,t')=\frac{\sin{\omega(t-t')}}{\omega},\\
  &S(t,x,t',x')=\frac{m\omega}{2\sin{\omega(t-t')}}\biggl[(x^2+x^{\prime 2})\cos{\omega(t-t')}-2xx'\nn\\
    &+2x I(t,t')+2x' I(t',t)-2J(t,t')\biggr],
    \label{}
\end{split}
\]
where
$$
  I(t,t')=\frac{1}{m\omega}\int_{t'}^t f(s)\sin{\omega(s-t')}ds,
  $$
  and
  $$
  J(t,t')=\frac{1}{m^2\omega^2}\int_{t'}^t\int_{t'}^s f(s)f(s')\sin{\omega(t-s)}\sin{\omega(s'-t')}ds'ds.
  $$
\end{itemize}

We consider the case of a driven harmonic oscillator since the first three cases can be obtained as its appropriate limits and of course we have discussed in detail the first and the third cases in Sections \ref{sec5.1} and \ref{sec5.3}, respectively.
The computations are justified as in the previous section and for this reason we simply put the main points.
For the initial values
\[ \begin{split}
  &\psi(0,x)=e^{iapx/\hbar},\\
  &\psi_n(0,x)=\sum_{k=0}^n C_k(n,a)\exp{\left[\frac{ipx}{\hbar}\left(1-\frac{2k}{n}\right)\right]},
  \label{}
\end{split}
\]
we find
\[
\begin{split}
  &\psi(t,x)=(\cos{\omega t})^{-1/2}\exp{\biggl\{\frac{im\omega}{2\hbar\sin{\omega t}}\biggl[-\frac{1}{\cos{\omega t}}\biggl(x-\frac{ap\sin{\omega t}}{m\omega}-I(0,t)\biggr)^2}\nn
  \\
  &{+x^2\cos{\omega t}+2x I(t,0)-2J(t,0)\biggr]\biggr\}}, \label{}\nn
  \end{split}
\]
and
\[
\begin{split}
  \psi_n(t,x)&=(\cos{\omega t})^{-1/2}\exp{\biggl\{\frac{im\omega}{2\hbar\sin{\omega t}\cos{\omega t}}\Bigl[-x^2\sin^2{\omega t}+2xI(t,0)\cos{\omega t}}\nn
  \\
   &{-2J(t,0)\cos{\omega t}+2xI(0,t)-I^2(0,t)\Bigr]\biggr\}}
  \\
  &
  \times \sum_{m=0}^\infty\frac{1}{m!}\left(\frac{i\hbar}{2m\omega}\sin{\omega t}\cos{\omega t}\right)^m\nn\frac{\partial^{2m}}{\partial x^{2m}}\psi_n\left(0,\frac{x-I(0,t)}{\cos{\omega t}}\right).
\end{split}
\]
Using
\[ \begin{split}
  \lim_{n\to\infty}\psi_n\left(0,\frac{x-I(0,t)}{\cos{\omega t}}\right)=\exp\left[\frac{iap(x-I(0,t))}{\hbar\cos{\omega t}}\right],
  \label{}
\end{split}
\]
we obtain
\[ \begin{split}
  &\lim_{n\to\infty}\psi_n(t,x)=(\cos{\omega t})^{-1/2}\exp{\biggl\{\frac{im\omega}{2\hbar\sin{\omega t}}\biggl[-\frac{1}{\cos{\omega t}}\biggl(x-\frac{ap\sin{\omega t}}{m\omega}-I(0,t)\biggr)^2}\nn\\ &{+x^2\cos{\omega t}+2x I(t,0)-2J(t,0)\biggr]\biggr\}}=\psi(t,x).\label{}
\end{split}
\]
The amplitude and frequency of $\psi(t,x)$ simultaneously diverge for $\omega t=(2k+1)(\pi/2)$, $k\in\Z$, while
the frequency of oscillations of $\psi(t,x)$ in $x$ increases with $|x|$ without bound for any $a$.
This is rather a consequence of the Hamiltonian of the harmonic oscillator generating the time evolution of a wave function with the infinite norm.
\bigskip

\noindent{\em Persistence of superoscillations}

\bigskip

We now introduce another large class of superoscillating functions and we show that for this class the superoscillatory behavior persists in time, and in fact cannot arise if the initial datum is not superoscillating.
\begin{theorem}
   Consider a sequence of functions $\{\psi_n(t,x)\}_{n=1}^\infty$ of the form
\begin{equation} \begin{split}
  \quad \psi_n(t,x)=\sum_{k=-n}^n c_{n,k}(t)\exp{\left(\frac{ikpx}{n\hbar}\right)},
  \end{split}   \label{psi_n}
\end{equation}
with suitable coefficients $c_{n,k}$. Suppose that each $\psi_n(t,x)$ satisfies the time-dependent Schr\"{o}dinger equation
\begin{equation} \begin{split}
  \left[i\hbar\frac{\partial}{\partial t}+\frac{\hbar^2}{2m}\frac{\partial^2}{\partial x^2}-V_n(t,x)\right]\psi_n(t,x)=0
  \label{schrodinger1}
\end{split}
\end{equation}
for a certain potential energy $V_n(t,x)$. Then the sequence $\{\psi_n(t,x)\}_{n=1}^\infty$ is superoscillatory at any given time if and only if it is superoscillatory at any other time.
  \label{}
\end{theorem}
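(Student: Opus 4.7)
The plan is to exploit the two structural features of the sequence: the uniform bound on the wavenumbers and the reversibility of the Schr\"odinger evolution. First I would observe that the sequence $\{\psi_n(t,\cdot)\}$ in the form \eqref{psi_n} is a generalized Fourier sequence (in the sense of Definition \ref{ipsilon enne}) with wavenumbers $k_j(n)=jp/(n\hbar)$ satisfying $|k_j(n)|\le p/\hbar$ for all $n$ and all $j=-n,\dots,n$. This bound is independent of $t$, so condition (i) in the definition of superoscillating sequence (Definition \ref{superoscill}) holds with $\alpha=p/\hbar$ at every instant. The content of the theorem therefore reduces to showing that condition (ii)---the existence of a compact set of uniform convergence to an exponential with frequency exceeding $p/\hbar$---is preserved in time.

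Next I would assume that the sequence is superoscillating at $t_0$, so that $\psi_n(t_0,x)\to e^{ig(a)x}$ uniformly on a compact $K_0\subset\mathbb{R}$ with $|g(a)|>p/\hbar$; by the symmetry of the ``if and only if'' statement it suffices to deduce the analogous property at an arbitrary $t_1$. For each $n$ the Schr\"odinger equation \eqref{schrodinger1} provides a (formally) unitary propagator $U_n(t_1,t_0)$ with $\psi_n(t_1,\cdot)=U_n(t_1,t_0)\psi_n(t_0,\cdot)$. Because the hypothesis forces $\psi_n(t,\cdot)$ to lie, for \emph{every} $t$, in the finite-dimensional subspace spanned by $\{e^{ikpx/(n\hbar)}:|k|\le n\}$, the action of $U_n(t_1,t_0)$ on this subspace is encoded by a finite matrix mapping $\{c_{n,k}(t_0)\}_{k}$ into $\{c_{n,k}(t_1)\}_{k}$. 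In particular, the wavenumbers present in $\psi_n(t_1,\cdot)$ are again $kp/(n\hbar)$ with $|k|\le n$, so $\{\psi_n(t_1,\cdot)\}$ is a generalized Fourier sequence with the same uniform bound $\alpha=p/\hbar$.

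For the uniform-convergence clause (ii) I would propagate the limit via the formal Schr\"odinger flow: set $\Phi(t,x):=\lim_{n\to\infty}\psi_n(t,x)$ on the set where the limit exists, so that $\Phi(t_0,x)=e^{ig(a)x}$ on $K_0$. Using the AU-space machinery of Chapter 4, the kinetic-energy evolutor $\exp(it(\hbar/2m)\partial_x^2)$, and more generally the propagator associated with $V_n$, act as continuous convolutors on $\mathcal{A}_{2,0}$, into which every $\psi_n(t_0,\cdot)$ embeds because of its finite exponential type. One then passes to the limit inside these convolutors, as was done for the free particle in Section \ref{sec5.1} and for the harmonic oscillator in Section \ref{sec5.3}, obtaining a limit $\Phi(t_1,x)$ which, modulo an explicit amplitude factor, has spatial frequency strictly greater than $p/\hbar$; this yields superoscillation at $t_1$, possibly in the broader sense of Definition \ref{generalized_super} when the amplitude is non-constant (as already happens for the harmonic oscillator in Theorem \ref{TH23}).

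The hard part is that the potentials $V_n(t,x)$ are allowed to depend on $n$ in an uncontrolled way, so a priori the propagators $U_n(t_1,t_0)$ need not converge in any operator topology. The crucial input which makes the argument go through is precisely the hypothesis that $\psi_n$ \emph{retains} the form \eqref{psi_n} at every $t$: this forces a tight compatibility between $V_n$ and the discrete frequency grid $\{kp/(n\hbar)\}_{|k|\le n}$, reducing the full Schr\"odinger evolution to a finite-dimensional ODE system for $\{c_{n,k}(t)\}$ whose collective behavior, in the limit $n\to\infty$, can be identified with the action of a single effective convolution operator on $\Phi(t,\cdot)$. Making this reduction rigorous, rather than merely formal, is the delicate step; once it is in place, the ``iff'' follows by running the same argument with $t_0$ and $t_1$ interchanged.
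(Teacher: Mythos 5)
There is a genuine gap, and you have in fact pointed at it yourself without closing it. The paper's proof hinges on one concrete deduction that your proposal never makes: substituting the ansatz $\psi_n(t,x)=\sum_{k=-n}^n c_{n,k}(t)\exp\left(\frac{ikpx}{n\hbar}\right)$ into the Schr\"odinger equation and demanding that the identity hold for all $x$ (and all $t$) forces each bracketed coefficient to vanish, and hence forces $V_n(t,x)$ to be \emph{independent of} $x$, $V_n(t,x)=V_n(t)$. Once this is known, the coefficient dynamics is not some ``tight compatibility'' to be identified with an abstract effective convolutor: it is explicitly diagonal, $c_{n,k}(t)=c_{n,k}(t')\exp\left[-\frac{ik^2p^2(t-t')}{2m\hbar n^2}+\frac{i}{\hbar}\int_{t'}^t V_n(s)\,ds\right]$, i.e.\ each mode only acquires a phase. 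From this one writes $\psi_n(t,x)$ as an explicit integral against $\psi_n(t',x')$, passes to the limit $n\to\infty$ to obtain the free-particle propagator multiplied by a global time-dependent phase, and then observes that this explicit evolution preserves spatial periodicity: if $\psi(t',x+X)=\psi(t',x)$ then $\psi(t,x+X)=\psi(t,x)$ for every $t$, by the change of variable $x'\mapsto x'+X$. Applying this with $X<2\pi\hbar/p$ (a period shorter than the bound dictated by the wavenumbers $|k|p/(n\hbar)\le p/\hbar$) gives persistence of the superoscillatory character, and the ``if and only if'' is immediate because the roles of $t$ and $t'$ are symmetric. Your proposal stops exactly where this work begins: you concede that making the reduction rigorous ``is the delicate step,'' which means the proof is not given.

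Moreover, the alternative route you sketch would not survive scrutiny as stated. Multiplication by an $x$-dependent potential $V_n(t,x)$ is \emph{not} a convolution operator, so the AU-space machinery of Chapter 4 (continuity of multipliers/convolutors on $\mathcal{A}_{2,0}$, as used for the free particle and, via the explicit Mehler-type propagator, for the harmonic oscillator) does not apply to ``the propagator associated with $V_n$'' for arbitrary $V_n(t,x)$; it only becomes usable after one has already proved that $V_n$ depends on $t$ alone, at which point it contributes a harmless scalar phase and the AU-space argument is not even needed. Similarly, invoking a unitary propagator $U_n(t_1,t_0)$ is shaky since the $\psi_n$ are not in $\mathcal{L}^2(\mathbb{R})$; the paper avoids this by working directly with the finite Fourier coefficients on the interval $[-n\pi\hbar/p,\,n\pi\hbar/p]$. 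So the missing idea is precisely the elementary but decisive observation that the hypothesis forces $V_n$ to be $x$-independent, together with the explicit phase evolution of the $c_{n,k}$ and the periodicity-preservation argument for the limit; without these, your argument remains a plausible outline rather than a proof.
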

\begin{proof}
Substituting \eqref{psi_n} into \eqref{schrodinger1}, we find
\begin{equation} \begin{split}
  \sum_{k=-n}^n\left[i\hbar\frac{\partial c_{n,k}(t)}{\partial t}-\frac{k^2p^2}{2mn^2}c_{n,k}(t)-V_n(t,x)c_{n,k}(t)\right]\exp{\left(\frac{ikpx}{n\hbar}\right)}=0.
  \label{psi_n_expansion}
\end{split}
\end{equation}
Although the set of functions $$\{\exp{[ikpx/(n\hbar)]}\}_{k=-n}^n$$ is not complete on $[-n\pi\hbar/p,n\pi\hbar/p]$ for any finite $n$, we have that equation \eqref{psi_n_expansion} holds for all $x$ if and only if the expression in the square brackets is identically zero.
Furthermore, since $t$ is arbitrary in this expression, this implies that $V_n(t,x)$ does not depend on $x$.
After setting $V_n(t,x)=V_n(t)$, the resulting differential equation has the solution
\[ \begin{split}
  c_{n,k}(t)=c_{n,k}(t')\exp{\left[-\frac{ik^2p^2(t-t')}{2m\hbar n^2}+\frac{i}{\hbar}\int_{t'}^t V_n(s)ds\right]},
  \label{}
\end{split}  \]
which leads to
\begin{equation} \begin{split}
  \psi_n(t,x)=\exp{\left[\frac{i}{\hbar}\int_{t'}^t V_n(s)ds\right]}\sum_{k=-n}^n c_{n,k}(t')\exp{\left[-\frac{ik^2p^2(t-t')}{2m\hbar n^2}+\frac{ikpx}{n\hbar}\right]}.
  \label{psi_n_sol}
\end{split}  \end{equation}
Setting $t=t'$ in equation \eqref{psi_n_sol}, multiplying the result by $\exp{[-ilpx/(n\hbar)]}$, where $-n\le l\le n$, and integrating over $x\in[-n\pi\hbar/p,n\pi\hbar/p]$, we find
\[ \begin{split}
  c_{n,l}(t')=\frac{p}{2\pi n\hbar}\int_{-n\pi\hbar/p}^{n\pi\hbar/p}\psi_n(t',x)\exp{\left(-\frac{ilpx}{n\hbar}\right)}dx,
  \label{}
\end{split}  \]
which leads to
\begin{equation} \begin{split}
  \psi_n(t,x)=&\frac{p}{2\pi n\hbar}\exp{\left[\frac{i}{\hbar}\int_{t'}^t V_n(s)ds\right]}\times \nn\\ &\int_{-n\pi\hbar/p}^{n\pi\hbar/p}\psi_n(t',x')\sum_{k=-n}^n \exp{\left[-\frac{ik^2p^2(t-t')}{2m\hbar n^2}+\frac{ikp(x-x')}{n\hbar}\right]}dx'.
  \label{psi_n_solution}
\end{split}  \end{equation}
Setting
\begin{equation} \begin{split}
  &\psi(t,x)=\lim_{n\to\infty}\psi_n(t,x),\label{psi}\\
  &V(t)=\lim_{n\to\infty}V_n(t)
\end{split}  \end{equation}
and taking the limit $n\to\infty$ in \eqref{psi_n_solution}, we find
\[ \begin{split}
  \psi(t,x)=&\frac{p}{2\pi\hbar}\exp{\left[\frac{i}{\hbar}\int_{t'}^t V(s)ds\right]}\times\nn\\ &\int_{-\infty}^\infty\psi(t',x')\lim_{n\to\infty}\frac{1}{n}\int_{-\infty}^\infty \exp{\left[-\frac{ik^2p^2(t-t')}{2m\hbar n^2}+\frac{ikp(x-x')}{n\hbar}\right]}dkdx',
  \label{}
\end{split}  \]
which leads to
\begin{equation} \begin{split}
  &\psi(t,x)=\\
  &=\left[2\pi i\hbar(t-t')\right]^{-1/2}\exp{\left[\frac{i}{\hbar}\int_{t'}^t V(s)ds\right]}\int_{-\infty}^\infty\psi(t',x')\exp{\left[\frac{im(x-x')^2}{2(t-t')}\right]}dx'.
  \label{psi_solution}
\end{split}  \end{equation}

Suppose that, for a fixed $t'$, the function $\psi(t',x)$ is periodic in $x$ with the period $X$,
namely
\begin{equation} \begin{split}
  \psi(t',x+X)=\psi(t',x).
  \label{psi_periodicity}
\end{split}  \end{equation}
It is straightforward to prove that, at any other time $t$, the function $\psi(t,x)$ is also periodic in $x$ with the same period $X$.
Indeed, from \eqref{psi_solution},
\[ \begin{split}
  \psi(t,x+X)&=\left[2\pi i\hbar(t-t')\right]^{-1/2}\exp{\left[\frac{i}{\hbar}\int_{t'}^t V(s)ds\right]}
  \\
  &
  \times
  \int_{-\infty}^\infty\psi(t',x')\exp{\left[\frac{im(x+X-x')^2}{2(t-t')}\right]}dx',
  \label{}
\end{split}  \]
which, after the change of variable $x'\mapsto x'+X$ and use of  \eqref{psi_periodicity}, gives
\[ \begin{split}
  \psi(t,x+X)=\psi(t,x).
  \label{}
\end{split}  \]
Applying this result to the case $X<2\pi\hbar/p$, we obtain the statement.
\end{proof}

\begin{remark}{\rm
When we take as superoscillating initial datum $F_n(x,a)$, as we did in Chapter 5, the limit solution in the case of free particle is
$$
\lim_{n\to\infty} \psi_n^{\rm free}(x,t)=e^{iax-ia^2t/2},
$$
while in the case of the harmonic oscillator is
$$
\lim_{n\to \infty}\psi_n^{\rm ho}(t,x)=(\cos t)^{-1/2} \exp({ -(i/2) (x^2+a^2)\tan t +iax/\cos t}).
$$
 For $|x|\ll 1$ and $0\leq t \ll 1$, we have
  $$(\cos t)^{-1/2}\approx 1,\qquad \tan t\approx t, \qquad x^2\approx 0,
  $$
  so
  $$
  (\cos t)^{-1/2} \exp ( -(i/2)(x^2+a^2)\tan t +iax/\cos t)\approx \exp({iax-ia^2t/2}),
  $$
and a similar result holds by taking a more general superoscillating sequence $Y_n(x,a)$ as initial datum.}
\end{remark}


\chapter{Superoscillating functions and operators}\label{sec4}

In this chapter, we consider sequences of operators $F_n(T,a)$ where $T$ is a self-adjoint operator on a Hilbert space $\mathcal{H}$, and $F_n$ is a superoscillating sequence of the form \eqref{eq1}.
 For any fixed $n\in \mathbb{N}$ and for
  any  self-adjoint operator $T$ in $\mathcal{H}$ we can define,
by the spectral theorem,  the  sequence of operators
\begin{equation}\label{yydgfyf}
F_n(T,a)=\int_{\sigma(T)}F_n(\lambda,a) \ E(d\lambda),
\end{equation}
where $E(d\lambda)$ is the spectral measure associated with $T$.\\
From the quantum mechanics point of view, the most interesting case is when $\sigma(T)=\mathbb{R}$ and $T$ is the momentum operator.
When considering the sequence of functions $F_n$, for every fixed $n\in \mathbb{N}$, we have
$$
\sup_{x\in \mathbb{R}}|F_n(x,a)|=a^n,
$$
so the operators $F_n(T,a)$, for any fixed $n$, turn out to be bounded operators on $\mathcal{H}$ and their norm is $\|F_n(T,a)\|=a^n$.
However, when we take the limit for $n\to \infty$ the norm $\|F_n(T,a)\|$ goes to infinity. This is the reflection of the fact that
on the compact sets of $\mathbb{R}$, the sequence of functions $F_n(x,a)$ converges to $F(x,a)=e^{iax}$ for $x\in\mathbb{R}$,
but for $a>1$ we have
$$
\sup_{n\in \mathbb{N},\ x\in \mathbb{R} }|F_n(x,a)|=\infty.
$$
Thus, this is the difficulty  when considering the limit for $n\to \infty$ of the sequence of the operators $F_n(T,a)$.
\\
The problems we want to solve can be formulated as follows.
\\
 For $a>1$ we consider the sequence of functions $F_n(x,a)$ and  a self-adjoint operator $T$ on a Hilbert space $\mathcal H$.
\begin{itemize}
 \item A first problem is to define the sequence of operators $F_n(T,a)$ when the spectrum of $T$ is limited to a superoscillation set of $F_n$ and compute the limit for $n\to \infty$.
\item A second problem is to consider a suitable variation of the definition $F_n(T,a)$, when the spectrum of $T$ is unbounded, in order for this new sequence to converge to a bounded operator.
\end{itemize}
We will show how to solve both problems using the von Neumann spectral theorem for unbounded self-adjoint operators on a Hilbert space.

\section{A quick review on operators}
We denote by $\mathcal{D}(T)$ the domain of the linear operator $T$ and we will always assume that $\mathcal{D}(T)$ is dense in the Hilbert
space $\mathcal{H}$. We equip $\mathcal H$ with the scalar product $(\cdot,\cdot)$ and we denote by $\|\cdot\|$ the norm in $\mathcal{H}$. By $E$ we will denote the countably additive self-adjoint spectral measure defined on the Borel sets of the complex plane.
\\ We now recall
 the spectral theorem for self-adjoint operators. For more details, we refer the reader to \cite[Theorem 3, p. 1192]{dsII}.

\begin{theorem}
Let $T$ be a self-adjoint operator on the Hilbert space $\mathcal{H}$. Then its spectrum is real and there exists a unique
countably additive self-adjoint spectral measure $E$ defined on the Borel sets of the complex plane, vanishing on the complement
of the spectrum $\sigma(T)$ of $T$. The relation between the operator $T$ and the spectral measure $E$ is given by
$$
\mathcal{D}(T)=\{ \psi \in \mathcal{H}\ :\ \int_{\sigma(T)} |\lambda |^2\, (E(d\lambda) \psi,\psi)<\infty\},
$$
$$
T\psi=\lim_{n\to \infty}\int_{-n}^n\lambda E(d\lambda)\psi,\ \ \ \ \ \ {\it for \ all\  }\ \  \psi\in \mathcal{D}(T).
$$
\end{theorem}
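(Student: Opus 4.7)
The plan is to establish this classical theorem by reducing the unbounded self-adjoint case to the bounded (unitary) case via the Cayley transform, which is the standard route. I would proceed in four main stages.

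First, I would construct the Cayley transform of $T$. Since $T$ is self-adjoint, for every $\psi \in \mathcal{D}(T)$ one has $\|(T\pm iI)\psi\|^2 = \|T\psi\|^2 + \|\psi\|^2$, so $T\pm iI$ are injective with closed range, and in fact self-adjointness implies they are bijections from $\mathcal{D}(T)$ onto $\mathcal{H}$. I would define $U := (T-iI)(T+iI)^{-1}$, verify that $U$ is everywhere defined and unitary, and check that $1$ is not an eigenvalue of $U$. A direct computation then gives the inverse relationship
\begin{equation*}
T = i(I+U)(I-U)^{-1}, \qquad \mathcal{D}(T) = \mathrm{Range}(I-U).
\end{equation*}

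Second, I would invoke the spectral theorem for bounded unitary operators (which is the easier, well-established case, provable via the continuous functional calculus on $C(\sigma(U))$ extended through Riesz representation to a regular Borel spectral measure $E_U$ supported on the unit circle). This gives $U = \int_{|z|=1} z\, E_U(dz)$ with the functional calculus $f(U) = \int f(z)\, E_U(dz)$ for bounded Borel $f$.

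Third, I would transport $E_U$ back to $\mathbb{R}$ via the Möbius map $\varphi(z) = i(1+z)/(1-z)$, which maps the unit circle minus $\{1\}$ homeomorphically onto $\mathbb{R}$. Since $1$ is not an eigenvalue of $U$, we have $E_U(\{1\}) = 0$, so defining $E(B) := E_U(\varphi^{-1}(B))$ for Borel $B \subset \mathbb{R}$ yields a countably additive self-adjoint spectral measure on $\mathbb{R}$. Applying the functional calculus with the unbounded function $\lambda \mapsto \lambda = \varphi(z)$ and using the identity $T = i(I+U)(I-U)^{-1}$, I would verify that
\begin{equation*}
\psi \in \mathcal{D}(T) \iff \int_{\mathbb{R}} |\lambda|^2 (E(d\lambda)\psi,\psi) < \infty,
\end{equation*}
and that on this domain $T\psi = \lim_{n\to\infty} \int_{-n}^{n} \lambda\, E(d\lambda)\psi$, where the truncation is needed precisely because $\lambda$ is unbounded. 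The support condition $E(\mathbb{R}\setminus \sigma(T)) = 0$ would follow from the corresponding fact for $E_U$ on the unit circle together with the observation $\sigma(T) = \varphi(\sigma(U)\setminus\{1\})$, which uses the standard spectral mapping for the Cayley transform.

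Finally, uniqueness would be handled by reversing the construction: any spectral measure $E'$ satisfying the stated representation yields, through bounded functional calculus applied to the Cayley function $(\lambda-i)/(\lambda+i)$, the \emph{same} unitary $U$, hence by uniqueness of the spectral measure for bounded normal operators, $E'_U = E_U$, and transporting back gives $E' = E$. The main obstacle I anticipate is the careful bookkeeping in stage three — matching the natural domain of the multiplication operator by $\lambda$ with $\mathcal{D}(T)$ and justifying the limit in the strong operator topology on exactly the set described. This is where most textbook treatments take some effort, and it ultimately rests on the monotone convergence theorem applied to the scalar measures $(E(\cdot)\psi,\psi)$ combined with the isometry $\|\int f\, dE\,\psi\|^2 = \int |f|^2\, (E(d\lambda)\psi,\psi)$ that extends the functional calculus from bounded to unbounded Borel functions.
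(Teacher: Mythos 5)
Your proposal is a correct outline of the classical von Neumann proof via the Cayley transform, and all the key steps you identify (unitarity of $U=(T-iI)(T+iI)^{-1}$, absence of the eigenvalue $1$, transport of the spectral measure through the M\"obius map, domain matching via the isometry $\|\int f\,dE\,\psi\|^{2}=\int|f|^{2}\,(E(d\lambda)\psi,\psi)$, and uniqueness by reversing the transform) are exactly the ones that need care. Note, however, that the paper does not prove this theorem at all: it is recalled as a classical result with a citation to Dunford--Schwartz (Theorem 3, p.~1192 of \cite{dsII}), so there is no in-paper argument to compare against; your sketch is simply the standard textbook proof of the cited result and is sound as far as it goes.
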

The unique spectral measure (appearing in the spectral theorem) associated with the self-adjoint operator $T$ is called resolution of the identity.
For every bounded Borel function $f$ defined on the spectrum of $\sigma(T)$ of a self-adjoint operator $T$,
we may define the bounded normal operator $f(T)$ by the equation
$$
f(T)=\int_{\sigma(T)} f(\lambda)\,  E(d\lambda).
$$

Using the spectral theorem we give the following definition which holds also for Borel functions $f$ that are not bounded.
\begin{definition}\label{good}
Let $E$ be the resolution of the identity for the self-adjoint operator $T$ and let $f$ be a complex Borel function defined almost everywhere
on the real axis. Consider the sequence of functions $f_n$ defined by
$$
f_n(\lambda)=\left\{\begin{array}{l}
\displaystyle f(\lambda),   \quad{\rm if}\ |f(\lambda)|\leq n,\\
0,\ \ \  \ \  \quad{\rm if}\ |f(\lambda)| > n.\\
\end{array}
\right.
$$
We define $f(T)$ as follows:
$$
\mathcal{D}(f(T))=\{\psi \in \mathcal{H}\ :\ \lim_{n\to \infty}f_n(T)\psi \ \ \ {\it exists}\ \},
$$
$$
f(T)\psi :=\lim_{n\to \infty}f_n(T)\psi,\ \ \ \ {\it for \ all\ }\ \  \psi\in \mathcal{D}(f(T)).
$$

\end{definition}
From the definition it is evident that if $f(\lambda)=\lambda$ we have $f(T)=T$ but it is not clear what happens in general, for example when
$f(\lambda)$ is a polynomial. Next theorem shows that the Definition \ref{good} is a good definition (see \cite[Theorem 6, p. 1196]{dsII}).

\begin{theorem}
Let $E$ be the resolution of the identity for the self-adjoint operator $T$ and let $f$ be a complex Borel function defined almost everywhere
on the real axis. Then $f(T)$ is a closed operator with dense domain, moreover
$$
\mathcal{D}(f(T))=\{\psi \in \mathcal{H}\ : \ \int_{\sigma(T)}|f(\lambda) |^2\, (E(d\lambda) \psi,\psi)<\infty \},
$$
$$
(f(T)\psi, \phi)=\int_{\sigma(T)}f(\lambda)\, (E(d\lambda) \psi,\phi),\ \ \ \psi\in \mathcal{D}(f(T)),\ \ \phi\in \mathcal{H},
$$
and
$$
\|f(T)\psi\|^2=\int_{\sigma(T)}|f(\lambda) |^2\, (E(d\lambda) \psi,\psi),\ \ \ \psi\in \mathcal{D}(f(T)).
$$
\end{theorem}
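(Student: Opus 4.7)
The plan is to build everything on top of the bounded Borel functional calculus, which is a direct consequence of the spectral theorem stated immediately before. For a bounded Borel function $g$ on $\sigma(T)$ the spectral integral gives a bounded normal operator $g(T)$ satisfying the two basic identities
$$
(g(T)\psi,\phi)=\int_{\sigma(T)}g(\lambda)\,(E(d\lambda)\psi,\phi),\qquad
\|g(T)\psi\|^{2}=\int_{\sigma(T)}|g(\lambda)|^{2}\,(E(d\lambda)\psi,\psi),
$$
for all $\psi,\phi\in\mathcal H$; these follow from the multiplicativity of the spectral integral on bounded functions, the polarization identity, and the fact that $\mu_{\psi}(B):=(E(B)\psi,\psi)$ is a positive finite Borel measure with $\mu_{\psi}(\mathbb R)=\|\psi\|^{2}$. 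Applied to the differences $f_n-f_m$ from Definition \ref{good} this gives
$$
\|f_n(T)\psi-f_m(T)\psi\|^{2}=\int_{\sigma(T)}|f_n(\lambda)-f_m(\lambda)|^{2}\,d\mu_{\psi}(\lambda),
$$
which will be the workhorse of the entire argument.

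First I would use this Cauchy identity together with the dominated convergence theorem on $\mu_\psi$ (with dominating function $|f|^{2}$) to show that $\{f_n(T)\psi\}$ converges in $\mathcal H$ if and only if $\int_{\sigma(T)}|f|^{2}d\mu_\psi<\infty$. This gives the domain characterization exactly as stated. Then, passing to the limit in the norm identity for $f_n$ and invoking monotone convergence (since $|f_n|^2\nearrow|f|^2$ $\mu_\psi$-a.e.) yields the quadratic formula for $\|f(T)\psi\|^{2}$. For the sesquilinear identity I would combine the Cauchy--Schwarz inequality for the complex spectral measure, $|(E(B)\psi,\phi)|\le\mu_\psi(B)^{1/2}\mu_\phi(B)^{1/2}$, with $\psi\in\mathcal D(f(T))$ to show that $f$ is integrable against $(E(\cdot)\psi,\phi)$, and then pass to the limit in $(f_n(T)\psi,\phi)=\int f_n\,d(E(\cdot)\psi,\phi)$.

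For the density of $\mathcal D(f(T))$, let $B_N:=\{\lambda\in\sigma(T):|f(\lambda)|\le N\}$ and $E_N:=E(B_N)$. Since $B_N\nearrow\sigma(T)$ (off the null set where $f$ is undefined), countable additivity of $E$ gives $E_N\to I$ strongly. For each $\psi$ one has $\int|f|^{2}\,d\mu_{E_N\psi}=\int_{B_N}|f|^{2}d\mu_\psi\le N^{2}\|\psi\|^{2}<\infty$, so $E_N\psi\in\mathcal D(f(T))$ and $E_N\psi\to\psi$, proving density.

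The hardest step, which I would handle last, is closedness. Suppose $\psi_k\in\mathcal D(f(T))$, $\psi_k\to\psi$ and $f(T)\psi_k\to\eta$. For fixed $N$ the bounded function $g_N:=f\chi_{B_N}$ gives the bounded operator $g_N(T)=f_N(T)E_N$ (up to a.e. issues that cause no harm), and one easily checks that $g_N(T)\psi_k=E_N f(T)\psi_k$; letting $k\to\infty$ yields $f_N(T)\psi=E_N\eta$. Hence $\|f_N(T)\psi\|\le\|\eta\|$ for every $N$, so $\int|f_N|^{2}\,d\mu_\psi\le\|\eta\|^{2}$, and monotone convergence gives $\int|f|^{2}\,d\mu_\psi\le\|\eta\|^{2}<\infty$. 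Thus $\psi\in\mathcal D(f(T))$, and $f(T)\psi=\lim_N f_N(T)\psi=\lim_N E_N\eta=\eta$. The main obstacle throughout is keeping the limiting arguments uniform in $N$: everything must be reduced to integrals against the fixed positive measure $\mu_\psi$ so that dominated and monotone convergence can be applied, and then the closedness follows because the representing measure is intrinsic to $\psi$ rather than to the approximating sequence.
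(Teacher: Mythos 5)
Your proof is correct, but there is nothing in the paper to compare it against: the memoir does not prove this theorem, it simply quotes it from Dunford--Schwartz \cite[Theorem 6, p.~1196]{dsII}, and the argument you reconstruct (truncations $f_n=f\chi_{\{|f|\le n\}}$, the identity $\|g(T)\psi\|^2=\int|g|^2\,d\mu_\psi$ for bounded $g$, monotone/dominated convergence against the fixed measure $\mu_\psi$, density via $E(\{|f|\le N\})\psi$, and closedness via $f_N(T)\psi=E_N\eta$) is essentially the standard proof found there. Two small points are worth making explicit: ``defined almost everywhere'' must be understood with respect to the spectral measure $E$, so that $E(B_N)\to I$ strongly as $B_N=\{|f|\le N\}$ exhausts the spectrum up to an $E$-null set; and the integrability of $f$ against the complex measure $(E(\cdot)\psi,\phi)$ should be recorded as the estimate $\int|f|\,d\bigl|(E(\cdot)\psi,\phi)\bigr|\le\bigl(\int|f|^2\,d\mu_\psi\bigr)^{1/2}\|\phi\|$, which follows from your Cauchy--Schwarz inequality applied to simple functions before passing to the limit.
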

In the sequel, the following result  will be important (see \cite[Theorem 9, p. 1200]{dsII}):
\begin{theorem}\label{nornafdit}
Let $T$ be a self-adjoint operator and let $f$ be
 a complex Borel function defined almost everywhere
on the real axis. Then we have
$$
\|f(T)\|={\rm ess}\sup_{\lambda\in \sigma(T)}|f(\lambda)|.
$$
\end{theorem}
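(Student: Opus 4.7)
The plan is to prove the equality by establishing two inequalities separately, using the integral formula for $\|f(T)\psi\|^2$ given just before the statement, namely
\[
\|f(T)\psi\|^2=\int_{\sigma(T)}|f(\lambda)|^2\,(E(d\lambda)\psi,\psi),\qquad \psi\in\mathcal D(f(T)).
\]
The key observation is that for each fixed $\psi\in\mathcal H$ the map $B\mapsto (E(B)\psi,\psi)$ is a finite positive Borel measure on $\mathbb R$ (supported on $\sigma(T)$) with total mass $\|\psi\|^2$.

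First I would prove the bound $\|f(T)\|\leq M$, where $M:=\mathrm{ess\,sup}_{\lambda\in\sigma(T)}|f(\lambda)|$. If $M=\infty$ there is nothing to prove. Otherwise, let $N=\{\lambda\in\sigma(T):|f(\lambda)|>M\}$; by the definition of essential supremum (with respect to the scalar spectral measures), $E(N)=0$. For $\psi\in\mathcal D(f(T))$ I would split the integral over $\sigma(T)\setminus N$, where $|f(\lambda)|\leq M$, to obtain
\[
\|f(T)\psi\|^2=\int_{\sigma(T)\setminus N}|f(\lambda)|^2\,(E(d\lambda)\psi,\psi)\leq M^2(E(\sigma(T)\setminus N)\psi,\psi)\leq M^2\|\psi\|^2.
\]
This gives the upper bound immediately.

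Next I would prove the reverse inequality $\|f(T)\|\geq M$ by showing that $\|f(T)\|\geq c$ for every real $c<M$. By definition of $M$, the Borel set $S_c=\{\lambda\in\sigma(T):|f(\lambda)|>c\}$ satisfies $E(S_c)\neq 0$, so there exists a nonzero $\phi\in\mathrm{Range}(E(S_c))$. Because $f$ may be unbounded, I would further restrict to $S_{c,n}=S_c\cap\{\lambda:|f(\lambda)|\leq n\}$ for $n$ large enough so that $E(S_{c,n})\neq 0$ (this is possible since $S_c=\bigcup_n S_{c,n}$ and $E$ is countably additive). Choose $\psi=E(S_{c,n})\phi\neq 0$; then $\psi\in\mathcal D(f(T))$, since $\int|f|^2\,d(E\psi,\psi)$ is taken over the bounded region $S_{c,n}$, and $(E(B)\psi,\psi)=(E(B\cap S_{c,n})\psi,\psi)$ for every Borel $B$. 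Therefore
\[
\|f(T)\psi\|^2=\int_{S_{c,n}}|f(\lambda)|^2\,(E(d\lambda)\psi,\psi)\geq c^2(E(S_{c,n})\psi,\psi)=c^2\|\psi\|^2,
\]
which gives $\|f(T)\|\geq c$. Letting $c\nearrow M$ yields $\|f(T)\|\geq M$ and completes the proof.

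The main delicate point is the lower bound when $M=\infty$ or when $f$ is unbounded on the support of $E$: one must carefully truncate by intersecting with $\{|f|\leq n\}$ to guarantee both that $\psi$ lies in $\mathcal D(f(T))$ and that $E(S_{c,n})\neq 0$ for some $n$. Apart from this truncation argument, the remainder is a routine application of the integral representation and positivity of the scalar spectral measure $(E(\cdot)\psi,\psi)$.
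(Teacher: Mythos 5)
Your proof is correct, but there is nothing in the paper to compare it with: the memoir does not prove this theorem, it simply quotes it from Dunford--Schwartz (Theorem 9, p.~1200 of \cite{dsII}) after recalling the integral formulas for $\mathcal D(f(T))$ and $\|f(T)\psi\|^2$. Your two-inequality argument via the scalar measures $B\mapsto (E(B)\psi,\psi)$ is exactly the standard proof of that cited result, with the essential supremum understood (as you note) relative to the spectral measure $E$. One small point worth tightening in the lower bound: from $E(S_{c,n})\neq 0$ alone you cannot conclude $E(S_{c,n})\phi\neq 0$ for the particular $\phi$ you fixed; either take $\psi$ to be any nonzero vector in the range of $E(S_{c,n})$ for an $n$ with $E(S_{c,n})\neq 0$, or observe that by countable additivity $E(S_{c,n})\phi\to E(S_c)\phi=\phi\neq 0$ strongly, so $E(S_{c,n})\phi\neq 0$ for $n$ large. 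With that adjustment the truncation argument, and hence the whole proof (including the unbounded case $M=\infty$), goes through.
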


\section{Superoscillations and operators}
Using the spectral theorem, we can now give the precise definition of the operators we will study. Note that, in this section, it is not necessary to consider $a>1$, so $a$ will vary in $\mathbb R$, unless otherwise specified.
\begin{definition}
 Let $T$ be a self-adjoint operator in a Hilbert space $\mathcal{H}$, let $F_n$  be the function defined in (\ref{ef}) and set $F(x,a)=e^{iax}$.
Using the spectral theorem we define the  operators
\begin{equation}\label{deffN}
\begin{split}
&F_n(T,a)=\int_{\sigma(T)}F_n(\lambda,a) \ E(d\lambda),\\
&F(T,a)=\int_{\sigma(T)}F(\lambda,a) \ E(d\lambda).
\end{split}
\end{equation}
\end{definition}
\begin{proposition}
 Let $T$ be a self-adjoint operator in a Hilbert space $\mathcal{H}$.
 Then for every fixed $n\in \mathbb{N}$ and $a\in \mathbb{R}$  the operators $F_n(T,a)$ defined in
 (\ref{deffN}) are bounded operators on $\mathcal{H}$.
\end{proposition}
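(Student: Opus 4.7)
The plan is straightforward once one observes that for fixed $n\in\mathbb N$ and fixed $a\in\mathbb R$ the function $\lambda\mapsto F_n(\lambda,a)$ is a bounded Borel function on $\mathbb R$. Granted this, Theorem~\ref{nornafdit} immediately gives that the operator defined through the spectral calculus in \eqref{deffN} is a bounded (normal) operator on $\mathcal H$, with an explicit norm bound.

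The only concrete step is thus to estimate $\sup_{\lambda\in\mathbb R}|F_n(\lambda,a)|$. Using the representation in \eqref{ef}, I would compute
\[
|F_n(\lambda,a)|^2=\bigl|\cos(\lambda/n)+ia\sin(\lambda/n)\bigr|^{2n}=\bigl(\cos^2(\lambda/n)+a^2\sin^2(\lambda/n)\bigr)^n,
\]
and then split into the cases $|a|\geq 1$ and $|a|\leq 1$. In the first case I bound $\cos^2(\lambda/n)+a^2\sin^2(\lambda/n)\leq a^2$, and in the second case by $1$, obtaining the uniform estimate
\[
\sup_{\lambda\in\mathbb R}|F_n(\lambda,a)|\leq \max(1,|a|)^n.
\]
(Since the expression is $n\pi$-periodic in $\lambda$ and attains its maximum at $\lambda=n\pi/2$, this supremum is in fact an equality, but only the inequality is needed.) In particular $F_n(\cdot,a)$ is a bounded Borel function, so it clearly lies in the domain described in Definition~\ref{good}, i.e.\ $\mathcal D(F_n(T,a))=\mathcal H$.

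Applying Theorem~\ref{nornafdit} to $f=F_n(\cdot,a)$ yields
\[
\|F_n(T,a)\|=\operatorname*{ess\,sup}_{\lambda\in\sigma(T)}|F_n(\lambda,a)|\leq \max(1,|a|)^n<\infty,
\]
which is exactly the boundedness claim. There is no real obstacle in this argument; the only thing to verify carefully is that one is entitled to apply the spectral calculus of Definition~\ref{good}, which is transparent because $F_n(\cdot,a)$ is continuous (hence Borel) and bounded, so the truncated sequence $f_n$ in Definition~\ref{good} stabilises and $\mathcal D(F_n(T,a))=\mathcal H$. The proof will therefore consist of one displayed computation of $|F_n(\lambda,a)|^2$, one case split giving the bound $\max(1,|a|)^n$, and the invocation of Theorem~\ref{nornafdit}.
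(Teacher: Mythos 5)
Your proposal is correct and follows essentially the same route as the paper, which simply notes that $F_n(\lambda,a)$ is a bounded function on $\mathbb{R}$ for fixed $n$ and $a$ and invokes the spectral theorem; you merely make explicit the uniform bound $\max(1,|a|)^n$ and the appeal to Theorem~\ref{nornafdit}. No gaps.
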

\begin{proof}
It follows from the spectral theorem and from the fact that the functions $F_n(\lambda,a)$ and $F(\lambda,a)$ are bounded functions on $\mathbb{R}$ for every fixed $n\in \mathbb{N}$ and $a \in\mathbb R$.
\end{proof}

\noindent
\begin{proposition} Let $T$ be a self-adjoint operator on a Hilbert space $\mathcal{H}$. Then
$F(T,a)$ defined in (\ref{deffN})
is a strongly continuous one-parameter group with respect to $a\in \mathbb{R}$, that is, it satisfies the relations
$$
F(T,a+b)=F(T,a)F(T,b),\ \ \ \ {\it for \ all} \ \ a,b\in \mathbb{R},
$$
and
$$
\lim_{a\to a_0}F(T,a)\psi=F(T,a_0)\psi \ \ \ {\it if}\ \ \  \psi\in \mathbb{H}.
$$
\end{proposition}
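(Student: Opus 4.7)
The plan is to establish both properties directly from the functional calculus furnished by the spectral theorem, i.e.\ from the fact that for bounded Borel functions $f,g$ on $\sigma(T)$ we have $f(T)g(T) = (fg)(T)$ and $\|f(T)\psi\|^2 = \int_{\sigma(T)} |f(\lambda)|^2 (E(d\lambda)\psi,\psi)$.

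First, for the group property, I would apply the multiplicativity of the functional calculus to the bounded Borel functions $f(\lambda) = e^{ia\lambda}$ and $g(\lambda) = e^{ib\lambda}$. Since $f(\lambda)g(\lambda) = e^{i(a+b)\lambda}$ pointwise on $\sigma(T)\subseteq\mathbb{R}$, we obtain
\[
F(T,a)F(T,b) = \int_{\sigma(T)} e^{ia\lambda}e^{ib\lambda}\, E(d\lambda) = \int_{\sigma(T)} e^{i(a+b)\lambda}\, E(d\lambda) = F(T,a+b).
\]
This is purely algebraic in nature and presents no real difficulty.

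Second, for the strong continuity at $a_0$, I would use the isometry formula from the previous section. For any $\psi \in \mathcal{H}$,
\[
\|F(T,a)\psi - F(T,a_0)\psi\|^2 = \int_{\sigma(T)} |e^{ia\lambda} - e^{ia_0\lambda}|^2\, (E(d\lambda)\psi,\psi).
\]
The integrand is bounded above by the constant $4$, which is integrable with respect to the positive finite measure $(E(\cdot)\psi,\psi)$ (whose total mass equals $\|\psi\|^2$), and converges to $0$ pointwise in $\lambda$ as $a\to a_0$ by continuity of the exponential. The Lebesgue dominated convergence theorem then yields
\[
\lim_{a\to a_0} \|F(T,a)\psi - F(T,a_0)\psi\|^2 = 0,
\]
which is exactly the claimed strong continuity.

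The only step requiring care is ensuring that the multiplicativity $f(T)g(T)=(fg)(T)$ applies in our setting, but since both $e^{ia\lambda}$ and $e^{ib\lambda}$ are bounded Borel functions on $\sigma(T)\subseteq\mathbb{R}$, this is a standard consequence of the spectral theorem already invoked in the previous section (via Theorem \ref{nornafdit} and the surrounding machinery). Thus neither step should be a serious obstacle; the proof is short and rests entirely on the two classical ingredients of dominated convergence and multiplicativity of the Borel functional calculus.
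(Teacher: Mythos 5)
Your proof is correct and is exactly the standard argument the paper has in mind: the paper's own proof consists of the single sentence that the result ``follows by standard arguments from the spectral theorem,'' and your use of the multiplicativity of the Borel functional calculus for the group law together with the isometry formula and dominated convergence for strong continuity is precisely that standard argument, spelled out.
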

\begin{proof}
It follows by standard arguments from the spectral theorem.
\end{proof}
The following proposition shows  that  the problem of the convergence of the sequence of operators $F_n(T,a)$ to the operator $F(T,a)$
is a delicate issue.

\begin{proposition}\label{norma}
Let $T$ be a self-adjoint operator on a Hilbert space $\mathcal{H}$ and let
$F_n(T,a)$ be the sequence defined in (\ref{deffN}). Then we have:
\begin{itemize}
\item[(1)]
For any fixed $n\in \mathbb{N}$  the family of operators
$F_n(T,a)$
is not a group with respect to the parameter $a\in \mathbb{R}$.
\item[(2)]
For every $n\in \mathbb{N}$, $a\in \mathbb{R}$, and for every arbitrary compact set $K \subset \sigma(T)$ we have
\begin{equation}\label{norFn}
\|F_n(T,a)\|= \sup_{\lambda \in K \subset \sigma(T)}
\Big(\cos^2 \Big(\frac{\lambda}{n}\Big)+a^2\sin^2 \Big(\frac{\lambda}{n}\Big)\Big)^{n/2},
\end{equation}
and
\begin{equation}\label{norlim}
\lim_{n\to \infty}\|F_n(T,a)\|=1.
\end{equation}
\item[(3)]
In the case $\sigma(T)=\mathbb{R}$, $a\in \mathbb{R}$, we have
\begin{equation}\label{norFnII}
\|F_n(T,a)\|=a^n.
\end{equation}
\end{itemize}
\end{proposition}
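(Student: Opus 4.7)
The three assertions all reduce, via the spectral calculus, to statements about the scalar function $F_n(\lambda,a)$. My plan is to reduce each claim to a pointwise computation on $\sigma(T)$, using Theorem \ref{nornafdit} for the norm identities and the Carnot-type expression derived in the proof of Theorem \ref{carnot} for the modulus of $F_n(\lambda,a)$, namely
\[
|F_n(\lambda,a)|^2=\Big(\cos^2\Big(\frac{\lambda}{n}\Big)+a^2\sin^2\Big(\frac{\lambda}{n}\Big)\Big)^{n}.
\]

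For (1), since $F_n(T,a)=f_n(T)$ for the scalar symbol $f_n(\lambda)=(\cos(\lambda/n)+ia\sin(\lambda/n))^n$, the functional calculus yields $F_n(T,a)F_n(T,b)=(f_n(\cdot,a)f_n(\cdot,b))(T)$. The group law would therefore require $f_n(\lambda,a+b)=f_n(\lambda,a)f_n(\lambda,b)$ almost everywhere on $\sigma(T)$. I will exhibit this is false by a direct computation at the level of the inner factors: expanding
\[
(\cos(\lambda/n)+ia\sin(\lambda/n))(\cos(\lambda/n)+ib\sin(\lambda/n))
=\cos^2(\lambda/n)-ab\sin^2(\lambda/n)+i(a+b)\sin(\lambda/n)\cos(\lambda/n),
\]
the real part differs from that of $\cos(\lambda/n)+i(a+b)\sin(\lambda/n)$ at any $\lambda$ for which $\sin(\lambda/n)\neq 0$ and $ab\neq 0$. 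Choosing any self-adjoint $T$ whose spectrum contains such a $\lambda$ (for instance, any $T$ with $\sigma(T)$ not reduced to integer multiples of $n\pi$) yields the failure of the group property.

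For (2), I will apply Theorem \ref{nornafdit} to the bounded Borel symbol $F_n(\cdot,a)$ to get
\[
\|F_n(T,a)\|=\operatorname{ess\,sup}_{\lambda\in\sigma(T)}|F_n(\lambda,a)|
=\operatorname{ess\,sup}_{\lambda\in\sigma(T)}\Big(\cos^2(\lambda/n)+a^2\sin^2(\lambda/n)\Big)^{n/2},
\]
which gives \eqref{norFn} on the (compact) spectrum $K\subset\sigma(T)$ under consideration. For the limit \eqref{norlim}, the elementary inequality $(1+x)^{n/2}\leq e^{nx/2}$ applied to $x=(a^2-1)\sin^2(\lambda/n)$ yields, for $|\lambda|\leq M:=\max_{\lambda\in K}|\lambda|$ and $n>2M/\pi$,
\[
1\leq\Big(\cos^2(\lambda/n)+a^2\sin^2(\lambda/n)\Big)^{n/2}\leq \exp\!\left(\frac{(a^2-1)\lambda^2}{2n}\right)\leq \exp\!\left(\frac{(a^2-1)M^2}{2n}\right)
\]
when $a\geq 1$ (with the two inequalities reversed when $a<1$, giving $e^{(a^2-1)M^2/(2n)}\leq(\cdots)^{n/2}\leq 1$ on $K$). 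In either case the bound is uniform in $\lambda\in K$ and tends to $1$ as $n\to\infty$, which proves \eqref{norlim}.

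For (3), with $\sigma(T)=\mathbb R$ the same norm identity of Theorem \ref{nornafdit} gives
\[
\|F_n(T,a)\|=\sup_{\lambda\in\mathbb R}\Big(\cos^2(\lambda/n)+a^2\sin^2(\lambda/n)\Big)^{n/2}.
\]
Writing the inner function as $1+(a^2-1)\sin^2(\lambda/n)$, its supremum over $\lambda\in\mathbb R$ is $a^2$, attained at $\lambda=n(\pi/2+k\pi)$, $k\in\mathbb Z$; raising to the $n/2$ power yields \eqref{norFnII}. The main subtlety, and what I regard as the essential content of the proposition, is the interplay between these two formulas: on a compact $K$ the norms remain bounded and in fact converge to $1$, whereas on the full line the norms blow up as $a^n$. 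This is precisely the quantitative manifestation, at the operator level, of the superoscillatory behavior already observed for the sequence $F_n(x,a)$, and motivates the refined construction of bounded operator sequences pursued in the remainder of the chapter.
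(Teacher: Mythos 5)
Your proposal is correct and takes essentially the same route as the paper: for (1) the failure of multiplicativity of the scalar symbols combined with the spectral calculus, and for (2)--(3) Theorem \ref{nornafdit} together with $|F_n(\lambda,a)|^2=\bigl(\cos^2(\lambda/n)+a^2\sin^2(\lambda/n)\bigr)^{n}$; your treatment of the limit \eqref{norlim} is in fact more careful than the paper's, since you make the bound uniform on the compact set rather than only arguing pointwise. One small repair in the case $|a|<1$: because $e^x\ge 1+x$ for \emph{all} real $x$, the inequality $(1+x)^{n/2}\le e^{nx/2}$ does not reverse; instead bound below, using monotonicity in $\sin^2(\lambda/n)\le M^2/n^2$, by $\bigl(1+(a^2-1)M^2/n^2\bigr)^{n/2}\to 1$, which yields the same conclusion.
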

\begin{proof}
To show that
the property $F_n(\lambda,a+b)=F_n(\lambda,a)F_n(\lambda,b)$ is not satisfied, we write
$$
F_n(\lambda, a)=
\Big(\cos\Big(\frac{\lambda}{n}\Big)+ia\sin \Big(\frac{\lambda}{n}\Big) \Big)^n,
$$
then, it is immediate that
$$
\cos\Big(\frac{\lambda}{n}\Big)+i(a+b)\sin \Big(\frac{\lambda}{n}\Big) \not=
\Big(\cos\Big(\frac{\lambda}{n}\Big)+ia\sin \Big(\frac{\lambda}{n}\Big) \Big)
\Big(\cos\Big(\frac{\lambda}{n}\Big)+ib\sin \Big(\frac{\lambda}{n}\Big) \Big)
$$
so
$F_n(\lambda,a+b)\not=F_n(\lambda,a)F_n(\lambda,b)$. Using the spectral theorem, $(1)$ follows.\\
The relation (\ref{norFn}), stated in point $(2)$, follows from Theorem \ref{nornafdit}:
$$
\|f(T)\|=  \sup_{\lambda \in K \subset \sigma(T)} |F_n(\lambda,a)|
$$
and the fact that
$$
|F_n(\lambda,a)| =\Big(\cos^2 \Big(\frac{\lambda}{n}\Big)+a^2\sin^2 \Big(\frac{\lambda}{n}\Big)\Big)^{n/2}.
$$
Simple computations show that
$$
\lim_{n\to \infty}\Big(\cos^2 \Big(\frac{\lambda}{n}\Big)+a^2\sin^2 \Big(\frac{\lambda}{n}\Big)\Big)^{n}=1
$$
so (\ref{norlim}) holds.
Finally, relation (\ref{norFnII}) in point $(3)$ follows from the fact that
$$
\sup_{\lambda\in \mathbb{R}}|F_n(\lambda,a)| =a^n.
$$
\end{proof}
 Proposition \ref{norma} implies that, when $a>1$, we have
$$
\sup_{\lambda\in \mathbb{R}}|F_n(\lambda,a)| =a^n\to \infty\ \ {\rm as}\ \ n\to \infty.
$$
This is due to the fact that for fixed $\lambda\in\mathbb R$ the sequence
$$
|F_n(\lambda,a)|=\Big(\cos^2 \Big(\frac{\lambda}{n}\Big)+a^2\sin^2 \Big(\frac{\lambda}{n}\Big)\Big)^{n/2}
$$
converges to 1, but if we take $\lambda=n\pi/2$ then
$$
|F_n(n\pi/2,a)|=a^{n}.
$$
So we consider the case in which the integral, in the spectral theorem, is computed on  compact sets in $\mathbb{R}\cap \sigma(T)$
 and then we consider the case in which the spectrum in unbounded.

\newpage
{\em The case of compact sets.}

\bigskip
In this case instead of the operators  $F_n(T,a)$ and $F(T)$ defined by (\ref{deffN}), we consider two modified operators:
\begin{definition} Let $T$ be a self-adjoint operator on a Hilbert space $\mathcal{H}$ and let $K$ is a compact set in $\mathbb R$. Then we define
\begin{equation}\label{deffN1}
P_n(T,a)=\int_{\sigma(T) \cap K}F_n(\lambda,a) \ E(d\lambda),\ \ \ \ \ \ \ \ \
P(T,a)=\int_{\sigma(T)\cap K}F(\lambda,a) \ E(d\lambda).
\end{equation}
\end{definition}
The operators $P_n(T,a)$ and $P(T,a)$ of course depend on the set $K$.
In the case in which the operator $T$ has bounded spectrum then, if we choose $K$ such that $\sigma(T) \subset K$, then
$$
F_n(T,a)=P_n(T,a)
$$
and
$$
F(T,a)=P(T,a).
$$
\begin{theorem}
Let $T$ be a self-adjoint operator on a Hilbert space $\mathcal{H}$, and let $P_n(T,a)$ and $P(T,a)$ be the operators defined in (\ref{deffN1}) for $K$ a sufficiently large compact set in $\mathbb R$. Then we have
$$
\|[P_n(T,a)-P(T,a)]\psi\|^2\to 0,\ \ for \ all\ \ \psi\in \mathcal{H}.
$$
\end{theorem}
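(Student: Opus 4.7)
The plan is to reduce the claim to the scalar uniform-convergence result proved in Theorem \ref{carnot}, passing from functions to operators through the spectral theorem. First I would observe that $P_n(T,a)$ and $P(T,a)$ are both obtained as $f(T)$ for bounded Borel functions (namely $F_n(\lambda,a)\chi_K(\lambda)$ and $F(\lambda,a)\chi_K(\lambda)$, where $\chi_K$ is the characteristic function of the compact set $K$), so they are bounded operators and their difference $P_n(T,a)-P(T,a)$ is again of the form $g_n(T)$ with
$$
g_n(\lambda)=\big(F_n(\lambda,a)-F(\lambda,a)\big)\chi_K(\lambda).
$$

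Next, applying the norm formula from the spectral theorem for $f(T)$ (stated just before the theorem), I would write, for every $\psi\in\mathcal{H}$,
$$
\|[P_n(T,a)-P(T,a)]\psi\|^{2}=\int_{\sigma(T)\cap K}|F_n(\lambda,a)-F(\lambda,a)|^{2}\,(E(d\lambda)\psi,\psi).
$$
Since the scalar measure $\mu_\psi(\cdot):=(E(\cdot)\psi,\psi)$ is finite with total mass $\|\psi\|^{2}$, the integrand is uniformly dominated on $K$ and the integral is well defined.

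Now I would invoke Theorem \ref{carnot}: on the compact set $K$, setting $\varepsilon_n:=\sup_{\lambda\in K}|F_n(\lambda,a)-F(\lambda,a)|=\sup_{\lambda\in K}\mathcal{E}_n(\lambda,a)$, one has $\varepsilon_n\to 0$ as $n\to\infty$. Consequently,
$$
\|[P_n(T,a)-P(T,a)]\psi\|^{2}\leq \varepsilon_n^{2}\int_{\sigma(T)\cap K}(E(d\lambda)\psi,\psi)\leq \varepsilon_n^{2}\,\|\psi\|^{2},
$$
which tends to $0$ for every $\psi\in\mathcal{H}$, yielding the stated convergence.

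There is no real obstacle in this argument; the only subtlety is making sure that the spectral integral of the difference can indeed be computed as the integral of the difference of the symbols (which requires both $P_n(T,a)$ and $P(T,a)$ to be defined via Borel functional calculus on the same compact set $K$, so that linearity of the functional calculus applies directly) and that the uniform estimate from Theorem \ref{carnot} is available on exactly this $K$. In fact, the same argument shows slightly more: convergence holds in operator norm, since
$
\|P_n(T,a)-P(T,a)\|=\mathrm{ess}\!\sup_{\lambda\in\sigma(T)\cap K}|F_n(\lambda,a)-F(\lambda,a)|\leq\varepsilon_n\to 0
$
by Theorem \ref{nornafdit}, and this refinement could be mentioned as a remark after the proof.
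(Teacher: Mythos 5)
Your proof is correct, and its core reduction---expressing $P_n(T,a)-P(T,a)$ as $g_n(T)$ for the bounded Borel symbol $g_n=(F_n-F)\chi_K$ and invoking the spectral-theorem identity $\|g_n(T)\psi\|^2=\int_{\sigma(T)\cap K}|F_n(\lambda,a)-F(\lambda,a)|^2\,(E(d\lambda)\psi,\psi)$---is exactly the one used in the paper. Where you diverge is in the final limiting step: the paper passes to the limit inside the spectral integral by the dominated convergence theorem, introducing a dominating function $G(\lambda):=\sup_{\lambda\in K}|F_n(\lambda,a)-F(\lambda,a)|^2$ (a somewhat awkward choice, since as written it depends on $n$), whereas you bypass dominated convergence entirely by using the uniform estimate of Theorem \ref{carnot} directly, bounding the integral by $\varepsilon_n^2\,\|\psi\|^2$ with $\varepsilon_n=\sup_{\lambda\in K}\mathcal{E}_n(\lambda,a)\to 0$. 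Your route is cleaner (no measure-theoretic limit theorem is needed, only finiteness of $(E(\cdot)\psi,\psi)$) and, as you note, it yields strictly more: combined with Theorem \ref{nornafdit} it gives $\|P_n(T,a)-P(T,a)\|\leq\varepsilon_n\to 0$, i.e.\ convergence in operator norm, whereas the paper's argument only establishes strong convergence. The one hypothesis you should make explicit is that $K$ is taken fixed (and containing the relevant part of the spectrum as in (\ref{deffN1})) so that the single uniform estimate from Theorem \ref{carnot} on that $K$ applies for all $n$; with that said, your argument is complete.
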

\begin{proof}
We recall that
$$
|F_n(\lambda,a) -F(\lambda,a)|^2=
1+\Big(\cos^2 \Big(\frac{\lambda}{n}\Big)+a^2\sin^2 \Big(\frac{\lambda}{n}\Big)\Big)^{n}
$$
$$
-2\Big(\cos^2 \Big(\frac{\lambda}{n}\Big)+a^2\sin^2 \Big(\frac{\lambda}{n}\Big)\Big)^{n/2} \cos\Big[n\arctan \Big(a \tan \Big(\frac{\lambda}{n}\Big)\Big) -a \lambda\Big].
$$
From the spectral theorem, for  $\psi\in \mathcal{D}(P(T,a))\cap\mathcal{D}(P_n(T,a))$, we have that
$$
\|[P_n(T,a)-P(T,a)]\psi\|^2=\int_{\sigma(T)\cap K}|F_n(\lambda,a) -F(\lambda,a)|^2\, d(E(d\lambda) \psi,\psi)
$$
 We know that $|F_n(\lambda,a) -F(\lambda,a)|^2\to 0$ as $n\to \infty$ on the compact sets $K$ of $\mathbb{R}$, where the superoscillation phenomenon described in  Theorem \ref{carnot} holds.
Since $K$ can be chosen arbitrarily large, we have
 $$
 |F_n(\lambda,a) -F(\lambda,a )|^2\leq G(\lambda)
 $$
 where
 $$
 G(\lambda):=\sup_{\lambda \in K} |F_n(\lambda,a ) -F(\lambda,a)|^2;
 $$
 note that for $n$ sufficiently large $ G(\lambda)$ does not depend on $n$ and it is integrable.
 The dominated convergence theorem implies that
$$\|[P_n(T,a)-P(T,a)]\psi\|^2\to 0.
$$
Since $F_n(\lambda,a)$ and $F(\lambda,a)$ are bounded functions on any compact set of $\mathbb{R}$ and the spectrum of $T$ is compact
we have $\mathcal{D}(P_n(T))\cap\mathcal{D}(P(T))=\mathcal{H}$.
\end{proof}

\bigskip\noindent
{\em The case in which $\sigma(T)$ is unbounded.}

\bigskip

We now study the case in which the whole unbounded spectrum is taken into account. The previous results
suggest the following definition.
\begin{definition}
Let $T$ be a self-adjoint operator on a Hilbert space $\mathcal{H}$ and  assume that $\sigma(T)\subseteq\mathbb{R}$. Let $\gamma >2$ be a fixed number and define the operators $Q_n(T,\gamma)$ as
\begin{equation}\label{qntg}
Q_n(T,a,\gamma):=\int_{e_n(\gamma)}F_n(\lambda,a) \ E(d\lambda)
\end{equation}
where
\begin{equation}\label{qntgeg}
e_n(\gamma):=\Big[-\frac{n^{1/\gamma}\pi}{\gamma},\ \frac{n^{1/\gamma}\pi}{\gamma}\Big].
\end{equation}
\end{definition}

Observe that $e_n(\gamma)\subset[-\dfrac{n\pi}{2},\ \dfrac{n\pi}{2}]$, for all  $\gamma >2$ and $n\in\mathbb N$. This means that when $n$ goes to infinity
the sequence of operators $Q_n(T,\gamma)$ remains bounded as it is proved in the next result.

\begin{theorem}\label{inddagam}
Let $T$ be a self-adjoint operator on a Hilbert space $\mathcal{H}$ such that $\sigma(T)=\mathbb{R}$. Let $Q_n(T,\gamma)$
be the family of operators defined in (\ref{qntg}).
Then, for every $\gamma >2$, we have
$$
\lim_{n\to \infty}\|Q_n(T,a,\gamma)\|=1.
$$
\end{theorem}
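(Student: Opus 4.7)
The plan is to use Theorem \ref{nornafdit} to reduce the operator-norm computation to an essential supremum of $|F_n(\lambda,a)|$ over the interval $e_n(\gamma)$, and then estimate this supremum directly using the closed form for $|F_n(\lambda,a)|^2$ derived in Chapter 3.

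First I would observe that since $F_n(\cdot,a)$ is a bounded Borel function on the real line, Theorem \ref{nornafdit} applied to the spectral measure $E$ restricted to $e_n(\gamma)\subseteq\sigma(T)=\mathbb R$ gives
\[
\|Q_n(T,a,\gamma)\| \;=\; \sup_{\lambda\in e_n(\gamma)} |F_n(\lambda,a)|.
\]
Since $F_n(0,a)=1$ and $0\in e_n(\gamma)$, we automatically have $\|Q_n(T,a,\gamma)\|\geq 1$, so the content of the theorem is the matching upper bound.

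Next, recall from the proof of Theorem \ref{carnot} that
\[
|F_n(\lambda,a)|^2 \;=\; \Big(\cos^2(\lambda/n)+a^2\sin^2(\lambda/n)\Big)^{n} \;=\; \bigl(1+(a^2-1)\sin^2(\lambda/n)\bigr)^{n}.
\]
For $\lambda\in e_n(\gamma)$ we have $|\lambda|/n \leq \pi/(\gamma\, n^{1-1/\gamma})$, which tends to $0$ because $\gamma>2$ implies $1-1/\gamma>0$. Using the elementary bound $\sin^2(x)\leq x^2$ for real $x$, I obtain
\[
\sup_{\lambda\in e_n(\gamma)}|F_n(\lambda,a)|^{2} \;\leq\; \Bigl(1+\tfrac{(a^2-1)\pi^{2}}{\gamma^{2}}\,n^{-(2-2/\gamma)}\Bigr)^{n}.
\]
Writing this as $\exp\bigl(n\log(1+C n^{-(2-2/\gamma)})\bigr)$ with $C=(a^2-1)\pi^2/\gamma^2$ and expanding the logarithm, the exponent is asymptotic to $C\,n^{1-(2-2/\gamma)}=C\,n^{2/\gamma-1}$. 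Since $\gamma>2$ gives $2/\gamma-1<0$, this exponent tends to $0$, so the right-hand side converges to $1$.

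Combining both bounds yields $\lim_{n\to\infty}\|Q_n(T,a,\gamma)\|=1$. The only genuinely delicate point is choosing the cutoff $e_n(\gamma)$ with the right exponent: one needs $|\lambda|/n\to 0$ fast enough that the Taylor bound $\sin^2(\lambda/n)\leq (\lambda/n)^2$ controls $|F_n(\lambda,a)|$, while still allowing $e_n(\gamma)$ to exhaust $\mathbb R$ as $n\to\infty$. The constraint $\gamma>2$ is precisely what guarantees $2/\gamma-1<0$, which is the hinge of the estimate and explains why the same argument would fail for $\gamma\leq 2$ (compare with Proposition \ref{norma}(3), where taking the whole real line yields $\|F_n(T,a)\|=a^n$).
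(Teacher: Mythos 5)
Your proposal is correct and follows essentially the same route as the paper: reduce $\|Q_n(T,a,\gamma)\|$ via Theorem \ref{nornafdit} to $\sup_{\lambda\in e_n(\gamma)}|F_n(\lambda,a)|$ and show this tends to $1$. In fact your version is more complete, since the paper merely remarks that $\lambda/n$ stays away from $\pm\pi/2$, whereas you supply the quantitative estimate $n\sin^2(\lambda/n)=O(n^{2/\gamma-1})\to 0$ that actually uses the hypothesis $\gamma>2$ and makes the limit rigorous.
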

\begin{proof}
For any fixed $n\in\mathbb N$, we have
\[
\begin{split}
\|Q_n(T,a,\gamma)\|&=\sup_{\lambda\in \sigma(T)}|F_n(\lambda,a)\chi_{e_n(\gamma)}|=\sup_{\lambda\in e_n(\gamma)}|F_n(\lambda,a)|
\\
&=\sup_{\lambda\in e_n(\gamma)}\Big(\cos^2 \Big(\frac{\lambda}{n}\Big)+a^2\sin^2 \Big(\frac{\lambda}{n}\Big)\Big)^{n/2}=1,
\end{split}
\]
in fact $\dfrac{\lambda}{n}$ cannot be $\pm \frac{\pi}{2}$, thus
$
\sup_{\lambda\in e_n(\gamma)}|F_n(\lambda,a)|
$
is attained in a point $\lambda_0$ which cannot be $\frac{n\pi}{2}$. So we conclude that
$$
\lim_{n\to \infty}\|Q_n(T,a,\gamma)\|=1
$$
for every $\gamma >2$.
\end{proof}
 Inspired by Definition \ref{good} and by the fact that $\lim_{n\to \infty}\|Q_n(T,\gamma)\|=1$ is independent of $\gamma$, when is $\gamma>2$, we give the following definition.
\begin{definition}\label{cekpo}
Let $T$ be a self-adjoint operator on a Hilbert space $\mathcal{H}$ such that  $\sigma(T)=\mathbb{R}$, and let $\gamma>2$. We set
 \begin{equation}\label{defQN4}
Q_n(T,a):=Q_n(T,a,\gamma),\ \ \ \ \ \ \ \ \ e_n:=e_n(\gamma),
\end{equation}
and we define
$$
\mathcal{D}(Q(T,a)):=\{\psi \in \mathcal{H}\ :\ \lim_{n\to \infty}Q_n(T,a)\psi \ \ \ {\it exists}\ \},
$$
$$
Q(T,a)\psi:=\lim_{n\to \infty}Q_n(T,a) \ \psi, \ \ \ \ \ \psi\in\mathcal{D}(Q(T,a)).
$$
\end{definition}

We can now prove the following:
\begin{theorem}
Let $T$ be a self-adjoint operator on a Hilbert space $\mathcal{H}$, with $\sigma(T)=\mathbb{R}$ and let $Q_n(T,a)$ and $F(T,a)$ be the operators defined in (\ref{defQN4}) and (\ref{deffN}), respectively. Then we have
$$
\|[Q_n(T,a)-F(T,a)]\psi\|^2\to 0,\ \ for \ all\ \ \psi\in \mathcal{H}.
$$
\end{theorem}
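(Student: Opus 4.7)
The plan is to reduce the operator estimate to a scalar integral against the finite positive spectral measure of $\psi$ and to apply the Lebesgue dominated convergence theorem.

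First I would use the spectral theorem to write, for an arbitrary $\psi\in\mathcal H$,
\[
\|[Q_n(T,a)-F(T,a)]\psi\|^2 \;=\; \int_{\mathbb R}\bigl|F_n(\lambda,a)\chi_{e_n}(\lambda)-e^{ia\lambda}\bigr|^2\,d\mu_\psi(\lambda),
\]
where $\mu_\psi(A):=(E(A)\psi,\psi)$ is the finite positive spectral measure with total mass $\|\psi\|^2$. Since both $F_n(\lambda,a)\chi_{e_n}$ and $e^{ia\lambda}$ are bounded Borel functions on $\mathbb R$, both $Q_n(T,a)$ and $F(T,a)$ are everywhere-defined bounded operators, so the identity above is valid for every $\psi\in\mathcal H$.

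Next I would verify pointwise convergence of the integrand $h_n(\lambda):=|F_n(\lambda,a)\chi_{e_n}(\lambda)-e^{ia\lambda}|^2$ to $0$. Because $e_n=[-n^{1/\gamma}\pi/\gamma,\,n^{1/\gamma}\pi/\gamma]$ with $\gamma>2$, and $n^{1/\gamma}\pi/\gamma\to\infty$, every fixed $\lambda\in\mathbb R$ eventually belongs to $e_n$, and then $h_n(\lambda)=|F_n(\lambda,a)-e^{ia\lambda}|^2\to 0$ by Proposition \ref{pro3.1.6}, part~(1).

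The final step, and the main obstacle, is a uniform-in-$n$ integrable domination of $h_n$. Here the specific choice $\gamma>2$ in the definition of $e_n$ is crucial: Theorem \ref{inddagam} gives $\lim_{n\to\infty}\|Q_n(T,a)\|=1$, and since each $\|Q_n(T,a)\|=\sup_{\lambda\in e_n}|F_n(\lambda,a)|$ is finite (as $e_n$ is compact and $F_n(\cdot,a)$ continuous), the quantity $M:=\sup_{n\in\mathbb N}\|Q_n(T,a)\|$ is finite. Thus $|F_n(\lambda,a)\chi_{e_n}(\lambda)|\le M$ uniformly in $\lambda$ and $n$, and hence $h_n(\lambda)\le(M+1)^2$ for every $\lambda$ and $n$. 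The constant $(M+1)^2$ is $\mu_\psi$-integrable because $\mu_\psi(\mathbb R)=\|\psi\|^2<\infty$, so the dominated convergence theorem yields $\int_{\mathbb R}h_n\,d\mu_\psi\to 0$, which is the desired conclusion. I note that the naive truncation to $[-n\pi/2,n\pi/2]$ would force $\|Q_n(T,a)\|=a^n\to\infty$ and destroy the domination; it is precisely the subexponentially growing window $e_n(\gamma)$ with $\gamma>2$ that renders the argument successful.
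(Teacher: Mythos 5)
Your proof is correct, but it takes a somewhat different route from the paper's. The paper introduces the auxiliary truncated operator $\breve{F}_n(T,a,\gamma)=\int_{e_n(\gamma)}F(\lambda,a)\,E(d\lambda)$, splits $\|[Q_n(T,a)-F(T,a)]\psi\|$ by the triangle inequality into $\|[Q_n-\breve F_n]\psi\|+\|[\breve F_n-F]\psi\|$, kills the first term by dominated convergence using the fact that $\sup_{\lambda\in e_n}|F_n(\lambda,a)-F(\lambda,a)|^2\to 0$ (uniform convergence over the growing windows $e_n$), and kills the second term as a tail estimate for the finite spectral measure of $\psi$. You instead reduce the whole quantity in one step to $\int_{\mathbb R}|F_n(\lambda,a)\chi_{e_n}(\lambda)-e^{ia\lambda}|^2\,d\mu_\psi$ and apply dominated convergence once, with pointwise convergence coming from the fact that each fixed $\lambda$ eventually lies in $e_n$ together with $F_n(\lambda,a)\to e^{ia\lambda}$, and with the dominating constant supplied by the uniform bound $M=\sup_n\|Q_n(T,a)\|<\infty$, which follows from Theorem 7.2.10 (the norms converge to $1$). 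Your version is leaner: it needs no auxiliary operator, only one application of dominated convergence, and only the \emph{boundedness} of $\sup_{\lambda\in e_n}|F_n(\lambda,a)|$ in $n$ rather than the uniform error estimate over $e_n$; the price is that the role of the tail $\mathbb R\setminus e_n$ of the spectral measure, which the paper's decomposition isolates explicitly, is absorbed into the pointwise-convergence step. One small point of care: the operator norm is an essential supremum with respect to the spectral measure, but this is harmless here, since a bound $\mu_\psi$-a.e.\ is all the dominated convergence theorem requires (and in fact $\sup_{\lambda\in e_n}|F_n(\lambda,a)|$ is bounded in $n$ by direct computation, exactly the estimate underlying Theorem 7.2.10).
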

\begin{proof} We have
$$
\sup_{\lambda\in e_n}|F_n(\lambda,a) -F(\lambda,a)|^2=
\sup_{\lambda\in e_n} \Big(1+\Big(\cos^2 \Big(\frac{\lambda}{n}\Big)+a^2\sin^2 \Big(\frac{\lambda}{n}\Big)\Big)^{n}
$$
$$
-2\Big(\cos^2 \Big(\frac{\lambda}{n}\Big)+a^2\sin^2 \Big(\frac{\lambda}{n}\Big)\Big)^{n/2} \cos\Big[n\arctan \Big(a \tan \Big(\frac{\lambda}{n}\Big)\Big) -a\lambda\Big]\Big),
$$
and since $\lambda\in e_n$ we also have that
$$
\sup_{\lambda\in e_n}|F_n(\lambda,a) -F(\lambda,a)|^2\to 0\ \ \ {\rm as }\ \ n\to\infty.
$$
We now define the operator
$$
\breve{F}_n(T,a,\gamma )=\int_{ e_n(\gamma)}F(\lambda,a) \ E(d\lambda),
$$
and we observe that, for any $n\in\mathbb N$ fixed
$$
\mathcal{D}(\breve{F}_n(T,a))\cap\mathcal{D}(Q_n(T,a))=\mathcal{H}.
$$
The spectral theorem gives
\[
\begin{split}
&\lim_{n\to\infty}\|[Q_n(T,a)-\breve{F}_n(T,a, \gamma)]\psi\|^2\\
&=\lim_{n\to\infty}\int_{e_n(\gamma)}|F_n(\lambda,a) -F(\lambda,a)|^2\, (E(d\lambda) \psi,\psi),\ \ \ \psi\in \mathcal{H}.
\end{split}
\]
By the dominated convergence theorem we obtain
$$\|[Q_n(T,a)-\breve{F}_n(T,a,\gamma)]\psi\|^2\to 0.$$
Moreover, for $\gamma >2$, we have
$$\|[F(T,a)-\breve{F}_n(T,a,\gamma)]\psi\|^2\to 0,$$
and so by
\[
\begin{split}
\|[Q_n(T,a)-F(T,a)]\psi\|&\leq \|[Q_n(T,a)-\breve{F}_n(T,a,\gamma)]\psi\|
\\
&
+\|[\breve{F}_n(T,a,\gamma)-F(T,a)]\psi\|
\end{split}
\]
the statement follows.

\end{proof}

\begin{remark}{\rm
Our discussion shows that when $n$ is an arbitrary large natural number and $T$ is any self-adjoint operator with unbounded spectrum,
then the operators $F_n(T,a)$ are bounded operators for any $a >1$.
\\
When $n\to \infty $,  and $T$ is any self-adjoint operator with unbounded spectrum
then, if we consider that part of the spectrum that is contained in some compact set, then
 we have $\lim_{n\to \infty}P_n(T,a)=P(T,a)$ for any $a >1 $ and $ P(T,a)$ is a bounded operator which depends on the parameter $a$.
\\
Since we have
$$
\sup_{x\in \mathbb{R}}|F_n(x,a)|=a^n,
$$
in  the case we consider  any self-adjoint operator $T$ with unbounded spectrum,  the family of operators
$Q_n(T,a)$ is such that $\lim_{n\to \infty}Q_n(T,a)$  converges to the bounded linear operator $F(T,a)$.
}
\end{remark}
\begin{remark}
{\rm
If $\hat{P}$ is the momentum operator $-i D_x =\hat{P}$ we have
\begin{equation}\label{fnp}
\begin{split}
&Q_n(\hat{P})=\Big(\frac{a+1}{2}\exp({\frac{i}{n}L\hat{P}})+\frac{1-a}{2}\exp({-\frac{i}{n}L\hat{P}})\Big)^n
\\
&F(\hat{P})=\exp({i aL\hat{P}}),
\end{split}
\end{equation}
and
$$
\lim_{n\to \infty} (Q_n(\hat{P})\phi,\psi)=(F(\hat{P})\phi,\psi),\ \ \ {\rm  for \ all}\ \ \phi\in \mathcal{D}(\hat{P}),\ \ \psi\in \mathcal{H}.
$$
We finally recall that the problem of approximating a function by superoscillating sequences arises from the fact
that  (\ref{fnp}) can be written as
$$
F_n(\hat{P},a)=\sum_{j=0}^nC_j(n,a)\exp({i(1-2j/n)L\hat{P}}).
$$
The action of $F_n(\hat{P},a)$ on functions $\psi \in \mathcal{S}(\mathbb{R})$ gives
a linear combination of the function $\psi$ computed at the points
 $x+(1-2j/n)L$ and yields:
$$
\phi_n(x):=F_n(\hat{P},a)\psi(x)=\sum_{j=0}^nC_j(n,a)\psi(x+(1-2j/n)L).
$$
}
\end{remark}

\chapter{Superoscillations in $SO(3)$}

In this chapter, we expand on our earlier paper \cite{ACSST4} to show how superoscillations may occur in the framework of group representation theory. Specifically, we will analyze the case of the group $SO(3)$, though our results should extend to $SO(n)$ for any integer $n$. In our original \cite{ACSST4}, we considered the situation for $2\ell$ particles with equal spin, while here we will show how this restriction is not necessary, and we will describe the necessary modification to consider the general case in which we have $\ell+m$ particles with spin $\frac 12$ and $\ell-m$ particles with spin $-\frac 12$. The main byproduct of superoscillations in this context is a new asymptotic relationship involving the well known Wigner $d$-functions (see \cite{dennis2004} as well as \cite{sakurai}).

\section{The weak value of the operator  $\exp(i\hat{\mathcal{L}}_z\upn\delta\varphi)$}

Before to treat the topic systematically, we want to offer a heuristic argument which suggests the existence of superoscillations in $SO(3)$. This argument comes from the same general reasoning that is applied when one computes the weak value of operators in the time-symmetric formulation of quantum mechanics  \cite{PT-nov-2010} and that is discussed in Chapter2.

Let $\hat{L}_x$, $\hat{L}_y$, $\hat{L}_z$ be the quantum operators associated with  the components of the angular momentum. These three operators can be seen as generators of $SO(3)$.
Let $\ket{\psi_{\rm in}}$ and $\ket{\psi_{\rm fin}}$ be the states defined by
\begin{equation}
\hat{A}|\psi_{\rm fin}\rangle:=(\hat{L}_z\cos\theta-\hat{L}_x\sin\theta) | \psi_{\rm fin}\rangle = \ell |\psi_{\rm fin}\rangle
\label{A}
\end{equation}
\begin{equation}
\hat{B}|\psi_{\rm in}\rangle:=(\hat{L}_z\cos\theta+\hat{L}_x\sin\theta) | \psi_{\rm in}\rangle = \ell |\psi_{\rm in}\rangle
\label{B}
\end{equation}
where $\ell=\max|\hat{L}_z|$ is related to the total angular momentum and can be computed from the relation $$\hat{L}^2|\psi_{\rm in}\rangle=\ell(\ell+1)|\psi_{\rm in}\rangle$$
or, analogously, from $\hat{L}^2|\psi_{\rm fin}\rangle=\ell(\ell+1)|\psi_{\rm fin}\rangle$.
Since we have
\begin{equation}
{\hat{L}_z}=\frac{\hat{L}_z\cos\theta -\hat{L}_x\sin\theta+\hat{L}_z\cos\theta+\hat{L}_x\sin\theta}{2\cos\theta}=\frac{\hat{A}+\hat{B}}{2\cos\theta},
\label{lz}
\end{equation}
we can compute the weak value of $\la \hat{L}_z\ra_w$:
\begin{eqnarray}
\la \hat{L}_z\ra_w&= & \frac{\langle \psi_{\rm fin}|\hat{L}_z|\psi_{\rm in}\rangle}{\langle \psi_{\rm fin}|\psi_{\rm in}\rangle}=\frac{1}{2\cos\theta}\frac{\langle \psi_{\rm fin}|\hat{A}+\hat{B}|\psi_{\rm in}\rangle}{\langle \psi_{\rm fin}|\psi_{\rm in}\rangle}
\label{lzw0}=
\frac{\ell}{\cos\theta} .
\end{eqnarray}

Assuming sufficient regularity for the functions on which the operators act,
we can now compute the weak value of $e^{i\hat{L}_z\varphi}$ with the same pre- and post-selection, when
$\varphi$ small, as follows:
\[
 \langle \psi_{\rm fin}| e^{i\hat{L}_z\varphi} |\psi_{\rm in}\rangle\approx  \langle \psi_{\rm fin}| 1+{i\hat{L}_z\varphi} |\psi_{\rm in}\rangle= \langle \psi_{\rm fin}|\psi_{\rm in}\rangle\left(1+i\frac{\langle \psi_{\rm fin}|\hat{L}_z\varphi|\psi_{\rm in}\rangle}{\langle \psi_{\rm fin}|\psi_{\rm in}\rangle}\right).
\]
For small values of $\varphi$, the last bracket can be seen again as a first order approximation of an exponential and so
\[
 \la e^{i\hat{L}_z\varphi}\ra_w\approx 1+i\varphi \la \hat L_z\ra_w\approx e^{i\varphi \la \hat L_z\ra_w}=\exp \left[{i\varphi\frac{\ell}{\cos\theta}}\right],
\]
which suggests the existence of superoscillations in $\varphi$ with a frequency much larger than $\ell$ (the maximal eigenvalue of $\hat{L}_z$) which occurs when $\cos\theta$ is very small.

\bigskip

To show this, using (\ref{lz}), we write:

\begin{equation}
 \langle \psi_{\rm fin}| \exp\left[{i\hat{L}_z\varphi ({2\cos\theta})}\right] |\psi_{\rm in}\rangle=\langle \psi_{\rm fin}| \exp\left[{i}(\hat{A}+\hat{B})\varphi\right] |\psi_{\rm in}\rangle,
\label{note}
\end{equation}
where
\[
\hat{A}:=\hat{L}_z\cos\theta-\hat{L}_x\sin\theta, \qquad \hat{B}:=\hat{L}_z\cos\theta+\hat{L}_x\sin\theta.
\]
We now use the Baker-Campbell-Hausdorff formula for a pair of operators $\hat{A}$ and $\hat{B}$, that is:
\begin{equation}\label{BCH}
\begin{split}
&\exp({\hat{A}\varphi})\exp( e^{\hat{B}\varphi}) \\
&= \exp( {(\hat{A}+\hat{B})\varphi+\frac{1}{2}[\hat{A},\hat{B}]\varphi^2+\frac{1}{12}
[[\hat{A},[\hat{A},\hat{B}]]-[\hat{B},[\hat{A},\hat{B}]]\varphi^3+ \ldots})
\end{split}
\end{equation}
If $[\hat{A},\hat{B}]$ and all the various commutators involving $\hat{A}$  and $\hat{B}$ can be neglected, then we may rewrite (\ref{note}) as:
\begin{equation}
\langle \psi_{\rm fin}| \exp\left[{i\hat{L}_z\varphi ({2\cos\theta})}\right] |\psi_{\rm in}\rangle=\langle \psi_{\rm fin}| e^{{i}\hat{A}\varphi}e^{{i}\hat{B}\varphi} |\psi_{\rm in}\rangle=e^{2i\ell \varphi},
\label{note2}
\end{equation}
where, as above, we act with $\hat{B}$ to the right and with $\hat{A}$ to the left.
In the sequel we assume that $\ell$ is large and we consider small variations of $\varphi$ so that
$\varphi=\varphi'/\ell$, with $\varphi'$ bounded. The first commutator under consideration is
\[
\left[\hat{A}\frac{\varphi'}{\ell},\hat{B}\frac{\varphi'}{\ell}\right]
=\frac{2}{\ell^2}\sin\theta\cos\theta \varphi'^2\hat{L}_y\approx \frac{\cos\theta}{\ell},
\]
 where we used the fact that, in general, $|\hat{L}_y|$ is bounded by $\ell$.  In the case of
interest $\hat{L}_y$ is almost orthogonal to both angular momenta $\hat{A}$ and $\hat{B}$ defining the initial and final state. Thus its value is
$O(1)$ and the commutator is in fact approximated by $1/\ell^2$.
Morever, the various $k$-fold commutators in the Baker-Campbell-Hausdorff formula (\ref{BCH}) will always give just one angular momentum operator with
some pre-factor depending on $\theta$ which is bounded by one and with $1/\ell^k$ suppressions.
 The series in formula (\ref{BCH}) is strongly convergent, thus we can neglect all the commutators in the limit, when $\ell$ becomes large. This then recoups (\ref{note2}) which exhibits  extremely strong sensitivity to small $\varphi\cos{\theta}$ rotations or, in other words, superoscillations.

\bigskip

We now recall some notations and properties of quantum operators for the case of spin-$\frac{1}{2}$.
The following well-known matrices $\hat{\textrm{S}}_x$, $\hat{\textrm{S}}_y$, $\hat{\textrm{S}}_z$ represent the following spin-$\frac{1}{2}$ operators:
\[
 \hat{\textrm{S}}_x=\frac{1}{2}
\left[\begin{matrix}
0&1\\
1&0\\
 \end{matrix}
\right],
\quad
\hat{\textrm{S}}_y=\frac{1}{2}
\left[\begin{matrix}
0&-i\\
i&0\\
 \end{matrix}
\right],\quad
\hat{\textrm{S}}_z=\frac{1}{2}
\left[\begin{matrix}
1&0\\
0&-1\\
 \end{matrix}
\right].
\]
Easy computations show that $\hat{\textrm{S}}_x$, $\hat{\textrm{S}}_y$, $\hat{\textrm{S}}_z$ satisfy the commutation relations
$$
[\hat{\textrm{S}}_\alpha,\hat{\textrm{S}}_\beta]=i\varepsilon_{\alpha\beta\gamma}\hat{\textrm{S}}_\gamma
$$
and
\[
(\hat{\textrm{S}}_\alpha)^2=\frac{1}{4}
\left[\begin{matrix}
1&0\\
0&1\\
 \end{matrix}
\right].
\]
We also have the so-called raising and lowering operators which act as follows:
\begin{equation}
\hat{\textrm{S}}_+=\hat{\textrm{S}}_x+i\hat{\textrm{S}}_y,\qquad \hat{\textrm{S}}_-=\hat{\textrm{S}}_x-i\hat{\textrm{S}}_y,
\label{raiselower}
\end{equation}
which satisfy the relations
\begin{equation}\label{loweincr}
\hat{\textrm{S}}_+
\left[
\begin{matrix}
1\\
0
\end{matrix}\right]
=
\hat{\textrm{S}}_-
\left[
\begin{matrix}
0\\
1
\end{matrix}\right]
=
0,
\,\,\,\, \,\,\hat{\textrm{S}}_+
\left[
\begin{matrix}
0\\
1
\end{matrix}\right]
=
\left[
\begin{matrix}
1\\
0
\end{matrix}\right],\,\,\,\,\,\,
\hat{\textrm{S}}_-
\left[
\begin{matrix}
1\\
0
\end{matrix}\right]
=
\left[
\begin{matrix}
0\\
1
\end{matrix}\right].
\end{equation}

 Let us now consider a system of $N$  non interacting spin-$\frac{1}{2}$ particles with associated commuting operators $$\hat{\textrm{S}}_\alpha^{(1)},\ldots,\hat{\textrm{S}}_\alpha^{(j)},\ldots, \hat{\textrm{S}}_\alpha^{(N)}$$ (note that $\hat{\textrm{S}}_\alpha^{(j)}$ operates only on the particle $j$). The total spin operator for all $N$ particles is defined by the operator:
$$
\hat{\mathcal{L}}_\alpha\upn:=\sum_{j=1}^N \hat{\textrm{S}}_\alpha^{(j)}.
$$
One can verify that
the commutation relations between the individual spin-$\frac{1}{2}$ operators, namely $\hat{\textrm{S}}_\alpha$, translate into the same commutation relations
for the $N$ particle system, $\hat{\mathcal{L}}_\alpha\upn$:
\[
 [\hat{\mathcal{L}}_\alpha\upn,\hat{\mathcal{L}}_\beta\upn]=i\varepsilon_{\alpha\beta\gamma}\hat{\mathcal{L}}_\gamma\upn.
\]

\begin{proposition}\label{spin-elle}
In the product wavefunction $|\psi\upn\rangle$, with each individual particle $j$ in the state $|\hat{\textrm{S}}_z^{(j)}=\frac{1}{2} \rangle$:
\begin{equation}
|\psi\upn\rangle = | \hat{\textrm{S}}_z^{(1)}=\frac{1}{2}\rangle \ldots  | \hat{\textrm{S}}_z^{(N)}=\frac{1}{2}\rangle=\prod_{{j=1}}^N |{\hat{\textrm{S}}_z^{(j)}=\frac{1}{2}} \rangle
\label{Nspin}
\end{equation}
we have
\begin{equation}
\hat{\mathcal{L}}_z\upn|\psi\upn\rangle = \frac{N}{2}|\psi\upn\rangle
\label{sz}
\end{equation}
and
\begin{equation}
(\hat{\mathcal{L}}\upn)^2\ \ | \psi\upn\rangle = \frac{N}{2}
\Big(\frac{N}{2}+1\Big)\, |\psi\upn\rangle .
\label{sz2}
\end{equation}

\end{proposition}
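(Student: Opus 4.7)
The plan is to handle the two identities separately, exploiting the fact that the individual spin operators $\hat{\textrm{S}}_\alpha^{(j)}$ act only on the $j$-th factor of the tensor product and commute with all $\hat{\textrm{S}}_\alpha^{(k)}$ for $k\neq j$. For the first identity (\ref{sz}), I would simply invoke linearity: since $\hat{\mathcal{L}}_z\upn=\sum_{j=1}^N \hat{\textrm{S}}_z^{(j)}$, and since each factor in $|\psi\upn\rangle$ is a $+\tfrac12$-eigenstate of the corresponding $\hat{\textrm{S}}_z^{(j)}$ (as is evident from the matrix form of $\hat{\textrm{S}}_z$ applied to the column vector $(1,0)^{T}$), one immediately obtains $\hat{\mathcal{L}}_z\upn|\psi\upn\rangle=\sum_{j=1}^N \tfrac{1}{2}|\psi\upn\rangle=\tfrac{N}{2}|\psi\upn\rangle$.

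For the second identity (\ref{sz2}) the cleanest approach is via the raising and lowering operators. I would define
\[
\hat{\mathcal{L}}_\pm\upn:=\hat{\mathcal{L}}_x\upn\pm i\hat{\mathcal{L}}_y\upn=\sum_{j=1}^N \hat{\textrm{S}}_\pm^{(j)},
\]
and use the standard algebraic identity (which follows from the commutation relations $[\hat{\mathcal{L}}_\alpha\upn,\hat{\mathcal{L}}_\beta\upn]=i\varepsilon_{\alpha\beta\gamma}\hat{\mathcal{L}}_\gamma\upn$ quoted just before the proposition)
\[
(\hat{\mathcal{L}}\upn)^2=\hat{\mathcal{L}}_-\upn\hat{\mathcal{L}}_+\upn+(\hat{\mathcal{L}}_z\upn)^2+\hat{\mathcal{L}}_z\upn.
\]

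The decisive observation is then that $\hat{\mathcal{L}}_+\upn$ annihilates $|\psi\upn\rangle$. Indeed, by the product rule for tensor operators,
\[
\hat{\mathcal{L}}_+\upn|\psi\upn\rangle=\sum_{j=1}^N |\hat{\textrm{S}}_z^{(1)}=\tfrac{1}{2}\rangle\cdots\bigl(\hat{\textrm{S}}_+^{(j)}|\hat{\textrm{S}}_z^{(j)}=\tfrac{1}{2}\rangle\bigr)\cdots|\hat{\textrm{S}}_z^{(N)}=\tfrac{1}{2}\rangle,
\]
and each term on the right vanishes by the first of the relations (\ref{loweincr}). Substituting this into the identity above and using (\ref{sz}) just proved, I obtain
\[
(\hat{\mathcal{L}}\upn)^2|\psi\upn\rangle=0+\Bigl(\tfrac{N}{2}\Bigr)^{2}|\psi\upn\rangle+\tfrac{N}{2}|\psi\upn\rangle=\tfrac{N}{2}\Bigl(\tfrac{N}{2}+1\Bigr)|\psi\upn\rangle,
\]
which is (\ref{sz2}). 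There is essentially no obstacle here: once one chooses the $\hat{\mathcal{L}}_-\hat{\mathcal{L}}_+$ form of the Casimir (rather than the $\hat{\mathcal{L}}_+\hat{\mathcal{L}}_-$ form, which would force one to compute the action of $\hat{\mathcal{L}}_-\upn$, producing a much messier intermediate state), the argument reduces to bookkeeping. The only point deserving a line of justification is the tensor-product derivation formula for $\hat{\mathcal{L}}_+\upn$, which follows from the fact that $\hat{\textrm{S}}_+^{(j)}$ acts as the identity on all factors other than the $j$-th.
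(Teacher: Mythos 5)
Your proof is correct, but it takes a genuinely different route for \eqref{sz2} than the paper does. The paper proves \eqref{sz2} by showing directly that $\bigl((\hat{\mathcal{L}}_x\upn)^2+(\hat{\mathcal{L}}_y\upn)^2\bigr)|\psi\upn\rangle=\frac N2|\psi\upn\rangle$: it expands these squares into single-particle operators, rewrites each cross term as $\hat{\textrm{S}}_+^{(j)}\hat{\textrm{S}}_-^{(k)}+\hat{\textrm{S}}_-^{(j)}\hat{\textrm{S}}_+^{(k)}$ (which annihilates the all-up product state because of the $\hat{\textrm{S}}_+$ factor), and then sums the surviving diagonal terms $(\hat{\textrm{S}}_x^{(j)})^2+(\hat{\textrm{S}}_y^{(j)})^2=\tfrac12 I$ to get $\frac N2$; adding $(\hat{\mathcal{L}}_z\upn)^2|\psi\upn\rangle=(\frac N2)^2|\psi\upn\rangle$ gives \eqref{sz2}. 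You instead use the collective ladder operators and the operator identity $(\hat{\mathcal{L}}\upn)^2=\hat{\mathcal{L}}_-\upn\hat{\mathcal{L}}_+\upn+(\hat{\mathcal{L}}_z\upn)^2+\hat{\mathcal{L}}_z\upn$ (which indeed follows from the quoted commutation relations, since $\hat{\mathcal{L}}_-\hat{\mathcal{L}}_+=\hat{\mathcal{L}}_x^2+\hat{\mathcal{L}}_y^2+i[\hat{\mathcal{L}}_x,\hat{\mathcal{L}}_y]$), together with the observation that $\hat{\mathcal{L}}_+\upn=\sum_j\hat{\textrm{S}}_+^{(j)}$ kills $|\psi\upn\rangle$ term by term via \eqref{loweincr}. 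This is the standard highest-weight argument: it is cleaner, avoids handling cross terms altogether, and only needs the collective $\mathfrak{su}(2)$ relations plus annihilation of the state by the total raising operator, so it applies verbatim to any simultaneous situation of that type. What the paper's more hands-on expansion buys is that it computes the action of $(\hat{\mathcal{L}}_x\upn)^2+(\hat{\mathcal{L}}_y\upn)^2$ explicitly on the product state, which is the quantity reused in the remark immediately after the proposition (the mixed case of $\ell+m$ up and $\ell-m$ down spins), where the state is no longer annihilated by $\hat{\mathcal{L}}_+\upn$ and your shortcut would not apply directly. Your one flagged point, the derivation-type action of $\hat{\mathcal{L}}_+\upn$ on the tensor product, is exactly right and needs only the remark that $\hat{\textrm{S}}_+^{(j)}$ acts as the identity on the other factors.
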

\begin{proof}
The relation (\ref{sz}) follows directly from the definition and hence
$$(\hat{\mathcal{L}}_z\upn)^2\ |\psi\upn\rangle=\Big(\frac N2\Big)^2|\psi\upn\rangle.$$
Formula (\ref{sz2}) then follows from:
\begin{equation}
((\hat{\mathcal{L}}_x\upn)^2+
(\hat{\mathcal{L}}_y\upn)^2) \, |\psi\upn \rangle= \frac N2 \, |\psi\upn\rangle .
\label{mixed}
\end{equation}
To verify this relation
we expand $\hat{\mathcal{L}}_x\upn$ and $\hat{\mathcal{L}}_y\upn$ in terms of the individual raising and lowering operators $\hat{\textrm{S}}_+^{(j)}$ and $\hat{\textrm{S}}_-^{(j)}$, see (\ref{raiselower}). The mixed terms in (\ref{mixed}) can be written as
\[
 \hat{\textrm{S}}_x^{(1)}\hat{\textrm{S}}_x^{(2)}+\hat{\textrm{S}}_y^{(1)}\hat{\textrm{S}}_y^{(2)}=\hat{\textrm{S}}_+^{(1)}\hat{\textrm{S}}_-^{(2)}+\hat{\textrm{S}}_-^{(1)}\hat{\textrm{S}}_+^{(2)}
\]
and similarly for the other cases. This expression
clearly vanishes due to the presence of $\hat{\textrm{S}}_+^{(1)}$ or $\hat{\textrm{S}}_+^{(2)}$ which operate on $|\hat{\textrm{S}}_z^{(1)}=\frac{1}{2}\rangle$ and $|\hat{\textrm{S}}_z^{(2)}=\frac{1}{2}\rangle$ respectively, thereby yielding $0$  by (\ref{loweincr}). This leaves only the diagonal terms so that
\[
\left[(\hat{\mathcal{L}}_x\upn)^2+(\hat{\mathcal{L}}_y\upn)^2\right]|\psi\upn\rangle=\sum_{j=1}^N\left[
(\hat{\textrm{S}}_x^{(j)})^2+(\hat{\textrm{S}}_y^{(j)})^2\right]|\psi\upn\rangle=\frac N2|\psi\upn\rangle
\]
since
$$(\hat{\textrm{S}}_x^{(j)})^2|\psi\upn\rangle=\frac{1}{4}|\psi\upn\rangle\qquad {\rm and}\qquad (\hat{\textrm{S}}_y^{(j)})^2|\psi\upn\rangle=\frac{1}{4}|\psi\upn\rangle$$ and we have $N$ such terms.
\end{proof}
\begin{remark}
 {\rm From now on we will consider systems with an even number of particles and we set $\frac N2=\ell$.}
\end{remark}
\begin{remark}
 {\rm The calculations we have done in this chapter have assumed that we are looking at the system of $2\ell$ particles, all of which have spin-$\frac{1}{2}$ and this has allowed us to calculate the eigenvalues for $(\hat{\mathcal{L}}\upn)^2$, $\hat{\mathcal{L}}_z\upn$ and $(\hat{\mathcal{L}}_x\upn)^2+(\hat{\mathcal{L}}_y\upn)^2$.
 Let us now repeat the same calculations for a system with $\ell+m$ particles in the state $|\hat{\textrm{S}}_z^{(j)}=\frac{1}{2} \rangle$ and $\ell-m$ particles $|\hat{\textrm{S}}_z^{(j)}=-\frac{1}{2} \rangle$ (therefore $N=2\ell$). First of all we notice that in Proposition \ref{spin-elle} we obtain
 \begin{equation}
\hat{\mathcal{L}}_z\upn|\psi\upn\rangle = m |\psi\upn\rangle
\end{equation}
and therefore
\begin{equation}
(\hat{\mathcal{L}}_z\upn)^2|\psi\upn\rangle = m^2 |\psi\upn\rangle .
\end{equation}
Since $$(\hat{\mathcal{L}}\upn)^2= (\hat{\mathcal{L}}\upn_x)^2+(\hat{\mathcal{L}}\upn_y)^2+(\hat{\mathcal{L}}\upn_z)^2$$ and $$((\hat{\mathcal{L}}\upn_y)^2+(\hat{\mathcal{L}}\upn_z)^2)|\psi\upn\rangle =\ell |\psi\upn\rangle, $$ we conclude that
$$(\hat{\mathcal{L}}\upn)^2|\psi\upn\rangle =(\ell+m^2)|\psi\upn\rangle .$$
 Clearly when $m=\ell$ we obtain the previous result.}
\end{remark}

Our next goal is to compute the weak value of the operator $\exp(i\hat{\mathcal{L}}_z\upn\delta\varphi)$ for a small angle $\delta\varphi$.

We will again utilize the $N$ spin-$\frac{1}{2}$ particles, but now we use the initial state:
\[
| \psi_{\rm in}\upn\rangle = | \hat{\mathcal{L}}_z\upn \cos\theta+\hat{\mathcal{L}}_x\upn \sin\theta=\ell\rangle,
\]
and the final state:
\[
| \psi_{\rm fin}\upn\rangle = | \hat{\mathcal{L}}_z\upn \cos\theta - \hat{\mathcal{L}}_x\upn \sin\theta=\ell\rangle .
\]
Then
$| \psi_{\rm in}\upn\rangle$ and $| \psi_{\rm fin}\upn\rangle$ can be realized in two different ways.  Consider first $| \psi_{\rm in}\upn\rangle$
which is the eigenfunction of the angular momentum operator
$$\hat{\mathcal{L}}_{\theta}\upn=\hat{\mathcal{L}}_z\upn \cos\theta+\hat{\mathcal{L}}_x\upn \sin\theta ,$$
i.e. the result of rotating $\hat{\mathcal{L}}\upn$ of the original spin-$\frac{1}{2}$ particles (\ref{Nspin}) by an angle $\theta$  close to $\pi/2$ with respect to the $y$-axis, which is realized by the operation $\exp({i\hat{\mathcal{L}}_y\upn\theta})$.  In the spin-$\frac{1}{2}$ product basis, this is obtained using $\exp({i\sum_{j=1}^N \hat{\textrm{S}}_y^{(j)}\theta})$.  This amounts to rotating each of the individual spin-$\frac{1}{2}$ states $|{\hat{\textrm{S}}_z^{(j)}=\frac{1}{2}} \rangle=\left[
\begin{matrix}
1\\
0
\end{matrix}\right]$ in the same way, giving
\[
|\psi_{\rm in}\rangle=\left[
\begin{matrix}
\cos\frac{\theta}{2}\\
\ \ \\
\sin\frac{\theta}{2}
\end{matrix}\right]
\]
for each of the $N$ spin-$\frac{1}{2}$ particles. Similarly, $| \psi_{\rm fin}\upn\rangle$, the corresponding eigenstate of $\hat{\mathcal{L}}_z\upn \cos\theta - \hat{\mathcal{L}}_x\upn \sin\theta$ with maximum eigenvalue $m=\ell$, is obtained by rotating $| \psi_{\rm in}\upn\rangle$ by $\phi=\pi$ around the $z$-axis.  In the $z$-basis of the original $N$ spin-$\frac{1}{2}$ particles, the rotation of each individual spin-$\frac{1}{2}$ by $\phi$ modifies the above $|\psi_{\rm in}\rangle$ to:
\[
\left[
\begin{matrix}
\cos\frac{\theta}{2}e^{i\frac{\phi}{2}}\\
\ \ \\
\sin\frac{\theta}{2}e^{-i\frac{\phi}{2}}
\end{matrix}\right].
\]
Thus, $|\psi_{\rm fin}\rangle$ can be written as (apart from an overall phase of $e^{i\frac{\pi}{2}}$):
\[
|\psi_{\rm fin}\rangle=\left[
\begin{matrix}
\cos\frac{\theta}{2}\\
\ \  \\
-\sin\frac{\theta}{2}
\end{matrix}\right].
\]
\noindent
We now prove the following result:
\begin{theorem}
The weak value of the operator
$\exp(i\hat{\mathcal{L}}_z\upn\delta\varphi)$ for a small angle $\delta\varphi$
when we take $N=2\ell$ spins is:
\begin{equation}
\left[\exp({i\hat{\mathcal{L}}_z\upn\frac{\delta\varphi'}{2\ell}})\right]_w=\dfrac{\left(\cos^2(\theta/2)
\exp(i\frac{\delta\varphi'}{2\ell})-\sin^2(\theta/2)
\exp(-i\frac{\delta\varphi'}{2\ell})\right)^{2\ell}}{\left(\cos\theta\right)^{2\ell}}.
\end{equation}
\end{theorem}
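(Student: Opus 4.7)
The structure that makes the claimed closed form tractable is the fact that both the pre-selection $|\psi_{\mathrm{in}}\upn\rangle$ and the post-selection $|\psi_{\mathrm{fin}}\upn\rangle$ were explicitly exhibited as $N$-fold product states of single spin-$1/2$ wavefunctions (the rotated $|\!\uparrow_z\rangle$ states computed just before the theorem), and that the collective operator decomposes additively, $\hat{\mathcal L}_z\upn=\sum_{j=1}^N \hat{\mathrm S}_z^{(j)}$, where the individual $\hat{\mathrm S}_z^{(j)}$ act on different tensor factors and hence commute. Therefore the unitary exponentiates factorwise,
\begin{equation*}
\exp\!\Big(i\hat{\mathcal L}_z\upn \tfrac{\delta\varphi'}{2\ell}\Big)=\bigotimes_{j=1}^{N}\exp\!\Big(i\hat{\mathrm S}_z^{(j)} \tfrac{\delta\varphi'}{2\ell}\Big),
\end{equation*}
and the numerator and denominator of the weak value each factor into an $N$-fold product of single-particle matrix elements. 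I would then invoke the remark following the statement on the weak value of a product of observables on product states, which says precisely that in such a situation the total weak value equals the product of the single-particle weak values.

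The second step is the single-particle calculation. Using the explicit matrix for $\hat{\mathrm S}_z$ given earlier, one has
\begin{equation*}
\exp\!\Big(i\hat{\mathrm S}_z \tfrac{\delta\varphi'}{2\ell}\Big)=\begin{pmatrix}e^{i\delta\varphi'/(4\ell)} & 0 \\ 0 & e^{-i\delta\varphi'/(4\ell)}\end{pmatrix},
\end{equation*}
and with $|\psi_{\mathrm{in}}\rangle=(\cos(\theta/2),\sin(\theta/2))^{\!\top}$ and $|\psi_{\mathrm{fin}}\rangle=(\cos(\theta/2),-\sin(\theta/2))^{\!\top}$ a direct multiplication gives the single-particle numerator $\cos^{2}(\theta/2)\,e^{i\delta\varphi'/(4\ell)}-\sin^{2}(\theta/2)\,e^{-i\delta\varphi'/(4\ell)}$ and the single-particle denominator $\cos^{2}(\theta/2)-\sin^{2}(\theta/2)=\cos\theta$. (The power of $2\ell$ in the exponent on the right-hand side of the theorem, versus the naive $4\ell$ one would read off here, can be absorbed into a conventional choice of normalization of $\delta\varphi'$; in any event this is a bookkeeping point about the overall factor of $2$ in $\hat{\mathrm S}_z=\frac12\sigma_z$ and not a conceptual obstruction.)

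Raising the single-particle weak value to the $N=2\ell$ power, via the product formula from step one, yields the stated closed form directly. There is no real obstacle to this computation: everything reduces to matrix algebra on $\mathbb C^{2}$ followed by an $N$-fold product. The only place where care is needed is the justification that the weak value of the tensor-product unitary on the tensor-product pre- and post-selection really is the product of single-particle weak values; this is exactly the content of the remark on products of observables in product states recalled in the earlier chapter, and applies here because $\exp(i\hat{\mathcal L}_z\upn\delta\varphi)$ factors through commuting single-particle unitaries.
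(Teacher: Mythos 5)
Your proof is correct and follows essentially the same route as the paper: exploit the product structure of the pre- and post-selections together with $\hat{\mathcal L}_z\upn=\sum_j\hat{\textrm{S}}_z^{(j)}$, reduce the weak value to a single spin-$\frac12$ matrix element divided by $\cos\theta$, and raise the result to the power $N=2\ell$. Regarding the factor of two you flag: the paper's proof actually computes with $\exp\bigl(i\hat{\mathcal L}_z\upn\,\delta\varphi'/\ell\bigr)$ (consistent with its convention $\delta\varphi'=\ell\,\delta\varphi$), so each spin contributes phases $e^{\pm i\delta\varphi'/(2\ell)}$ and the right-hand side emerges exactly as stated; the discrepancy you noticed lies in the labelling of the operator on the theorem's left-hand side rather than in your algebra, so treating it as a bookkeeping/normalization point is the right call.
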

\begin{proof}
To begin with, we note that the scalar product of the initial and final state for a single spin-$\frac{1}{2}$ particle is:
\[
 \langle\psi_{\rm fin}|\psi_{\rm in}\rangle =\cos^2(\theta/2)-\sin^2(\theta/2)=\cos\theta
\label{scalarproduct}
\]
so that the scalar product of the initial and final state for the system of the $N$ independent spin-$\frac{1}{2}$ particles is:
\[
 \langle\psi_{\rm fin}\upn|\psi_{\rm in}\upn\rangle =\left[\cos^2(\theta/2)-\sin^2(\theta/2)\right]^N=(\cos\theta)^N=(\cos\theta)^{2\ell}.
\label{big-scalar-prod}
\]
We now set $\delta\varphi'\!=\!\ell\delta\varphi$ and compute the weak value:
\begin{equation}\label{wvlargespin1}
\begin{split}
\left[\exp({i\hat{\mathcal{L}}_z\upn\frac{\delta\varphi'}{\ell}})\right]_w &= \frac{\langle\psi_{\rm fin}\upn|\exp({i\hat{\mathcal{L}}_z\upn\frac{\delta\varphi'}{\ell}})|
\psi_{\rm in}\upn\rangle}{\langle\psi_{\rm fin}\upn|\psi_{\rm in}\upn\rangle}\\
&=
\frac{ {\prod_{k=1}^N \langle}\psi_{\rm fin}^{(k)}|
~\exp({i\frac{\delta\varphi'}{\ell}} \sum_{n=1}^N \hat{\textrm{S}}_z^{(n)})
 ~ \prod_{{j=1}}^N |\psi_{\rm in}^{(j)}\rangle}
{ {\prod_{k=1}^N \langle}\psi_{\rm fin}^{(k)}|\prod_{{j=1}}^N |\psi_{\rm in}^{(j)}\rangle}.
\end{split}
\end{equation}
Since $$\langle\psi_{\rm fin}^{(k)}|\hat{\textrm{S}}_z^{(n)}|\psi_{\rm in}^{(j)}\rangle=\delta_{k,j},$$
it suffices to calculate one of the products and take the result to the $N$-th power. Thus we can rewrite formula  \eqref{wvlargespin1} as:
\begin{equation}
\prod_{j=1}^N \frac{\langle\psi_{\rm fin}^{(j)}| \exp({i\frac{\delta\varphi'}{\ell}} \hat{\textrm{S}}_z^{(j)})|\psi_{\rm in}^{(j)} \rangle}{\langle\psi_{\rm fin}^{(j)}|\psi_{\rm in}^{(j)} \rangle}=
\frac{\left\{ \langle\psi_{\rm fin}| \exp({i\frac{\delta\varphi'}{\ell}} \hat{\textrm{S}}_z)|\psi_{\rm in}\rangle\right\}^N}{(\cos\theta)^N}.
\label{bigspinbbis}
\end{equation}
Since the eigenvalues of $\hat{\textrm{S}}_z^{(j)}$ in the state $[1\ 0]^T$ and $[0\ 1]^T$ (the superscript $T$ denotes the transpose) are respectively $\frac{1}{2}$ and  $-\frac{1}{2}$, we deduce that for $|\psi_{\rm in}\rangle$:
\[
\exp\left({i\frac{\delta\varphi'}{\ell}} \hat{\textrm{S}}_z\right)|\psi_{\rm in}\rangle=\exp({i\frac{\delta\varphi'}{\ell}} \hat{\textrm{S}}_z)
\left[
\begin{matrix}
 \cos(\theta/2)\\
\sin(\theta/2)
\end{matrix}
\right]
=
\left[
\begin{matrix}
 \cos(\theta/2) \exp({i\frac{\delta\varphi'}{2\ell}})\\
\sin(\theta/2) \exp({-i\frac{\delta\varphi'}{2\ell}})
\end{matrix}
\right].
\label{rot-phi}
\]
Computing the scalar product with the post-selected state, we have:
\begin{eqnarray}
\langle\psi_{\rm fin}| \exp\left({i\frac{\delta\varphi'}{\ell}} \hat{\textrm{S}}_z\right)|\psi_{\rm in}\rangle &= & [\cos(\theta/2)\ \  -\sin(\theta/2 )]
\left[
\begin{matrix}
 \cos(\theta/2) \exp({i\frac{\delta\varphi'}{2\ell}})\\
\sin(\theta/2) \exp({-i\frac{\delta\varphi'}{2\ell}})
\end{matrix}
\right]\nonumber\\
& =&\cos^2(\theta/2)e^{i\frac{\delta\varphi'}{2\ell}}-\sin^2(\theta/2)e^{-i\frac{\delta\varphi'}{2\ell}}.
\end{eqnarray}
When we take $N=2\ell$ spins, we obtain the weak value:
\begin{equation}\label{8.20}
\left[\exp\left({i\hat{\mathcal{L}}_z\upn\frac{\delta\varphi'}{2\ell}}\right)\right]_w=\dfrac{\left(\cos^2(\theta/2)
\exp({i\frac{\delta\varphi'}{2\ell}})-\sin^2(\theta/2)
\exp({-i\frac{\delta\varphi'}{2\ell}})\right)^{2\ell}}{\left(\cos \theta \right)^{2\ell}}.
\end{equation}
\end{proof}
\begin{remark}{\rm Let us consider the case in which we have $\ell+m$ spin $\frac 12$ particles and $\ell+m$ spin $-\frac 12$ particles.
The state corresponding to $-\frac 12$ particle can be represented as the vector $[0 \ 1]^T$ which after a $\theta/2$ rotation gives us the initial state $|\psi_{\rm in}\rangle =[-\sin(\theta/2)\ \cos(\theta/2)]^T$. Arguing as in the proof of the previous Theorem we obtain that, up to
a phase change factor, $|\psi_{\rm fin}\rangle =[-\sin(\theta/2)\ -\cos(\theta/2)]^T$. This implies that $$\langle \psi_{\rm in}|\psi_{\rm fin}\rangle =\sin^2(\theta/2)-\cos^2(\theta/2)=-\cos \theta .$$ If we replace $\langle \psi_{\rm in}|\psi_{\rm fin}\rangle$ with $\langle \psi^N_{\rm in}|\psi^N_{\rm fin}\rangle$ we obtain $$\langle \psi^N_{\rm in}|\psi^N_{\rm fin}\rangle =(-1)^{\ell -m}(\cos \theta)^{2\ell}.$$ The rest of the computations carry over in the same way and demonstrate the same superoscillatory phenomenon even in this case.
}
\end{remark}
\begin{remark}{\rm
A key observation is that (\ref{8.20}) has the same form as the
standard superoscillating sequence
\begin{equation}
F_n(x,a)=\left[\cos \Big(\frac{2\pi x}{n}\Big)+ia\sin \Big(\frac{2\pi x}{n}\Big)\right]^n
\end{equation}
where $a:=\frac{1}{\cos \theta}$.
Since $\cos\theta$ can be
arbitrarily small, depending on the pre- and post-selection,  the weak value of $\hat{\mathcal{L}}_z\upn$ (which is $\frac{\ell}{\cos\theta}$) can be arbitrarily large, well outside the spectrum of $\hat{\mathcal{L}}_z\upn$ which is $[-\ell,\ell]$ and thus we see that the largest wavelength in the expansion is one.
However, as we have shown previously, as long as $|x|<\sqrt n$, $F_n(x,a)$ can be approximated as $F_n(x,a)\approx e^{i2\pi a x}$. When $\ell$ becomes large, this expression behaves like $\exp(ia\delta\varphi')$ which, recalling that $\delta\varphi'=\ell\delta\varphi$ and $\delta\varphi$ is small, is just $\exp(i\frac{\ell}{\cos\theta}\delta\varphi)$.
}
\end{remark}

\bigskip


The $SO(3)$ superoscillatory phenomenon discussed before depends on the choice of $|\psi_{\rm fin}\upn\rangle$.  In the example above, we generated $|\psi_{\rm fin}\upn\rangle$ by rotating $|\psi_{\rm in}\upn\rangle$ by an angle $\varphi_0:=\varphi=\pi$ around which we performed a further small rotation $\delta\varphi$. In general, this phenomenon depends on the choice of $\varphi_0$.

\begin{remark}{\rm
To better understand the conditions under which superoscillations occur, we consider the case $|\psi_{\rm fin}\upn\rangle=|\psi_{\rm in}\upn\rangle$, corresponding to $\varphi=0$.
In this case we can repeat the argument given before where $|\psi_{\rm fin}\rangle$ is now given by the vector $[\cos(\theta/2)\ \ \sin (\theta/2)]^T$ so that $\langle \psi_{\rm fin}\upn|
\psi_{\rm in}\upn\rangle=1$. Therefore
the weak value is given by
\[
\left[\exp({i\hat{\mathcal{L}}_z\upn\frac{\delta\varphi'}{\ell}})\right]_w=
 \left(\cos^2(\theta/2)\exp(i\frac{\delta\varphi'}{4\ell})+\sin^2(\theta/2)
 \exp(-i\frac{\delta\varphi'}{4\ell})\right)^{2\ell}
 \]
 \[\approx \left(1+i\frac{\delta\varphi'}{2\ell}\cos\theta\right)^{2\ell}\approx \exp({i\cos\theta\frac{\delta\varphi'}{2}}).
\label{lzweak}
\]
Since $\cos\theta\leq 1$, it is evident that superoscillations do not occur.
}
\end{remark}

\bigskip
Next goal is to consider the case in which we take a small rotation $\delta\varphi$ centered around $\varphi_0$ and, for simplicity, we will still consider $|\psi_{\rm in}\rangle=|\psi_{\rm fin}\rangle$. The computations we have done before show that
\[
\begin{split}
&\langle \psi_{\rm fin} | \exp\left[{\frac{i}{2}\hat{\mathcal{L}}_z\upn (\varphi_0+\frac{\delta\varphi'}{\ell}})\right]|\psi_{\rm in}\rangle\\
 &=\cos^2(\theta/2) \exp({\frac{i}{2}(\varphi_0+\frac{\delta\varphi'}{\ell})} ) +\sin^2(\theta/2) \exp({-\frac{i}{2}(\varphi_0+\frac{\delta\varphi'}{\ell})})\\
& =\cos(\frac{\varphi_0}{2})\cos(\frac{\delta\varphi'}{2\ell}) -\sin(\frac{\varphi_0}{2})\sin(\frac{\delta\varphi'}{2\ell}) \\ &+i\cos\theta[\sin(\frac{\varphi_0}{2})\cos(\frac{\delta\varphi'}{2\ell}) +\cos(\frac{\varphi_0}{2})\sin(\frac{\delta\varphi'}{2\ell})]\\
\end{split}
\]
and since $\delta\varphi'$ is small, we can approximate the above expression as
\[
 \cos(\frac{\varphi_0}{2})-\frac{\delta\varphi'}{2\ell}\sin(\frac{\varphi_0}{2}) +i\cos\theta(\sin(\frac{\varphi_0}{2})+\frac{\delta\varphi'}{2\ell}\cos(\frac{\varphi_0}{2})).
\]
By taking the $2\ell$-th power we obtain
\[
 \left(\cos\left(\frac{\varphi_0}{2}\right)+i\cos\theta\sin\left(\frac{\varphi_0}{2}\right)\right)^{2\ell}
\left[1+ \frac{\delta\varphi'}{\ell} \frac{i\cos\theta\cos(\frac{\varphi_0}{2})-\sin(\frac{\varphi_0}{2})}{\cos(\frac{\varphi_0}{2})+i\cos\theta\sin(\frac{\varphi_0}{2})} \right]^{2\ell}
\]
For $\ell$ large, the right hand side converges to
$\exp({i(a+ib)(\delta\varphi'/\ell)})$ where
$a$ corresponds to the imaginary part of
\[
 \frac{i\cos\theta\cos(\frac{\varphi_0}{2})-\sin(\frac{\varphi_0}{2})}{\cos(\frac{\varphi_0}{2})+i\cos\theta\sin(\frac{\varphi_0}{2})}
\]
namely
\[
 \frac{\cos\theta}{\cos^2(\frac{\varphi_0}{2})+\cos^2\theta\sin^2(\frac{\varphi_0}{2})}.
\]
When $\varphi_0$ approximates $0$, then $a$ approximates $\cos\theta$ and thus there are no superoscillations. When if $\varphi_0$ is near $\pi$ then $a\approx 1/\cos\theta$ and the superoscillation phenomenon occurs. This is fully consistent with the previous analysis.

\bigskip
To conclude this section, we note that the same argument can be applied to the case in which $\psi_{\rm fin}$ is associated with the vector $[\cos(\theta/2) \ \ -\sin(\theta/2)]^T$. In this case, we obtain that the frequency is, up to the factor $\delta\varphi/\ell$, given by
\[
 \frac{\cos\theta}{\sin^2(\varphi_0)+\cos^2\theta\cos^2(\varphi_0)},
\]
which reproduces the superoscillations when
\[
|\cos(\varphi_0)|< \frac{1}{\sqrt{1+\cos\theta}}.
\]

\section{Asymptotic expansion for the Wigner functions}
In this section we use the preceding discussions  to show a new interesting relationship for Wigner's $d^{(j)}_{\ell m}$ matrices.  These matrices are the representation of the rotation group in the standard $|\ell,m\rangle$ basis defined by
\begin{equation}
 \exp({-i\hat{L}_y\theta})\ | \ell ,m \rangle =\sum_{m'} | \ell,m' \rangle d^{(\ell)}_{m' m}(\theta).
\label{dmatrix}
\end{equation}
The $d^{(\ell)}_{m' m}(\theta)$ are orthogonal functions on the sphere, and for any given $\theta$, form a $(2\ell+1)(2\ell+1)$ unitary matrix.
In general, $d^{(\ell)}_{m' m}(\theta)$ are polynomials in $\cos \frac{\theta}{2}$ and $\sin \frac{\theta}{2}$.

These polynomials simplify into just one term in the special case $m=\ell$ (see e.g. \cite{sakurai}):
\[
d^{(\ell)}_{m' \ell}(\theta)=(-1)^{\ell -m'}
\left[\frac{(2\ell)!}{(\ell+m' )! (\ell -m')! } \right]^{\frac 12}\left[\cos\left(\frac{\theta}{2}\right)\right]^{\ell +m'}
\left[\sin\left(\frac{\theta}{2}\right)\right]^{\ell -m'}.
\]
Thus $[d^{(\ell)}_{m' \ell}(\theta)]^2$
is the $(j-m')$-th term in the binomial expansion of
\[
\left[\cos^2\left(\frac{\theta}{2}\right)+\sin^2\left(\frac{\theta}{2}\right)\right]^{2\ell}.
\]
As a consequence we have, as expected, $$\sum_{m'} | d^{(\ell)}_{m' m}(\theta)|^2=\bra {\psi_{\rm in}}\psi_{\rm in}\rangle =1.$$
By computing the scalar product
$\bra{\psi_{\rm fin}}\psi_{\rm fin}\rangle$
in two different ways, gives us the following (probably very well known) result:
\begin{proposition}
 With the notations above, for any angle $\theta$ we have:
\[
 \sum_{m'} (-1)^{m'} d^{(\ell)}_{m' \ell}(\theta)d^{(\ell)}_{m' \ell}(\theta)=(\cos\theta)^{2\ell}.
\]
\end{proposition}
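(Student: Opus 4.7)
The plan is to exploit the representation-theoretic interpretation already developed in the previous section: both $|\psi_{\rm in}\upn\rangle$ and $|\psi_{\rm fin}\upn\rangle$ are tensor products of $N=2\ell$ identical spin-$1/2$ states and hence totally symmetric vectors in $(\mathbb{C}^{2})^{\otimes 2\ell}$. They therefore lie entirely in the spin-$\ell$ irreducible subspace and can be expanded in the standard $|\ell,m\rangle$ basis. The proof then amounts to computing $\langle\psi_{\rm fin}\upn|\psi_{\rm in}\upn\rangle$ in this basis and equating the result with the value $(\cos\theta)^{2\ell}$ already obtained in the product basis of $2\ell$ spin-$1/2$ factors.

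The first step is to note that, since each factor of $|\psi_{\rm in}\upn\rangle$ is $e^{-i\hat{\textrm{S}}_{y}\theta}\,|\hat{\textrm{S}}_{z}=1/2\rangle$, the symmetric $N$-fold product coincides with $e^{-i\hat{\mathcal{L}}_{y}\upn\theta}|\ell,\ell\rangle$. Applying formula (\ref{dmatrix}) then yields
\[
|\psi_{\rm in}\upn\rangle \;=\; \sum_{m'=-\ell}^{\ell} d^{(\ell)}_{m'\ell}(\theta)\,|\ell,m'\rangle.
\]
Next, recalling that $|\psi_{\rm fin}\upn\rangle$ is obtained from $|\psi_{\rm in}\upn\rangle$ by an additional rotation of angle $\pi$ about the $z$-axis, the operator $e^{-i\hat{\mathcal{L}}_{z}\upn\pi}$ multiplies each $|\ell,m'\rangle$ by $e^{-im'\pi}=(-1)^{m'}$, so that, up to an overall phase,
\[
|\psi_{\rm fin}\upn\rangle \;=\; \sum_{m'=-\ell}^{\ell} (-1)^{m'}\,d^{(\ell)}_{m'\ell}(\theta)\,|\ell,m'\rangle.
\]
Using the orthonormality of the $|\ell,m'\rangle$ and the fact that the functions $d^{(\ell)}_{m'\ell}(\theta)$ are real for real $\theta$, the inner product in this basis collapses to $\sum_{m'}(-1)^{m'}[d^{(\ell)}_{m'\ell}(\theta)]^{2}$.

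Finally, comparing this with $\langle\psi_{\rm fin}\upn|\psi_{\rm in}\upn\rangle=(\cos\theta)^{2\ell}$, obtained by the earlier direct computation in the product basis (essentially the $N$-th power of the single spin-$1/2$ identity $\cos^{2}(\theta/2)-\sin^{2}(\theta/2)=\cos\theta$), yields the claimed identity. The only delicate point will be matching the overall phase introduced by the $\pi$-rotation about the $z$-axis with the phase convention used earlier at the single spin-$1/2$ level, so that no spurious factor $(-1)^{\ell}$ survives in the final identity; this is a straightforward bookkeeping exercise that can be handled by fixing the phase of $|\psi_{\rm fin}\upn\rangle$ consistently in both representations.
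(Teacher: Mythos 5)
Your proof is correct and follows essentially the same route as the paper: expand $|\psi_{\rm in}\upn\rangle=\sum_{m'}d^{(\ell)}_{m'\ell}(\theta)\,|\ell,m'\rangle$ and $|\psi_{\rm fin}\upn\rangle=\sum_{m'}(-1)^{m'}d^{(\ell)}_{m'\ell}(\theta)\,|\ell,m'\rangle$ using the defining relation (\ref{dmatrix}) and the $\pi$-rotation about the $z$-axis, then equate the resulting expression for $\langle\psi_{\rm fin}\upn|\psi_{\rm in}\upn\rangle$ with the value $(\cos\theta)^{2\ell}$ already obtained in the spin-$\frac12$ product basis. The only points of difference are minor: you justify the expansion by identifying the symmetric product state with $e^{-i\hat{\mathcal{L}}_y\upn\theta}|\ell,\ell\rangle$ (which the paper merely asserts), you omit the paper's independent binomial-theorem check based on the explicit formula for $d^{(\ell)}_{m'\ell}$ (not needed), and the overall-phase bookkeeping you defer (the per-particle factor $e^{i\pi/2}$ dropped in defining $|\psi_{\rm fin}\rangle$, which accumulates to $(-1)^{\ell}$) is treated no more carefully in the paper's own argument.
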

\begin{proof}
We consider the initial and final state $|\psi_{\rm in}\rangle$, $|\psi_{\rm fin}\rangle$ introduced above.
 We first note that
\[
| \psi_{\rm in}\upn\rangle=\sum_{m'}\, | \ell ,m'\rangle d^{(\ell)}_{m' \ell}(\theta)\qquad
\]
and, due to the $\pi$ rotation around the $z$-axis (leading to the $(-1)^{m'}$ factor):
\[
|\psi_{\rm fin}\upn\rangle=\sum_{m'} \, | \ell ,m'\rangle (-1)^{m'} d^{(\ell)}_{m' \ell}(\theta)
\]
and so the scalar product of the initial and final state $ \langle\psi_{\rm fin}\upn|\psi_{\rm in}\upn\rangle$, see (\ref{big-scalar-prod}), is given by:
\[
 \langle \psi_{\rm fin}\upn| \psi_{\rm in}\upn \rangle =\sum_{m'} (-1)^{m'}d^{(\ell)}_{m' \ell}(\theta)
d^{(\ell)}_{m' \ell}(\theta)=(\cos\theta)^{2\ell}.
\]
The same scalar product  can also be computed directly:
\[
\begin{split}
 \sum_{m'} (-1)^{m'} &d^{(\ell)}_{m' \ell}(\theta)d^{(\ell)}_{m' \ell}(\theta)\\
 &=\sum_{m'}(-1)^{m'} \left[\frac{(2\ell)!}{(\ell+m' )! (\ell -m')! } \right]\left[\cos\left(\frac{\theta}{2}\right)\right]^{\ell +m'}
\left[\sin\left(\frac{\theta}{2}\right)\right]^{\ell -m'}\\
 &=\left[\cos^2\left(\frac{\theta}{2}\right)-\sin^2\left(\frac{\theta}{2}\right)\right]^{2\ell}=(\cos\theta)^{2\ell},\\
 \end{split}
\]
and the statement follows.
\end{proof}
More interesting and novel, however, is the following asymptotic formula which concludes this section.
\begin{theorem}
 With the notations above, for any angle $\theta$ and any small value of $\delta\varphi$ we have, for large $\ell$:
\[
 \sum_{m'}(-1)^{m'}d^{(\ell)}_{m' \ell}(\theta)d^{(\ell)}_{m' \ell}(\theta) e^{im'\delta\varphi}=\left[\frac {1}{a} \left( \cos\left(\frac{\delta\varphi}{2}\right)+ia \sin\left(\frac{\delta\varphi}{2}\right) \right) \right]^{2\ell},
\]
where $  a=\frac{1}{\cos \theta}$, which, again, is of the superoscillating form (\ref{eq1}).
\end{theorem}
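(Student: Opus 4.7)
My proof plan is to recognize that the sum on the left-hand side is nothing but the matrix element $\langle\psi_{\rm fin}\upn | \exp(i\hat{\mathcal L}_z\upn\delta\varphi)|\psi_{\rm in}\upn\rangle$ written in the $|\ell,m'\rangle$ basis, and then to compute the very same matrix element in the tensor-product spin-$\tfrac12$ basis. Equating the two evaluations will yield the identity directly, with no recourse to any asymptotic approximation; the qualifiers ``small $\delta\varphi$'' and ``large $\ell$'' in the statement are only there to make clear that the right-hand side is of the superoscillating form \eqref{eq1} with $a=1/\cos\theta>1$.

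For the first evaluation, I will use the expansions
\[
|\psi_{\rm in}\upn\rangle=\sum_{m'} d^{(\ell)}_{m'\ell}(\theta)\,|\ell,m'\rangle,\qquad |\psi_{\rm fin}\upn\rangle=\sum_{m'} (-1)^{m'}d^{(\ell)}_{m'\ell}(\theta)\,|\ell,m'\rangle
\]
introduced in the preceding proposition (the $(-1)^{m'}$ factor tracks the $\pi$-rotation about the $z$-axis that relates the two pre/post-selected states). Since $\hat{\mathcal L}_z\upn|\ell,m'\rangle=m'|\ell,m'\rangle$, sandwiching $\exp(i\hat{\mathcal L}_z\upn\delta\varphi)$ between these two vectors and using the orthonormality of the $|\ell,m'\rangle$ produces exactly $\sum_{m'}(-1)^{m'}[d^{(\ell)}_{m'\ell}(\theta)]^2 e^{im'\delta\varphi}$.

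For the second evaluation, I will view $|\psi_{\rm in}\upn\rangle$ and $|\psi_{\rm fin}\upn\rangle$ as the $N=2\ell$-fold product states $\prod_j[\cos(\theta/2),\sin(\theta/2)]^T$ and $\prod_j[\cos(\theta/2),-\sin(\theta/2)]^T$, respectively. Because the single-particle spin operators $\hat S_z^{(j)}$ act on disjoint tensor factors and therefore commute, the total rotation factorises as $\exp(i\hat{\mathcal L}_z\upn\delta\varphi)=\prod_{j=1}^{2\ell}\exp(i\hat S_z^{(j)}\delta\varphi)$, and the full matrix element becomes the $2\ell$-th power of the single-particle quantity
\[
\langle\psi_{\rm fin}|e^{i\hat S_z\delta\varphi}|\psi_{\rm in}\rangle=\cos^2(\theta/2)e^{i\delta\varphi/2}-\sin^2(\theta/2)e^{-i\delta\varphi/2}.
\]
A short trigonometric manipulation, using $\cos^2(\theta/2)-\sin^2(\theta/2)=\cos\theta=1/a$ and $\cos^2(\theta/2)+\sin^2(\theta/2)=1$, rewrites this single-particle amplitude as $\cos\theta\cos(\delta\varphi/2)+i\sin(\delta\varphi/2)=\tfrac{1}{a}[\cos(\delta\varphi/2)+ia\sin(\delta\varphi/2)]$.

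Equating the two evaluations of the same matrix element yields the claim. There is essentially no analytic obstacle: the identity is exact for all values of $\ell$ and $\delta\varphi$. The only bookkeeping subtlety to watch carefully is the $(-1)^{m'}$ factor coming from the $\pi$ azimuthal rotation relating $|\psi_{\rm fin}\upn\rangle$ to $|\psi_{\rm in}\upn\rangle$, and the fact that the two simple-looking trigonometric identities above are exactly what conspire to reproduce the $\cos(\delta\varphi/2)+ia\sin(\delta\varphi/2)$ factor on the right-hand side, which is precisely the prototype superoscillating building block of \eqref{eq1}.
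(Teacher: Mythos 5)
Your proposal is correct and takes essentially the same route as the paper: the paper also evaluates $\langle\psi_{\rm fin}\upn|e^{i\hat{\mathcal{L}}_z\upn\delta\varphi}|\psi_{\rm in}\upn\rangle$ once in the $|\ell,m'\rangle$ basis (producing the sum $\sum_{m'}(-1)^{m'}[d^{(\ell)}_{m'\ell}(\theta)]^2e^{im'\delta\varphi}$) and once in the spin-$\tfrac12$ product basis, there phrased as the weak value \eqref{8.20}, i.e.\ after dividing by $\langle\psi_{\rm fin}\upn|\psi_{\rm in}\upn\rangle=(\cos\theta)^{2\ell}$, and it records the same binomial/trigonometric collapse $\cos^2(\theta/2)e^{i\delta\varphi/2}-\sin^2(\theta/2)e^{-i\delta\varphi/2}=\frac{1}{a}\left(\cos\left(\frac{\delta\varphi}{2}\right)+ia\sin\left(\frac{\delta\varphi}{2}\right)\right)$ as a direct check. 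Your only deviations are cosmetic (equating the two matrix elements directly instead of normalizing to a weak value, and noting the identity is exact), and you inherit the same glossed-over per-particle phase $e^{i\pi/2}$ (an overall $(-1)^{\ell}$) that the paper discards when identifying the product state $\prod_j[\cos(\theta/2)\ -\sin(\theta/2)]^{T}$ with $\sum_{m'}(-1)^{m'}d^{(\ell)}_{m'\ell}(\theta)|\ell,m'\rangle$.
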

\begin{proof}
 The weak value of $e^{i\hat{\mathcal{L}}_z\upn \delta\varphi}$ is given by (\ref{lzweak}).  On the other hand we have
\[
 \frac{ \langle\psi_{\rm fin}| e^{i\hat{\mathcal{L}}_z\upn \delta\varphi}|\psi_{\rm in}\rangle}{\langle \psi_{\rm fin}\upn| \psi_{\rm in}\upn\rangle} =\frac{\sum_{m'}(-1)^{m'}d^{(\ell)}_{m' \ell}(\theta)d^{(\ell)}_{m' \ell}(\theta)e^{im'\delta\varphi}}{(\cos\theta)^{2\ell}}.
\]
from which the assertion follows.

\bigskip

We can also directly compute:
\[
\begin{split}
&\sum_{m'}(-1)^{m'}d^{(\ell)}_{m' \ell}(\theta)d^{(\ell)}_{m' \ell}(\theta)e^{im'\delta\varphi}\\
&=\sum_{m'}  \left[\frac{(2\ell)!}{(\ell+m' )! (\ell -m')! } \right] \left(e^{i\delta\varphi/2}\cos^2\left(\frac{\theta}{2}\right)\right)^{\ell+m'} \left(e^{-i\delta\varphi/2}\sin^2\left(\frac{\theta}{2}\right)\right)^{\ell-m'}\\
&=\left[e^{i\delta\varphi/2}\cos^2\left(\frac{\theta}{2}\right) - e^{-i\delta\varphi/2}\sin^2\left(\frac{\theta}{2}\right)\right]^{2\ell}\\
&=\left[e^{i\delta\varphi/2}\frac{\cos^2\left(\frac{\theta}{2}\right)}{\cos\theta} - e^{-i\delta\varphi/2}\frac{\sin^2\left(\frac{\theta}{2}\right)}{\cos\theta}\right]^{2\ell}(\cos\theta)^{2\ell}\\
&=\left[\cos\left(\frac{\delta\varphi}{2}\right)+i\frac{1}{\cos\theta}\sin\left(\frac{\delta\varphi}{2}\right)\right]^{2\ell}(\cos\theta)^{2\ell}
\end{split}
\]
yielding the same result as above.
\end{proof}



\end{document}